\def\v #1{\vert #1\vert}             
\def\m #1 #2{(-1)^{{\v #1} {\v #2}}} 
\theoremstyle{plain}
\newtheorem{theorem}{Theorem}
\newtheorem{corollary}[theorem]{Corollary}
\newtheorem{proposition}[theorem]{Proposition}
\newtheorem{lemma}[theorem]{Lemma}
\theoremstyle{definition}
\newtheorem{definition}[theorem]{Definition}
\font\frak=eufm10 scaled\magstep1
\def\<#1>{\langle#1\rangle}
\begin{document}

\centerline{\Large \bf Reduction of a Hamilton--Jacobi equation}
\vskip 0.2cm
\centerline{\Large \bf for nonholonomic systems}
\vskip 0.5cm
\vskip 0.5cm

\centerline{O\u{g}ul Esen$^{\dagger}$, Manuel de Le\'on$^{\ddagger}$, V\'ictor Manuel Jim\'{e}nez Morales$^{*}$, Cristina
Sard\'on$^{*}$}
\vskip 0.5cm

\centerline{Department of Mathematics$^{\dagger}$}
\centerline{Gebze Technical University}
\centerline{41400 Gebze, Kocaeli, Turkey.}
\vskip 0.5cm

\centerline{Consejo Superior de Investigaciones Cient\'ificas$^{\ddagger}$}
\centerline{C/ Nicol\'as Cabrera, 13--15, 28049, Madrid. SPAIN}
\vskip 0.5cm

\centerline{Instituto de Ciencias Matem\'aticas, Campus Cantoblanco$^{*}$}
\centerline{Consejo Superior de Investigaciones Cient\'ificas}
\centerline{C/ Nicol\'as Cabrera, 13--15, 28049, Madrid. SPAIN}

\tableofcontents

\begin{abstract}

Nonholonomic mechanical systems have been attracting more interest in recent years because of their rich geometric properties and their applications in Engineering. In all generality, we discuss the reduction of a Hamilton-Jacobi theory for systems subject to nonholonomic constraints and that are invariant under the action of a group of symmetries. We consider nonholonomic systems subject to linear or nonlinear constraints, with different positioning with respect to the symmetries. We describe the reduction procedure first, to later reconstruct solutions in the unreduced picture, by starting from a reduced Hamilton-Jacobi equation. Examples can be depicted in a wide range of scenarios: from free particles with linear constraints, to vehicle motion.

\medskip

\textbf{Keywords}: Hamilton-Jacobi; theory of reduction; nonholonomic systems; constrained systems; almost Poisson manifolds; skew algebroids; symplectic reduction; coisotropic reduction; Marsden-Weinstein reduction.
\end{abstract}

\section{Introduction}

In this paper we aim at providing a general reduction of a Hamilton--Jacobi formulation for nonholonomic mechanical systems that are invariant under a group of symmetries. The reduction of a Hamilton--Jacobi theory for nonholonomic systems is considered of remarkable interest given the number of wheeled dynamical systems subject to this kind of constraints, and how they play a role in robotics and motion \cite{AA,XLB}.

Nonholonomic constraints are differential constraints depending on the derivatives of the coordinates in the configuration space of the mechanical system. Their study started when the brilliant formalism of Euler and Lagrange failed to be applicable to simple rigid bodies that rolled without sliding. Since then, scientists have made a growing effort in the development of analytic mechanics of nonholonomic systems, which is intimately related with differential geometry.

For example, consider the problem of the rolling disk on a plane $\pi$.
The position of the disc is given by the coordinates $(x,y)$ of its 
point of contact $M$. We call $\varphi$ the angle between the tangent to the disc at the point $P$ and the $Ox$ axis and the angle of inclination $\theta$ between the plane of the disc and $\pi$ \cite{Neimark}.

\begin{figure}[h]
  \centering
   \includegraphics[width=0.5\textwidth]{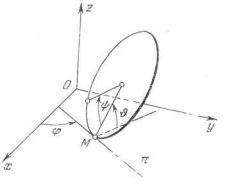} 
   \caption{Rolling disk}
\end{figure}

\noindent
  The condition of rolling without sliding means that the instantaneous velocity of the point of contact of the disc is zero at all times.
The kinematic contraints in this scenario are:
\begin{equation}\label{kincons}
\dot{x}=R\dot{\psi}\cos{\varphi},\quad \dot{y}=R\dot{\psi}\sin{\varphi}
\end{equation}
However, although these conditions must be satisfied, the five coordinates $(x,y,\varphi,\theta,\psi)$ may take all sets of values and, thus, the kinematic constraints (\ref{kincons}) do not impose restrictions on the possible values of these coordinates. This is a characteristic of nonholonomic constraints. These constraints are also nonintegrable.

A particular and common type of nonholonomic constraint is the  linear type in generalized velocities:
\begin{equation}
A_i(q^1,\dots,q^n,t)\dot{q}^i+B(q^1,\dots,q^n,t)=0,
\end{equation}
\noindent
(the time dependence is not strictly necessary).
It is remarkable that these constraints preserve energy, that they are ideal constraints that do not produce work, and that the dynamics dictated by them is constrained to a submanifold \cite{Gantmacher}. This implies that only certain curves correspond to motions of the system; a point of the space representing the position at a given instant of time cannot move in an arbitrary direction.

More general nonholonomic constraint are nonlinear constraints. Consider for example the system depicted in Fig.2

\begin{figure}[h]
  \centering
   \includegraphics[width=0.6\textwidth]{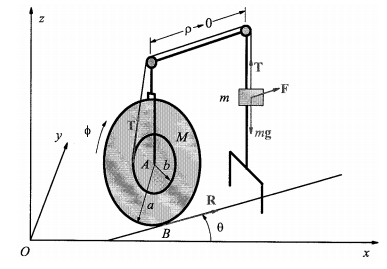} 
   \caption{Appel Hamel system}
\end{figure}

This machine is known as the Appel-Hamel system. It consists of a block of mass that is confined vertically and attached to the end of a thread that passes over two massless pulleys and it is wound round a drum of radius $b$. This is simultaneously attached to a wheel of radius $a$ that rolls horizontally on the $xy$ plane. We can consider an extra force in the picture, applied paralell to the tangent direction of the wheel trace. This force can be realized, for instance, by a rocket oriented on the tangent direction. Here the angle $\phi$ describes the rotation of the wheel on its own plane. The distance between the centers of the two pulleys is $\rho\rightarrow 0$.
So, the nonholonomic constraints are
\begin{equation}
\dot{x}\sin{\theta}-\dot{y}\cos{\theta}=0,\quad \dot{x}\cos{\theta}+\dot{y}\sin{\theta}=a\dot{\phi},\quad \dot{z}=-b\dot{\phi}
\end{equation}
from where we deduce a constrained equation that is nonlinear in the velocities
\begin{equation}
\dot{x}^2+\dot{y}^2=\frac{a^2}{b^2}\dot{z}^2
\end{equation}

It is important to remark that there is a difference between constrained systems and nonholonomic systems. Nonholonomic systems are not Hamiltonian, in the sense that the phase space is just the constraint submanifold and not the cotangent bundle of the configuration manifold. Moreover, the dynamics is explained through an almost Poisson bracket.

Some different frameworks have been proposed to solve the dynamics of linear/nonlinearly nonholonomic systems \cite{BatesSnia,BKMM,EKMR,Koiller}. 
Nonetheless, past attempts to obtain this Hamilton--Jacobi theory for nonholonomic systems were nor effective or very restrictive because they try to adapt the usual proof of Hamilton--Jacobi equations without contraints (using Hamilton's principle).
Usually, the results are valid when the solutions of the nonholonomic problem are also the solutions of the corresponding constrained variational problem \cite{Kozlov,Rumy}.
M. de Le\'on et al. proposed an alternative approach based on geometrical properties. In \cite{LeonMdD}, it is proved that the nonholonomic mechanics can be obtained by projecting the unconstrained dynamics, from the classical framework of the Hamilton-Jacobi formalism, very naturally \cite{IgLeDi07,LeDiVa17}. This approach completes and engages all the partial results that had been resolved until that moment. In parallel, other authors Cari{\~n}ena et al. were developing similar ideas \cite{CaGrMaMaMuRo06}. Let us summarize them.

Recall that the Hamilton--Jacobi equation (HJE) is a partial differential equation for a generating function $S(q^i,t)$ on $Q$ and the time $t$ given by
\begin{equation}\label{tdepHJ}
 \frac{\partial S}{\partial t}+H\left(q^i,\frac{\partial S}{\partial q^i}\right)=0.
\end{equation}
Note that, the generalized momenta do not appear in Eq. (\ref{tdepHJ}), except as derivatives of $S$.
This equation is a necessary condition describing the external geometry in problems of calculus of variations. Hamilton's principal function
$S=S(q^i,t)$, which is the solution of the HJE, and the classical function $H$ are both closely related to the classical action
$S=\int{L dt}$, and it is a generating function
for a family of symplectic flows that describes the dynamics of the Hamilton equations.
 If the generating function is separable in time, then we can make an ansatz $S(q^i,t)=W(q^i)-Et,$
where $E$ is the total energy of the system. Then, HJE in Eq. (\ref{tdepHJ}) reduces to
 \begin{equation}\label{hje}
H\left(q^i,\frac{\partial W}{\partial q^i}\right)=E.
 \end{equation}
Physically, this constant $E$ is identified with the energy of the mechanical system.

Now, let us consider a Hamiltonian system determined by the triple $(T^{*}Q,\omega_Q=-d\theta_{Q},H)$. Here, $\theta_{Q}$ is the canonical Liouville form on $T^{*}Q$, and $\omega_{Q}$ is the canonical symplectic form on $T^{*}Q$. Consider a real valued function $W$ and define a vector field 
\begin{equation}\label{HJtheorem}
 \Gamma_H^{dW}= T_{\pi}\circ \Gamma_H\circ dW.
\end{equation}
on the base manifold $Q$. Here, $\Gamma_H$ is the Hamiltonian vector field generated by the Hamiltonian function $H$. This definition implies the commutativity of the following diagram.
\begin{equation}\label{Xg}
  \xymatrix{ T^{*}Q
\ar[dd]^{\pi_{Q}} \ar[rrr]^{\Gamma_H}&   & &TT^{*}Q\ar[dd]^{T\pi_{Q}}\\
  &  & &\\
 Q\ar@/^2pc/[uu]^{dW}\ar[rrr]^{\Gamma_H^{\sigma}}&  & & TQ }
\end{equation}
According to the geometric Hamilton--Jacobi theory \cite{CaGrMaMaMuRo06}, the vector fields $\Gamma_H$ and $\Gamma_H^{dW}$ are $dW$-related if and only if
 \begin{equation}\label{fulfilled}
  d(H\circ dW)=0.
 \end{equation}
Notice that, equation (\ref{fulfilled}) is merely (\ref{hje}) locally. This means that if the Hamiltonian vector field $\Gamma_{H}$ can be projected into the configuration manifold $Q$ by means of the one-form $dW$, and then the integral curves of the projected
vector field $\Gamma_{H}^{dW}$ can be transformed into integral curves of $\Gamma_{H}$ provided that $W$ is a solution of Eq. (\ref{fulfilled}). This implies that the pullback of the Hamiltonian function $H$ by the one-form section $dW$ is a constant $E$ which is assumed to be the energy of the system. 

It is important to remark that thanks to M. de Le\'on et. al in \cite{deLeMadeDi10}, equation (\ref{fulfilled}) is retrieved in the nonholonomic case as an equality instead of as an inclusion on an ideal space for one-forms, by using the theory of Lie algebroids.

It is possible to generalize this discussion by replacing the exact one-form $dW$ by a closed one-form $\sigma$ defined on $Q$ (and, then, $\sigma \left( Q \right) \subseteq T^{*}Q$ is a Lagrangian submanifold of $\left( T^{*}Q , \omega_{Q} \right)$). 
This modern geometric interpretation has brought about the theory of Lagrangian submanifolds as an important source of new results and insights \cite{LeDiVa17}. Recall that a Lagrangian submanifold $L$ of a symplectic manifold $\left( M , \omega \right)$ is a submanifold of $M$ such that for all $x \in L$
$$ T_{x} L = \left( T_{x}L\right)^{\bot},$$
where $\left( T_{x}L\right)^{\bot}$ is the symplectic orthogonal of $T_{x}L$ respect to the symplectic form $\omega$.\\

Furthermore, the geometric setting for the Hamilton--Jacobi theory will be particulary useful in the identification of conserved quantities in mechanical systems, which may be possible even when the mechanical problem itself cannot be solved completely. These conserved quantities are related with the unknown initial conditions describing a family of solutions of the Hamilton--Jacobi equation, which are parametrized in a particular convenient way.
These conserved quantities are useful for reducing our system to another system in a lower dimensional manifold. For this, we need to construct a momentum map $J$ (satisying some invariance conditions) from $T^*Q$ to the dual of the Lie algebra $\mathfrak{g}$ of the symmetry group \cite{MaWe74} that exploits the symmetry of Hamiltonian and Lagrangian systems following the modern work of J.E. Marsden and A. Weinstein (1974) \cite{MaWe74}. The pre-image $J^{-1}(\mu)$ of a dual element under the momentum map is a submanifold of momentum phase space and there exists a reduction of the symplectic form such that we find a reduced symplectic manifold \cite{Ho11,MaRa13}.

The reduction of the geometric Hamilton-Jacobi theory for unconstrained systems was achieved through coisotropic reduction \cite{LeDiVa17}. On the other hand, a first more general and effective reduction of a Hamilton--Jacobi theory for nonholonomic constraints was introduced in the aforementioned paper \cite{IgLeDi07}. There, it is presented for the particular case in which the configuration space is a fibration over another manifold  $\rho:Q\rightarrow N$, and the constraints are given by the horizontal subspaces of a connection $\Gamma$ on the fibration $\rho$. In this case, the original nonholonomic system is equivalent to another one whose configuration manifold is the base of the fibration and, in addition, it is subject to an external force \cite{LeonDMDD}. These are known as Caplygin systems \cite{Cap1,Cap2}, and for them, we will define a connection $\Gamma$, whose horizontal distribution is $\mathcal{H}$ and $V_{\rho}=\ker{T\rho}$ is the vertical distribution. The connection allows the Whitney decomposition $TQ=\mathcal{H}\oplus V_{\rho}$.

In this paper, we aim at generalizing the reduction of a Hamilton--Jacobi theory for nonholonomic systems with linear and nonlinear constraints, by a group of symmetries. We will make a difference among different types of these systems depending on the relative positioning between the constraint functions and the symmetries.
We have three cases: the so called pure kinematic nonholonomic systems, the horizontal nonholonomic systems, and a more general case that involves these two.
Notice that our results will be displayed in the Lagrangian and Hamiltonian part equivalently, and that we will retrieve the results in \cite{LeonMdD} as a particular case of our general picture. 

 We will characterize our nonholonomic system by a distribution $F$ in $T^{*}Q$ whose annihilator $F^{o}$ is the set of reaction forces on $T^{*}Q$  \cite{Cap1,Cap2}.
The dynamics of the nonholonomic system is defined by the constraint functions of a submanifold $N\subset TQ$ in the Lagrangian picture, and $M\subset T^{*}Q$ in the Hamiltonian case. We also consider the case when the constraints (linear or not) are bracket generating or completely nonholonomic, and so, we extend to the nonlinear case.

The structure of the paper is as follows: In section 2 we recall the fundamentals of nonholonomic mechanics in the Lagrangian and Hamiltonian part, and present the Hamilton--Jacobi equation for systems with external forces, with linear and nonlinear constraints in the Lagrangian and the Hamiltonian picture. Here we have included an extension of the result by M. de Le\'on et al \cite{IgLeDi07,deLeMadeDi10} and T. Ohsawa et al \cite{ToOsBlo} to the case of nonlinear constraints. Section 3 is devoted to the theory of reduction of nonholonomic systems according to the positioning of the symmetries with respect to the constraint functions, in the three aforementioned cases. We display the Lagrangian and Hamiltonian formalisms for each of the three cases. In section 4 we introduce the actual reduction of a Hamilton--Jacobi equation for nonholonomic systems with symmetries. The pure kinematic and horizontal cases are exhaustively discussed and a general case including these two is also depicted, and all the three cases are studied in the Lagrangian and Hamiltonian side as well. To conclude the paper, we add section 5 with examples that shed some light on the long and nontrivial proof and results given in this manuscript. One of them is a real representant of the true applications of nonholonomic mechanics, as it is the study of the dynamics of vehicles like a two wheeled carriage.

For simplicity, we hereafter assume all mathematical objects to be real, smooth, and globally defined.
This will permit us to omit several minor technical problems so as to highlight the main aspects of our
results.

\section{Fundamentals of nonholonomic mechanics}

Consider an $n$-dimensional configuration manifold $Q$ and choose coordinates as $(q^i)$. We consider its tangent and cotangent bundles $TQ$ and $T^{*}Q$ with canonical projections $\tau_Q$, and $\pi_Q$, respectively. 
In bundle coordinates, the tangent space is locally coordinated by $(q^i,\dot{q}^j)$
and the phase space $T^{*}Q$ by the Darboux' coordinates $(q^i,p_j)$. The phase space $T^{*}Q$ is equipped with a canonical one-form $\theta_Q$, called the Liouville--Euler one-form. Minus of the exterior derivative of $\theta_{Q}$ is the canonical symplectic two-form $\omega_{Q}$ on $T^*Q$, In coordinates, $\theta_Q$ manifests its semi-basic character whereas the symplectic two-form is constant skew symmetric two-form 
\begin{equation}
 \theta_{Q}=p_idq^i, \qquad \omega_{Q}=dq^i\wedge dp_i
\end{equation}

We are assuming that all the Lagrangians are regular. This means the Legendre transformation 
\begin{equation} \label{LegTrf}
\mathbb{F}L:TQ\longrightarrow T^*Q:\left(q^i,\dot{q}^j\right)\mapsto \left( q^i,\frac{\partial L}{\partial \dot{q}^j} \right) 
\end{equation}
is a local diffeomorphism. Then, the pullback of the symplectic two-form $\omega_Q$ on $T^*Q$ back to $TQ$ by means of the mapping $\mathbb{F}L$ is a symplectic two-form on $TQ$. This symplectic two-form is called the Poincar\'e-Cartan two-form, and we denote it by $\omega_L$. As a result, in this section, the tangent bundle $(TQ,\omega_L)$ is a symplectic manifold. The energy of the Lagrangian $L$ is defined to be 
\begin{equation} \label{energy}
E_L(q,\dot{q})=\dot{q}^i\frac{\partial L}{\partial \dot{q}^i}-L(q,\dot{q}).
\end{equation}
and the symplectic equation
\begin{equation} \label{HamEq-EL}
\iota_{\Gamma_L}\omega_L=dE_L
\end{equation}
uniquely determines a vector field $\Gamma_L$, known as the Euler-Lagrange vector field, that generates the motion. For a detailed description see \cite{LeRo11}.

It is possible to study constrained systems in both the Lagrangian and the Hamiltonian picture. In the following section, we start with constrained Lagrangian systems.
  
\subsection{Constrained Lagrangian systems} \label{constrLagsec}

Consider now a nonholonomic Lagrangian system consisting of a regular Lagrangian function $L$ on $TQ$ and a submanifold $N \subseteq TQ$ of codimension $k$ in $TQ$ defining the nonholonomic constraints of the system. Then, $N$ may be locally described in terms of independent constraint functions $\{\psi^a \}_{a=1, \dots , k}$ in the following way
\begin{equation} \label{Npre}
N=\left\{(q^i,\dot{q}^j)\in TQ: \psi^a(q,\dot{q})=0\right\}.
\end{equation}
We will always assume that $\tau_{Q} \left( N \right) = Q$ or, equivalently, the constraints are purely kinematic.\\

According to the d'Alembert-Chetaev principle, the equation of motion of the constrained Lagrangian system is given by
\begin{equation}\label{nhlag8}
\frac{d}{dt}\left(\frac{\partial L}{\partial \dot{q}^i}\right)-\frac{\partial L}{\partial q^i}=\lambda_a \frac{\partial \psi^a}{\partial \dot{q}^i}, \quad \psi^a(q,\dot{q})=0,
\end{equation}
where $\lambda_a$ are the Lagrange multipliers to be determined (see for instance \cite{Vakonomic,leonmarrdmdd}).\\
The submanifold $N$ defines a distribution $F_L$ on $TQ$ along $N$ whose annihilator $F_L^{o}$ plays the role of ``reaction forces" on $TQ$. Notice that any distribution on a manifold $\mathcal{M}$ along a submanifold $\mathcal{S}\subseteq \mathcal{M}$ can be naturally depicted as a generalized vector subbundle of the tangent bundle $T\mathcal{M}$ over $\mathcal{S}$. Along this paper we will use both terminologies.\\
These forces are one-forms on $TQ$ along the submanifold $N$, and can be understood by taking their values in the subbundle
\begin{equation} \label{codist}
 F_L^{o}=S^{*}(TN^{o})=\langle \frac{\partial \psi^a}{\partial \dot{q}^i} dq^i \rangle,
\end{equation}
of $T^{*}_{N}TQ$, where $S$ is the vertical endomorphism, i.e., the $(1,1)$-tensor field on the base manifold $TQ$ with expression
\begin{equation} \label{vertendo}
S=\frac{\partial}{\partial \dot{q}^i}\otimes dq^i,
\end{equation}
Notice that, the codistribution $F_L^o$ is generated by semi-basic one-forms. Thus, the equation of motion associated to the constrained Lagrangian system can be written as \cite{leonmarrdmdd}:
\begin{equation}\label{nonhollag}
\iota_{\Gamma_{L,N}}\omega_L-dE_L\in S^{*}(TN^{o}),\quad \Gamma_{L,N}|_{N}\in TN.
\end{equation}
It is remarkable that any solution $\Gamma_{L,N}$ of Eq. (\ref{nonhollag}) is a SODE, since the $1-$forms in $F_{L}^{o}$ are semibasic.\\
Note that if the submanifold $N$ is particularly chosen to be $TQ$, then the constrained system (\ref{nonhollag}) reduces to the unconstrained one in Eq. (\ref{HamEq-EL}).\\
Nonetheless, the existence and uniqueness of a vector field $\Gamma_{L,N}$ satisfying Eq. (\ref{nonhollag}) cannot be guaranteed. The system  has a solution if and only if the value of the unconstrained vector field $\Gamma_{L}$ falls into the sum $TN+F_L^{\bot}$ at every point 
in $N$. Here, $F_L^{\bot}$ denotes the symplectic complement of the distribution $F_L$ with respect to the symplectic two-form $\omega_L$. To guarantee existence and uniqueness of $\Gamma_{L,N}$ we will impose two conditions:
\begin{itemize}
\item[(i)] Admissibility condition: for each $x \in N$
\begin{equation} \label{admissibility condition}
\dim \left( T_{x} N \right)^{o} = \left(\dim F_{L}^{o}\right)_{x}
\end{equation}
\item[(ii)] Compatibility condition: for each $x \in N$,
\begin{equation}\label{compatibility condition}
TN \cap F_L^{\bot } =\{0\}
\end{equation}
\end{itemize}
Admissibility condition simply implie that the restriction of the vertical endomorphism $S^{*}: TN^{o} \rightarrow F_{L}^{o}$ is an isomorphism or, equivalently, the family of $1-$forms $\{ \frac{\partial \psi^a}{\partial \dot{q}^i} dq^i  \}$ is linearly independent.\\
To understand the compatibility condition, let us define the following matrix:
$$ \mathcal{C}= \left(\mathcal{C}^{ab} \right) = \left( Z^{a} \left( \psi^{b}\right) \right),$$
where each $Z^{b}$ is the (local) vector field on $TQ$ such that $S^{*} \left( d\psi^{b} \right)$ is the image of $Z^{b}$ by the musical isomorphism $\omega_{L}^{\flat}$, i.e.,
$$\omega_{L} \left( Z^{b} , X \right) = \left[S^{*}\left( d \psi^{b} \right)\right]\left( X \right) , \  \forall X \in \mathfrak{X} \left( TQ \right).$$
So, the compatibility condition is equivalent to the regularity of $\mathcal{C}$  (independently of the choice of the constraint functions). Let us consider the Hessian matrix
$$ \left( W_{ij}\right) = \left( \dfrac{\partial^{2} L}{\partial \dot{q}^{i} \dot{q}^{j}} \right).$$
Then, 
$$\mathcal{C}^{ab} = - W^{ij}\dfrac{\partial \psi^{a}}{\partial \dot{q}^{i}}\dfrac{\partial \psi^{b}}{\partial \dot{q}^{j}},$$
where $\left( W^{ij} \right)$ is the inverse matrix of $\left(W_{ij}\right)$. By using this fact we could deduce that the compatibility condition is satisfied if the Lagrangian $L$ is of mechanical type, i.e., $L = T-V$, where $T$ is the kinetic energy function derived from a metric on $Q$, and $V$ is the potential energy defined on $Q$ (see for instance \cite{CaLeMaDi98}). On the other hand, if $N$ is a vector subbundle of $TQ$ admissibility condition is easily satisfied (see \ref{linearconstraints23323}). From now on, we will assume that these two conditions hold.\\

In this case, the tangent space at each point in the constrained submanifold $N$ can be written as a direct sum of $TN$ and $F_L^{\bot}$, that is 
\begin{equation}\label{projections1}
\left. TTQ\right\vert _{N}=TN\oplus F_L^{\bot}. 
 \end{equation}
This geometry enables us to define two complementary projectors namely 
$$\mathcal{P}:TTQ \vert_N \mapsto TN, \qquad \mathcal{Q}:TTQ\vert_N \mapsto F_L^{\bot}.$$ 
It is straightforward to check that $\Gamma_{L,N}=\mathcal{P}(\Gamma_L)$ is a solution of the constrained system. This says that the projection $\mathcal{P}$ is mapping the Euler-Lagrange vector field $\Gamma_L$ to uniquely a solution $\Gamma_{L,N}$ of Eq. (\ref{nonhollag}), that is 
\begin{equation} \label{soln-nHHe}
\mathcal{P}(\Gamma_L)=\Gamma_{L,N}.
\end{equation}

A constraint function is said to be ideal if the work of the forces of reaction of the constraint is equal to zero. A constrained Lagrangian system is called ideal if all the constraints are ideal. Notice that a constrained Lagrangian system is ideal if and only if the Liouville vector field $\Delta = \dot{q}^i{\dfrac{\partial}{\partial\dot{q}^i}}$ is tangent to the constraint submanifold $N$. In local charts, this reads that, for every ideal constraint function $\psi^a$, the term $\dot{q}^i{\dfrac{\partial}{\partial\dot{q}^i}}$ vanishes identically on the constraint submanifold $N$. An important fact related with the notion of ``ideal" is the conservation of energy. In fact, if the constrained Lagrangian system is ideal the energy is a conserved quantity for the solution $\Gamma_{L,N}$ of Eq. (\ref{nonhollag}), i.e.,
\begin{equation}\label{eq13dasd}
 \Gamma_{L,N} \left( E_{L} \right) = 0.
\end{equation}
In fact, for each $a$
$$\left[ S^{*}\left( d \psi^{a}\right) \right]\left( \Gamma_{L,N}\right) =  d\psi^{a} \left( S \left( \Gamma_{L,N} \right) \right) = d\psi^{a} \left( \Delta \right) = 0.$$
So, $\Gamma_{L,N} \in F_{L}$. Then, by contracting Eq. (\ref{nonhollag}) we obtain Eq. (\ref{eq13dasd}). Subsequently, we will assume this condition holds.\\
Notice that, if the constraint functions depend on the velocity variables linearly, then $N$ is ideal. We discuss this particular case in detail in forthcoming subsections.

Consider the distribution $F_L$ on $TQ$ along $N$ whose annihilator is the codistribution $F_L^o$ defined in Eq. (\ref{codist}). Notice that, at each point $v$ in $N \subseteq TQ$, the subspace $(F_L)_v$ is a coisotropic subspace of the symplectic vector space $(T_vTQ,{\omega_L} \left( v \right))$. If the compatibility condition (\ref{compatibility condition}) is satisfied, then the distribution $\mathcal{H}$, which is defined to be the intersection 
\begin{equation} \label{H}
\mathcal{H}:=TN \cap F_L,
\end{equation}
becomes a symplectic subbundle of the symplectic vector bundle $T_{N}TQ$. Converse of this assertion is also true. That is if $\mathcal{H}$ is a symplectic subbundle then the constraint system satisfies the compatibility condition \cite{leonmarrdmdd}. Observe that $\omega_{L}$ can be naturally restricted to $\mathcal{H}$ as a map $\omega_{\mathcal{H}}$ from $N$ to $\mathcal{H}^{*}$, i.e., for all $v_{q} \in N$,
\begin{equation}\label{naturally}
\omega_{\mathcal{H}}\left( v_{q} \right) : \mathcal{H}_{v_{q}} \rightarrow \mathbb{R},
\end{equation}
is the restriction of $\omega_{L}\left( v_{q} \right)$ to $ \mathcal{H}_{v_{q}}$. Here $\omega_{\mathcal{H}}$ will be called $2-$form on $\mathcal{H}$. Analogously, we restrict $dE_{L}$ to $\mathcal{H}$ as the $1-$form $dE_L\vert_\mathcal{H}$ on $\mathcal{H}$. Here, $E_L$ is the energy exhibited in Eq. (\ref{energy}). In this case, the unique solution $\Gamma_{H}$ of the symplectic equation 
$$\iota_{\Gamma_{H}} \omega_{\mathcal{H}}=dE_L\vert_{\mathcal{H}}$$
is the vector field $\Gamma_{L,N}$ generating the constrained dynamics. To prove this fact we only have to restrict Eq. (\ref{nonhollag}) to $\mathcal{H}$ and use the nondegenerancy property of $\omega_{\mathcal{H}}$ (for the linear case see \cite{BatesSnia,CuKeSnBa95}). Further, we have the following direct sum decomposition 
\begin{equation}
T(TQ)\vert _N = \mathcal{H}\oplus \mathcal{H}^{\bot}.
\end{equation}
This decomposition enables us to define two new complementary projectors 
$$\mathcal{\bar{P}}:TTQ \vert_N \mapsto \mathcal{H}, \qquad \mathcal{\bar{Q}}:TTQ\vert_N \mapsto \mathcal{H}^{\bot}.$$ 
See that, we have the following relation between unconstrained Euler-Lagrange vector field $\Gamma_L$ and the constrained one $\Gamma_{L,N}$ given by 
\begin{equation}
\mathcal{\bar{P}}(\Gamma_L)=\Gamma_{L,N}.
\end{equation}
\bigskip
\noindent \textbf{Constrained Lagrangian systems with linear constraints.}\label{linearconstraints23323}\\
A particularly important example of nonholonomic Lagrangian system is when the constraint manifold $N$ is a linear subbundle of the tangent bundle $TQ$. In this case $N$ will be denoted by $N_{\ell}$. We may prove that $N_{\ell}$ is (locally) genereated by linear constraint functions in the velocities, i.e., 
\begin{equation} \label{constLin}
\psi^a(q,\dot{q})=\psi_i^a(q)\dot{q}^i.
\end{equation}
So that we have 
\begin{equation} \label{N2}
N_{\ell}=\left\{(q^i,\dot{q}^j)\in TQ: \psi_i^a(q)\dot{q}^i=0\right\}.
\end{equation}
Taking into account that the functions $\psi^a$ are \textit{fibre-wise} linear we can define the $1-$forms
\begin{equation} \label{barpsi}
\overline{\psi}^a(q)=\psi_i^a(q) dq^i \in \Lambda^1(Q)
\end{equation}
It is easy to realize that the codistribution $N_{\ell}^{o}$ on $Q$ given by the annihilator of $N_{\ell}$ is (locally) generated by the $1-$forms $\overline{\psi}^{a}$.\\
Then, by pulling the one-forms $\overline{\psi}^a$ back to $TQ$ with the help of the tangent bundle projection $\tau_Q$, we arrive at one-forms
\begin{equation} \label{barpsilift}
\tau_Q^*\overline{\psi}^a(q)=\psi_i^a(q) dq^i = S^{*}\left( d \psi^a \right)\in \Lambda^1(TQ)
\end{equation} 
Hence,
$$\tau_{Q}^{*}\left( N_{\ell}^{o}\right) = S^{*}\left( TN_{\ell}^{o}\right).$$
This immediately implies that, in this case, the admissibility condition is satisfied.\\
Note that, even though the one form sections presented in Eq. (\ref{barpsi}) and Eq. (\ref{barpsilift}) locally look identical, they are not so. The former ones $\overline{\psi}^a$ are one-form sections on $Q$ taking values in $T^*Q$ whereas the latter ones $\tau_Q^*\overline{\psi}^a$ are defined on $TQ$ and they are taking values in $T^*TQ$.

Eventually, we are ready to write the equation of motion for the case of linear constraint functions by substituting the local generators of the codistribution $S^{*}(TN_{\ell}^{o})$. Locally, the system (\ref{nonhollag}) reduces to

\begin{equation}\label{nhlag2}
\frac{d}{dt}\left(\frac{\partial L}{\partial \dot{q}^i}\right)-\frac{\partial L}{\partial q^i}=\lambda_a \psi^a_{i}(q), \quad \psi^a_i(q)\dot{q}^i=0,
\end{equation}
where $\lambda_a$ are the Lagrange multipliers to be determined.

\subsection{Constrained Hamiltonian dynamics} \label{constHamSec}

Consider a Hamiltonian system $(T^*Q,\omega_Q,H)$ defined on the cotangent bundle equipped with the canonical symplectic form. Denote the Hamiltonian vector field of $H$ by $\Gamma_{H}$, i.e., $\Gamma_{H}$ is the unique solution of the Hamilton equation
\begin{equation}\label{HamiltstandardEQ13234}
\iota_{\Gamma_{H}}\omega_{Q} = dH.
\end{equation}
Let $M$ be a submanifold of $T^*Q$ of codimension $k$ defining the constraints of the system. Then, $M$ may be described in terms of independent constraint functions $\{\Psi^a \}_{a=1, \dots , k}$ in the following way
\begin{equation}
M=\left\{z\in T^*Q: \Psi^a(z)=0\right\}.
\end{equation}
The equation of motion associated to the constraint Hamiltonian system is defined to be 
\begin{equation} \label{nHELag}
\left. \left(\iota_{\Gamma_{H,M}}\omega_{Q} -dH\right) \right\vert _{M}\in F^{o}, \qquad
\left. \Gamma_{H,M} \right\vert _{M} \in TM,
\end{equation}
where $F^{o}$ is just an arbitrary codistribution on $T^{*}Q$. We will define $F$ as the distribution on $T^{*}Q$ such that $F^{o}$ is the annihilator of $F$. Furthermore, $F^{\bot}$ is the symplectic orthogonal of $F$ with respect of the symplectic form $\omega_{Q}$. Here, $F^{o}$ plays the role of reaction forces. \\
Again, to guarantee the existence and uniqueness of a vector field $\Gamma_{H,M}$ satisfying 
Eq. (\ref{nHELag}) we need to impose admissibility and compatibility conditions
\begin{itemize}
\item[(i)] Admissibility condition: $dim \left( T_{x} M \right) = dim \left( F_{x} \right), \ \forall x \in M$

\item[(ii)] Compatibility condition: $TM \cap F^{\perp} = \{0 \}.$
\end{itemize}
Then the tangent space at each point $z$ in the constrained submanifold $M$ can be written as a direct sum of $T_{z}M$ and $F^{\bot}_{z}$, that is 
\begin{equation}\label{projections}
\left. TT^*Q\right\vert _{M}=TM\oplus F^{\bot}. 
 \end{equation}
and we can define two complementary projectors as well
$$\widehat{\mathcal{P}}:TT^*Q \vert_M \mapsto TM, \qquad \widehat{\mathcal{Q}}:TT^*Q\vert_M \mapsto F^{\bot}.$$ So, $\Gamma_{H,M}=\widehat{\mathcal{P}}(\Gamma_H)$ is a solution of the constrained system (\ref{nHELag}). This says that the projection $\mathcal{P}$ is mapping the Hamiltonian vector field $\Gamma_H$ to a solution $\Gamma_{H,M}$ of Eq. (\ref{nHELag}), that is 
\begin{equation} \label{soln-nHHeHamilt}
\Gamma_{H,M}=\widehat{\mathcal{P}}(\Gamma_H).
\end{equation}
\noindent
If the codistribution $F^{o}$ is generated by the set of one-forms $(\sigma^{a})$ where $a$ runs from $1$ to the dimension of $F^{o}$, say $k$, using the inverse of the musical isomorphism $\omega^{\sharp}_{Q}$, we compute the symplectic gradients of these generators and obtain a basis $\{Z^a\}$ for the symplectic orthogonal $F^\bot$ of the distribution $F$. If we denote $\Gamma_{H}$ as the unconstrained Hamiltonian vector field, a solution of the constrained system takes the form 
\begin{equation} \label{-soln-nHHe}
\Gamma_{H,M}=\Gamma_{H}+\lambda_{a}Z^a, 
\end{equation}
where $\{\lambda_{a}\}$ are Lagrange multipliers.

\subsection{Legendre transformations of constrained Lagrangian systems.}
As a particular case, let us now apply the Legendre transformation to the constraint Lagrangian system presented in Eq. (\ref{nonhollag}) with linear constraint functions Eq. (\ref{constLin}). Note that, the constraint submanifold $N_{\ell}\subset TQ$ in Eq. (\ref{N2}) is mapped to the constraint submanifold 
\begin{equation} \label{LT}
M=\left \{(q,p)\in T^*Q: \Psi^a(q,p)=\psi^a_{i}(q)\frac{\partial H}{\partial p_i}(q,p) =0 \right \}
\end{equation}
of the cotangent bundle $T^*Q$. In order to complete the Legendre transformation of the constraint Lagrangian system presented in Eq. (\ref{nonhollag}) to the cotangent bundle, we now transform codistribution on $TQ$ presented in Eq. (\ref{codist}) to the cotangent bundle to arrive at a codistribution $F^{o}$ defined on the cotangent bundle $T^*Q$. To do this, we compute the Legendre transformation of $S^{*}(TN_{\ell}^{o})$ as follows 
\begin{eqnarray} 
F^{o}&=&(\mathbb{F}L^{-1})^*(S^{*}(TN_{\ell}^{o}))=\left\langle (\mathbb{F}L^{-1})^*\tau_Q^*\overline{\psi}^a \right\rangle \notag \\ 
&=&\left\langle (\tau_Q \circ \mathbb{F}L^{-1})^*\overline{\psi}^a\right\rangle \notag \\ 
&=&\left\langle \pi_Q^*\overline{\psi}^a\right\rangle.
\end{eqnarray}
Note that the relation $\tau_Q \circ \mathbb{F}L^{-1}=\pi_Q$ is the manifestation of the following commutation relation of the fiber derivative.
\begin{equation}
\xymatrix{ TQ
 \ar[dr]_{\tau_Q} \ar@<1ex>[rr]^{\mathbb{F}L} &&T^{*}Q \ar[ll]^{\mathbb{F}L^{-1}}\ar[dl]^{\pi_Q}\\
  & Q&}
 \end{equation}
In coordinates, $F^{o}$ is computed to be 
\begin{equation} \label{F^{o}Lag}
F^{o}=\left\langle\sigma^a:=\pi_Q^*\overline{\psi}^a= \psi^a_{i}(q) dq^i \right\rangle.
\end{equation}

With these definitions, $M$ and $F^{o}$ satisfy the admissibility condition. In Darboux' coordinates, the constraint Hamiltonian system (\ref{nHELag}) is computed to be
\begin{equation} \label{nhHamEqcoor}
\dot{q}^i=\frac{\partial H}{\partial p_i}, \qquad \dot{p}_i=-\frac{\partial H}{\partial q^i}+\beta_a\psi^a_i(q), \qquad  \psi^a_i(q)\frac{\partial H}{\partial p_i}=0.
\end{equation}
Here, the Hamiltonian function is in the form of Eq. (\ref{LT}). 
It is possible to determine the Lagrange multipliers $\beta_a$ in the equations of the motion by simply taking derivative of the constrained with respect to time. By taking the isomorphic image of the space $F^{o}$ in Eq. (\ref{F^{o}Lag}), we compute the distribution 
\begin{equation}
F^\bot=\left\langle Z^a=\psi^a_{i}(q)\dfrac{\partial}{\partial p_i} \right \rangle
\end{equation}
over the cotangent bundle $T^*Q$.

 \subsection{Hamilton--Jacobi equation for systems with external forces} \label{sec-HJforExt}
Let us now introduce the geometric formulation of the Hamilton-Jacobi problem with external forces. A detailed development of this theory can be found in \cite{IgLeDi07}.\\
First, we will present the geometric version of the Halmilton-Jacobi without external forces (or with the external forces equal to zero). Let $\sigma$ be a $1-$form on $Q$ and the Hamiltonian vector field $\Gamma_{H}$ defined in Eq. (\ref{HamiltstandardEQ13234}). We will define a vector field $\Gamma_{H}^{\sigma} \in \frak \mathfrak{X} \left(Q \right)$ as follows
\begin{equation}
\Gamma_{H}^\sigma:=T\pi_{Q} \circ \Gamma_{H}\circ \sigma.
\end{equation}
So, the following diagram 
\begin{equation}\label{Xg21StandardH}
  \xymatrix{ T^{*}Q
\ar[dd]^{\pi_{Q}} \ar[rrr]^{\Gamma_{H}}&   & &T\left( T^{*}Q \right)\ar[dd]^{T\pi_{Q}}\\
  &  & &\\
 Q\ar@/^2pc/[uu]^{\sigma}\ar[rrr]^{\Gamma_{H}^\sigma}&  & & TQ }
\end{equation}
is commutative.
\begin{theorem}\label{standarHJEq234Hamilton}
Let $\sigma$ be closed $1-$form on $Q$. Then the following conditions are equivalent:

\begin{itemize}
\item[(i)] $\Gamma_{H}^\sigma$ and $\Gamma_{H}$ are $\sigma$-related

\item[(ii)] $d(H \circ \sigma) =0$.
\end{itemize}
\end{theorem}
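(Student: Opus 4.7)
The plan is to exploit the canonical property $\sigma^{*}\theta_{Q}=\sigma$ of the Liouville form (so that $\sigma^{*}\omega_{Q}=-d\sigma$) together with the non--degeneracy of $\omega_{Q}$ to force the two sides of the diagram (\ref{Xg21StandardH}) to agree. Recall that $\Gamma_{H}^{\sigma}$ and $\Gamma_{H}$ are $\sigma$--related exactly when, at every point $q\in Q$,
\[
T\sigma\bigl(\Gamma_{H}^{\sigma}(q)\bigr)=\Gamma_{H}\bigl(\sigma(q)\bigr).
\]
So the whole proof reduces to a pointwise identification of two tangent vectors of $T^{*}Q$ at $\sigma(q)$.

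First I would observe that the difference
\[
V(q):=\Gamma_{H}\bigl(\sigma(q)\bigr)-T\sigma\bigl(\Gamma_{H}^{\sigma}(q)\bigr)
\]
is $\pi_{Q}$--vertical, since applying $T\pi_{Q}$ to both terms returns $\Gamma_{H}^{\sigma}(q)$ (the first by the very definition of $\Gamma_{H}^{\sigma}$, the second because $\pi_{Q}\circ\sigma=\mathrm{id}_{Q}$). Hence one only needs to test $V(q)$ against vectors in the image of $T\sigma$: if $V(q)$ pairs trivially with $T_{q}\sigma(T_{q}Q)$ through $\omega_{Q}$, then $V(q)=0$, because the vertical subspace and $T_{q}\sigma(T_{q}Q)$ are Lagrangian complements (they span $T_{\sigma(q)}(T^{*}Q)$ and $\omega_{Q}$ induces a perfect pairing between them).

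Next I would contract the Hamilton equation $\iota_{\Gamma_{H}}\omega_{Q}=dH$ at the point $\sigma(q)$ against an arbitrary vector $T\sigma(v)$ with $v\in T_{q}Q$, obtaining
\[
\omega_{Q}\bigl(\Gamma_{H}(\sigma(q)),\,T\sigma(v)\bigr)=dH\bigl(T\sigma(v)\bigr)=d(H\circ\sigma)(v).
\]
On the other hand, the hypothesis $d\sigma=0$ combined with $\sigma^{*}\omega_{Q}=-d\sigma$ yields $\sigma^{*}\omega_{Q}=0$, so
\[
\omega_{Q}\bigl(T\sigma(\Gamma_{H}^{\sigma}(q)),\,T\sigma(v)\bigr)=\sigma^{*}\omega_{Q}(\Gamma_{H}^{\sigma}(q),v)=0.
\]
Subtracting the two identities gives
\[
\omega_{Q}\bigl(V(q),\,T\sigma(v)\bigr)=d(H\circ\sigma)(v), \qquad \forall v\in T_{q}Q.
\]

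Finally, the equivalence follows directly. If $d(H\circ\sigma)=0$, the right--hand side vanishes for every $v$; by the non--degenerate pairing between the vertical subspace and $T\sigma(T_{q}Q)$ noted above, $V(q)=0$, establishing $\sigma$--relatedness. Conversely, if $\Gamma_{H}^{\sigma}$ and $\Gamma_{H}$ are $\sigma$--related, then $V\equiv 0$ and the displayed equality forces $d(H\circ\sigma)(v)=0$ for all $v$, i.e.\ $d(H\circ\sigma)=0$. The only delicate point is the Lagrangian--complement argument used to turn a pairing identity into the vanishing of $V(q)$; this is really just the fact that $\sigma(Q)$ is Lagrangian when $\sigma$ is closed, and would be the place to be careful if one wished to weaken the closedness hypothesis.
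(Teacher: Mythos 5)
Your argument is correct. The paper states Theorem \ref{standarHJEq234Hamilton} without proof, recalling it from the literature, so there is no in-paper proof to compare against line by line; the closest analogue is the proof of the constrained generalization, Theorem \ref{FirstThofred}, and your strategy is essentially that same one specialized to the unconstrained Hamiltonian setting. You isolate the vertical difference $V(q)=\Gamma_{H}(\sigma(q))-T\sigma\bigl(\Gamma_{H}^{\sigma}(q)\bigr)$, use the isotropy of the vertical distribution (the cotangent counterpart of Lemma \ref{omega-T}) together with $\sigma^{*}\omega_{Q}=-d\sigma=0$, and convert the resulting pairing identity $\omega_{Q}\bigl(V(q),T\sigma(v)\bigr)=d(H\circ\sigma)(v)$ into the stated equivalence. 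The one place where your argument is genuinely cleaner is the final non-degeneracy step: in the unconstrained case the vertical subspace and $T_{q}\sigma(T_{q}Q)$ are complementary Lagrangians, so the perfect pairing induced by $\omega_{Q}$ between them does all the work, whereas in Theorem \ref{FirstThofred} this role has to be played by the compatibility condition $TN\cap F_{L}^{\bot}=\{0\}$. Your closing remark correctly identifies the closedness of $\sigma$ (equivalently, that $\sigma(Q)$ is Lagrangian) as the hypothesis that cannot be weakened without breaking the step from the pairing identity to $V(q)=0$.
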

\begin{definition}
A $1-$form $\sigma$ on $Q$ satisfying conditions of the Theorem \ref{standarHJEq234Hamilton} will be called solution for the standard Hamilton-Jacobi problem given by the Hamiltonian $H$.
\end{definition}

For the Hamiltonian counterpart consider $X$ a vector field on $Q$ and the Euler-Lagrange vector field $\Gamma_{L}$ defined in Eq. (\ref{HamEq-EL}). Then, we can define a vector field $\Gamma_{L}^{X} \in \frak \mathfrak{X} \left(Q \right)$ as follows
\begin{equation}
\Gamma_{L}^X:=T\tau_{Q} \circ \Gamma_{L}\circ X.
\end{equation}
By taking into account that the energy $E_{L}$ of the Lagrangian $L$ is given by $H \circ \mathbb{F}L$ we have the following result.
\begin{theorem}\label{standarHJEq234Lagrangian}
Let $X$ be vector field on $Q$ such that $\mathbb{F}L \circ X$ is a closed $1-$form on $Q$. Then the following conditions are equivalent:

\begin{itemize}
\item[(i)] $\Gamma_{L}^X$ and $\Gamma_{L}$ are $X$-related

\item[(ii)] $d(E_L \circ X) =0$.
\end{itemize}
\end{theorem}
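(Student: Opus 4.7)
The plan is to transfer the entire statement to the Hamiltonian side via the Legendre transformation and then invoke Theorem \ref{standarHJEq234Hamilton}. Since $L$ is regular, $\mathbb{F}L:TQ\to T^{*}Q$ is a (local) diffeomorphism, so setting $\sigma:=\mathbb{F}L\circ X$ one gets a closed $1$-form on $Q$ by hypothesis, and $X=\mathbb{F}L^{-1}\circ\sigma$ is automatically a section of $\tau_{Q}$ because $\pi_{Q}=\tau_{Q}\circ\mathbb{F}L^{-1}$. The two facts I will use throughout are: (a) the identity $\pi_{Q}=\tau_{Q}\circ\mathbb{F}L^{-1}$, already displayed in the excerpt; and (b) the symplectomorphism property $\mathbb{F}L^{*}\omega_{Q}=\omega_{L}$ together with $E_{L}=H\circ\mathbb{F}L$, which forces the Lagrangian and Hamiltonian vector fields to intertwine as $T\mathbb{F}L\circ\Gamma_{L}=\Gamma_{H}\circ\mathbb{F}L$.

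First I would show the identification $\Gamma_{L}^{X}=\Gamma_{H}^{\sigma}$ as vector fields on $Q$:
$$\Gamma_{H}^{\sigma}=T\pi_{Q}\circ\Gamma_{H}\circ\sigma=T\tau_{Q}\circ T\mathbb{F}L^{-1}\circ\Gamma_{H}\circ\mathbb{F}L\circ X=T\tau_{Q}\circ\Gamma_{L}\circ X=\Gamma_{L}^{X},$$
where the middle equality uses (b) and the outer ones use (a) and the definitions. Next, I would verify that $X$-relatedness of $(\Gamma_{L},\Gamma_{L}^{X})$ and $\sigma$-relatedness of $(\Gamma_{H},\Gamma_{H}^{\sigma})$ are equivalent. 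Applying $T\mathbb{F}L$ to $TX\circ\Gamma_{L}^{X}=\Gamma_{L}\circ X$ and using (b) on the right-hand side produces $T\sigma\circ\Gamma_{L}^{X}=\Gamma_{H}\circ\sigma$; combined with $\Gamma_{L}^{X}=\Gamma_{H}^{\sigma}$ this is precisely the $\sigma$-relatedness in the Hamiltonian picture. Since $T\mathbb{F}L$ is a fibrewise isomorphism, the implication reverses.

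Finally, condition (ii) translates cleanly because $E_{L}\circ X=(H\circ\mathbb{F}L)\circ X=H\circ\sigma$, so $d(E_{L}\circ X)=0$ if and only if $d(H\circ\sigma)=0$. Applying Theorem \ref{standarHJEq234Hamilton} to the closed $1$-form $\sigma$ yields the equivalence of (i) and (ii) on the Hamiltonian side, and the two preceding equivalences transport the result back to the Lagrangian statement.

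The only point that requires genuine care, rather than bookkeeping, is the intertwining relation $T\mathbb{F}L\circ\Gamma_{L}=\Gamma_{H}\circ\mathbb{F}L$: this is the classical consequence of the Legendre transform being a symplectomorphism that pulls $H$ back to $E_{L}$, and once it is in hand every other step is a direct computation with the definitions of $\Gamma_{L}^{X}$ and $\Gamma_{H}^{\sigma}$ and the commutative triangle relating $\tau_{Q}$, $\pi_{Q}$, and $\mathbb{F}L$.
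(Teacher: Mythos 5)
Your proposal is correct and is essentially the paper's own (implicit) argument: the paper states this theorem immediately after remarking that $E_{L}=H\circ\mathbb{F}L$, i.e.\ it is intended to follow from Theorem \ref{standarHJEq234Hamilton} by exactly the Legendre-transform transfer you carry out, using $\pi_{Q}\circ\mathbb{F}L=\tau_{Q}$, the intertwining $T\mathbb{F}L\circ\Gamma_{L}=\Gamma_{H}\circ\mathbb{F}L$, and the identification $\Gamma_{L}^{X}=\Gamma_{H}^{\sigma}$ for $\sigma=\mathbb{F}L\circ X$. You have simply written out the details that the paper leaves to the reader.
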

\begin{definition}
A vector field $X$ on $Q$ satisfying conditions of the Theorem \ref{standarHJEq234Lagrangian} will be called solution for the standard Hamilton-Jacobi problem given by the regular Lagrangian $L$.
\end{definition}

Let us now present the geometric formulation of the Hamilton-Jacobi theory for mechanical systems with external forces. The external forces will be depicted by a semibasic $1-$form $\beta$ (i.e., $\beta$ vanishes over vertical tangent vectors) on $Q$. So, symplectic equation associated to $\beta$ is
\begin{equation}
\iota_{\Gamma_{H,\beta}}\omega_{Q}-dH=\beta
\end{equation} 
Note that $\Gamma_{H,\beta}$ is a vector field on $T^*Q$ and it is equal to $\Gamma_H$ if $\beta$ vanishes identically. Let $\sigma$ be a closed one-form on $Q$, and introduce a vector field on $Q$ as follows
\begin{equation}
\Gamma_{H,\beta}^\sigma:=T\pi_Q\circ \Gamma_{H,\beta} \circ \sigma.
\end{equation}
Here is the Hamilton-Jacobi theorem for the present framework.
\begin{theorem} \label{HJforConst}
The following conditions are equivalent:
\begin{itemize}
\item [(i)]$\Gamma_{H,\beta}^\sigma$ and $\Gamma_{H,\beta}$ are $\sigma$ related.
\item [(ii)]$d\left(H\circ \sigma \right)=-\sigma^*\beta$.
\end{itemize}
\end{theorem}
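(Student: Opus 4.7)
The plan is to exploit the standard identities $\sigma^{*}\theta_{Q}=\sigma$ and hence $\sigma^{*}\omega_{Q}=-d\sigma=0$ (which hold because $\sigma$ is closed), so that $\sigma(Q)$ is a Lagrangian submanifold of $(T^{*}Q,\omega_{Q})$. Both implications then follow by pulling back the defining equation $\iota_{\Gamma_{H,\beta}}\omega_{Q}=dH+\beta$ along $\sigma$ and translating the $\sigma$-relatedness condition into the tangency of $\Gamma_{H,\beta}$ to $\sigma(Q)$.

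For the implication $(i)\Rightarrow(ii)$, I would note that $\sigma$-relatedness means $T\sigma\circ \Gamma_{H,\beta}^{\sigma}=\Gamma_{H,\beta}\circ \sigma$, so the functorial identity $\sigma^{*}(\iota_{X}\omega)=\iota_{Y}(\sigma^{*}\omega)$ for $\sigma$-related $X,Y$ applies. Pulling back the defining equation would then give
$$\iota_{\Gamma_{H,\beta}^{\sigma}}(\sigma^{*}\omega_{Q})=d(H\circ\sigma)+\sigma^{*}\beta,$$
and the left-hand side vanishes because $\sigma^{*}\omega_{Q}=0$. This mirrors Theorem \ref{standarHJEq234Hamilton} exactly, with the only new contribution being the force term $\sigma^{*}\beta$.

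For $(ii)\Rightarrow(i)$, I would not invoke $\sigma$-relatedness a priori; instead I would compute the pullback pointwise, which for $v_{q}\in T_{q}Q$ gives
$$[\sigma^{*}(\iota_{\Gamma_{H,\beta}}\omega_{Q})](v_{q})=\omega_{Q}(\Gamma_{H,\beta}(\sigma(q)),T_{q}\sigma(v_{q})).$$
By the defining equation and hypothesis (ii), this pullback equals $d(H\circ\sigma)+\sigma^{*}\beta=0$. Hence $\Gamma_{H,\beta}(\sigma(q))$ lies in the symplectic orthogonal of $T_{\sigma(q)}\sigma(Q)$, which by Lagrangianity of $\sigma(Q)$ coincides with $T_{\sigma(q)}\sigma(Q)$ itself. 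Thus $\Gamma_{H,\beta}\circ\sigma$ is tangent to $\sigma(Q)$, so there is a unique $w_{q}\in T_{q}Q$ with $\Gamma_{H,\beta}(\sigma(q))=T_{q}\sigma(w_{q})$; applying $T\pi_{Q}$ and using $\pi_{Q}\circ\sigma=\mathrm{id}_{Q}$ identifies $w_{q}$ with $\Gamma_{H,\beta}^{\sigma}(q)$, which is precisely $\sigma$-relatedness.

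The only delicate point is the $(ii)\Rightarrow(i)$ direction: one must resist the temptation to use the Lie-derivative style pullback identity before $\sigma$-relatedness is established, and instead perform the pullback computation pointwise so that it is valid without any tangency assumption. Once the tangency of $\Gamma_{H,\beta}$ to the Lagrangian submanifold $\sigma(Q)$ is extracted from the nondegeneracy of $\omega_{Q}$, the reconstruction of the projected vector field is forced by the very definition $\Gamma_{H,\beta}^{\sigma}:=T\pi_{Q}\circ\Gamma_{H,\beta}\circ\sigma$.
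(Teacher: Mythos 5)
Your proof is correct. The paper itself states Theorem \ref{HJforConst} without proof, deferring to \cite{IgLeDi07}, so the natural benchmark is the paper's own proof of the closely analogous constrained statement, Theorem \ref{FirstThofred}. There the converse direction is handled by forming the vertical difference vector fields $\Lambda=\Gamma_{L,N}-T(X\circ\tau_{Q})\cdot\Gamma_{L,N}$ and $\Sigma=V-T(X\circ\tau_{Q})\cdot V$, invoking Lemma \ref{omega-T} (vanishing of the symplectic form on vertical vectors) together with the compatibility condition (\ref{compatibility condition}) to force the difference to vanish. Your route packages the same information as ``$\sigma(Q)$ is Lagrangian because $\sigma^{*}\omega_{Q}=-d\sigma=0$'' and reads off tangency of $\Gamma_{H,\beta}$ to $\sigma(Q)$ from $\left(T_{\sigma(q)}\sigma(Q)\right)^{\bot}=T_{\sigma(q)}\sigma(Q)$; in the unconstrained setting these are equivalent mechanisms, and yours is arguably cleaner since nondegeneracy of $\omega_{Q}$ does the work of the compatibility condition for free. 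You are also right to isolate the one delicate point: the identity $\sigma^{*}(\iota_{X}\omega)=\iota_{Y}(\sigma^{*}\omega)$ presupposes $\sigma$-relatedness, so in the direction $(ii)\Rightarrow(i)$ it must be replaced by the pointwise computation $[\sigma^{*}(\iota_{\Gamma_{H,\beta}}\omega_{Q})](v_{q})=\omega_{Q}\left(\Gamma_{H,\beta}(\sigma(q)),T_{q}\sigma(v_{q})\right)$, exactly as you do, and the reconstruction of $\Gamma_{H,\beta}^{\sigma}$ from tangency via $T\pi_{Q}$ and $\pi_{Q}\circ\sigma=\mathrm{id}_{Q}$ is sound. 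One cosmetic remark: your argument never uses that $\beta$ is semibasic, so the equivalence you establish is marginally more general than the physical setting of the theorem requires.
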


\begin{definition}
A $1-$form $\sigma$ on $Q$ satisfying conditions of the Theorem \ref{HJforConst} will be called solution for the Hamilton-Jacobi problem given by the Hamiltonian $H$ and the external forces $\beta$.
\end{definition}

Let us now write this theorem on the symplectic structures defined on the tangent bundles. Assume the following symplectic relation on $TQ$ equipped with the symplectic form $\omega_L$
\begin{equation}
\iota_{\Gamma_{L,\alpha}}\omega_L-dE_L=\alpha.
\end{equation} 
Here, $\Gamma_{L,\alpha}$ is the vector field generating the dynamics, and $\alpha$ is a semibasic one-form on $TQ$. Let $X$ be a vector field on $Q$ such that $\mathbb{F}L\circ X$ is closed one-form on $Q$, and define the following vector field 
\begin{equation}
\Gamma_{L,\alpha}^X=T\tau_Q \circ \Gamma_{L,\alpha} \circ X,
\end{equation}
where $T\tau_Q$ is the tangent lift of $\tau_Q$. We have the following
\begin{theorem} \label{HJforConstLag}
The following conditions are equivalent:
\begin{itemize}
\item [(i)]$\Gamma_{L,\alpha}^X$ and $\Gamma_{L,\alpha}$ are $X$-related.
\item [(ii)]$d \left(E{_L}\circ X\right)=-X^*\alpha$.
\end{itemize}
\end{theorem}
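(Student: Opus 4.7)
The plan is to deduce Theorem \ref{HJforConstLag} from its Hamiltonian counterpart Theorem \ref{HJforConst} by transporting everything through the Legendre transformation $\mathbb{F}L:TQ\to T^{*}Q$, which is a local diffeomorphism by regularity of $L$. Set
\[
\sigma:=\mathbb{F}L\circ X,\qquad \beta:=(\mathbb{F}L^{-1})^{*}\alpha,\qquad \Gamma_{H,\beta}:=T\mathbb{F}L\circ\Gamma_{L,\alpha}\circ\mathbb{F}L^{-1}.
\]
By hypothesis $\sigma$ is a closed $1$-form on $Q$. Since $\pi_{Q}\circ\mathbb{F}L=\tau_{Q}$, the map $\mathbb{F}L$ sends vertical vectors to vertical vectors, so $\beta$ inherits semibasicity from $\alpha$. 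Using $\mathbb{F}L^{*}\omega_{Q}=\omega_{L}$ and $H\circ\mathbb{F}L=E_{L}$, the pushforward of the forced Euler--Lagrange equation $\iota_{\Gamma_{L,\alpha}}\omega_{L}-dE_{L}=\alpha$ becomes exactly $\iota_{\Gamma_{H,\beta}}\omega_{Q}-dH=\beta$, so $\Gamma_{H,\beta}$ is the vector field governed by Theorem \ref{HJforConst} with data $(H,\beta,\sigma)$.

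Next, I would identify the two ``projected'' vector fields on $Q$. Applying the chain rule and $T\pi_{Q}\circ T\mathbb{F}L=T(\pi_{Q}\circ\mathbb{F}L)=T\tau_{Q}$,
\[
\Gamma_{H,\beta}^{\sigma}=T\pi_{Q}\circ\Gamma_{H,\beta}\circ\sigma=T\pi_{Q}\circ T\mathbb{F}L\circ\Gamma_{L,\alpha}\circ\mathbb{F}L^{-1}\circ\mathbb{F}L\circ X=T\tau_{Q}\circ\Gamma_{L,\alpha}\circ X=\Gamma_{L,\alpha}^{X}.
\]
Now the $X$-relatedness statement (i) translates cleanly: $\Gamma_{H,\beta}^{\sigma}$ is $\sigma$-related to $\Gamma_{H,\beta}$ iff $T\sigma\circ\Gamma_{L,\alpha}^{X}=\Gamma_{H,\beta}\circ\sigma$, and applying $T\mathbb{F}L^{-1}$ to both sides and using $\mathbb{F}L^{-1}\circ\sigma=X$ yields $TX\circ\Gamma_{L,\alpha}^{X}=\Gamma_{L,\alpha}\circ X$, which is precisely $X$-relatedness in the Lagrangian picture. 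Statement (ii) translates as well: $E_{L}\circ X=H\circ\mathbb{F}L\circ X=H\circ\sigma$ and $X^{*}\alpha=X^{*}\mathbb{F}L^{*}\beta=\sigma^{*}\beta$, hence $d(E_{L}\circ X)=-X^{*}\alpha$ is the same equation as $d(H\circ\sigma)=-\sigma^{*}\beta$.

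With these three dictionary entries in place, the equivalence (i)$\Leftrightarrow$(ii) follows directly from Theorem \ref{HJforConst} applied to $(H,\beta,\sigma)$. The only point requiring care, and in my view the main (minor) obstacle, is the book-keeping around the Legendre transport of the forced symplectic equation: one must check that $\Gamma_{H,\beta}$ as defined really is the unique vector field solving $\iota_{\Gamma_{H,\beta}}\omega_{Q}-dH=\beta$ with $\beta$ semibasic, which amounts to verifying $\mathbb{F}L^{*}(\iota_{\Gamma_{H,\beta}}\omega_{Q})=\iota_{\Gamma_{L,\alpha}}(\mathbb{F}L^{*}\omega_{Q})$ — a standard naturality identity for pullbacks of contractions under a diffeomorphism. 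Once this is checked, no further computation is required; the result is a direct transfer of Theorem \ref{HJforConst}.
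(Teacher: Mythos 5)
Your proposal is correct. Note, however, that the paper does not actually prove Theorem \ref{HJforConstLag}: it states it (together with its Hamiltonian counterpart, Theorem \ref{HJforConst}) without proof, deferring the detailed development to the cited reference, so there is no in-paper argument to compare against. Your route --- transporting the whole problem through the Legendre transformation and invoking Theorem \ref{HJforConst} --- is a clean and complete way to fill that gap, assuming the Hamiltonian theorem as given. All three dictionary entries check out: $\beta=(\mathbb{F}L^{-1})^{*}\alpha$ is indeed semibasic because $\tau_{Q}\circ\mathbb{F}L^{-1}=\pi_{Q}$ forces $T\mathbb{F}L^{-1}$ to carry $\pi_{Q}$-vertical vectors to $\tau_{Q}$-vertical ones; the identity $\Gamma_{H,\beta}^{\sigma}=\Gamma_{L,\alpha}^{X}$ and the translation of $\sigma$-relatedness into $X$-relatedness both follow from $T\pi_{Q}\circ T\mathbb{F}L=T\tau_{Q}$ and the injectivity of $T\mathbb{F}L$; and condition (ii) transfers via $E_{L}=H\circ\mathbb{F}L$ and $X^{*}\mathbb{F}L^{*}\beta=\sigma^{*}\beta$. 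The only caveat worth recording is that the paper only guarantees $\mathbb{F}L$ to be a \emph{local} diffeomorphism, so strictly speaking $\mathbb{F}L^{-1}$, $\beta$ and $\Gamma_{H,\beta}$ are defined only locally (or one should assume hyperregularity); since both conditions (i) and (ii) are pointwise/local statements, the equivalence survives this, and the paper itself uses $\mathbb{F}L^{-1}$ with the same latitude elsewhere.
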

\begin{definition}
A vector field $X$ on $Q$ satisfying conditions of the Theorem \ref{HJforConstLag} will be called solution for the Hamilton-Jacobi problem given by the regular Lagrangian $L$ and the external forces $\alpha$.
\end{definition}
\subsection{Hamilton--Jacobi equation for nonholonomic systems with linear constraints}

Let us introduce the geometric Hamilton--Jacobi theory in the framework of Lagrangian system with linear constraints. Accordingly, we start with a  regular Lagrangian $L$ on $TQ$ and the existence of nonholonomic constraints defined by the vector subbundle $N_{\ell}$ given in \eqref{N2}. As discussed previously, see Eq. (\ref{barpsi}), the constraint functions $\psi ^a_i(q)$ lead to the determination of the set of differential one-forms $\overline{\psi^a}$ on $Q$. Note that, the image space of these one-forms determine a subbundle $N_{\ell}^0$ of the cotangent bundle $T^*Q$ annihilating $N_{\ell}$, and  
define an ideal 
\begin{equation}
{\mathcal I}(N_{\ell}^0)=\left\{\beta_a\wedge \overline{\psi ^a}: \beta_a\in \Lambda^s(Q)  \right\}
\end{equation}
 of the exterior algebra $\Lambda(Q)$. In this setting, a vector field $X$ on $Q$ is called a characteristic vector field of the ideal by satisfying $\iota_X(\overline{\psi^a})=0$ for all $\overline{\psi^a}$. A characteristic vector field $X$ of ${\mathcal I}(N_{\ell}^0)$ preserves the ideal, that is, $\iota_X{\mathcal I}(N_{\ell}^0)\subset {\mathcal I}(N_{\ell}^0)$. Notice that a vector field $X$ taking values in the constraint subbundle $N_{\ell}$ is a characteristic vector field of the ideal. 
 
Let us enunciate here the Hamilton--Jacobi theorem in this prescribed setting. Consider $X$ be vector field on $Q$ such that $X(Q) \subset N_{\ell}$, and the constrained Euler-Lagrange vector field $\Gamma_{L,N}$ defined in Eq. (\ref{nonhollag}). We will define a vector field $\Gamma_{L,N_{\ell}}^{X} \in \frak X \left(Q \right)$ as follows
\begin{equation}
\Gamma_{L,N_\ell}^X:=T\tau_{Q} \circ \Gamma_{L,N_\ell}\circ X.
\end{equation}
So, the following diagram 
\begin{equation}\label{Xg21}
  \xymatrix{ N_\ell\subset TQ
\ar[dd]^{\tau_{Q}} \ar[rrr]^{\Gamma_{L,N_\ell}}&   & &TTQ\ar[dd]^{T\tau_{Q}}\\
  &  & &\\
 Q\ar@/^2pc/[uu]^{X}\ar[rrr]^{\Gamma_{L,N_\ell}^X}&  & & TQ }
\end{equation}
is commutative.
 
\begin{theorem}\label{nhhj21}
Let $X$ be vector field on $Q$ such that $X(Q) \subset N_{\ell}$ and
$d(\mathbb{F}L \circ X)\in {\mathcal I} (N_{\ell}^0)$. Then the following
conditions are equivalent:

\begin{itemize}
\item[(i)] $\Gamma_{L,N_\ell}^X$ and $\Gamma_{L,N_\ell}$ are $X$-related

\item[(ii)] $d(E_L \circ X) \in N_\ell^0$.
\end{itemize}
\end{theorem}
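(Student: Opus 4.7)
The plan is to pull the defining symplectic equation of $\Gamma_{L,N_\ell}$ back along the section $X\colon Q\to N_\ell\subset TQ$ and to isolate the obstruction to $X$-relatedness as a vertical defect that the compatibility condition then forces to vanish.

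First I would exploit the SODE character of $\Gamma_{L,N_\ell}$: it gives $T\tau_Q\circ\Gamma_{L,N_\ell}\circ X=X$, so the vector field $\Gamma_{L,N_\ell}^X$ appearing in diagram (\ref{Xg21}) is exactly $X$, and $X$-relatedness reduces to the single equality $TX\circ X=\Gamma_{L,N_\ell}\circ X$. Setting $V:=\Gamma_{L,N_\ell}\circ X-TX\circ X$, both summands lie in $TN_\ell$ (because $\Gamma_{L,N_\ell}|_{N_\ell}\in TN_\ell$ and $X(Q)\subset N_\ell$) and both project under $T\tau_Q$ to $X$; hence $V$ is a vertical vector field along $X$ whose vertical-lift preimage $\overline{V}$ is a section of $N_\ell\to Q$. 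Condition (i) is equivalent to $V\equiv 0$.

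The heart of the argument is pulling $\iota_{\Gamma_{L,N_\ell}}\omega_L-dE_L\in S^*(TN_\ell^o)$ back by $X$. The tautological property of the Poincar\'e--Cartan form yields $X^*\theta_L=\mathbb{F}L\circ X$, whence $X^*\omega_L=-d(\mathbb{F}L\circ X)$; the local generators from \eqref{barpsilift} give $X^*(S^*(TN_\ell^o))=\langle\overline{\psi^a}\rangle=N_\ell^0$. Splitting $\Gamma_{L,N_\ell}\circ X=TX\circ X+V$ and testing against a vector field $Z$ on $Q$ produces
\[
X^*(\iota_{\Gamma_{L,N_\ell}}\omega_L)(Z)=-d(\mathbb{F}L\circ X)(X,Z)+\omega_L(V,TX(Z)).
\]
The hypothesis $d(\mathbb{F}L\circ X)=\beta_a\wedge\overline{\psi^a}\in\mathcal{I}(N_\ell^0)$ combined with $\overline{\psi^a}(X)=\psi^a\circ X=0$ shows $-d(\mathbb{F}L\circ X)(X,\cdot)=-\beta_a(X)\,\overline{\psi^a}\in N_\ell^0$, so the pulled-back equation reduces to
\[
\omega_L(V,TX(\cdot))-d(E_L\circ X)\in N_\ell^0. \quad (\star)
\]

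With $(\star)$ in hand, the implication (i)$\Rightarrow$(ii) is immediate: substituting $V=0$ gives $d(E_L\circ X)\in N_\ell^0$. For (ii)$\Rightarrow$(i), the hypothesis turns $(\star)$ into $\omega_L(V,TX(Z))=0$ for every section $Z$ of $N_\ell$. In bundle coordinates with $V=V^j\,\partial/\partial\dot{q}^j$ and $\omega_L=W_{ij}\,dq^i\wedge d\dot{q}^j+\text{(terms in }dq\wedge dq\text{)}$, this reads $W_{ij}V^jZ^i=0$ whenever $\psi^a_iZ^i=0$, i.e.\ $W_{ij}V^j=\mu_a\psi^a_i$. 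Coupling this with $\psi^a_jV^j=0$ (from $\overline{V}$ being a section of $N_\ell$) yields $\mathcal{C}^{ab}\mu_a=0$ for the matrix $\mathcal{C}^{ab}=-W^{ij}\psi^a_i\psi^b_j$ introduced after \eqref{compatibility condition}. Regularity of $\mathcal{C}$, equivalent to the compatibility condition, forces $\mu_a=0$ and hence $V=0$. The main obstacle is precisely this last passage: converting an $\omega_L$-orthogonality constraint restricted to vectors in $N_\ell$ into the genuine vanishing of $V$ is where admissibility and compatibility are indispensable, whereas everything preceding it is essentially a formal consequence of the tautological identity $X^*\theta_L=\mathbb{F}L\circ X$ and the SODE collapse of $\Gamma_{L,N_\ell}^X$ to $X$.
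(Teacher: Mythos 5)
Your proof is correct, and it is worth comparing with the route the paper actually takes: the paper never proves Theorem \ref{nhhj21} directly, but instead proves the general nonlinear Theorem \ref{FirstThofred} (hypotheses phrased through $X\circ\tau_Q$ and $F_L^{o}$, two separate implications, the auxiliary vertical fields $\Lambda=\Gamma_{L,N}-T(X\circ\tau_Q)\cdot\Gamma_{L,N}$ and $\Sigma$, Lemma \ref{omega-T} on the $\omega_L$-isotropy of vertical vectors, and the compatibility condition in the invariant form $TN\cap F_L^{\bot}=\{0\}$) and then, at the end of Section \ref{firstresut23}, checks that the linear statement is the specialization, using the same identity $X^*\omega_L=-d(\mathbb{F}L\circ X)$ that you derive from $X^*\theta_L=\mathbb{F}L\circ X$. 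Your argument is instead a direct, self-contained proof of the linear case: the SODE collapse $\Gamma_{L,N_\ell}^X=X$, a single pulled-back identity $(\star)$ from which both implications drop out, and a concrete linear-algebra finish via the matrix $\mathcal{C}$. The two proofs rest on the same mechanism --- your defect $V$ is (up to sign) the restriction of the paper's $\Lambda$ to the image of $X$, your conclusion $\iota_V\omega_L=-\mu_a\overline{\psi^a}$ is exactly $V\in F_L^{\bot}$, and the paper itself records that regularity of $\mathcal{C}$ is equivalent to compatibility --- but yours is shorter and makes completely explicit where admissibility (linear independence of the $\psi^a_i$, needed to write $W_{ij}V^j=\mu_a\psi^a_i$) and compatibility (invertibility of $\mathcal{C}$) enter. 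The price is that two of your steps are genuinely tied to linearity: the identification of the vertical part of $T_{X(q)}N_\ell$ with the fibre $(N_\ell)_q$ (which gives $\psi^a_jV^j=0$) and the computation $X^*\bigl(S^*(TN_\ell^{o})\bigr)=N_\ell^{0}$ both use that $N_\ell$ is a vector subbundle, so your argument does not extend to the nonlinear Theorem \ref{FirstThofred}, whereas the paper's detour through the general case covers both at once.
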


Let us now introduce a special case of nonholonomic linear constraints which turns condition $d(E_L \circ X) \in N_\ell^0$ into the identity
$$d \left(E_L \circ X \right) = 0.$$

\begin{definition} \label{complete-nonholonomic}
A distribution $\mathcal{D} \leq TQ$ is said to be completely nonholonomic (or bracket-generating) if
$\mathcal{D}$ along with all of its iterated Lie brackets $\left[\mathcal{D}, \mathcal{D}\right], \left[ \mathcal{D}, \left[ \mathcal{D} , \mathcal{D} \right] \right], \hdots$ spans the tangent bundle $TQ$
\end{definition}
An important result related with this kind of distribution is the following
\begin{theorem}[Chow-Rashevskii's theorem]\label{LordChowTheorem}
Let $Q$ be a connected differentiable manifold. If a distribution $\mathcal{D} \leq TQ$ is completely nonholonomic, then any two points on $Q$ can be joined by a horizontal path $\gamma$, i.e., the derivative of $\gamma$ is tangent to $\mathcal{D}$.

\end{theorem}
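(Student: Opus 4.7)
The plan is to use the classical orbit-theoretic argument: define $R(p) \subseteq Q$ to be the set of points reachable from $p$ by a piecewise smooth horizontal path, show that reachability is an equivalence relation, and then prove that each equivalence class is open; connectedness of $Q$ then forces $R(p) = Q$. Symmetry of the relation follows by reversing the parameterization of a horizontal path (and the distribution is closed under sign), transitivity by concatenation. The complement $Q \setminus R(p)$ decomposes as a disjoint union of other reachability classes $R(q)$, so once openness is established, $R(p)$ is simultaneously open and closed.

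The substantive step is proving openness of $R(p)$, and this is where the bracket-generating hypothesis enters. Choose a local frame $X_1,\dots,X_k$ of $\mathcal{D}$ in a neighborhood of $p$ and consider the map
\begin{equation}
F(t_1,\dots,t_k) = \phi^{X_k}_{t_k} \circ \cdots \circ \phi^{X_1}_{t_1}(p),
\end{equation}
where $\phi^{X_i}_t$ denotes the flow of $X_i$. The image of $F$ for small $t_i$ lies entirely in $R(p)$, but if $k<\dim Q$ the image is only $k$-dimensional, so $F$ alone cannot produce an open set. The key observation is that iterated flow commutators realize Lie brackets: for any two vector fields $X,Y$ tangent to $\mathcal{D}$, the conjugated composition
\begin{equation}
\Psi_s(X,Y) := \phi^{Y}_{-\sqrt{s}} \circ \phi^{X}_{-\sqrt{s}} \circ \phi^{Y}_{\sqrt{s}} \circ \phi^{X}_{\sqrt{s}}
\end{equation}
defines a curve through $p$ whose velocity at $s=0$ equals $[X,Y](p)$, and each $\Psi_s$ maps into $R(p)$. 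Iterating this construction, every iterated bracket $[X_{i_1},[X_{i_2},[\dots,X_{i_m}]\dots]](p)$ is tangent to a curve that lies entirely inside $R(p)$.

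I would then assemble, by concatenating such bracket-curves with the original flows, an enlarged map $\widetilde{F}\colon U \subseteq \mathbb{R}^n \to Q$ (with $n=\dim Q$) whose coordinate directions at the origin realize a spanning family of iterated brackets; by the bracket-generating hypothesis of Definition \ref{complete-nonholonomic}, these vectors span $T_pQ$. The derivative $d\widetilde{F}_0$ is thus surjective, so by the inverse function theorem $\widetilde{F}$ is a local surjection onto an open neighborhood of $p$ contained in $R(p)$. Hence $R(p)$ is open. Once this is in hand, connectedness of $Q$ together with the disjoint-open decomposition of $Q$ into equivalence classes yields $R(p) = Q$, completing the proof.

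The main obstacle is the careful bookkeeping in the last step: one must show that an appropriately chosen collection of iterated bracket-curves can be assembled into a single smooth map $\widetilde{F}$ whose differential at $0$ hits a prescribed spanning set of $T_pQ$. The technical point is that the curves $\Psi_s$ are only $C^0$ in $s$ if one uses $\sqrt{s}$-parameterizations, so one must either reparameterize or invoke a variant of the inverse function theorem that accommodates the Hölder-type behavior; the standard trick is to incorporate the bracket-curves by using their \emph{tangent vectors at positive time}, applying the inverse function theorem to a smooth map whose differential nevertheless has a column equal to the desired bracket.
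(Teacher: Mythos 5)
The paper does not actually prove this theorem: it states it and defers to the literature (``A detailed proof of this theorem for regular distributions can be found in \cite{Montg}. For singular distributions see \cite{Harms}.''), so there is no internal argument to compare yours against. Judged on its own, your outline is the standard and correct strategy: reachability classes partition $Q$, each class is open once the bracket-generating hypothesis is exploited, and connectedness finishes the job. The reduction to openness is clean and complete as you state it.

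The one place where your sketch is genuinely incomplete is exactly the place you flag: assembling the map $\widetilde{F}$ and verifying that the inverse (or rank) theorem applies. The commutator curves $\Psi_s$ built from $\sqrt{s}$-reparameterized flows are only one-sided differentiable at $s=0$, and a map whose columns are ``derivatives of non-smooth curves'' does not directly feed into the inverse function theorem; saying one should ``incorporate the bracket-curves by using their tangent vectors at positive time'' gestures at the fix without supplying it. The clean way to close this gap --- and the one used in \cite{Montg} --- is to avoid the commutator curves altogether: consider all smooth maps of the form $(t_1,\dots,t_k)\mapsto \phi^{X_{i_k}}_{t_k}\circ\cdots\circ\phi^{X_{i_1}}_{t_1}(p)$ with the $X_{i_j}$ horizontal, let $m$ be the maximal rank attained by any such map, and observe that near a point of maximal rank the image is an $m$-dimensional immersed submanifold $S$ to which every horizontal vector field is tangent (otherwise appending one more flow would raise the rank). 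Since $TS$ is involutive, all iterated brackets of horizontal fields are then tangent to $S$ as well, and the bracket-generating hypothesis forces $m=\dim Q$, so the reachable set contains an open set; translating by flows makes every reachability class open. This argument uses only smooth maps, so the H\"older-type difficulty you describe never arises. I would recommend either carrying out this maximal-rank argument or explicitly citing \cite{Montg} for the openness step rather than leaving the assembly of $\widetilde{F}$ as ``careful bookkeeping.''
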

A detailed proof of this theorem for regular distributions can be found in \cite{Montg}. For singular distributions see \cite{Harms}. A particularly important consequence of this theorem is given by the following result (\cite{ToOsBlo}).\\
\begin{proposition}\label{Importantconsequuence2134}
Let $Q$ be a connected differentiable manifold and $\mathcal{D} \leq TQ$ be a completely nonholonomic distribution. Then there is no non-zero exact one-form in the annihilator $\mathcal{D}^{o} \leq T^{*}Q$.
\end{proposition}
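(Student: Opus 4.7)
The plan is to argue by contradiction, leveraging Chow--Rashevskii to show that any one-form annihilating a completely nonholonomic distribution cannot arise as the differential of a non-constant function.

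First I would suppose, for contradiction, that there exists a function $f \in C^{\infty}(Q)$ with $df$ not identically zero but $df \in \mathcal{D}^{o}$. By definition of the annihilator, this means $df(v) = 0$ for every $v \in \mathcal{D}_{q}$ and every $q \in Q$; equivalently, $X(f) = 0$ for every (local) section $X$ of the distribution $\mathcal{D}$.

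Next, I would fix two arbitrary points $p, q \in Q$. Since $Q$ is connected and $\mathcal{D}$ is completely nonholonomic, Theorem \ref{LordChowTheorem} (Chow--Rashevskii) furnishes a horizontal path $\gamma \colon [0,1] \to Q$ with $\gamma(0) = p$, $\gamma(1) = q$, and $\dot{\gamma}(t) \in \mathcal{D}_{\gamma(t)}$ for (almost) every $t$. Then by the fundamental theorem of calculus applied to $f \circ \gamma$,
\begin{equation}
f(q) - f(p) \;=\; \int_{0}^{1} \frac{d}{dt}\bigl( f \circ \gamma \bigr)(t)\, dt \;=\; \int_{0}^{1} df_{\gamma(t)}\bigl( \dot{\gamma}(t) \bigr)\, dt \;=\; 0,
\end{equation}
because the integrand vanishes pointwise by the assumption $df \in \mathcal{D}^{o}$. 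Since $p$ and $q$ were arbitrary, $f$ is constant on $Q$, hence $df \equiv 0$, contradicting our hypothesis.

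The only genuine subtlety is ensuring the preceding chain rule step survives the fact that horizontal paths produced by Chow--Rashevskii are typically only piecewise smooth (concatenations of integral curves of sections of $\mathcal{D}$). This is handled by splitting $[0,1]$ into the finitely many smooth pieces and applying the calculation to each piece: on each smooth segment $df(\dot\gamma) = 0$, so $f$ is constant there, and continuity of $f \circ \gamma$ at the break points propagates this across the entire path. With this minor remark, the argument is complete.
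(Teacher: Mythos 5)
Your proof is correct and follows exactly the route the paper intends: the paper presents this proposition as an immediate consequence of the Chow--Rashevskii theorem (citing \cite{ToOsBlo} rather than writing out the details), and your argument --- constancy of $f$ along horizontal paths plus horizontal connectivity of $Q$ --- is precisely that standard deduction. The remark about handling piecewise-smooth horizontal paths is a sensible touch that the paper leaves implicit.
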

It is important to remark that this proposition can be analogously proved for the case in which $\mathcal{D}$ is a singular distribution by taking into account the Chow-Rashevskii's theorem for singular distributions (\cite{Harms}).\\
So, as an immediate consequence we have the next theorem
\begin{theorem}\label{nhhj21corollary4}
Assume that the distribution defined by the constraint vector bundle $N_{\ell}$ is completely nonholonomic. Let $X$ be vector field on $Q$ such that $X(Q) \subset N_{\ell}$ and $d(\mathbb{F}L \circ X)\in {\mathcal I} (N_{\ell}^0)$. Then, the following
conditions are equivalent:

\begin{itemize}
\item[(i)] $\Gamma_{L,N_\ell}^X$ and $\Gamma_{L,N_\ell}$ are $X$-related

\item[(ii)] $d(E_L \circ X) =0$.
\end{itemize}
\end{theorem}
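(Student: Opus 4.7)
The plan is to reduce this statement directly to the combination of Theorem \ref{nhhj21} (the Hamilton--Jacobi theorem for linear nonholonomic constraints with the weaker conclusion $d(E_L\circ X)\in N_\ell^o$) and Proposition \ref{Importantconsequuence2134} (no non-zero exact one-form lies in the annihilator of a completely nonholonomic distribution). Since the hypotheses ``$X(Q)\subset N_\ell$'' and ``$d(\mathbb{F}L\circ X)\in {\mathcal I}(N_\ell^0)$'' are exactly those of Theorem \ref{nhhj21}, I can apply it verbatim.

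More precisely, I would argue as follows. First, by Theorem \ref{nhhj21}, condition (i), namely that $\Gamma_{L,N_\ell}^X$ and $\Gamma_{L,N_\ell}$ are $X$-related, is equivalent to
\[
 d(E_L\circ X)\in N_\ell^0.
\]
Now observe that $d(E_L\circ X)$ is by construction an exact one-form on $Q$, as it is the exterior derivative of the smooth function $E_L\circ X\in C^\infty(Q)$. Since we are assuming that the distribution defined by $N_\ell$ is completely nonholonomic in the sense of Definition \ref{complete-nonholonomic}, Proposition \ref{Importantconsequuence2134} applies: the only exact one-form on $Q$ lying in $N_\ell^o$ is the zero one-form. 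Therefore $d(E_L\circ X)\in N_\ell^0$ forces $d(E_L\circ X)=0$. The converse direction is trivial, since $0\in N_\ell^o$.

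Combining the two implications yields that (i) is equivalent to (ii), as claimed. There is no real obstacle in this argument: the two inputs are already stated in the excerpt, and the bridge between them consists only in noting that $d(E_L\circ X)$ is tautologically exact, so that the Chow--Rashevskii consequence in Proposition \ref{Importantconsequuence2134} upgrades the containment in $N_\ell^0$ to a strict equality with zero. The only minor point to emphasize in the write-up is that the completely nonholonomic hypothesis is used solely to make this upgrade, and not to modify the statement or proof of Theorem \ref{nhhj21}, which supplies everything else.
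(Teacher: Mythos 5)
Your proposal is correct and follows essentially the same route as the paper, which presents Theorem \ref{nhhj21corollary4} as an immediate consequence of Theorem \ref{nhhj21} together with Proposition \ref{Importantconsequuence2134}, exactly as you do: the completely nonholonomic hypothesis serves only to upgrade the containment $d(E_L\circ X)\in N_\ell^0$ to the equality $d(E_L\circ X)=0$, since $d(E_L\circ X)$ is exact. The only point worth making explicit in a final write-up is that Proposition \ref{Importantconsequuence2134} requires $Q$ to be connected, an assumption the paper also leaves implicit.
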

This result was proved by T. Ohsawa and A. Bloch in \cite{ToOsBlo}.\\

Let us now present this same theorem in its Hamiltonian counterpart \cite{IgLeDi07}, notice that the constraint submanifold is defined to be $M=\mathbb{F}L(N_\ell)$ as in Eq. (\ref{LT}).\\
As above, for a fixed $1-$form $\sigma$ on $Q$ such that $\sigma \left( Q \right) \subseteq M$ we will define a vector field $\Gamma_{H,M}^{\sigma}$ on $Q$ by satisfying the following diagram,

\begin{equation}\label{Xg22}
  \xymatrix{ M\subset T^* Q
\ar[dd]^{\pi_{Q}} \ar[rrr]^{\Gamma_{H,M}}&   & &TT^* Q\ar[dd]^{T\pi_{Q}}\\
  &  & &\\
 Q\ar@/^2pc/[uu]^{\sigma}\ar[rrr]^{\Gamma_{H,M}^{\sigma}}&  & & TQ }
\end{equation}

\begin{theorem}\label{nhhj1}
Let $\sigma$ be a 1-form on $Q$ such that $\sigma(Q) \subset
{M}$ and $d\sigma\in {\mathcal I} (N_{\ell}^0)$. Then the following
conditions are equivalent:

\begin{itemize}
\item[(i)] $\Gamma_{H,M}^{\sigma}$ and $\Gamma_{H,M}$ are $\sigma$-related

\item[(ii)] $d(H \circ \sigma) \in N_{\ell}^0$.
\end{itemize}
\end{theorem}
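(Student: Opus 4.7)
The plan is to pull back the defining relation of $\Gamma_{H,M}$ along the section $\sigma:Q\to T^*Q$, rewrite the resulting one-form equation entirely on $Q$, and then use the ideal hypothesis $d\sigma\in\mathcal{I}(N_\ell^0)$ to isolate $d(H\circ\sigma)$. The strategy is to derive a single master identity on $Q$ that holds without either (i) or (ii) being assumed, and then let each implication drop out as a short deduction.

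First I would apply $\sigma^{*}$ to the defining condition $(\iota_{\Gamma_{H,M}}\omega_Q - dH)|_M\in F^o$. Because $\pi_Q\circ\sigma=\mathrm{id}_Q$, the generators $\pi_Q^{*}\overline\psi^a$ of $F^o$ pull back to $\overline\psi^a$, so $\sigma^{*}F^o\subseteq N_\ell^0$; using also that $\sigma(Q)\subset M$ and $\sigma^{*}dH=d(H\circ\sigma)$, this gives
\[
\sigma^{*}(\iota_{\Gamma_{H,M}}\omega_Q) - d(H\circ\sigma)\in N_\ell^0.
\]
To rewrite the first summand, I would introduce the correction field $V:=\Gamma_{H,M}\circ\sigma - T\sigma\circ\Gamma_{H,M}^{\sigma}$, which is automatically $\pi_Q$-vertical since $T\pi_Q\circ T\sigma=\mathrm{id}$. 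Then from $\sigma^{*}\theta_Q=\sigma$ I would get $\sigma^{*}\omega_Q=-d\sigma$, which after splitting $\Gamma_{H,M}\circ\sigma=T\sigma\circ\Gamma_{H,M}^{\sigma}+V$ yields the master identity
\[
\sigma^{*}(\iota_{\Gamma_{H,M}}\omega_Q)= -\iota_{\Gamma_{H,M}^{\sigma}}d\sigma + \omega_Q(V,T\sigma(\cdot)).
\]

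Before closing either direction, I would record one secondary fact: $\Gamma_{H,M}^\sigma$ takes values in $N_\ell$. Indeed Eq.~(\ref{nhHamEqcoor}) shows that $T\pi_Q(\Gamma_{H,M})=(\partial H/\partial p_i)\,\partial/\partial q^i$, so $\overline\psi^a(\Gamma_{H,M}^\sigma)=(\psi_i^a\,\partial H/\partial p_i)\circ\sigma=\Psi^a\circ\sigma=0$ by $\sigma(Q)\subset M$. Thus $\Gamma_{H,M}^\sigma$ is a characteristic vector field of $\mathcal I(N_\ell^0)$, and combining this with $d\sigma\in\mathcal I(N_\ell^0)$ one has $\iota_{\Gamma_{H,M}^\sigma}d\sigma\in\mathcal I(N_\ell^0)\cap\Lambda^1(Q)=N_\ell^0$.

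From this point both implications follow. For (i)$\Rightarrow$(ii): $V=0$ collapses the master identity to $-\iota_{\Gamma_{H,M}^\sigma}d\sigma-d(H\circ\sigma)\in N_\ell^0$, and since the first summand is already in $N_\ell^0$, so is $d(H\circ\sigma)$. For (ii)$\Rightarrow$(i): assuming $d(H\circ\sigma)\in N_\ell^0$ forces $\omega_Q(V,T\sigma(\cdot))\in N_\ell^0$; writing $V=V^j\,\partial/\partial p_j$ in Darboux coordinates, a short calculation gives $\omega_Q(V,T\sigma(\cdot))=-V^jdq^j$, so $V^jdq^j\in N_\ell^0$ and hence $V\in F^{\bot}=\langle\psi^a_i\,\partial/\partial p_i\rangle$. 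Since $\Gamma_{H,M}|_M$ is tangent to $M$ and $T\sigma(TQ)\subset TM$ (because $\sigma(Q)\subset M$), we also have $V\in TM$, and the compatibility condition $TM\cap F^{\bot}=\{0\}$ then forces $V=0$, which is the $\sigma$-relatedness required in (i). The step I expect to be the main obstacle is the intrinsic identification of the vertical correction $\omega_Q(V,T\sigma(\cdot))$ as an element of $F^{\bot}$: one has to see simultaneously that it is semi-basic (so the comparison with $N_\ell^0$ is meaningful) and that membership in $N_\ell^0$ translates back to $V\in F^{\bot}$, so that the Hamiltonian compatibility condition can be invoked to close the converse.
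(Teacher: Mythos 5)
Your proof is correct. Every step checks out: the pullback $\sigma^{*}F^{o}\subseteq N_{\ell}^{0}$ is valid because $F^{o}=\langle\pi_Q^{*}\overline{\psi}^a\rangle$ and $\pi_Q\circ\sigma=\mathrm{id}_Q$; the master identity follows from $\sigma^{*}\theta_Q=\sigma$ and the splitting $\Gamma_{H,M}\circ\sigma=T\sigma\circ\Gamma_{H,M}^{\sigma}+V$ with $V$ vertical; the observation that $\Gamma_{H,M}^{\sigma}$ takes values in $N_{\ell}$ (so that $\iota_{\Gamma_{H,M}^{\sigma}}d\sigma\in N_{\ell}^{0}$) is exactly where the hypothesis $\sigma(Q)\subset M$ enters, since $\overline{\psi}^a(\Gamma_{H,M}^{\sigma})=\Psi^a\circ\sigma$; and the converse closes correctly via $V\in TM\cap F^{\bot}=\{0\}$. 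Your route is, however, not the one the paper takes. The paper never proves Theorem~\ref{nhhj1} directly: it states it as a known result and later obtains it (and its Hamiltonian nonlinear analogue, Theorem~\ref{dualtheo23}) as a specialization of the general nonlinear Lagrangian Theorem~\ref{FirstThofred}, whose proof lives on $TQ$ and works with $(X\circ\tau_Q)^{*}\omega_L$, the auxiliary vertical fields $\Lambda$ and $\Sigma$, and Lemma~\ref{omega-T} (vanishing of $\omega_L$ on pairs of $\tau_Q$-vertical vectors). Your argument exploits two simplifications available only in the linear Hamiltonian setting: the tautological identity $\sigma^{*}\omega_Q=-d\sigma$, which converts the hypothesis $d\sigma\in\mathcal{I}(N_{\ell}^{0})$ directly into control of the term $\iota_{\Gamma_{H,M}^{\sigma}}d\sigma$ (in the nonlinear Lagrangian case there is no such identity, which is why the paper must hypothesize $(X\circ\tau_Q)^{*}\omega_L\in\mathcal{I}(F_L^{o})$ as a separate condition), and the explicit Darboux computation $\omega_Q(V,T\sigma(\cdot))=-V_jdq^j$, which replaces the paper's Lemma~\ref{omega-T} and makes the identification $V\in F^{\bot}=\langle\psi^a_i\,\partial/\partial p_i\rangle$ immediate. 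What your approach buys is a short, self-contained proof of the linear Hamiltonian statement with all objects living on $Q$; what the paper's approach buys is that the same skeleton covers the nonlinear case and the Lagrangian side at once. One small terminological quibble: in your closing remark the form $\omega_Q(V,T\sigma(\cdot))$ is a genuine one-form on $Q$, so calling it ``semi-basic'' is not quite the right word --- the relevant point, which your computation does establish, is that the map $V\mapsto\omega_Q(V,T\sigma(\cdot))$ is the canonical isomorphism between the vertical space and $T^{*}_qQ$ carrying $F^{\bot}$ onto $N_{\ell}^{0}$.
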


Notice that this theorem is only proved for a constraint manifold $M \subseteq T^{*}Q$ which comes from a linear constraint manifold $N_{\ell}\subseteq TQ$, via a regular Lagangian $L$. Finally, let us present a Hamiltonian version of Theorem \ref{nhhj21corollary4}.
\begin{theorem}\label{nhhj1corollary4again}
Assume that the distribution defined by the constraint vector bundle $N_{\ell}$ is completely nonholonomic. Let $\sigma$ be a 1-form on $Q$ such that $\sigma(Q) \subset
{M}$ and $d\sigma\in {\mathcal I} (N_{\ell}^0)$. Then, the following
conditions are equivalent:

\begin{itemize}
\item[(i)] $\Gamma_{H,M}^{\sigma}$ and $\Gamma_{H,M}$ are $\sigma$-related

\item[(ii)] $d(H \circ \sigma) =0$.
\end{itemize}
\end{theorem}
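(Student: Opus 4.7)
The plan is to combine Theorem \ref{nhhj1} with Proposition \ref{Importantconsequuence2134}, exploiting the fact that $d(H \circ \sigma)$ is by construction an \emph{exact} 1-form on $Q$.

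First, I would observe that the hypotheses $\sigma(Q) \subseteq M$ and $d\sigma \in \mathcal{I}(N_{\ell}^{0})$ in the statement are exactly those required to invoke Theorem \ref{nhhj1}. That theorem immediately yields the equivalence between condition (i) and the weaker condition
$$d(H \circ \sigma) \in N_{\ell}^{0}.$$
Thus the only genuine work left is to upgrade the membership $d(H \circ \sigma) \in N_{\ell}^{0}$ to the equality $d(H \circ \sigma)=0$, using the extra assumption that $N_{\ell}$ is completely nonholonomic.

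Next, I would appeal to Proposition \ref{Importantconsequuence2134}: since $N_{\ell}$ is a bracket-generating distribution on $Q$, its annihilator $N_{\ell}^{0} \subseteq T^{*}Q$ contains no nonzero exact 1-form. Now $H\circ\sigma$ is a smooth real-valued function on $Q$, so $d(H\circ\sigma)$ is exact; consequently $d(H\circ\sigma) \in N_{\ell}^{0}$ forces $d(H\circ\sigma)=0$. The converse direction is trivial, since the zero 1-form lies in every subbundle. Chaining these two equivalences yields (i) $\Leftrightarrow$ (ii), proving the theorem.

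I do not anticipate any serious obstacle: the substantive geometry was already done in Theorem \ref{nhhj1} (the $N_{\ell}^{0}$-valued Hamiltonian nonholonomic Hamilton--Jacobi theorem) and in Proposition \ref{Importantconsequuence2134} (the Chow--Rashevskii consequence). The only point worth a brief check is that $N_{\ell}$, being a linear subbundle of $TQ$, genuinely defines a regular distribution on $Q$, so Proposition \ref{Importantconsequuence2134} applies without modification; if one later wanted to relax $N_{\ell}$ to a singular distribution, the remark following Proposition \ref{Importantconsequuence2134} about \cite{Harms} would cover that case as well.
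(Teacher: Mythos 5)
Your proposal is correct and follows exactly the route the paper intends: the paper presents this theorem as an immediate consequence of Theorem \ref{nhhj1} combined with Proposition \ref{Importantconsequuence2134}, which is precisely your two-step argument (upgrade $d(H\circ\sigma)\in N_{\ell}^{0}$ to $d(H\circ\sigma)=0$ because an exact form in the annihilator of a bracket-generating distribution must vanish). The only cosmetic remark is that Proposition \ref{Importantconsequuence2134} assumes $Q$ connected, a hypothesis left implicit here as in the paper's own statement.
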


\subsection{Hamilton--Jacobi equation for nonholonomic systems with non-linear constraints}\label{firstresut23}

As a first aim of this paper, we will generalize the Hamilton-Jacobi theorems (\ref{nhhj21}) and (\ref{nhhj1}) for the case of non-linear constraints. Recall the constraint submanifold $N$ in Eq. (\ref{Npre}) with (possibly nonlinear) constraint functions $\psi^a$ and recall the codistribution $F^{o}_{L}$ defined in Eq. (\ref{codist}) generated by the differential one-forms $S^{*} \left( d \psi^{a} \right)=\dfrac{\partial \psi^{a}}{\partial \dot{q}^i} dq^i$ on $TQ$. We introduce the ideal $\mathcal{I} \left( F^{o}_{L} \right)$ generated by these one-forms, i.e., the ideal $\mathcal{I} \left( F^{o}_{L} \right)$ is defined as follows
\begin{equation}\label{definitiononeoftheamountofideals23}
{\mathcal I}(F_{L}^o)=\left\{\beta_a\wedge S^{*} \left( d \psi^{a} \right): \beta_a\in \Lambda^s(TQ)  \right\}
\end{equation}

Consider the Euler-Lagrange vector field $\Gamma_{L,N}$ for the constraint system. Let $X$ be vector field on $Q$ such that $X(Q) \subset N$ and define a vector field $\Gamma_{L,N}^{X}$ by the composition $T{\tau_Q}\circ \Gamma_{L,N}\circ X$. Notice that, this definition coincides with the one presented in the diagram \eqref{Xg21} by replacing the linear constraint submanifold $N_\ell$ with an arbitrary submanifold $N$.\\

\begin{theorem}\label{FirstThofred}
Let $X$ be vector field on $Q$ such that $X(Q) \subset N$ and $\left(X \circ \tau_{Q}\right)^{*} \omega_{L} \in \mathcal{I} \left(F^{o}_{L}\right)$. Then, the following conditions are equivalent:

\begin{itemize}
\item[(i)] $\Gamma_{L,N}^{X}$ and $\Gamma_{L,N} $ are $X$-related.
\item[(ii)] $d \left( E_{L} \circ \left( X \circ \tau_{Q}\right) \right) \in F^{o}_{L}$
\end{itemize} 
\end{theorem}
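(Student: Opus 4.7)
My plan is to extend the geometric Hamilton--Jacobi strategy of Theorem~\ref{nhhj21} to nonlinear constraints. The key move is to reformulate (i) as the vanishing of a vertical ``error'' field on $X(Q)$, to confine it to the symplectic subbundle $\mathcal{H} = TN\cap F_L$ of Section~\ref{constrLagsec}, and to exploit the hypothesis $(X\circ\tau_Q)^*\omega_L\in\mathcal{I}(F^o_L)$ to turn the vanishing criterion into condition (ii).

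Introduce $Y := \Gamma_{L,N}\circ X - TX\circ\Gamma_{L,N}^X$. Both summands project to $\Gamma_{L,N}^X$ under $T\tau_Q$, so $Y$ is $\tau_Q$-vertical, and (i) is equivalent to $Y\equiv 0$. The second clause of~(\ref{nonhollag}) and $X(Q)\subseteq N$ imply $Y(q)\in T_{X(q)}N$; verticality together with the semibasic character of the generators of $F^o_L$ gives $Y(q)\in F_L|_{X(q)}$. Hence $Y$ is a section of $\mathcal{H}$ along $X(Q)$. Since $\omega_\mathcal{H}:=\omega_L|_\mathcal{H}$ is nondegenerate by the compatibility condition, $Y\equiv 0$ iff $\iota_Y\omega_L$ vanishes on $\mathcal{H}$. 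Setting $\tilde\psi^a := X^*S^*(d\psi^a)\in\Lambda^1(Q)$, $\tilde F^o_L:=\langle\tilde\psi^a\rangle$, and $\tilde N_q:=\bigcap_a\ker\tilde\psi^a(q)$, a dimension count using admissibility gives the splitting $\mathcal{H}_{X(q)} = (V_{X(q)}\cap T_{X(q)}N)\oplus TX(\tilde N_q)$; as the vertical subspace is $\omega_L$-Lagrangian, $\iota_Y\omega_L$ automatically vanishes on the first summand, so (i) reduces to $\omega_L(Y(q), TX(Z_q))=0$ for every $q\in Q$ and $Z_q\in\tilde N_q$.

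Substituting $\Gamma_{L,N}\circ X = TX\circ\Gamma_{L,N}^X + Y$ into~(\ref{nonhollag}) and pulling back by $X$ yields, as 1-forms on $Q$ modulo $\tilde F^o_L$,
$$\omega_L(Y, TX(\cdot)) \equiv d(E_L\circ X) - \iota_{\Gamma_{L,N}^X}X^*\omega_L.$$
The hypothesis, pulled back by $X$, gives $X^*\omega_L = \sum_a\gamma_a\wedge\tilde\psi^a$ for some $\gamma_a\in\Lambda^1(Q)$; expanding, $\iota_{\Gamma_{L,N}^X}X^*\omega_L = \sum_a[\gamma_a(\Gamma_{L,N}^X)\tilde\psi^a - \tilde\psi^a(\Gamma_{L,N}^X)\gamma_a]$, whose first family of terms lies in $\tilde F^o_L$. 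Using that $\Gamma_{L,N}^X = X$ (because $\Gamma_{L,N}$ is a SODE) and the freedom to alter $\gamma_a$ by symmetric combinations of the $\tilde\psi^b$ without changing $X^*\omega_L$, the second family is absorbed modulo $\tilde F^o_L$. Combined with the identifications $d(E_L\circ X\circ\tau_Q) = \tau_Q^*d(E_L\circ X)$ and $X^*F^o_L = \tilde F^o_L$, the vanishing of $\omega_L(Y(q), TX(Z_q))$ on $\tilde N_q$ becomes equivalent to condition~(ii). The main technical obstacle is precisely controlling the cross-term $-\tfrac{\partial\psi^a}{\partial\dot q^i}(X(q))X^i(q)\,\gamma_a$, whose coefficient vanishes in the linear case by Euler's homogeneity (as $\psi^a\circ X = 0$) but requires careful bookkeeping in the nonlinear setting; this is the crux of passing from Theorem~\ref{nhhj21} to the general statement.
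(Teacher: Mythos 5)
Your overall architecture is sound and runs parallel to the paper's own proof: the paper likewise isolates the vertical error field (there written $\Lambda = \Gamma_{L,N} - T\left( X\circ\tau_Q\right)\cdot\Gamma_{L,N}$, which coincides with your $Y$ along $X$), uses the isotropy of vertical vectors (Lemma \ref{omega-T}) to kill the vertical pairings, and exits through the compatibility condition $TN\cap F_L^{\bot}=\{0\}$ rather than, as you do, through the nondegeneracy of $\omega_{\mathcal{H}}$ --- these two exits are equivalent. The genuine gap sits exactly where you flag ``the crux''. The cross-term $\sum_a \lambda_a\gamma_a$, with $\lambda_a = \tilde\psi^a\left(\Gamma_{L,N}^X\right)$, cannot be absorbed by regauging $\gamma_a\mapsto\gamma_a + c_{ab}\tilde\psi^b$ with $c_{ab}=c_{ba}$: that substitution changes $\sum_a\lambda_a\gamma_a$ only by $\sum_{a,b}\lambda_a c_{ab}\tilde\psi^b\in\tilde F^o_L$, so modulo $\tilde F^o_L$ nothing moves, and the term survives unless the coefficients $\lambda_a$ themselves vanish. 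As written, your argument therefore establishes only the equivalence of (i) with $d\left(E_L\circ X\right) - \sum_a\lambda_a\gamma_a\in\tilde F^o_L$, not with (ii).

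The missing ingredient is already among the paper's standing hypotheses, and it is the very first line of the paper's proof. Since $\Gamma_{L,N}$ is a SODE and hence $\Gamma_{L,N}^X = X$, one computes
$$\lambda_a = \frac{\partial\psi^a}{\partial\dot q^i}\left(X(q)\right)X^i(q) = d\psi^a\left(\Delta\right)\left(X(q)\right) = \left[S^{*}\left(d\psi^a\right)\right]\left(\Gamma_{L,N}\right)\left(X(q)\right),$$
where $\Delta=\dot q^i\,\partial/\partial\dot q^i$ is the Liouville vector field. Section \ref{constrLagsec} assumes throughout that the constraints are ideal, i.e.\ $\Delta$ is tangent to $N$, which is precisely $\lambda_a=0$ at every $X(q)\in N$ --- equivalently, $\Gamma_{L,N}$ takes values in $F_L$ and is a characteristic vector field of $\mathcal{I}\left(F^o_L\right)$. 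That is what lets the paper conclude $\iota_{\Gamma_{L,N}}\left(X\circ\tau_Q\right)^{*}\omega_L\in F^o_L$ rather than merely $\in\mathcal{I}\left(F^o_L\right)$, and it is the one line you need in place of the absorption argument. With it inserted, your proof closes and becomes essentially a base-manifold rendition (working with $X^{*}$ on $Q$ instead of $\left(X\circ\tau_Q\right)^{*}$ on $TQ$) of the paper's two-directional argument.
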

We first present the following two lemmas that we shall need while proving the Theorem (\ref{FirstThofred}). 
\begin{lemma}\label{omega-T}
Consider the symplectic manifold $(TQ,\omega_L)$ with a regular Lagrangian $L$, then
\begin{equation}  
\omega _{L}(U,W)=0, \qquad \text{if}  \quad U,W\in \ker T\tau _{Q}.
\end{equation}
\end{lemma}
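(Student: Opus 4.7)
The plan is to transport the problem to $T^{*}Q$ via the Legendre transform and then exploit the fact that the vertical subbundle of the cotangent bundle is Lagrangian (in particular, isotropic) for $\omega_{Q}$.

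First, regularity of $L$ makes $\mathbb{F}L$ a local diffeomorphism, and by definition $\omega_{L} = \mathbb{F}L^{*}\omega_{Q}$. The fiber-derivative commutation $\pi_{Q}\circ \mathbb{F}L = \tau_{Q}$, already recorded in the paper, yields on tangent maps the identity $T\pi_{Q}\circ T\mathbb{F}L = T\tau_{Q}$. Hence if $U,W\in \ker T\tau_{Q}$, then $T\mathbb{F}L(U), T\mathbb{F}L(W)\in \ker T\pi_{Q}$, i.e.\ both images are $\pi_{Q}$-vertical vectors in $T(T^{*}Q)$.

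Second, I would recall that $\omega_{Q}$ vanishes on any pair of vertical vectors: in Darboux coordinates $(q^{i},p_{i})$, $\ker T\pi_{Q}$ is spanned fiberwise by $\partial/\partial p_{i}$, and $\omega_{Q} = dq^{i}\wedge dp_{i}$ is annihilated by any pair of such vectors since $dq^{i}(\partial/\partial p_{j})=0$. Combining the two observations,
\begin{equation*}
\omega_{L}(U,W) \;=\; (\mathbb{F}L^{*}\omega_{Q})(U,W) \;=\; \omega_{Q}\bigl(T\mathbb{F}L(U),\,T\mathbb{F}L(W)\bigr) \;=\; 0,
\end{equation*}
which is the statement.

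There is no real obstacle here; the only content is that the Legendre transform intertwines the two bundle projections, so it sends $\tau_{Q}$-verticals on $TQ$ to $\pi_{Q}$-verticals on $T^{*}Q$, where the canonical symplectic form is known to vanish on any pair of vertical vectors. Equivalently, one could argue directly on $TQ$ by writing $\omega_{L} = \tfrac{\partial^{2}L}{\partial q^{j}\partial\dot q^{i}}\,dq^{i}\wedge dq^{j} + \tfrac{\partial^{2}L}{\partial\dot q^{j}\partial\dot q^{i}}\,dq^{i}\wedge d\dot q^{j}$ and noting that both summands have a $dq^{i}$ factor, which annihilates $\ker T\tau_{Q} = \langle \partial/\partial\dot q^{k}\rangle$; but the coordinate-free route above is cleaner and is what I would write out.
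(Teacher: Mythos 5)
Your proof is correct, but it takes a different route from the paper's. The paper argues intrinsically on $TQ$: it writes $\omega_{L}=-d\alpha_{L}$ for the Poincar\'e--Cartan one-form $\alpha_{L}$, applies the invariant formula
\begin{equation*}
-d\alpha_{L}(U,W)=-U\left(\alpha_{L}(W)\right)+W\left(\alpha_{L}(U)\right)+\alpha_{L}\left([U,W]\right),
\end{equation*}
and kills all three terms using that $\alpha_{L}$ is semibasic and that the Lie bracket of vertical vector fields is again vertical. You instead transport the computation to $T^{*}Q$ via the Legendre transform, using $\omega_{L}=\mathbb{F}L^{*}\omega_{Q}$ and the intertwining $\pi_{Q}\circ\mathbb{F}L=\tau_{Q}$ to reduce the claim to the standard fact that $\ker T\pi_{Q}$ is isotropic for the canonical form. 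Both arguments are sound. Yours has the advantage of being purely pointwise --- you never need to extend tangent vectors to vector fields, and the only input is an elementary Darboux-coordinate computation on the cotangent side. The paper's version has the advantage of never invoking $\mathbb{F}L$ and of isolating the real mechanism: the identity holds for $-d\alpha$ with $\alpha$ \emph{any} semibasic one-form on $TQ$, together with the involutivity of the vertical distribution; in particular it makes clear that regularity of $L$ plays no role in the vanishing itself. The coordinate alternative you mention at the end is essentially a third, equally valid, route and is the quickest to check.
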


\begin{proof}
Let $\alpha _{L}$ be the Poincare-Cartan one-form on $Q$. Then, by using that $\alpha_{L}$ is a semibasic $1-$form, we have that for each two vertical (local) vector fields $W$ and $U$ on $TQ$
\begin{eqnarray*}
\omega _{L}\left( U,W\right) &=&-d\alpha _{L}\left( U,W\right) \\ &=&-U \left(
\alpha_{L} \left( W\right)\right) +W\left( \alpha_{L} \left( U\right)\right)  + \alpha_{L} \left( [U,W]\right) =0.
\end{eqnarray*}
Notice that the Lie bracket of vector fields preserves vertical vector fields.\\

\end{proof}
\begin{lemma}\label{inva}
The identities
\begin{equation}
\left( X \circ \tau_{Q}\right)^{*}F^{o}_{L} =  F^{o}_{L}, \qquad \left( X \circ \tau_{Q}\right)_{*}F_{L} =  F_{L}
\end{equation}
hold along $X$.
\end{lemma}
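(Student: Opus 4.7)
I would work in bundle coordinates $(q^{i},\dot{q}^{j})$ on $TQ$, in which $X\circ\tau_{Q}$ acts as $(q,\dot{q})\mapsto(q,X(q))$. Along the submanifold $X(Q)\subseteq N$ (which is what "along $X$" refers to), this map is the identity set-theoretically, but its differential at $X(q)$ is the non-trivial projection
\[
T(X\circ\tau_{Q})\bigl(V^{i}\partial_{q^{i}}+W^{j}\partial_{\dot{q}^{j}}\bigr)=V^{i}\partial_{q^{i}}+V^{i}\tfrac{\partial X^{k}}{\partial q^{i}}\partial_{\dot{q}^{k}}
\]
onto $T_{X(q)}X(Q)$, which is precisely what makes the identities non-trivial.

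For the first identity I would compute the pullback on the local generators of $F_{L}^{o}$ given by $S^{*}(d\psi^{a})=(\partial\psi^{a}/\partial\dot{q}^{i})\,dq^{i}$. Since these generators are semibasic and $(X\circ\tau_{Q})^{*}dq^{i}=dq^{i}$, the pullback simply substitutes $\dot{q}\mapsto X(q)$ in the coefficient:
\[
(X\circ\tau_{Q})^{*}\!\left(\frac{\partial\psi^{a}}{\partial\dot{q}^{i}}\,dq^{i}\right)=\left.\frac{\partial\psi^{a}}{\partial\dot{q}^{i}}\right|_{(q,X(q))} dq^{i}.
\]
Evaluated on $X(Q)$, where $\dot{q}=X(q)$, the right-hand side equals $S^{*}(d\psi^{a})|_{X(q)}$, and hence $(X\circ\tau_{Q})^{*}F_{L}^{o}=F_{L}^{o}$ along $X$.

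For the second identity I would argue by duality from the first, or equivalently by direct computation. A tangent vector $V=V^{i}\partial_{q^{i}}+W^{j}\partial_{\dot{q}^{j}}$ belongs to $F_{L}|_{X(q)}$ iff its horizontal component satisfies $(\partial\psi^{a}/\partial\dot{q}^{i})(X(q))\,V^{i}=0$, a condition only on $V^{i}$. From the coordinate formula for $T(X\circ\tau_{Q})$ above, the pushforward leaves $V^{i}$ unchanged while replacing the vertical component by $V^{i}(\partial X^{k}/\partial q^{i})$. The annihilation condition is therefore preserved, which gives $(X\circ\tau_{Q})_{*}F_{L}=F_{L}$ along $X$.

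The main (mild) obstacle is interpreting "along $X$" cleanly: because $X\circ\tau_{Q}$ is not a diffeomorphism, the two identities are not literal equalities of pullback/pushforward sheaves on all of $TQ$, but rather invariance statements for the sub-bundles $F_{L}^{o}|_{X(Q)}$ and $F_{L}|_{X(Q)}$ under the natural operations induced by $X\circ\tau_{Q}$. Once this is made precise, the local coordinate computations above make both identities transparent, and both rest on the single observation that on $X(Q)$ the substitution $\dot{q}\mapsto X(q)$ is a tautology.
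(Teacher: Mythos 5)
Your proposal is correct and follows essentially the same route as the paper: both compute the pullback on the semibasic generators $S^{*}(d\psi^{a})=(\partial\psi^{a}/\partial\dot{q}^{i})\,dq^{i}$ in bundle coordinates, observe that $(X\circ\tau_{Q})^{*}dq^{i}=dq^{i}$ and that the coefficient substitution $\dot{q}\mapsto X(q)$ is tautological on the image of $X$, and then obtain the statement for $F_{L}$ by duality with the first identity. Your explicit remark that the identities should be read as invariance statements (rather than literal equalities of pushforward images, since $T(X\circ\tau_{Q})$ is not surjective) is a useful clarification that the paper leaves implicit.
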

\begin{proof}
The mapping $ X \circ \tau_{Q} $ reduces to the identity mapping when it is restricted to the base manifold $Q$. This gives that, along $X$, the pullback of $F^{o}_{L}$ turns out to be equal to $F^{o}_{L}$  since it is composed of semi-simple one-forms on $TQ$. It is possible to observe this in terms of coordinates as follows
\begin{eqnarray*}
\left( X \circ \tau_{Q} \right)^{*} \left[ \left(\dfrac{\partial \psi^{a}}{\partial \dot{q}^i}\right)dq^i \right] & = & \left(\dfrac{\partial \psi^{a}}{\partial \dot{q}^i} \circ X \circ \tau_{Q}\right)d\left( q^i \circ  X \circ \tau_{Q}\right)\\
& = & \left(\dfrac{\partial \psi^{a}}{\partial \dot{q}^i} \circ X \circ \tau_{Q}\right)d q^i.
\end{eqnarray*}
Take a generic vector field $Z$ taking values in $F_L$ and consider its restriction to the image space of $X$. According to the Lemma \eqref{inva}, if $\alpha$ is in $F^{o}_{L}$ then so does $\left( X \circ \tau_{Q}\right)^{*}\alpha$. This manifests that its pushforward $\left( X \circ \tau_{Q}\right)_{*}Z$ is an element of $F_L$. 
\end{proof}

\noindent
\begin{proof} [Proof of the Theorem (\ref{FirstThofred}).]
First notice that the constrained Euler-Lagrange vector field $\Gamma_{L,N}$ takes values in $F_L$. So that it is a characteristic vector field of the ideal $\mathcal{I} \left( F^{o}_{L} \right)$ presented in Eq. (\ref{definitiononeoftheamountofideals23}), i.e.
$$ \iota_{\Gamma_{L,N}} \mathcal{I} \left( F^{o}_{L} \right) \subseteq \mathcal{I} \left( F^{o}_{L} \right).$$
In the statement of the theorem we have assumed that $\left(X \circ \tau_{Q}\right)^{*} \omega_{L} \in \mathcal{I} \left(F^{o}_{L}\right)$ therefore further we have that
\begin{equation}\label{NR135}
\iota_{\Gamma_{L,N}} \left(X \circ \tau_{Q}\right)^{*} \omega_{L} \in F^{o}_{L}.
\end{equation}
This reads that by taking the pullback of the both hand side of the equality in \eqref{nonhollag} with the mapping $X \circ \tau_{Q}$ we arrive at that
\begin{equation}\label{NEWNR135}
 \left(X \circ \tau_{Q} \right)^{*}\left[ \iota_{\Gamma_{L,N}} \omega_{L} \right] \in F^{o}_{L}
 \end{equation}
if, and only if 
\begin{equation} \left( X \circ \tau_{Q}\right)^*dE_{L} = d \left( E_{L} \circ X \circ \tau_{Q} \right) \in F^{o}_{L}.
 \end{equation}
This gives that the second condition $\left( ii \right)$ can be equivalently written as Eq. (\ref{NEWNR135}). 
\\ \\
\noindent $\left( i \right) \implies \left( ii \right): $ Let $W$ be an arbitrary vector field on $TQ$, and compute the following 
\begin{eqnarray*}
\left( X\circ \tau _{Q}\right) ^{\ast }\left( \iota _{\Gamma
_{L,N}}\omega _{L}\right) (W)  &=& \tau _{Q}^{\ast
}X^{\ast }\left( \iota _{\Gamma _{L,N}}\omega _{L}\right) (W)
\\
&=& X^{\ast }\left( \iota _{\Gamma _{L,N}}\omega _{L}\right)
\left(T\tau _{Q}\cdot W\right)
\\
&=& \iota_{\Gamma _{L,N}^{X}}X^{\ast }\omega _{L} \left( T\tau
_{Q}\cdot W\right)
\\
&=&X^{\ast }\omega _{L}\left( \Gamma _{L,N}^{X},T\tau _{Q}\cdot W\right)  \\
&=&X^{\ast }\omega _{L}\left( T\tau _{Q}\circ \Gamma _{L,N}\circ X,T\tau
_{Q}\cdot W\right)  
\\
&=&\tau _{Q}^{\ast }X^{\ast }\omega _{L}\left( \Gamma _{L,N}\circ X,W\right) 
\\
&=&\left( X\circ \tau _{Q}\right) ^{\ast }\omega _{L}\left( \Gamma
_{L,N}\circ X,W\right),
\end{eqnarray*}%
where we have used the $X$-relatedness of $\Gamma_{L,N}^{X}$ and $\Gamma_{L,N} $ in the third line whereas we have used the definition of $\Gamma_{L,N}^{X}$ in the fifth line. Restrict this calculation on the image space of $X$ so that
\begin{eqnarray*}
\left( X\circ \tau _{Q}\right) ^{\ast }\left( \iota _{\Gamma
_{L,N}}\omega _{L}\right) (W)  &=&\left( X\circ \tau _{Q}\right)
^{\ast }\omega _{L}\left( \Gamma _{L,N}\circ X,W\right)  \\
&=& \iota _{\Gamma _{L,N}}\left( X\circ \tau _{Q}\right) ^{\ast
}\omega _{L}(W),
\end{eqnarray*}%
for an arbitrary vector field $W$. So,
\[
\left( X\circ \tau _{Q}\right) ^{\ast }\left( \iota _{\Gamma _{L,N}}\omega
_{L}\right) =\iota _{\Gamma _{L,N}}\left( X\circ \tau _{Q}\right) ^{\ast
}\omega _{L}
\]%
along $X$. Since the term in the right hand side of this identity is an element of $F_{L}^{o}$ we see that 
\[
\left( X\circ \tau _{Q}\right) ^{\ast }\left( \iota _{\Gamma _{L,N}}\omega
_{L}\right) \in F_{L}^{o}.
\]
This is the equivalent representation of the condition $\left( ii \right)$ given in Eq. (\ref{NEWNR135}). So we have proved that $\left(i\right)$ implies $\left( ii \right)$. 
\\ \\
\noindent
$\left( ii \right) \implies \left( i \right) :$ 
Consider the constrained Euler-Lagrange vector field $\Gamma_{L,N}$ and an arbitrary vector field $V$ taking values in $F_L$. Instead of the whole $TQ$, we are only interested in the image space of $X$. Introduce the following vector fields along $X$,
\begin{eqnarray} \label{defnvf}
\Lambda:&=&\Gamma_{L,N} - T \left( X \circ \tau_{Q} \right) \cdot \Gamma_{L,N} , \\
\Sigma:&=&V - T \left( X \circ \tau_{Q} \right) \cdot V .
\end{eqnarray}
Notice that both of the vectors $\Lambda(X(q))$ and $\Sigma(X(q))$ are in the kernel of $T\tau_Q$ so that, according to Lemma (\ref{omega-T}),  at $X(q)$, we compute
\begin{equation} \label{keromega}
\omega_L(\Lambda,\Sigma)\vert_{X(q)}=0.
\end{equation}
Substitute now the definitions of the vector fields in Eq. (\ref{defnvf}) into the identity (\ref{keromega}). This reads that 
\begin{eqnarray} \label{zeros}
\omega_L(\Lambda,\Sigma)\vert_{X(q)}&=& \omega_L(\Gamma_{L,N},V)\vert_{X(q)} \notag
-\omega_L(\Gamma_{L,N},T \left( X \circ \tau_{Q} \right) \cdot V )\vert_{X(q)}\notag
\\ &=&  - \ \omega_L(T \left( X \circ \tau_{Q} \right) \cdot \Gamma_{L,N},V)\vert_{X(q)} \notag
\\ &=&  + \ \omega_L(T \left( X \circ \tau_{Q} \right) \cdot \Gamma_{L,N},T \left( X \circ \tau_{Q} \right) \cdot V)\vert_{X(q)}.
\end{eqnarray}
Recall that, we are assuming that $\left(X \circ \tau_{Q}\right)^{*} \omega_{L} \in \mathcal{I} \left(F^{o}_{L}\right)$ which implies Eq. (\ref{NR135}). This makes the term in the third line of the calculation (\ref{zeros}) zero. The condition $\left( ii \right)$ given in the form Eq. (\ref{NEWNR135}) makes the second term in the right hand side of the first line of the calculation (\ref{zeros}) zero. Eventually, we left with the following identity 
\begin{equation}
\omega_L(\Gamma_{L,N}-T \left( X \circ \tau_{Q} \right) \cdot \Gamma_{L,N},V)\vert_{X(q)}=0
\end{equation}
for an arbitrary $V$ in $F_L$. Hence, we have proved that
$$ \left(\Gamma_{L,N}-T ( X \circ \tau_{Q})\cdot \Gamma_{L,N}\right) \vert_{X(q)} \in  F_{L}^{\bot}.$$
From the compatibility condition (\ref{compatibility condition}), we have that $\Gamma_{L,N}-T ( X \circ \tau_{Q})\cdot \Gamma_{L,N}$ is identically zero along $X$. This is nothing but the condition $(i)$.
\end{proof}

\begin{definition}
A vector field $X$ on $Q$ satisfying conditions of the Theorem \ref{FirstThofred} will be called solution for the constrained Hamilton-Jacobi problem given by the constraint manifold $N \subseteq TQ$ and a regular Lagrangian $L$.
\end{definition}

Notice that, as an immediate consequence of condition $(ii)$ of Theorem \ref{FirstThofred}, we   have that any solution $X$ for the standard Halmilton-Jacobi problem (see Theorem \ref{standarHJEq234Lagrangian}) which satisfies that $X \left( Q \right)  \in N$ is a solution of the constrained Hamilton-Jacobi problem given by the constraint manifold $N \subseteq TQ$ and a regular Lagrangian $L$.

\begin{corollary}\label{funnycorollary123}
Let $X$ be a vector field on $Q$ in the conditions of Theorem \ref{FirstThofred} such that $T X \left( TQ \right) \subseteq F_{L}$. Then, the following conditions are equivalent:

\begin{itemize}
\item[(i)] $\Gamma_{L,N}^{X}$ and $\Gamma_{L,N} $ are $X$-related.
\item[(ii)] $d \left( E_{L} \circ  X  \right) =0.$
\end{itemize} 
\begin{proof}
Notice that, by using that $\tau_{Q}$ is a submersion, the $1-$form $d \left( E_{L} \circ  X  \right)$ is zero if, and only if,
$$d  \left( E_{L} \circ \left( X \circ \tau_{Q}\right) \right) =0.$$
Finally, it is easy to check that the only way in which the $1-$form $d \left( E_{L} \circ \left( X \circ \tau_{Q}\right) \right)$ is in the annihilator $F_{L}^{o}$ is that it is cancelled. This is an immediate consequence of the condition $T X \left( TQ\right) \subseteq F_{L}$. 
\end{proof}
\end{corollary}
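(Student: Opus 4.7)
The plan is to deduce the corollary directly from Theorem~\ref{FirstThofred}, which already gives the equivalence of $(i)$ with $d(E_L\circ (X\circ\tau_Q))\in F_L^{o}$. So my task is to show that, under the extra assumption $TX(TQ)\subseteq F_L$, the membership $d(E_L\circ X\circ\tau_Q)\in F_L^{o}$ collapses to the equality $d(E_L\circ X)=0$.

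First I would observe the basic identity $d(E_L\circ X\circ\tau_Q)=\tau_Q^{*}\,d(E_L\circ X)$, which holds simply because $d$ commutes with pullbacks. Since $\tau_Q\colon TQ\to Q$ is a surjective submersion, its tangent map $T\tau_Q$ is fibrewise surjective, and therefore $\tau_Q^{*}\alpha=0$ forces $\alpha=0$ for any one-form $\alpha$ on $Q$. This already handles the trivial direction: if $d(E_L\circ X)=0$ then $d(E_L\circ X\circ\tau_Q)=0\in F_L^{o}$, and Theorem~\ref{FirstThofred} gives $(i)$.

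For the nontrivial direction I would proceed pointwise. Pick $q\in Q$ and $w\in T_qQ$. Using $\tau_Q\circ X=\mathrm{id}_Q$, the chain rule yields
\begin{equation}
d(E_L\circ X)(w)\;=\;d(E_L\circ X\circ\tau_Q)\bigl(TX(w)\bigr)
\end{equation}
at the point $X(q)\in N$. By hypothesis, $TX(w)\in F_L$. If $(i)$ holds, Theorem~\ref{FirstThofred} gives $d(E_L\circ X\circ\tau_Q)\in F_L^{o}$, so this evaluation vanishes. Since $w$ was arbitrary, $d(E_L\circ X)=0$, which is $(ii)$.

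The only subtle point, and the one I would double-check, is the hypothesis $TX(TQ)\subseteq F_L$: here $TX$ is a map $TQ\to TTQ$, and the statement requires interpreting its image as living inside $F_L\subseteq TTQ|_N$, which is consistent because $X(Q)\subseteq N$ implies $TX(TQ)\subseteq TTQ|_N$. Beyond verifying that this interpretation is the intended one, the argument is essentially a one-line application of Theorem~\ref{FirstThofred} together with the pullback identity and the surjectivity of $T\tau_Q$; I do not anticipate a genuine obstacle.
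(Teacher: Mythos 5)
Your proposal is correct and follows essentially the same route as the paper: reduce to Theorem \ref{FirstThofred}, use the pullback identity $d(E_L\circ X\circ\tau_Q)=\tau_Q^{*}\,d(E_L\circ X)$ together with $\tau_Q$ being a submersion, and then observe that membership in $F_L^{o}$ forces vanishing because the vectors $TX(w)$ lie in $F_L$ and evaluate the pulled-back form back to $d(E_L\circ X)(w)$. The only difference is that you spell out explicitly, via $\tau_Q\circ X=\mathrm{id}_Q$ and the chain rule, the step the paper dismisses as ``easy to check,'' which is a welcome clarification rather than a deviation.
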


Consider a vector field $X$ on $Q$ is under conditions of Corollary \ref{funnycorollary123}. Then, by taking into account condition $\left(X \circ \tau_{Q}\right)^{*} \omega_{L} \in \mathcal{I} \left(F^{o}_{L}\right)$ we have that $TX \left(N \right) \subseteq F_{L}$ implies that
$$ X^{*} \omega_{L} = 0.$$
In fact, for each $a$

\begin{eqnarray*}
S^{*} \left( d \psi^{a} \right) \left( T X \left(  \dfrac{\partial}{\partial q^{i}} \right) \right) & = & S^{*} \left( d \psi^{a} \right) \left(  \dfrac{\partial}{\partial q^{i}} +  \dfrac{\partial   X^{k}}{\partial q^{i}} \dfrac{\partial}{\partial \dot{q}^{k}}\right)\\
& = & d \psi^{a} \left(  \dfrac{\partial}{\partial \dot{q}^{i}}  \right)\\
& = &\dfrac{\partial \psi^{a}}{\partial \dot{q}^{i}}.
\end{eqnarray*}
Hence, $TX \left(TQ \right) \subseteq F_{L}$ is equivalent to
$$ F_{L}^{0} = 0 \ \Leftrightarrow \ F_{L} = T_{N} \left( TQ \right).$$
Thus, this condition implies that the constraint functions do not produce reaction forces. In fact, it is enough to find some vector field $X$ satifying that $T X \left( TQ \right) \subseteq F_{L}$

Indeed, the equation of motion associated to the constraint system is defined to be 
\begin{equation}\label{nLagQuasi342}
 \iota_{\Gamma_{L,N}}\omega_{L} -dE_{L}=0, \qquad
\left. \Gamma_{L,N} \right\vert _{N} \in TN.
\end{equation} 

Let us now give another consequence of Theorem \ref{FirstThofred} which can give us more interesting examples. In order to do that, we will use Chow-Rashevskii's theorem (see \ref{LordChowTheorem}) for the non-linear case.\\
Notice, as a first difference with the linear case, $F_{L}$ is just a distribution on $TQ$ along $N$ (it is not defined on the whole $TQ$). Thus, we will solve this problem by defining the following family $\mathcal{S} $ of local $1-$forms on $TQ$
$$\mathcal{S} := \{S^{*}\left( \sigma \right) \in \Lambda^1_{local}\left( TQ \right) \ : \ \sigma_{\vert TN} \equiv 0 \}.$$

So, we define the codistribution $G_{L}^{o}$ on $TQ$ as the codistribution generated by $\mathcal{S}$, i.e., for each $v_{q} \in T_{q} Q$ the fiber ${G_{L}^{o}}_{\vert v_{q}}$ at $v_{q}$ of $G_{L}^{o}$ is the vector subspace of $T^{*}_{v_{q}}\left( TQ \right)$ generated by all the evaluations of the (local) $1-$forms in $\mathcal{S}$ at $v_{q}$. So, $G_{L}$ will be the distribution on $TQ$ such that its annihilator is $G_{L}^{o}$.\\
Then, $G_{L}$ is a smooth (possibly singular) distribution on $TQ$. Notice that, $G_{L}$ can be alternatively defined as the smooth distribution generated by the (local) vector fields on $TQ$ such that its restrictions to $N$ are tangent to $F_{L}$.\\
Observe that far from $N$ the fibers can be the whole tangent space of $TQ$. Another remark is that there exists, at least, one global $1-$form contained in $\mathcal{S}$: the zero section of $\pi_{Q}$.\\
Finally, for all $v_{q} \in N$
$${G_{L}}_{\vert v_{q}} = {F_{L}}_{\vert v_{q}}.$$
In fact, each covector $\sigma_{v_{q}} \in F_{L}^{o}$ can be written as a combination
$$ \sigma_{v_{q}} = \lambda_{a} \left[ S^{*} \left( d\psi^{a} \right) \right] \left( v_{q} \right),$$
and all the local $1-$forms $ S^{*} \left( d\psi^{a} \right)$ are contained in $\mathcal{S}$.\\ 
Let us consider the family of (local) vector fields $\mathcal{S}_{N}^{\sharp}$ tangents to $G_{L}$ whose domains have non-empty intersection with $N$. Then, $F_{L}$ is the distribution generated by $\mathcal{S}_{N}^{\sharp}$ along $N$.\\
Thus, $F_{L}$ is called completely nonholonomic if the reiterated Lie bracket of vector fields in $\mathcal{S}_{N}^{\sharp}$ generate $T_{N} \left( TQ \right)$. In this case we will say that $\left[F_{L},F_{L}\right], \left[ F_{L}, \left[ F_{L},F_{L} \right] \right], \hdots$ spans the vector bundle $T_{N}\left(TQ \right)$.
Obviously, $G_{L}$ is completely nonholonomic if, and only if, $F_{L}$ is completely nonholonomic.\\
we are now ready to give a generalization of Theorem \ref{nhhj1corollary4again}.
\begin{theorem}\label{AppChowtheore,}
Let $X$ be a vector field on $Q$ in the conditions of Theorem \ref{FirstThofred} such that $F_{L}$ is completely nonholonomic. Then, the following conditions are equivalent:

\begin{itemize}
\item[(i)] $\Gamma_{L,N}^{X}$ and $\Gamma_{L,N} $ are $X$-related.
\item[(ii)] $d \left( E_{L} \circ  X  \right) =0.$
\end{itemize} 
\end{theorem}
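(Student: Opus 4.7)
The plan is to combine Theorem \ref{FirstThofred} with a singular-distribution version of Proposition \ref{Importantconsequuence2134}. Theorem \ref{FirstThofred} already provides the equivalence of condition $(i)$ with the membership $d(E_L \circ X \circ \tau_Q) \in F_L^o$, so the task reduces to showing that, under the hypothesis that $F_L$ is completely nonholonomic, this membership is in fact equivalent to condition $(ii)$, i.e., $d(E_L \circ X) = 0$. The implication $(ii) \Rightarrow (i)$ is immediate: if $d(E_L \circ X)$ vanishes on $Q$, then $d(E_L \circ X \circ \tau_Q) = \tau_Q^{*}\, d(E_L \circ X) = 0 \in F_L^o$ trivially, and $(i)$ follows at once from Theorem \ref{FirstThofred}.

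For the converse $(i) \Rightarrow (ii)$, I set $\alpha := d(E_L \circ X \circ \tau_Q) = \tau_Q^{*}\, d(E_L \circ X)$, a globally defined semibasic one-form on $TQ$. The crucial step is to upgrade the membership $\alpha \in F_L^o$ along $N$ (which is what the first part of the argument delivers) to $\alpha \in G_L^o$ at every point of $TQ$, using the globally defined distribution $G_L$ introduced just before the theorem statement. At points of $N$ this is our hypothesis, since $G_L^o|_N = F_L^o|_N$. At a point $v_q \notin N$, the defining condition $\sigma|_{TN} \equiv 0$ on the generating family $\mathcal{S}$ is locally vacuous (a small neighborhood of $v_q$ meets no fibres of $TN$), so $G_L^o|_{v_q}$ contains $S^{*}(\sigma)$ for arbitrary local $\sigma$, which exhausts the entire space of semibasic covectors at $v_q$; in particular it contains $\alpha(v_q)$. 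Because $F_L$ is completely nonholonomic if and only if $G_L$ is, I may invoke the extension of Proposition \ref{Importantconsequuence2134} to singular distributions (justified by the singular version of Chow-Rashevskii cited via \cite{Harms}): any exact one-form in $G_L^o$ must vanish identically. Hence $\alpha \equiv 0$ on $TQ$, and since $\tau_Q$ is a surjective submersion, $d(E_L \circ X) = 0$ on $Q$, which is $(ii)$.

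The main obstacle I anticipate is the careful justification of the two singular-geometric ingredients just used: first, the precise pointwise verification that $\alpha \in G_L^o$ off $N$ (which rests on the slightly awkward definition of $\mathcal{S}$ via the vanishing condition $\sigma|_{TN} \equiv 0$); and second, the formal statement that an exact one-form in the annihilator of a \emph{singular} completely nonholonomic distribution must vanish, lifted from the regular-distribution Proposition \ref{Importantconsequuence2134} via the singular Chow-Rashevskii theorem. Once these two points are in place, the theorem follows directly from Theorem \ref{FirstThofred}, the identity $\tau_Q^{*} d(E_L \circ X) = d(E_L \circ X \circ \tau_Q)$, and the submersion property of $\tau_Q$.
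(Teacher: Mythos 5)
Your proof is correct and follows essentially the same route as the paper's: reduce to Theorem \ref{FirstThofred}, replace membership in $F_{L}^{o}$ by membership in the globally defined codistribution $G_{L}^{o}$, and invoke Proposition \ref{Importantconsequuence2134} (in its singular form) for the completely nonholonomic distribution $G_{L}$, finishing with the injectivity of $\tau_{Q}^{*}$. The only difference is that you spell out the pointwise check that $\tau_{Q}^{*}\,d\left(E_{L}\circ X\right)$ lies in $G_{L}^{o}$ away from $N$, a step the paper's two-line proof dismisses as obvious.
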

\begin{proof}
Due to the fact that $G_{L}$ is completely nonholonomic we can use Proposition \ref{Importantconsequuence2134} to have that $d \left( E_{L} \circ  X  \right) \in G_{L}$ if, and only if, $d \left( E_{L} \circ  X  \right) =0$. Notice that, obviously, $d \left( E_{L} \circ  X  \circ \tau_{Q} \right) \in F_{L}^{o}$ if, and only if, $d \left( E_{L} \circ  X  \circ \tau_{Q} \right) \in G_{L}^{o}$.
\end{proof}

We will see at the end of the section that this theorem generalizes Theorem \ref{nhhj21corollary4}.

In an obvious way, we can present its Hamiltonian counterpart by following the notation introduced for the Theorem \ref{nhhj1}.\\

\begin{theorem}\label{dualtheo23}
Let $\sigma$ be a $1-$form on $Q$ such that $\sigma(Q) \subset M$ and $\left(\sigma \circ \tau_{Q}\right)^{*} \omega_{Q} \in \mathcal{I} \left(F_{L}^{o}\right)$. Then, the following conditions are equivalent:

\begin{itemize}
\item[(i)] $\overline{\Gamma}_{H,M}^{\sigma}$ and $\overline{\Gamma}_{H,M}$ are $\sigma$-related.

\item[(ii)] $d \left( H \circ \left( \sigma \circ \tau_{Q}\right) \right) \in F_{L}^{o}$
\end{itemize} 
\end{theorem}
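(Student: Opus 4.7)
The plan is to mirror the proof of Theorem \ref{FirstThofred} step by step, transporting every Lagrangian ingredient to its cotangent-side counterpart: $\omega_{L} \rightsquigarrow \omega_{Q}$, $\tau_{Q} \rightsquigarrow \pi_{Q}$, $\Gamma_{L,N} \rightsquigarrow \overline{\Gamma}_{H,M}$, and $F_{L} \rightsquigarrow F$ (with its annihilator $F^{o}$), using the admissibility and compatibility conditions recalled in Section \ref{constHamSec}. The defining equation for $\overline{\Gamma}_{H,M}$ is $\iota_{\overline{\Gamma}_{H,M}}\omega_{Q} - dH \in F^{o}$, so after pulling back by $\sigma\circ\pi_{Q}$, condition $(ii)$ is equivalent to $(\sigma\circ\pi_{Q})^{*}(\iota_{\overline{\Gamma}_{H,M}}\omega_{Q}) \in F^{o}$; this is the form in which both implications will actually be established.

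First I would prove two preparatory lemmas patterned on Lemma \ref{omega-T} and Lemma \ref{inva}. The first states that $\omega_{Q}(U,W)=0$ whenever $U,W\in\ker T\pi_{Q}$; the proof is identical to Lemma \ref{omega-T}, exploiting the fact that $\theta_{Q}$ is semibasic and that the Lie bracket of two vertical vector fields is vertical. The second states that along $\sigma$ one has $(\sigma\circ\pi_{Q})^{*}F^{o}=F^{o}$ and $(\sigma\circ\pi_{Q})_{*}F=F$; this rests on $\sigma\circ\pi_{Q}$ reducing to the identity on $Q$ together with the fact that the generators of $F^{o}$ (obtained from the constraint functions $\Psi^{a}$) are semibasic on $T^{*}Q$, exactly as in the Lagrangian case.

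For the direction $(i)\Rightarrow(ii)$, given $\sigma$-relatedness and the definition $\overline{\Gamma}_{H,M}^{\sigma}=T\pi_{Q}\circ\overline{\Gamma}_{H,M}\circ\sigma$, I would repeat verbatim the seven-line chain of identities displayed in the proof of Theorem \ref{FirstThofred} to obtain
\[
(\sigma\circ\pi_{Q})^{*}(\iota_{\overline{\Gamma}_{H,M}}\omega_{Q})
=\iota_{\overline{\Gamma}_{H,M}}(\sigma\circ\pi_{Q})^{*}\omega_{Q}
\]
along $\sigma$. Since by hypothesis $(\sigma\circ\pi_{Q})^{*}\omega_{Q}\in\mathcal{I}(F^{o})$, the right-hand side belongs to $F^{o}$, giving $(ii)$. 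For $(ii)\Rightarrow(i)$, I would introduce, along $\sigma$, the auxiliary vector fields
\[
\Lambda:=\overline{\Gamma}_{H,M}-T(\sigma\circ\pi_{Q})\cdot\overline{\Gamma}_{H,M},\qquad
\Sigma:=V-T(\sigma\circ\pi_{Q})\cdot V
\]
for $V$ an arbitrary section of $F$. Both lie in $\ker T\pi_{Q}$, so by the first lemma $\omega_{Q}(\Lambda,\Sigma)=0$. Expanding and using the ideal hypothesis to annihilate the pure $(\sigma\circ\pi_{Q})^{*}\omega_{Q}$ term and the reformulation of $(ii)$ to annihilate another, exactly as in the Lagrangian proof, one gets $\omega_{Q}(\Lambda,V)=0$ for every $V\in F$, so $\Lambda\in F^{\perp}$. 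Because $\overline{\Gamma}_{H,M}$ is tangent to $M$ and $T(\sigma\circ\pi_{Q})\cdot\overline{\Gamma}_{H,M}$ is tangent to $\sigma(Q)\subseteq M$, we also have $\Lambda\in TM$; the compatibility condition $TM\cap F^{\perp}=\{0\}$ then forces $\Lambda=0$, which is precisely $(i)$.

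The main obstacle I expect is the analog of Lemma \ref{inva} in the nonlinear Hamiltonian setting: verifying that the generators of $F^{o}$ on $T^{*}Q$ remain semibasic when the constraints $\Psi^{a}(q,p)$ are genuinely nonlinear in the momenta. In the linear case this is the clean identity $F^{o}=\langle\pi_{Q}^{*}\overline{\psi}^{a}\rangle$ from Eq.~(\ref{F^{o}Lag}); in the nonlinear case one has to trace through the Legendre transform carefully to confirm that the reaction-force codistribution is still generated by $\pi_{Q}$-semibasic one-forms, and to pin down the precise Hamiltonian definition of $\mathcal{I}(F_{L}^{o})$ (or its $T^{*}Q$ counterpart) so that the pullback of $\omega_{Q}$ by $\sigma\circ\pi_{Q}$ lives in the correct ideal. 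Once this identification is made, the remainder of the argument is a faithful symplectic-dual translation of Theorem \ref{FirstThofred}.
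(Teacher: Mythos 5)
Your proof is correct, and it is worth recording how it sits relative to the paper: the paper offers no proof of Theorem \ref{dualtheo23} at all — it introduces the statement with ``in an obvious way, we can present its Hamiltonian counterpart'' and only records that $F^{o}$ is the pullback of $F^{o}_{L}$ by $\mathbb{F}L^{-1}$. The intended argument is thus to transport Theorem \ref{FirstThofred} wholesale through the Legendre transformation, using that $\mathbb{F}L$ is a symplectomorphism, that $\pi_{Q}\circ\mathbb{F}L=\tau_{Q}$, and that $M=\mathbb{F}L(N)$, $H\circ\mathbb{F}L=E_{L}$, $\Gamma_{H,M}=\mathbb{F}L_{*}\Gamma_{L,N}$, so the correspondence $\sigma\mapsto X=\mathbb{F}L^{-1}\circ\sigma$ carries every hypothesis and conclusion of the Hamiltonian statement onto the Lagrangian one. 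You instead re-run the argument directly on $T^{*}Q$; the two routes are equivalent, and yours has the merit of making explicit the two lemmas that must survive the translation. The ``main obstacle'' you flag — semibasicity of the generators of $F^{o}$ when the constraints are nonlinear in the momenta — is in fact immediate from the paper's own definition $F^{o}=(\mathbb{F}L^{-1})^{*}F^{o}_{L}$: since $F^{o}_{L}$ is generated by the $\tau_{Q}$-semibasic forms $\frac{\partial \psi^{a}}{\partial \dot{q}^{i}}\,dq^{i}$ and $\tau_{Q}\circ\mathbb{F}L^{-1}=\pi_{Q}$, their pullbacks $\left(\frac{\partial \psi^{a}}{\partial \dot{q}^{i}}\circ\mathbb{F}L^{-1}\right)dq^{i}$ are $\pi_{Q}$-semibasic, so your analogue of Lemma \ref{inva} goes through verbatim. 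Two further remarks: first, you have correctly read the statement's $\tau_{Q}$ as $\pi_{Q}$, its $F^{o}_{L}$ as $F^{o}$, and the overlines on $\Gamma_{H,M}$ as spurious — these are typographical slips in the paper, not substantive hypotheses; second, your explicit check that $\Lambda\in TM$ before invoking the compatibility condition $TM\cap F^{\perp}=\{0\}$ supplies a detail that the paper's Lagrangian proof glosses over but genuinely needs.
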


Here $F^{o}$ is just the pullback of $F^{o}_{L}$ by the inverse of the Legendre transformation $\mathbb{F}L^{-1}$.\\ 

\begin{definition}
A $1-$form $\sigma$ on $Q$ satisfying conditions of the Theorem \ref{dualtheo23} will be called solution for the constrained Hamilton-Jacobi problem given by the constraint manifold $M \subseteq T^{*}Q$ and a Hamiltonian $H$.
\end{definition}

So, Corollary \ref{funnycorollary123} has an obvious counterpart in the Hamiltonian side.
\begin{corollary}\label{dualfunnycorollary123}
Let $\sigma$ be a $1-$form on $Q$ in the conditions of Theorem \ref{dualtheo23} such that $T \sigma \left( TQ \right) \subseteq F$. Then, the following conditions are equivalent:

\begin{itemize}
\item[(i)] $\Gamma_{H,M}^{\sigma}$ and $\Gamma_{H,M} $ are $\sigma$-related.
\item[(ii)] $d \left( H \circ  \sigma  \right) =0.$
\end{itemize} 

\end{corollary}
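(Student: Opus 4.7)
The plan is to transport, line by line, the argument of Corollary \ref{funnycorollary123} to the Hamiltonian side. By Theorem \ref{dualtheo23}, condition $(i)$ is equivalent to $d(H\circ\sigma\circ\pi_Q)\in F^{o}$ on $T^{*}Q$ (with $\pi_Q$ playing the Hamiltonian role of the $\tau_Q$ of the tangent-bundle case), so the task reduces to showing that, under the hypothesis $T\sigma(TQ)\subseteq F$, this containment is equivalent to the vanishing $d(H\circ\sigma)=0$ on $Q$.

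First I would exploit that $\pi_Q$ is a submersion: the pullback $\pi_Q^{*}$ is injective on $1$-forms of $Q$, and since $H\circ\sigma\circ\pi_Q=\pi_Q^{*}(H\circ\sigma)$, one has $d(H\circ\sigma)=0$ on $Q$ if and only if $d(H\circ\sigma\circ\pi_Q)=\pi_Q^{*}\,d(H\circ\sigma)=0$ on $T^{*}Q$. The implication $(ii)\Rightarrow(i)$ is then immediate, since $0$ lies in $F^{o}$.

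For the converse, I would test the semibasic $1$-form $d(H\circ\sigma\circ\pi_Q)\in F^{o}$ against vectors of the form $T\sigma\cdot v$ with $v\in T_qQ$. Using $\pi_Q\circ\sigma=\mathrm{id}_Q$, a one-line computation gives
\begin{equation*}
\pi_Q^{*}(d(H\circ\sigma))(T\sigma\cdot v)=d(H\circ\sigma)\bigl(T(\pi_Q\circ\sigma)\cdot v\bigr)=d(H\circ\sigma)(v).
\end{equation*}
The left-hand side vanishes by the combined hypotheses $T\sigma(TQ)\subseteq F$ and $d(H\circ\sigma\circ\pi_Q)\in F^{o}$, so $d(H\circ\sigma)$ annihilates every tangent vector of $Q$, which is $(ii)$.

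I anticipate no serious obstacle: the only point requiring care is keeping the codistributions straight across $\mathbb{F}L$, so that the annihilator appearing in Theorem \ref{dualtheo23} is really the Hamiltonian $F^{o}$ on $T^{*}Q$ paired with the hypothesis $T\sigma(TQ)\subseteq F$, and not its Lagrangian preimage $F_{L}^{o}$ on $TQ$.
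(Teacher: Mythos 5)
Your proof is correct and follows essentially the same route as the paper: the paper leaves this corollary as the ``obvious counterpart'' of Corollary \ref{funnycorollary123}, whose proof likewise combines the submersion property of the bundle projection with the observation that the hypothesis $T\sigma\left(TQ\right)\subseteq F$ forces the basic $1$-form lying in $F^{o}$ to vanish. Your explicit pairing of $\pi_{Q}^{*}\,d\left(H\circ\sigma\right)$ against vectors $T\sigma\cdot v$ via $\pi_{Q}\circ\sigma=\mathrm{id}_{Q}$ is a clean way of filling in the step the paper dismisses as ``easy to check,'' and your closing caveat about not confusing $F^{o}$ with $F_{L}^{o}$ correctly flags the paper's notational slip in Theorem \ref{dualtheo23}.
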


Again, condition $T \sigma \left( TQ \right) \subseteq F$ is equivanlent to $F = T_{M} \left(T^{*}Q \right)$ and, hence, the equation of motion associated to the constraint Hamiltonian system is defined to be 
\begin{equation}\label{almost32nHELag}
\iota_{\Gamma_{H,M}}\omega_{Q} -dH=0, \qquad
\left. \Gamma_{H,M} \right\vert _{M} \in TM,
\end{equation}
Finally, as a dual version of Theorem \ref{AppChowtheore,} we have that

\begin{theorem}\label{AppChowtheoredual23}
Let $\sigma$ be a $1-$form on $Q$ in the conditions of Theorem \ref{dualtheo23} such that $G_{L}$ is completely nonholonomic. Then, the following conditions are equivalent:

\begin{itemize}
\item[(i)] $\Gamma_{H,M}^{\sigma}$ and $\Gamma_{H,M} $ are $\sigma$-related.
\item[(ii)] $d \left( H \circ  \sigma  \right) =0.$
\end{itemize} 
\end{theorem}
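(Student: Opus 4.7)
The plan is to mirror the proof of Theorem~\ref{AppChowtheore,} on the Hamiltonian side, using Theorem~\ref{dualtheo23} as the basic equivalence and then invoking the Chow--Rashevskii-type Proposition~\ref{Importantconsequuence2134} to upgrade an annihilator condition into an equality.

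First, since $\sigma$ satisfies the hypotheses of Theorem~\ref{dualtheo23}, condition (i) is already equivalent to $d(H\circ\sigma\circ\pi_{Q})\in F^{o}$. Because $\pi_{Q}\colon T^{*}Q\to Q$ is a submersion, the exact $1$-form $d(H\circ\sigma\circ\pi_{Q})=\pi_{Q}^{*}\,d(H\circ\sigma)$ vanishes on $T^{*}Q$ if and only if $d(H\circ\sigma)$ vanishes on $Q$; thus (ii) is equivalent to $d(H\circ\sigma\circ\pi_{Q})=0$. So the task reduces to showing that, under the standing hypotheses, the inclusion $d(H\circ\sigma\circ\pi_{Q})\in F^{o}$ actually forces this $1$-form to vanish identically.

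Next, I would transfer the completely nonholonomic property from $G_{L}$ (on $TQ$) to a corresponding distribution on $T^{*}Q$. Since $\mathbb{F}L$ is a diffeomorphism, the push-forward $G:=(\mathbb{F}L)_{*}G_{L}$ is a (possibly singular) completely nonholonomic distribution on $T^{*}Q$, which agrees with $F$ along $M$ by construction. Applying Proposition~\ref{Importantconsequuence2134} in its singular form (cf.\ \cite{Harms}), no nonzero exact $1$-form on $T^{*}Q$ lies in $G^{o}$. Combining the equivalences: (i) $\Leftrightarrow$ $d(H\circ\sigma\circ\pi_{Q})\in F^{o}$ $\Leftrightarrow$ $d(H\circ\sigma\circ\pi_{Q})\in G^{o}$ $\Leftrightarrow$ $d(H\circ\sigma\circ\pi_{Q})=0$ $\Leftrightarrow$ (ii).

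The main delicacy, exactly as in the Lagrangian proof, is the second of these equivalences: one must argue that for the $\pi_{Q}$-basic $1$-form $d(H\circ\sigma\circ\pi_{Q})$, membership in $F^{o}$ over $M$ coincides with membership in $G^{o}$ over all of $T^{*}Q$. This is the point where the fiberwise-constant character of a $\pi_{Q}$-pullback is essential, and where the singular Chow--Rashevskii argument must be transported across the Legendre diffeomorphism; granted these, the proof is formally identical to that of Theorem~\ref{AppChowtheore,}. The reverse implication (ii) $\Rightarrow$ (i) is immediate, since $d(H\circ\sigma)=0$ trivially yields $d(H\circ\sigma\circ\pi_{Q})=0\in F^{o}$, and one then applies Theorem~\ref{dualtheo23}.
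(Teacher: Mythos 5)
Your proposal is correct and is essentially the argument the paper intends: the paper gives no separate proof of Theorem \ref{AppChowtheoredual23}, simply declaring it the dual of Theorem \ref{AppChowtheore,}, whose proof is exactly your scheme of combining Theorem \ref{dualtheo23} with Proposition \ref{Importantconsequuence2134} applied to the (Legendre-transported) completely nonholonomic distribution. You even flag the same delicate step the paper glosses over, namely that for the $\pi_{Q}$-basic form membership in $F^{o}$ along $M$ coincides with membership in $G^{o}$, so nothing further is needed.
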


It is interesting to see how the Hamilton-Jacobi theorem for nonholonomic systems with nonlinear constraints, that is Theorem \ref{FirstThofred}, covers the Hamilton-Jacobi theorem for nonholonomic systems with linear constraints, that is Theorem \ref{nhhj21}. See that, for the linear case, 
$$F_{L}^{o} = \tau_{Q}^{*}\left( N_\ell^{o} \right).$$
This gives that the assumption $\left(X \circ \tau_{Q}\right)^{*} \omega_{L} \in \mathcal{I} \left(F^{o}_{L}\right)$ of the Theorem \ref{FirstThofred} reduces to $X ^{*} \omega_{L} \in \mathcal{I} \left(N_\ell^{o}\right).$
A direct computation shows us that this is equivalent to the assumption  $d(\mathbb{F}L \circ X)\in {\mathcal I} (N_\ell^o)$ of the Theorem \ref{nhhj21} . Indeed, by taking into account that $S^{*} \left( dL \right) = \mathbb{F}L^{*} \theta_{Q}$, we compute
\begin{eqnarray*}
X ^{*} \omega_{L} & = &- X^{*} \left( d S^{*} \left( dL \right) \right) = - X^{*} \left( d \mathbb{F}L^{*} \theta_{Q} \right) \\
&=& -d \left( \mathbb{F}L \circ X \right)^{*} \theta_{Q} = - d(\mathbb{F}L \circ X).
\end{eqnarray*}
Analogously, the second condition $d \left( E_{L} \circ \left( X \circ \tau_{Q}\right) \right) \in F^{o}_{L}$ in the Theorem \ref{FirstThofred} reduces to the second condition $d \left( E_{L} \circ  X \right) \in N_\ell^{o}$ in the Theorem \ref{nhhj21}. Thus, Theorem \ref{nhhj21} can be seen as a corollary of Theorem \ref{FirstThofred}. Analogously, Theorem \ref{nhhj21corollary4} is a particular case of Theorem \ref{AppChowtheore,}. The same happens with the dual case.\\

\section{Reduction of nonholonomic systems}


Assume that a Lie group $G$ acts freely and properly on the base manifold $Q$ leaving invariant the Lagrangian $L$ and the constraint manifold $N$ presented in the subsection (\ref{constrLagsec}). 
So, it can be proved (see \cite{CaLeMaDi98}) the following result:

\begin{proposition}
Under the above hypotheses, $\alpha_{L}$, $\omega_{L}$, $E_{L}$, $\mathbb{F}L$, $\Gamma_{L}$ and $F_{L}$ are $G-$invariant, and, therefore, $\Gamma_{L,N}$ is $G-$invariant.
\end{proposition}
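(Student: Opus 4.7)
The plan is to propagate $G$-invariance systematically from the two hypotheses (that $L$ and $N$ are $G$-invariant) through the standard geometric constructions on $TQ$. The key observation is that the action of $G$ on $Q$ lifts to $TQ$ by the tangent functor, $g \cdot v_q := T\Phi_g(v_q)$, and this natural lift automatically preserves the canonical objects built out of the manifold structure alone—most importantly the vertical endomorphism $S$ and the Liouville (dilation) vector field $\Delta$. I will use the notation $T\Phi_g$ throughout for the lifted action.

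With this naturality in hand, the first batch of invariances follows almost mechanically. Since $\alpha_L = S^{*}(dL)$ and both $S$ and $L$ are $G$-invariant (the former by naturality, the latter by hypothesis), $\alpha_L$ is $G$-invariant; taking exterior derivatives then yields $G$-invariance of $\omega_L = -d\alpha_L$. The energy $E_L = \Delta(L)-L$ is invariant because $\Delta$ is canonical and $L$ is invariant. For the Legendre map $\mathbb{F}L$, I would use the identity $\alpha_L = \mathbb{F}L^{*}\theta_Q$ together with $G$-invariance of $\alpha_L$ and the defining property of $\theta_Q$ to check fibrewise that $\mathbb{F}L$ intertwines the tangent lift on $TQ$ with the cotangent lift on $T^{*}Q$; alternatively one reads it directly from the coordinate formula \eqref{LegTrf}, since $\partial L/\partial \dot q^{j}$ transforms as a covector whenever $L$ is invariant.

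For $\Gamma_L$, I invoke the uniqueness in \eqref{HamEq-EL}: applying $(T\Phi_g)^{*}$ to $\iota_{\Gamma_L}\omega_L = dE_L$ gives the same equation for $(T\Phi_g)_{*}^{-1}\Gamma_L$ by invariance of $\omega_L$ and $E_L$, so the two vector fields must coincide. For $F_L$, because $N$ is $G$-invariant, locally around any point the constraint functions $\{\psi^a\}$ can be chosen so that $G$ permutes them linearly, and hence the codistribution $F_L^{o} = \langle S^{*}(d\psi^a)\rangle$ from \eqref{codist} is $G$-invariant (independence from the choice of local generators being guaranteed by the fact that $F_L^{o}$ is intrinsically defined). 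Dualising and using invariance of $\omega_L$, one gets that $F_L$ and $F_L^{\bot}$ are also $G$-invariant.

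Finally, the Whitney decomposition \eqref{projections1} is $G$-invariant because both summands $TN$ and $F_L^{\bot}$ are, so the projector $\mathcal{P}:TTQ|_N\to TN$ is $G$-equivariant; combined with invariance of $\Gamma_L$, formula \eqref{soln-nHHe} immediately gives that $\Gamma_{L,N} = \mathcal{P}(\Gamma_L)$ is $G$-invariant. The only genuinely delicate step, which I expect to be the main obstacle to write out carefully, is the equivariance of $\mathbb{F}L$ and the verification that $F_L^{o}$ is intrinsically defined (independent of the particular generators $\psi^a$), so that the local permutation argument globalises correctly.
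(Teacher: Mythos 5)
Your proof is correct, and since the paper offers no proof of its own for this proposition (it simply cites \cite{CaLeMaDi98}), your argument is exactly the standard chain of deductions being delegated there: naturality of $S$ and $\Delta$ under the tangent lift gives invariance of $\alpha_L$, $\omega_L$ and $E_L$; equivariance of $\mathbb{F}L$ follows from $\alpha_L=\mathbb{F}L^{*}\theta_Q$ or from the coordinate formula; uniqueness in $\iota_{\Gamma_L}\omega_L=dE_L$ gives invariance of $\Gamma_L$; and invariance of the splitting $TN\oplus F_L^{\bot}$ makes $\mathcal{P}$ equivariant, whence $\Gamma_{L,N}=\mathcal{P}(\Gamma_L)$ is invariant. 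The one imprecision is the claim that $G$ ``permutes'' the local constraint functions linearly --- for a continuous action the pulled-back $\psi^{a}$ are merely related to the originals by an invertible matrix of functions --- but this is immaterial because, as you yourself note, $F_L^{o}=S^{*}(TN^{o})$ is intrinsically built from the $G$-invariant data $N$ and $S$, so no choice of generators enters.
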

The quotient of an object $A$ by $G$ will be denoted by the name of the overlined quantity, i.e., $\overline{A}$. So, let us define the following projections and reduced spaces 
\begin{equation} \label{rhos}
\rho_{Q}: Q \rightarrow \overline{Q}, \qquad \rho_{TQ}: TQ \rightarrow \overline{TQ}, \qquad \rho_{N}:N \rightarrow \overline{N}
\end{equation}
which are the principal bundles associated with the actions of the group $G$ with $Q$, $TQ$ and $N$ respectively.\\
Consider the tangent lift of the projection $\rho_{TQ}$, that is, $T\rho_{TQ} $ a
linear mapping from $TTQ$ to $T\overline{TQ}$ and its kernel is a
subbundle of $TTQ$ consisting of the vertical vectors. We denote the vectical bundle by $\mathcal{V}$. Similarly, we denote the kernel of
the tangent lift $T\rho_{N}$ of the projection $\rho_{N}$ by $ \mathcal{V}_{N}$. Notice that $ \mathcal{V}_{N}$ is, in fact, the restriction of $ \mathcal{V}$ to $TN$.\\
Let us study the projection $\overline{F}^{o}_{L}$ of $F^{o}_{L}$. Notice that, because of the invariance we can define the projection $\overline{S}$ of $S$ through the following diagram,

\begin{equation} \label{Anotherdiagrammoreforthecollection}
  \xymatrix{ T \left( T Q \right)  \ar"1,4"^{S} \ar[dd]^{T\rho_{TQ}}&   & &  T \left( T Q \right) \ar[dd]^{T\rho_{TQ}}\\
  &  & &\\
 T\overline{TQ}\ar"3,4"^{\overline{S}} &  & &   T\overline{TQ}}
\end{equation}
Then, it is an easy computation to show that,

\begin{equation}\label{quotientedequation332}
\overline{F}^{o}_{L} = \overline{S}^{*} \left( T \overline{N}^{o} \right).
\end{equation}
\noindent
Notice that $\overline{F}^{o}_{L}$ is characterized by the following identity,
\begin{equation}\label{Comotienequeser}
\rho_{TQ}^{*} \overline{F}^{o}_{L} = F^{o}_{L}. 
\end{equation}
It is important to remark that, projecting directly Eq. (\ref{nonhollag}) we obtain that
\begin{equation}\label{nonhollagprojected}
\iota_{\overline{\Gamma}_{L,N}}\overline{\omega}_{L}-d\overline{E_L}\in \overline{F}^{o}_{L},\quad \overline{\Gamma}_{L,N}|_{\overline{N}}\in T\overline{N}.
\end{equation}

Recall the symplectic distribution $\mathcal{H}$ on $TQ$ defined in Eq. (\ref{H}). We define a distribution on $TQ$ along $N$ by
\begin{equation}
\mathcal{U}_{v}=\{Y\in \mathcal{H}_{v}:\omega _{L}(Y,Z)(v)=0,\text{ }\forall Z\in \mathcal{V}%
_{v}\cap (F_{L})_v\}.  \label{distU}
\end{equation}%
In other words,
\[
\mathcal{U}=\mathcal{H}\cap (\mathcal{V}\cap F_L)^{\bot }.
\]

One can easily see that the symplectic form $\omega_{L}$ and the $1-$form $dE_{L}$ can be restricted to $\mathcal{U}$ (see Eq. (\ref{naturally})) as the $2-$form $\omega_{\mathcal{U}}$ on $\mathcal{U}$ and the $1-$form $d{E_L}_{\mathcal{U}}$ respectively.\\
$\mathcal{U}$ projects to the distribution $\overline{\mathcal{U}}$ on $\overline{TQ}$ along $\overline{N}$. Notice also that the objects $\omega_{\mathcal{U}}$ and $d{E_L}_{\mathcal{U}}$ are also reducible. We denote the reduced forms by $\omega_{\overline{\mathcal{U}}}$ and $d{\overline{E_L}}_{\overline{\mathcal{U}}}$, respectively. The following theorem is stating the reduced dynamics.
\begin{theorem}\label{theoremBatesSniat24}
Let $\Gamma_{L,N}$ be the solution of the nonholonomic Lagrangian system. Then, $\overline{\Gamma}_{L,N}$ is a section of $\overline{\mathcal{U}}$ satisfying
\begin{equation} \label{RedSymNonhol}
\iota_{\overline{\Gamma}_{L,N}}\omega_{\overline{\mathcal{U}}}=d{\overline{E_L}}_{\overline{\mathcal{U}}}.
\end{equation}
\end{theorem}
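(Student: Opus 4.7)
The plan is to break the statement into three parts: (a) show that $\Gamma_{L,N}$, which a priori only takes values in $\mathcal{H}=TN\cap F_L$, in fact lands in the smaller subbundle $\mathcal{U}$; (b) check that $\mathcal{U}$, $\omega_{\mathcal{U}}$ and $dE_L\vert_{\mathcal{U}}$ are $G$-reducible so that the objects $\overline{\mathcal{U}}$, $\omega_{\overline{\mathcal{U}}}$ and $d\overline{E_L}_{\overline{\mathcal{U}}}$ genuinely exist; and (c) derive the reduced symplectic equation from the unreduced one by restriction and projection.

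For (a), I would start from the two facts already established earlier in the paper: the ideality of the constraints forces $\Gamma_{L,N}\in F_L$, and by construction $\Gamma_{L,N}\vert_N\in TN$, so $\Gamma_{L,N}\vert_N\in\mathcal{H}$. To upgrade this to $\mathcal{U}$, pick any $Z\in \mathcal{V}\cap F_L$. Since $Z\in F_L$ and $\iota_{\Gamma_{L,N}}\omega_L-dE_L\in F_L^o$, pairing with $Z$ gives $\omega_L(\Gamma_{L,N},Z)=dE_L(Z)=Z(E_L)$. But $Z$ is tangent to the $G$-orbits and $E_L$ is $G$-invariant, so $Z(E_L)=0$, whence $\omega_L(\Gamma_{L,N},Z)=0$. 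This is exactly the condition $\Gamma_{L,N}\vert_N\in(\mathcal{V}\cap F_L)^{\bot}$, and combined with $\Gamma_{L,N}\vert_N\in\mathcal{H}$ gives $\Gamma_{L,N}\vert_N\in\mathcal{U}$.

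For (b), I observe that $\mathcal{H}=TN\cap F_L$ and $\mathcal{V}\cap F_L$ are $G$-invariant because $N$, $F_L$ and the $G$-action are; hence $\mathcal{U}$ is $G$-invariant and projects to a well-defined subbundle $\overline{\mathcal{U}}$ of $\overline{TQ}$ along $\overline{N}$. To see that $\omega_L\vert_{\mathcal{U}}$ descends, I need to check it is basic with respect to $\mathcal{V}$, i.e.\ that it annihilates $\mathcal{V}\cap\mathcal{U}$ in its first slot: if $Z\in\mathcal{V}\cap\mathcal{U}$, then $Z\in\mathcal{V}\cap\mathcal{H}\subseteq\mathcal{V}\cap F_L$, while any other $Y\in\mathcal{U}\subseteq(\mathcal{V}\cap F_L)^{\bot}$, so $\omega_L(Y,Z)=0$. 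The same $Z(E_L)=0$ computation shows $dE_L\vert_{\mathcal{U}}$ is also basic. Combined with $G$-invariance of $\omega_L$ and $E_L$, these forms quotient to $\omega_{\overline{\mathcal{U}}}$ and $d\overline{E_L}_{\overline{\mathcal{U}}}$.

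For (c), the unreduced nonholonomic equation $\iota_{\Gamma_{L,N}}\omega_L-dE_L\in F_L^o$ means that for every $Y\in F_L$,
\begin{equation}
\omega_L(\Gamma_{L,N},Y)=dE_L(Y).
\end{equation}
Since $\mathcal{U}\subseteq F_L$, restricting to $Y\in\mathcal{U}$ gives $\iota_{\Gamma_{L,N}}\omega_{\mathcal{U}}=dE_L\vert_{\mathcal{U}}$ along $N$. Finally, $G$-invariance of $\Gamma_{L,N}$, $\omega_{\mathcal{U}}$, and $E_L$, together with the reducibility verified in step (b), let me push this identity down to $\overline{N}$, yielding the desired $\iota_{\overline{\Gamma}_{L,N}}\omega_{\overline{\mathcal{U}}}=d\overline{E_L}_{\overline{\mathcal{U}}}$. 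The main obstacle I expect is step (b): checking that $\omega_{\mathcal{U}}$ is both $G$-invariant and basic so it really defines a form on $\overline{\mathcal{U}}$; once that bookkeeping is done, steps (a) and (c) are direct applications of the nonholonomic equation together with $G$-invariance of $E_L$.
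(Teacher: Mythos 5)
Your proof is correct. Note, however, that the paper itself gives no proof of this theorem: it cites Bates and \'Sniatycki for the linear case and dismisses the nonlinear extension as ``a straightforward exercise,'' so there is no argument in the text to compare against. What you have written is precisely the standard Bates--\'Sniatycki argument, and it correctly fills that gap. Step (a) is right: pairing $\iota_{\Gamma_{L,N}}\omega_L-dE_L\in F_L^o$ with $Z\in\mathcal{V}\cap F_L\subseteq F_L$ gives $\omega_L(\Gamma_{L,N},Z)=Z(E_L)=0$ by $G$-invariance of $E_L$, which together with $\Gamma_{L,N}\vert_N\in TN\cap F_L=\mathcal{H}$ (the latter inclusion using the ideality assumption the paper imposes) places $\Gamma_{L,N}$ in $\mathcal{U}$. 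Step (b) correctly identifies the one point that actually requires checking, namely that descending $\omega_{\mathcal{U}}$ and $dE_L\vert_{\mathcal{U}}$ needs both $G$-invariance and vanishing on the kernel $\mathcal{V}\cap\mathcal{U}$ of $T\rho_{TQ}\vert_{\mathcal{U}}$, and your observation that $\mathcal{V}\cap\mathcal{U}\subseteq\mathcal{V}\cap F_L$ while $\mathcal{U}\subseteq(\mathcal{V}\cap F_L)^{\bot}$ disposes of it cleanly. Step (c) is then immediate from $\mathcal{U}\subseteq F_L$. The only thing you leave untouched is the nondegeneracy of $\omega_{\overline{\mathcal{U}}}$, but that is stated in the paper as a remark after the theorem and is not part of the claim you were asked to prove.
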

It is important to notice that the pair $(\overline{\mathcal{U}},\omega_{\overline{\mathcal{U}}})$ is a symplectic generalized vector bundle on $\overline{N}$, i.e., for each $\overline{v} \in \overline{\mathcal{U}}$ we have that ${\omega_{\overline{\mathcal{U}}}}_{\overline{v}}$ is a non-degenerate two-form on ${\overline{\mathcal{U}}}_{\overline{v}}$. Furthermore, it satisfies the following equality,
$$ d{\overline{E_L}}_{\overline{\mathcal{U}}} = \left( d \overline{E_{L}}_{\vert \overline{N}}\right)_{\overline{\mathcal{U}}}.$$
This reduction was first introduced by L. Bates and J. Sniatycki in \cite{BatesSnia} for the linear case. However, it is a straighforward exercise to prove that Theorem \ref{theoremBatesSniat24} is also satisfied for the non linear case.\\ 
We summarize all the discussion so far proposed in this subsection in the following commutative diagram 
 \begin{equation} \label{barNbarU}
  \xymatrix{ \mathcal{U} \ar[dr]
\ar[dd]_{\tau_{TQ}\vert_{\mathcal{U}}} \ar@{^{(}->}[rrr]&   & &T_NTQ\ar[dr]\ar[dd]_>>>>>>>>>>>>>>>>{\tau_{TQ}}\\
  & \overline{\mathcal{U}}\ar[dd]\ar@{^{(}->}[rrr] & && T_{\overline{N}}\overline{TQ}\ar[dd] \\
 N \ar[dr] \ar@{^{(}->}[rrr]&  & & TQ \ar[dr]
 \\
 & \overline{N}\ar@{^{(}->}[rrr] & & & \overline{TQ}
 }
\end{equation}

For the reduction theory of nonholonomic systems, we may make use of a nonholonomic map that we introduce in the following lines.
For this, we have to make a generalization of the momentum map that we use in the usual mechanics, let us define it.

\subsection{The nonholonomic momentum map}

We present a generalization of the momentum map for nonholonomic systems with symmetry obtained by Bloch et al \cite{BKMM} and extended by M. de Le\'on et al \cite{CaLeMaDi98}. This generalization considers the union of subspaces in $\mathfrak{g}$ in order to account for the constrained dynamics to a submanifold $N$. In the case of the tangent space, for every $q\in Q$ we define the following vector subspace of $\mathfrak{g}$.
\begin{equation}
\mathfrak{g}^q=\{\xi \in \mathfrak{g} \ : \ \xi_{TQ}(v_q)\in  (F_L)_{v_q}, \forall v_q\in T_qQ\cap N\}.
\end{equation}

\noindent

We consider now the disjoint union of these vector spaces, taken at all points of $Q$, i.e.,
 
$$\mathfrak{g}^N=\bigsqcup_{q\in Q} \mathfrak{g}^q \subseteq Q \times \mathfrak{g}$$
\noindent
and let $\pi:\mathfrak{g}^N\rightarrow Q$ be a canonical projection. Notice this is not a vector bundle, because $\mathfrak{g}^N$ is not a differentiable manifold.
We will understand this as a generalized vector bundle, which induces another generalized vector bundle $(\mathfrak{g}^N)^{*}$ defining its fibers as the dual vector spaces of the fibers of $\mathfrak{g}^N$.

So, we define the nonholonomic momentum map as $J^{nh}_L:TQ\rightarrow (\mathfrak{g}^N)^{*}$ by
\begin{equation}
 J^{nh}_L(v_q)\left( \xi \right)=\alpha_L(\xi_{TQ})(v_q),\quad \text{for all}\quad v_q\in TQ,\quad \xi\in \mathfrak{g}^N.
\end{equation}
\noindent

Let $\xi$ be a section of $\mathfrak{g}^N$. Then, we can define a map $J^{nh}_{L,\xi}  : TQ \rightarrow \mathbb{R}$ by putting
$$J^{nh}_{L,\xi} \left( v_{q} \right) =J^{nh}_L(v_q) \left( \xi \left( q \right) \right), \ \forall v_{q} \in TQ.$$
Furthermore, $\xi$ induces a vector field $\Theta$ on $Q$ as follows:

\begin{equation}
\Theta(q)=(\xi(q))_Q(q)
\end{equation}
through which we can rewrite a Noether Theorem \cite{CaLeMaDi98}:

\begin{theorem}
\begin{equation}
\Gamma_{L,N}(J^{nh}_{L,\xi})=\Theta^{c}(L),
\end{equation}
where $\Theta^{c}$ denotes the complete lifting of $\Theta$ onto a vector fiel on $TQ$.
\end{theorem}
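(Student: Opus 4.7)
The plan is to verify the identity by direct calculation in bundle coordinates, using the constrained Euler--Lagrange equation (\ref{nhlag8}) for $\Gamma_{L,N}$ and the defining property of the sections of $\mathfrak{g}^N$.

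First, choose a basis $\{e_a\}$ of $\mathfrak{g}$ with fundamental vector fields $\sigma_a$ on $Q$ and write $\xi(q) = \xi^a(q)e_a$. Then $\Theta(q) = \xi^a(q)\sigma_a(q)$ has coordinate components $\Theta^i(q) = \xi^a(q)\sigma_a^i(q)$. The vector $\xi_{TQ}(v_q)$, obtained by freezing $\xi(q)$ and complete-lifting its fundamental vector field, and the complete lift $\Theta^c$ have the same horizontal part $\Theta^i\,\partial/\partial q^i$; they differ only in a vertical piece of the form $\dot q^j(\partial\xi^a/\partial q^j)\sigma_a^i\,\partial/\partial\dot q^i$. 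Because $\alpha_L = (\partial L/\partial\dot q^i)\,dq^i$ is semibasic, this vertical discrepancy is invisible to $\alpha_L$ and one obtains the clean coordinate expression
\begin{equation*}
J^{nh}_{L,\xi}(v_q) \;=\; \alpha_L(\xi_{TQ})(v_q) \;=\; \frac{\partial L}{\partial\dot q^i}(v_q)\,\Theta^i(q).
\end{equation*}

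Second, $\Gamma_{L,N}$ is a SODE of the form $\dot q^i\partial/\partial q^i + \Gamma^i\partial/\partial\dot q^i$ satisfying (\ref{nhlag8}), so along it the standard identity
\begin{equation*}
\dot q^j\frac{\partial^2 L}{\partial q^j\partial\dot q^i} + \Gamma^j\frac{\partial^2 L}{\partial\dot q^j\partial\dot q^i} = \frac{\partial L}{\partial q^i} + \lambda_a\frac{\partial\psi^a}{\partial\dot q^i}
\end{equation*}
holds. I would apply $\Gamma_{L,N}$ to the coordinate formula for $J^{nh}_{L,\xi}$, use the Leibniz rule, and substitute the above to obtain
\begin{equation*}
\Gamma_{L,N}(J^{nh}_{L,\xi}) \;=\; \Theta^i\frac{\partial L}{\partial q^i} + \dot q^j\frac{\partial\Theta^i}{\partial q^j}\frac{\partial L}{\partial\dot q^i} + \lambda_a\,\Theta^i\frac{\partial\psi^a}{\partial\dot q^i}.
\end{equation*}
The first two terms are precisely the coordinate expression of $\Theta^c(L)$, since $\Theta^c = \Theta^i\partial/\partial q^i + \dot q^j(\partial\Theta^i/\partial q^j)\partial/\partial\dot q^i$.

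Third, to finish I would invoke the defining property of $\mathfrak{g}^N$: by hypothesis $\xi(q)\in\mathfrak{g}^q$, hence $\xi_{TQ}(v_q)\in (F_L)_{v_q}$ for every $v_q\in T_qQ\cap N$. Pairing the horizontal part of $\xi_{TQ}$ with the generators $(\partial\psi^a/\partial\dot q^i)\,dq^i$ of $F_L^o$ (see (\ref{codist})) gives $\Theta^i(\partial\psi^a/\partial\dot q^i) = 0$ on $N$, so the reaction-force residue $\lambda_a\Theta^i(\partial\psi^a/\partial\dot q^i)$ vanishes. This yields $\Gamma_{L,N}(J^{nh}_{L,\xi}) = \Theta^c(L)$ on $N$. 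The main conceptual hurdle is keeping clear the distinction between $\xi_{TQ}$ and $\Theta^c$ (they coincide horizontally but not vertically); the calculation succeeds precisely because the $\alpha_L$-pairing is insensitive to that vertical difference while $\Theta^c(L)$ on the right-hand side absorbs the extra $\dot q^j(\partial\xi^a/\partial q^j)$ contribution that arises when the connection-like dependence of $\xi$ on $q$ is differentiated along the trajectory.
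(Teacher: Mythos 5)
Your proof is correct. The paper gives no proof of this theorem at all (it simply quotes the result from \cite{CaLeMaDi98}), and your coordinate computation --- using semibasicity of $\alpha_{L}$ to replace $\xi_{TQ}$ by $\Theta^{c}$ inside the pairing, then Leibniz plus the constrained Euler--Lagrange equation (\ref{nhlag8}), and finally the defining property of $\mathfrak{g}^{q}$ to kill the residual reaction term $\lambda_{a}\Theta^{i}\,\partial\psi^{a}/\partial\dot{q}^{i}$ on $N$ --- is precisely the standard argument behind the cited result, with the one genuinely delicate point (the vertical discrepancy between $\xi_{TQ}$ and $\Theta^{c}$) handled correctly.
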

If it happens that $\xi$ is a section of $\mathfrak{g}^N\rightarrow Q$ that is constant, i.e., $\xi(q)=\xi \in \mathfrak{g}$ for all $q$, then $\Theta^{c}=\xi_{TQ}$, which in view of the $G$-invariance of $L$, then $\Gamma_{L,N}(J^{nh}_{L,\xi})=0$ and we recover the Noether theorem for the unconstrained nonholonomic momentum map.

\subsection{A classification of constrained systems with symmetry} \label{classification}

In this section, we will present a classification of constrained systems with symmetry. When there exist symmetries, depending on the relative position of the distribution $F_L$ associated with the external forces, and the vertical space associated with the projection $\rho_{TQ}$, we can define three different types of nonholonomic systems. In the following classification, ${(\mathcal{V}_{N})}_{v}$ is the space of vertical vectors at $v$ in $N$ with respect to the projection $\rho_N$ (see (\ref{rhos})), and $\mathcal{H}_{v}$ is the vector space defined in Eq. (\ref{H}).
\begin{enumerate}
\item Pure kinematic case: ${(\mathcal{V}_{N})}_{v}\cap \mathcal{H}_{v}=0$ and $%
T_{v}N={(\mathcal{V}_{N})}_{v}+\mathcal{H}_{v}$ for all $v\in N.$

\item Horizontal symmetries: ${(\mathcal{V}_{N})}_{v}\cap \mathcal{H}_{v}={(\mathcal{V}_{N})}_{v}$ for all $v\in N$ which is equivalent to ${(\mathcal{V}_{N})}_{v}\subset
\mathcal{H}_{v}$ for all $v\in N.$

\item General case: $0\varsubsetneq {(\mathcal{V}_{N})}_{v}\cap\mathcal{H}_{v}\varsubsetneq {(\mathcal{V}_{N})}_{v}$ for all $v\in N.$
\end{enumerate}

Let us examine the pure kinematic case and the horizontal case in the following sections, respectively. For a detailed development of the general case see \cite{CoLe99}.
\subsection{Pure kinematic case} \label{Purekinematic}

As discussed in the previous subsection, in the pure kinematical case, we have two conditions: for each $v \in N$ we have that
\begin{equation}
{\mathcal{V}_{N}}_{v} \cap \mathcal{H}_{v} = \{0\}, \qquad 
T_{v}N = {\mathcal{V}_{N}}_{v} + \mathcal{H}_{v}.
\end{equation}
This reads that we have the following direct sum decomposition of the tangent bundle of the constraint submanifold 
\begin{equation}
TN = \mathcal{V}_{N} \oplus \mathcal{H}.
\end{equation}
It is important to remark that in this case the nonholonomic momentum map is trivial, i.e.,
$$J^{nh} \equiv 0.$$
It is satisfied that $\mathcal{U} = \mathcal{H}$ and, hence, $\mathcal{U}$ is just the horizontal distribution of a principal connection in $\rho_{N}$. This implies that the projected distribution $\overline{\mathcal{U}}$ is equal to the tangent bundle $T \overline{N}$. Therefore, $\overline{\omega}_{L}$ is an almost symplectic two form on $\overline{N}$. To work with a symplectic two form on $\overline{N}$ we define the one form on $N$,
$$\tilde{\alpha} = \iota_{\Gamma_{L,N}}\left( \textbf{h}^{*}d \left( j^{*}\alpha_{L} \right) - d \textbf{h}^{*}\left(j^{*}\alpha_{L} \right)\right),$$
where $j : N \hookrightarrow TQ$ is the inclusion map and $\textbf{h}$ is horizontal projection of $\mathcal{H}$. Here $\tilde{\alpha}$ is $G-$ invariant and horizontal and, hence, it projects to a one form $\overline{\alpha}$ on $\overline{N}$. 

In this situation, the reduced equations of motion are:

\begin{theorem}
The reduced equation of motion can be written as follows,
\begin{equation}\label{121}
\iota_{\overline{\Gamma}_{L,N}}\overline{\omega} = d \overline{E}_{L} - \overline{\alpha},
\end{equation}
where $\overline{\omega} = d \overline{\alpha}_{L}$ is a symplectic two form and $\overline{\alpha}_{L}$ is the projection on $\overline{N}$ of $\textbf{h}^{*}\left(j^{*}\alpha_{L}\right)$. Moreover, we can prove that $\iota_{\overline{\Gamma}_{L,N}}\overline{\alpha} = 0.$
\end{theorem}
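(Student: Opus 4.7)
The plan is to derive the reduced equation by comparing, on $\overline{N}$, the closed two-form $\overline{\omega}=d\overline{\alpha}_{L}$ with the Bates--Sniatycki reduced two-form $\omega_{\overline{\mathcal{U}}}$, and to identify $\overline{\alpha}$ as exactly the discrepancy between their contractions with $\overline{\Gamma}_{L,N}$.

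First I would check that $\tilde{\alpha}$ actually descends to a one-form $\overline{\alpha}$ on $\overline{N}$. Its $G$-invariance is inherited from that of $\alpha_{L}$, $\Gamma_{L,N}$, and the $G$-equivariance of $\textbf{h}$ (which is automatic since both $\mathcal{V}_{N}$ and $\mathcal{H}$ are $G$-invariant). For horizontality, the two $2$-forms $\textbf{h}^{*}d(j^{*}\alpha_{L})$ and $d\textbf{h}^{*}(j^{*}\alpha_{L})$ are each horizontal: the first because $\textbf{h}$ annihilates $\mathcal{V}_{N}$, the second because $\textbf{h}^{*}(j^{*}\alpha_{L})$ is $G$-invariant and horizontal, hence basic, so $\textbf{h}^{*}(j^{*}\alpha_{L})=\rho_{N}^{*}\overline{\alpha}_{L}$ and therefore $d\textbf{h}^{*}(j^{*}\alpha_{L})=\rho_{N}^{*}\overline{\omega}$. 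In parallel I would establish the key pullback identity
\begin{equation*}
\textbf{h}^{*}(j^{*}\omega_{L})=\rho_{N}^{*}\omega_{\overline{\mathcal{U}}},
\end{equation*}
which records the fact that, for $W\in TN$, $\textbf{h}(W)$ is precisely the horizontal lift through the principal connection $\mathcal{H}$ on $\rho_{N}$ of $T\rho_{N}(W)\in T\overline{N}$.

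Now, because the constraints are ideal we have $\Gamma_{L,N}\in F_{L}$, and $\Gamma_{L,N}\in TN$ always; in the pure kinematic case $\mathcal{U}=\mathcal{H}=TN\cap F_{L}$, so $\Gamma_{L,N}$ is a section of $\mathcal{U}$ and Theorem \ref{theoremBatesSniat24} applies, yielding $\iota_{\overline{\Gamma}_{L,N}}\omega_{\overline{\mathcal{U}}}=d\overline{E}_{L}$ on $\overline{\mathcal{U}}=T\overline{N}$. Substituting the two pullback identities above into the defining expression for $\tilde{\alpha}$ rewrites it as a combination of $\rho_{N}^{*}\iota_{\overline{\Gamma}_{L,N}}\overline{\omega}$ and $\rho_{N}^{*}\iota_{\overline{\Gamma}_{L,N}}\omega_{\overline{\mathcal{U}}}=\rho_{N}^{*}d\overline{E}_{L}$; pushing down to $\overline{N}$ and rearranging produces the identity $\iota_{\overline{\Gamma}_{L,N}}\overline{\omega}=d\overline{E}_{L}-\overline{\alpha}$. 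The non-degeneracy of $\overline{\omega}$, needed to call it symplectic, then follows from that of $\omega_{\overline{\mathcal{U}}}$, since the two forms differ by a basic closed correction built from the curvature of the connection $\mathcal{H}$.

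Finally, the \emph{moreover} assertion $\iota_{\overline{\Gamma}_{L,N}}\overline{\alpha}=0$ is automatic: by construction $\tilde{\alpha}=\iota_{\Gamma_{L,N}}\Omega$ for the single two-form $\Omega=\textbf{h}^{*}d(j^{*}\alpha_{L})-d\textbf{h}^{*}(j^{*}\alpha_{L})$, so that $\iota_{\Gamma_{L,N}}\tilde{\alpha}=\iota_{\Gamma_{L,N}}^{2}\Omega=0$, and this identity descends along $\rho_{N}$. The main obstacle I anticipate is the clean verification of the pullback identity $\textbf{h}^{*}(j^{*}\omega_{L})=\rho_{N}^{*}\omega_{\overline{\mathcal{U}}}$ together with the careful sign bookkeeping needed when combining it with $\tilde{\alpha}$ and with the Bates--Sniatycki equation; once these are under control the rest of the argument is a routine reassembly.
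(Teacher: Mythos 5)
The paper states this theorem without proof (the surrounding text defers to \cite{CoLe99} and \cite{CaLeMaDi98}), so there is no in-paper argument to compare against; your reconstruction follows the standard route from that literature and its main line is correct. The two pullback identities you isolate are the right ones: $\textbf{h}^{*}\left(j^{*}\alpha_{L}\right)$ is horizontal and $G$-invariant, hence basic, so $d\,\textbf{h}^{*}\left(j^{*}\alpha_{L}\right)=\rho_{N}^{*}\overline{\omega}$; and $\textbf{h}^{*}\left(j^{*}\omega_{L}\right)=\rho_{N}^{*}\omega_{\overline{\mathcal{U}}}$ because in the pure kinematic case ${\mathcal{V}_{N}}\cap\mathcal{H}=0$ forces $\textbf{h}$ to be precisely the horizontal lift of the principal connection on $\rho_{N}$ through which $\omega_{\overline{\mathcal{U}}}$ is defined. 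Since the ideality assumption puts $\Gamma_{L,N}$ in $\mathcal{H}=\mathcal{U}$ and $\overline{\mathcal{U}}=T\overline{N}$, Theorem \ref{theoremBatesSniat24} reads $\iota_{\overline{\Gamma}_{L,N}}\omega_{\overline{\mathcal{U}}}=d\overline{E}_{L}$ as an honest equation of one-forms on $\overline{N}$, and substituting the two identities into the definition of $\tilde{\alpha}$ and pushing down along $\rho_{N}$ yields the stated equation; your argument for the \emph{moreover} clause (that $\tilde{\alpha}=\iota_{\Gamma_{L,N}}\Omega$ for a single two-form $\Omega$, so $\iota_{\Gamma_{L,N}}\tilde{\alpha}=\Omega(\Gamma_{L,N},\Gamma_{L,N})=0$, which descends) is exactly right. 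One caveat you partly anticipate: with the paper's own convention $\omega_{Q}=-d\theta_{Q}$, hence $\omega_{L}=-d\alpha_{L}$, the substitution actually produces $\iota_{\overline{\Gamma}_{L,N}}\overline{\omega}=-d\overline{E}_{L}-\overline{\alpha}$; the signs in the statement only come out if one works with $\omega_{L}=+d\alpha_{L}$. This is an inconsistency of the paper's conventions rather than of your scheme, but it has to be fixed globally before the computation, not treated as bookkeeping to be sorted out later.

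The one genuine gap is the non-degeneracy of $\overline{\omega}=d\overline{\alpha}_{L}$, which is part of the statement (``is a symplectic two form''). Your justification --- that $\overline{\omega}$ and $\omega_{\overline{\mathcal{U}}}$ differ by a basic correction built from the curvature of $\mathcal{H}$, so non-degeneracy transfers --- is not an argument: adding an arbitrary (even closed, basic) two-form to a non-degenerate one can perfectly well destroy non-degeneracy, and the curvature term here is exactly the discrepancy $d\,\textbf{h}^{*}\left(j^{*}\alpha_{L}\right)-\textbf{h}^{*}d\left(j^{*}\alpha_{L}\right)$, which is nonzero in general. Establishing that $d\overline{\alpha}_{L}$ is non-degenerate requires a separate input, e.g.\ placing oneself in the generalized \u{C}aplygin situation where $\overline{\omega}=\omega_{L^{*}}$ and invoking regularity of the reduced Lagrangian $L^{*}$ (guaranteed for Lagrangians of mechanical type), or an explicit rank computation. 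The paper also asserts this without proof, so the omission is inherited, but as written this step of your proposal would not survive scrutiny and should be flagged as an additional hypothesis or proved separately.
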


\subsection{Horizontal symmetries case} \label{horsymcase}

The case of horizontal symmetries is characterized by the following property:
\begin{equation}
\mathcal{V}_{N} \cap \mathcal{H} = \mathcal{V}_{N}.
\end{equation}
So that $\mathcal{V}_{N}$ becomes a subdistribution of $\mathcal{H}$.
The importance of this case lies on the fact of the nonholonomic momentum map turns into the standard momentum map $J_{L}: TQ \rightarrow \mathfrak{g}^{*}$. In fact,
$$ \mathfrak{g}^{q} = \mathfrak{g}, \ \forall q \in Q.$$
So, by using the Marsden-Weinstein reduction, for each value $\mu \in \mathfrak{g}^{*}$ (all the values are regular) we have a reduced sympletic manifold
$$ \left( \mathcal{P}_{\mu} = J_{L}^{-1} \left( \mu \right) / G_{\mu}, \omega_{L,\mu} \right),$$
where $G_{\mu}$ is the isotropy group of $\mu$ for the co-adjoint action and $\omega_{L,\mu}$ is the unique symplectic $2-$form on $\mathcal{P}_{\mu}$ satisfying
$$ \rho_{L,\mu}^{*} \omega_{L,\mu} = i_{L,\mu}^{*} \omega_{L},$$
with $\rho_{L,\mu}$ is the canonical projection form $J_{L}^{-1} \left( \mu \right)$ to the reduced manifold $\mathcal{P}_{\mu}$ whereas $i_{L,\mu}$ is the inclusion map from $J_{L}^{-1}\left( \mu \right)$ to the tangent bundle $T Q$.\\
Let us consider the projection $N_{\mu} =\left(N / G_{\mu} \right)$ and the canonical projection $\rho_{TQ,\mu} : TQ \rightarrow \left(TQ\right)_{\mu} = TQ/G_{\mu}$. We will assume that $N$ and $J_{L}^{-1} \left( \mu \right)$ have a clean intersection $N^{'} = N \cap J_{L}^{-1} \left( \mu \right)$. So, the $N^{'}$ is obviously $G_{\mu}-$invariant and, hence, it projects onto a submanifold $\mathcal{N}_{\mu}$ of $\mathcal{P}_{\mu}$.\\
Furthermore, the distribution $F_{L}$ induces a distribution $F^{'}_{L}$ on $J_{L}^{-1} \left( \mu \right)$ along $N^{'}$ given by the intersection of $F_{L}$ with the tangent space of $J_{L}^{-1} \left( \mu \right)$ at each vector of $N^{'}$. Again, the $G_{\mu}-$invariance of $F^{'}_{L}$ projects onto a distribution $F_{L, \mu}$ on $\mathcal{P}_{\mu}$ along $\mathcal{N}_{\mu}$.\\
So, let us remember the following Theorem (see \cite{CaLeMaDi98}).
\begin{theorem}
The projection $\left( \Gamma_{L,N} \right)_{\mu}$ of the restriction of $\Gamma_{L,N}$ to $N^{'}$ is a solution of the reduced equations of motion
\begin{equation}\label{136}
\iota_{\left( \Gamma_{L,N} \right)_{\mu}} \omega_{L,\mu} - d \left( E_{L} \right)_{\mu} \in F^{o}_{L,\mu}; \ \ \ \ \left( \Gamma_{L,N} \right)_{\mu} \in T\mathcal{N}_{\mu},
\end{equation} 
where $\left( E_{L}\right)_{\mu}$ is the reduced energy.
\end{theorem}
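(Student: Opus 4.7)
My plan is to verify that $\Gamma_{L,N}|_{N'}$ descends to a well-defined vector field on $\mathcal{N}_\mu$, and then to show that the equations of motion survive the Marsden--Weinstein pullback. The key leverage is the horizontal symmetries hypothesis $\mathcal{V}_N\subset\mathcal{H}\subset F_L$, which makes every infinitesimal generator $\xi_{TQ}$ lie in $F_L$ along $N$.

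First, I would prove that $\Gamma_{L,N}$ is tangent to $J_L^{-1}(\mu)$. For any $\xi\in\mathfrak{g}$, the identity $\iota_{\xi_{TQ}}\omega_L=dJ_{L,\xi}$ combined with the nonholonomic equation $\iota_{\Gamma_{L,N}}\omega_L-dE_L=\lambda\in F_L^o$ gives
\begin{equation*}
\Gamma_{L,N}(J_{L,\xi})=\omega_L(\Gamma_{L,N},\xi_{TQ})=\xi_{TQ}(E_L)+\lambda(\xi_{TQ})=0,
\end{equation*}
since $E_L$ is $G$-invariant and $\xi_{TQ}|_N\in\mathcal{V}_N\subset F_L$ is annihilated by $\lambda$. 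Because $\Gamma_{L,N}$ is already tangent to $N$, its restriction lies in $TN'$ along $N'=N\cap J_L^{-1}(\mu)$. Moreover, $\Gamma_{L,N}$ is $G$-invariant (and hence $G_\mu$-invariant), so it projects to a well-defined vector field $(\Gamma_{L,N})_\mu$ on $\mathcal{N}_\mu=N'/G_\mu$, tangent to $\mathcal{N}_\mu$ by construction.

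Next, I would push the nonholonomic equation through the reduction. Restricting to $N'\subseteq J_L^{-1}(\mu)$ and pulling back along $i_{L,\mu}$ gives
\begin{equation*}
i_{L,\mu}^{*}\bigl(\iota_{\Gamma_{L,N}}\omega_L-dE_L\bigr)=i_{L,\mu}^{*}\lambda.
\end{equation*}
Using the defining relation $\rho_{L,\mu}^{*}\omega_{L,\mu}=i_{L,\mu}^{*}\omega_L$ and the fact that $G_\mu$-invariance of $E_L$ yields $i_{L,\mu}^{*}E_L=\rho_{L,\mu}^{*}(E_L)_\mu$, the left-hand side becomes $\rho_{L,\mu}^{*}\bigl(\iota_{(\Gamma_{L,N})_\mu}\omega_{L,\mu}-d(E_L)_\mu\bigr)$. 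So the remaining task is to show that $i_{L,\mu}^{*}\lambda$ is the $\rho_{L,\mu}$-pullback of a one-form in $F_{L,\mu}^{o}$. For this I would check two things: (a) horizontality for $\rho_{L,\mu}$, which amounts to $\lambda(\xi_{TQ})=0$ for every $\xi\in\mathfrak{g}_\mu$, and this holds because $\xi_{TQ}|_N\in F_L$ by the horizontal symmetries hypothesis, so $\lambda\in F_L^o$ kills it; and (b) that the projected covector annihilates $F_{L,\mu}$, which holds because any vector in $F_{L,\mu}$ lifts to one in $F_L'=F_L\cap T(J_L^{-1}(\mu))\subset F_L$, and $\lambda$ annihilates $F_L$.

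Combining these steps yields $\iota_{(\Gamma_{L,N})_\mu}\omega_{L,\mu}-d(E_L)_\mu\in F_{L,\mu}^{o}$, together with $(\Gamma_{L,N})_\mu\in T\mathcal{N}_\mu$, which is the claim. The main obstacle is bookkeeping around step (a)--(b): one must justify that a covector belonging to $F_L^o$ on $TQ$ actually defines a section of $F_{L,\mu}^o$ on the Marsden--Weinstein quotient, and it is precisely the horizontal symmetries condition $\mathcal{V}_N\subset F_L$ that makes the horizontality check work. Once horizontality is secured, annihilation follows from the definition of $F_{L,\mu}$ as the projection of $F_L\cap T(J_L^{-1}(\mu))$.
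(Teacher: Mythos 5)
Your proof is correct, but there is nothing in the paper to compare it with: the paper does not prove this theorem, it simply recalls it from \cite{CaLeMaDi98} (``let us remember the following Theorem''), so your argument supplies a proof the paper omits. Your route is the natural one and all the essential points are in place: the horizontal-symmetries hypothesis $\mathcal{V}_{N}\subset\mathcal{H}\subset F_{L}$ puts every generator $\xi_{TQ}$ inside $F_{L}$ along $N$, so the reaction term $\lambda\in F_{L}^{o}$ drops out of the Noether computation and $J_{L,\xi}$ is conserved for every $\xi\in\mathfrak{g}$ (this is exactly the paper's remark that $\mathfrak{g}^{q}=\mathfrak{g}$ in this case); tangency of $\Gamma_{L,N}$ to $N'$ then uses the paper's standing clean-intersection assumption, which gives $TN'=TN\cap T\left(J_{L}^{-1}(\mu)\right)$; the descent of the equation through $\rho_{L,\mu}^{*}\omega_{L,\mu}=i_{L,\mu}^{*}\omega_{L}$ together with your checks (a)--(b) lands the right-hand side in the annihilator of $F_{L,\mu}=T\rho_{L,\mu}\left(F_{L}\cap T\left(J_{L}^{-1}(\mu)\right)\right)$, which is what $F_{L,\mu}^{o}$ means here. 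Three small points you should make explicit: the identity $\iota_{\xi_{TQ}}\omega_{L}=dJ_{L,\xi}$ is a consequence of the $G$-invariance of $\alpha_{L}$ (recorded in the paper's Proposition at the start of Section 3), not a free fact; with that convention the displayed computation should read $\Gamma_{L,N}(J_{L,\xi})=\omega_{L}(\xi_{TQ},\Gamma_{L,N})$, a harmless sign since the result is zero; and for $i_{L,\mu}^{*}\lambda$ to descend you need its $G_{\mu}$-invariance as well as horizontality, which follows because $\lambda=\iota_{\Gamma_{L,N}}\omega_{L}-dE_{L}$ is built from $G$-invariant data.
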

\noindent Observe that
\begin{equation}\label{130}
\rho^{*}_{TQ, \mu}\left( TN_{\mu}^{o} \right) = {TN}^{o},
\end{equation}
In fact, as a direct consequence of $ T N_{\mu} = T \rho_{TQ,\mu} \left( T N \right)$ we have that,
$$\rho^{*}_{TQ, \mu}\left( TN_{\mu}^{o} \right) \subseteq {TN}^{o}.$$
Then, counting dimensions, we have Eq. (\ref{130}).\\

Let us study the codistribution $F_{L,\mu}^{o}$. Notice that for each $U_{v_{q}} \in T_{v_{q}} \left( TQ \right)$ and $g \in G_{\mu}$ we have

\begin{eqnarray*}
S \left( T_{v_{q}} \phi^{T}_{g} \left( U_{v_{q}} \right) \right) & = & \left( \phi^{T}_{g} \left(v_{q} \right)+ t \left[ T_{v_{q}} \left( \tau_{Q} \circ \phi^{T}_{g}\right) \left( U_{v_{q}}\right)\right] \right)^{'}_{|0}\\
&=& \left( T_{q}\phi_{g} \left(v_{q} \right)+ t \left[ T_{v_{q}} \left( \phi_{g} \circ  \tau_{Q}\right) \left( U_{v_{q}} \right)\right] \right)^{'}_{|0}\\
&=&  \left( T_{q}\phi_{g} \left(v_{q} + t\left[ T_{v_{q}}   \tau_{Q} \left( U_{v_{q}} \right)\right]\right) \right)^{'}_{|0}\\
&=& \left[ T_{v_{q}} \phi^{T}_{g} \circ S \right] \left( U_{v_{q}} \right).
\end{eqnarray*}
Therefore, $S$ is $G_{\mu}-$invariant and it can be projected onto a map
$$S_{\mu} : T \left(TQ\right)_{\mu} \rightarrow T \left(TQ\right)_{\mu},$$
such that $ T \rho_{TQ,\mu} \circ S = S_{\mu} \circ T \rho_{TQ,\mu}$. Then,
\begin{equation}\label{132}
\rho_{TQ,\mu}^{*} S^{*}_{\mu} = S^{*} \rho_{TQ,\mu}^{*}.
\end{equation}
Thus, by definition we have that
$$  \rho_{L,\mu}^{*} F^{o}_{L,\mu} = i_{L,\mu}^{*} \left( S^{*} \left( TN^{o} \right) \right).$$
Then, taking into account Eq. (\ref{130}) and Eq. (\ref{132})
$$  \rho_{L,\mu}^{*} F^{o}_{L,\mu} =\left(\rho_{TQ,\mu} \circ i_{L,\mu}\right)^{*} \left(S_{\mu}^{*} \left( T N_{\mu}^{o} \right) \right) = \left(j_{L,\mu}\circ \rho_{L,\mu}\right)^{*} \left(S_{\mu}^{*} \left( T N_{\mu}^{o} \right) \right), $$
where $j_{L,\mu} :  \mathcal{P}_{\mu} \hookrightarrow \left( TQ\right)_{\mu}$. Then,
\begin{equation}\label{133}
F^{o}_{L,\mu} = j_{L,\mu}^{*}\left(S_{\mu}^{*} \left( TN_{\mu}^{o} \right)\right).
\end{equation}
Let us consider the constraints $\psi^{a}$ which characterize (locally) $N$. Then, by using Eq.(\ref{133}), there exists a family of (local) $1-$forms on $\left(TQ \right)_{\mu}$
\begin{equation}\label{aux34d}
 j_{L,\mu}^{*} S_{\mu}^{*} \left(  \overline{d\psi^{a}} \right),
\end{equation}
which generates $F^{o}_{L,\mu}$.\\

\section{Hamilton--Jacobi theory for reduced nonholonomic Lagrangian systems}

In this section we will present our second goal: The reduction of the Hamilton--Jacobi equation for nonholonomic systems. For this, let us take into consideration all the aforementioned notation and theory. In the following subsection, we start with a reduction of a Hamilton--Jacobi theory for nonholonomic systems in the more general case, and then consider the kinematic and horizontal cases as particular cases of this general picture.


\subsection{General case}\label{Generalcase3}

Let us consider the symplectic manifold $(TQ,\omega_L)$. This implies that the pair $(T_vTQ,\omega_L(v))$ is a symplectic vector space at each point $v$ in $TQ$. In other words, the triple $(TTQ,TQ,\omega_L)$ is a symplectic vector bundle \cite{we}.

Let $\overline{X}$ be a section of the projection $\overline{\tau}_{Q}: \overline{TQ} \rightarrow \overline{Q}$ such that $\overline{X} \left( \overline{Q} \right) \subseteq \overline{N}$. Then, we may define a vector field $\overline{\Gamma}^{\overline{X}}_{L,N}$ on $\overline{Q}$ by
\begin{equation}\label{commdefi3e4d}
\overline{\Gamma}^{\overline{X}}_{L,N} = T \overline{\tau}_Q \circ \overline{\Gamma}_{L,N} \circ \overline{X}.
\end{equation} 
Hence, the following diagram is commutative
\begin{equation} \label{hattau2}
  \xymatrix{ \overline{N} \ar"1,4"^{\overline{\Gamma}_{L,N}}&   & & T \left( \overline{TQ} \right)  \ar[dd]^{T\overline{\tau}_Q}\\
  &  & &\\
 \overline{Q} \ar"3,4"^{\overline{\Gamma}^{\overline{X}}_{L,N}} \ar"1,1"^{\overline{X}}&  & & T\overline{Q} }
\end{equation}

\noindent
Let us study carefully the section $\overline{X}$. Notice that, taking into account that the action on $Q$ is free, for each $q \in Q$ there exists just one tangent vector $X_{q} \in T_{q} Q$ such that
$$ \overline{X} \left( \overline{q} \right) = \rho_{TQ} \left( X_{q}\right).$$
Thus, we can define the map $X: Q \rightarrow TQ$ by the equation: $X \left( q \right) = X_{q}$ for all $q \in Q$. By uniqueness, the map $X$ is $G-$invariant.\\
Now, to prove that $X$ is a \textit{bona-fide} vector field we should show that it is differentiable.\\
Let us consider two (local) sections $\Theta$ and $\theta$ of $\rho_{TQ}$ and $\rho_{Q}$ respectively such that the following diagram is commutative:
\begin{equation} \label{sectionslocal441}
  \xymatrix{ \overline{TQ} \ar"1,4"^{\Theta} \ar[dd]^{\overline{\tau}_Q}&   & &  TQ   \ar[dd]^{\tau_Q}\\
  &  & &\\
 \overline{Q} \ar"3,4"^{\theta} &  & & Q }
\end{equation}
On the other hand we know that,
\begin{equation}\label{Ginvaiantvector2323}
\rho_{TQ} \circ X = \overline{X} \circ \rho_{Q}.
\end{equation}
So, $X^{\Theta}= \Theta \circ \rho_{TQ} \circ X$ is differentiable. However, $X^{\Theta}$ is not a vector field and, obviously, 
$$ X^{\Theta} \neq X.$$
In fact, $X^{\Theta}$ is constant on the orbits of the action of $G$ over $Q$. By using that $\theta$ is a section, $\theta \left( \overline{Q} \right)$ is a submanifold of $Q$. Hence, the restriction $X^{\theta}$ of $X^{\Theta}$ to $\theta \left( \overline{Q} \right)$ is differentiable. In fact, by uniqueness
$$ X^{\theta} = X_{\vert \theta \left( \overline{Q}\right)}.$$
So, by using the multiplication of the action, we finally prove that $X$ is differentiable and, hence, $X$ is a $G-$invariant vector field on $Q$ such that it satisfies Eq. (\ref{Ginvaiantvector2323}). Equivalently, we have proved that there exists a one-to-one correspondence 
\begin{equation}\label{reconstrucion23}
\begin{array}{rccl}
&{\mathfrak{X}}_{G} \left( Q \right) & \rightarrow & \Gamma \left( \overline{\tau}_{Q} \right) \\
& X  &\mapsto &  \overline{X}
\end{array}
\end{equation}
where $\mathfrak{X}_{G} \left(Q \right)$ is the space of $G-$invariant vector fields on $Q$ and $\Gamma \left(\overline{\tau}_{Q}  \right)$ is the space of sections of $\rho_{TQ}$. We will use this correspondence along the rest of the paper.\\
Next, we want to prove a reduced version of the Theorem \ref{FirstThofred}. To do this, we will define the ideal ${\mathcal I}(\overline{F}_{L}^o)$ in the same way of ${\mathcal I}(F_{L}^o)$ (see Eq. ( \ref{definitiononeoftheamountofideals23})).

We now state the reduced theorem:

\begin{theorem}\label{SecondThofred}
Let $\overline{X}$ be a section of $\overline{\tau}_{Q}$ such that $\overline{X} \left( \overline{Q} \right) \subset \overline{N}$ and $\left(\overline{X} \circ \overline{\tau}_{Q}\right)^{*} \overline{\omega}_{L} \in \mathcal{I} \left(\overline{F}^{o}_{L}\right)$. Then, the following conditions are equivalent:

\begin{itemize}
\item[(i)] $\overline{\Gamma}_{L,N}^{\overline{X}}$ and $\overline{\Gamma}_{L,N} $ are $\overline{X}$-related.
\item[(ii)] $d \left( \overline{E}_{L} \circ \left( \overline{X} \circ \overline{\tau}_{Q}\right) \right) \in \overline{F}^{o}_{L}$
\end{itemize} 
\end{theorem}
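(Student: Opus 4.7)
The plan is to mirror the proof of Theorem \ref{FirstThofred} verbatim at the quotient level, after establishing that the two lemmas used there (Lemmas \ref{omega-T} and \ref{inva}) admit direct reduced analogs. Specifically, the first analog states that $\overline{\omega}_{L}(\overline{U},\overline{W}) = 0$ whenever $\overline{U},\overline{W} \in \ker T\overline{\tau}_{Q}$, and the second states that along $\overline{X}$ we have $(\overline{X}\circ\overline{\tau}_{Q})^{*}\overline{F}^{o}_{L} = \overline{F}^{o}_{L}$ and $(\overline{X}\circ\overline{\tau}_{Q})_{*}\overline{F}_{L} = \overline{F}_{L}$. Both follow from the unreduced statements by $G$-invariance: lift $\overline{U},\overline{W}$ to vectors on $TQ$ that are vertical for $\tau_{Q}$ (using that $T\rho_{TQ}$ intertwines $\tau_{Q}$ and $\overline{\tau}_{Q}$), then apply Lemma \ref{omega-T} and the fact that $\rho_{TQ}^{*}\overline{F}^{o}_{L} = F^{o}_{L}$ from Eq. (\ref{Comotienequeser}); for the second, use that the local generators of $\overline{F}^{o}_{L}$ are images of semibasic $1$-forms under $\overline{S}^{*}$ via the diagram (\ref{Anotherdiagrammoreforthecollection}), so that $\overline{X}\circ\overline{\tau}_{Q}$ acts trivially on them, just as in the original proof.

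Next I would use the projected equation of motion (\ref{nonhollagprojected}) to rewrite condition (ii) in the equivalent form
\[
(\overline{X}\circ\overline{\tau}_{Q})^{*}(\iota_{\overline{\Gamma}_{L,N}}\overline{\omega}_{L})\in \overline{F}^{o}_{L},
\]
using that the assumption $(\overline{X}\circ\overline{\tau}_{Q})^{*}\overline{\omega}_{L} \in \mathcal{I}(\overline{F}^{o}_{L})$, together with $\overline{\Gamma}_{L,N}$ taking values in $\overline{F}_{L}$ (since $\Gamma_{L,N}$ does and the projection preserves this), makes $\iota_{\overline{\Gamma}_{L,N}}(\overline{X}\circ\overline{\tau}_{Q})^{*}\overline{\omega}_{L}$ lie in $\overline{F}^{o}_{L}$. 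With condition (ii) in this form, the implication (i)$\Rightarrow$(ii) is a straightforward chain of equalities: for any vector field $\overline{W}$ on $\overline{TQ}$,
\[
(\overline{X}\circ\overline{\tau}_{Q})^{*}(\iota_{\overline{\Gamma}_{L,N}}\overline{\omega}_{L})(\overline{W}) = \iota_{\overline{\Gamma}_{L,N}^{\overline{X}}}\overline{X}^{*}\overline{\omega}_{L}(T\overline{\tau}_{Q}\cdot \overline{W}) = \iota_{\overline{\Gamma}_{L,N}}(\overline{X}\circ\overline{\tau}_{Q})^{*}\overline{\omega}_{L}(\overline{W}),
\]
using $\overline{X}$-relatedness in the middle step and the reduced Lemma analog to obtain the desired membership in $\overline{F}^{o}_{L}$.

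For the converse (ii)$\Rightarrow$(i), I would introduce along $\overline{X}$ the vector fields
\[
\overline{\Lambda} := \overline{\Gamma}_{L,N} - T(\overline{X}\circ\overline{\tau}_{Q})\cdot \overline{\Gamma}_{L,N}, \qquad \overline{\Sigma} := \overline{V} - T(\overline{X}\circ\overline{\tau}_{Q})\cdot \overline{V},
\]
for an arbitrary $\overline{V}$ taking values in $\overline{F}_{L}$. Both $\overline{\Lambda}$ and $\overline{\Sigma}$ lie in $\ker T\overline{\tau}_{Q}$, so the reduced version of Lemma \ref{omega-T} gives $\overline{\omega}_{L}(\overline{\Lambda},\overline{\Sigma})|_{\overline{X}(\overline{q})} = 0$. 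Expanding this vanishing exactly as in the computation (\ref{zeros}), the cross term with $(\overline{X}\circ\overline{\tau}_{Q})^{*}\overline{\omega}_{L}$ vanishes by the standing assumption, the term involving $d\overline{E}_{L}$ vanishes by condition (ii) (in the reformulated form), and one is left with
\[
\overline{\omega}_{L}\bigl(\overline{\Gamma}_{L,N} - T(\overline{X}\circ\overline{\tau}_{Q})\cdot\overline{\Gamma}_{L,N},\overline{V}\bigr)\big|_{\overline{X}(\overline{q})}=0
\]
for all $\overline{V}$ in $\overline{F}_{L}$, i.e., the difference lies in $\overline{F}_{L}^{\bot}$.

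The main obstacle I anticipate is justifying that the compatibility condition (\ref{compatibility condition}) descends to a reduced compatibility condition $T\overline{N}\cap\overline{F}_{L}^{\bot} = \{0\}$, so that we may conclude $\overline{\Gamma}_{L,N} = T(\overline{X}\circ\overline{\tau}_{Q})\cdot \overline{\Gamma}_{L,N}$ along $\overline{X}$. This should follow from the $G$-invariance of $F_{L}$, $N$ and $\omega_{L}$, combined with the transversality encoded by the decomposition (\ref{barNbarU}) relating the symplectic subbundle $\overline{\mathcal{U}}$ to $T\overline{N}$; the delicate point is that $\overline{\omega}_{L}$ is only a $2$-form on $\overline{TQ}$ in a formal sense (it genuinely symplectically restricts to $\overline{\mathcal{U}}$, not to all of $T_{\overline{N}}\overline{TQ}$), so the reduced compatibility statement must be read on $\overline{\mathcal{U}}$ through Theorem \ref{theoremBatesSniat24}. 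Once this is in place, the conclusion $(i)$ that $\overline{\Gamma}_{L,N}^{\overline{X}}$ and $\overline{\Gamma}_{L,N}$ are $\overline{X}$-related is immediate from the definition (\ref{commdefi3e4d}) of $\overline{\Gamma}_{L,N}^{\overline{X}}$.
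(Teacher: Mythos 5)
Your strategy is genuinely different from the paper's, and the difference matters. The paper does not re-run the symplectic argument downstairs at all: it uses the one-to-one correspondence (\ref{reconstrucion23}) between $G$-invariant vector fields $X$ on $Q$ and sections $\overline{X}$ of $\overline{\tau}_{Q}$, checks that the hypotheses of the reduced statement ($\overline{X}(\overline{Q})\subset\overline{N}$ and $(\overline{X}\circ\overline{\tau}_{Q})^{*}\overline{\omega}_{L}\in\mathcal{I}(\overline{F}^{o}_{L})$) pull back along $\rho_{TQ}$ to exactly the hypotheses of Theorem \ref{FirstThofred} for the lifted $X$ (using the $G$-invariance of $N$, Eq. (\ref{Comotienequeser}), and the fact that $\rho_{TQ}$ is a submersion), and then transfers conditions $(i)$ and $(ii)$ up and down the projection so that the equivalence is simply inherited from the unreduced theorem. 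All of the symplectic linear algebra therefore happens upstairs, where $\omega_{L}$, $F_{L}^{\bot}$ and the compatibility condition (\ref{compatibility condition}) are honest, well-defined objects.

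That is precisely where your proposal has a genuine gap --- the one you flag yourself but do not close. Your converse direction ends with ``the difference lies in $\overline{F}_{L}^{\bot}$'' and then needs a reduced compatibility condition $T\overline{N}\cap\overline{F}_{L}^{\bot}=\{0\}$. But $\omega_{L}$ is $G$-invariant without being horizontal for $\rho_{TQ}$ (its contraction with the infinitesimal generators $\xi_{TQ}$ does not vanish in general), so $\overline{\omega}_{L}$ does not exist as a two-form on $\overline{TQ}$ and $\overline{F}_{L}^{\bot}$ is not a well-defined symplectic orthogonal; the only honest reduced symplectic object is the restriction $\omega_{\overline{\mathcal{U}}}$ to the subbundle $\overline{\mathcal{U}}$ of Theorem \ref{theoremBatesSniat24}, which in the general case of subsection \ref{classification} is a proper subbundle of $T\overline{N}$. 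Consequently both your reduced analog of Lemma \ref{omega-T} and the final nondegeneracy step would have to be reformulated and proved on $\overline{\mathcal{U}}$, and ``this should follow from $G$-invariance combined with the decomposition (\ref{barNbarU})'' is an announcement rather than a proof. The fix is either to supply that argument on $\overline{\mathcal{U}}$ (essentially redoing the Bates--Sniatycki reduction inside your proof), or, much more economically, to adopt the paper's route: lift $\overline{X}$ to its $G$-invariant representative $X$, verify the hypotheses upstairs, and let Theorem \ref{FirstThofred} do the work.
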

\begin{proof}
We will use the correspondence between $G-$invariant vector fields on $Q$ and sections of $\overline{\tau}_{Q}$, $X \mapsto \overline{X}$, and Theorem \ref{FirstThofred} to prove this. So, let $X$ be the $G-$invariant vector field on $Q$ which projects on $\overline{X}$. Then,
$$ \overline{X} \left( \overline{Q} \right) = \rho_{TQ} \left( X \left( Q \right) \right).$$
So, condition $\overline{X} \left( \overline{Q} \right) \subset \overline{N}$ turns into the following condition,
$$\rho_{TQ}\left(X \left( Q \right) \right)\subset \rho_{TQ} \left( N \right).$$
By the $G-$invariance of $N$ we have that this is equivalent to
$$X \left( Q \right) \subset N.$$
Next, let us work with condition $\left(\overline{X} \circ \overline{\tau}_{Q}\right)^{*} \overline{\omega}_{L} \in \mathcal{I} \left(\overline{F}^{o}_{L}\right)$, i.e.,

$$\left(\overline{X} \circ \overline{\tau}_{Q}\right)^{*} \overline{\omega}_{L} = \overline{\beta}_a\wedge \overline{S}^{*} \left( \overline{d \psi^{a}} \right),$$
with $\overline{\beta}_{a} \in \Lambda^1 \left( \overline{TQ} \right)$. Then, by taking into that $\rho_{TQ}$ is a submersion, this easily implies that, $\left(X \circ \tau_{Q}\right)^{*} \omega_{L} \in \mathcal{I} \left(F^{o}_{L}\right)$. So, we are in the conditions of Theorem \ref{FirstThofred}. Finally, it is just an exercise to prove that conditions $(i)$ and $(ii)$ are equivalent to conditions $(i)$ and $(ii)$ of the Theorem \ref{FirstThofred}.

\end{proof}

\begin{definition}
A section $\overline{X}$ of $\overline{\tau}_{Q}$ satisfying conditions of the Theorem \ref{SecondThofred} will be called solution for the reduced constrained Hamilton-Jacobi problem given by the constraint manifold $N \subseteq TQ$, the regular Lagrangian $L$ and the action of the group $G$ on $Q$.
\end{definition}

By using the proof of the theorem we actually have a result giving the relation between $G-$invariant solutions of the Hamilton-Jacobi problem and solutions of the reduced constrained Hamilton-Jacobi problem. 
\begin{proposition}
Let $X$ be a $G-$invariant vector field on $Q$. Then, $X$ is a solution for the constrained Hamilton-Jacobi problem if, and only if, $\overline{X}$ is a solution for the reduced constrained Hamilton-Jacobi problem.
\end{proposition}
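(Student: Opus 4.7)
The strategy is to apply the one-to-one correspondence of equation (\ref{reconstrucion23}) between $G$-invariant vector fields on $Q$ and sections of $\overline{\tau}_Q$, and then translate, one by one, each ingredient that characterizes a ``solution'' upstairs (in the sense of Theorem \ref{FirstThofred}) into the analogous ingredient downstairs (in the sense of Theorem \ref{SecondThofred}). The translation should rest on four naturality identities that are already in the paper: $\rho_{TQ}^{*}\overline{\omega}_{L}=\omega_{L}$, $\rho_{TQ}^{*}\overline{F}^{o}_{L}=F^{o}_{L}$ (equation (\ref{Comotienequeser})), $E_{L}=\overline{E}_{L}\circ\rho_{TQ}$, and the $G$-invariance of $N$, together with the intertwining relation $\rho_{TQ}\circ(X\circ\tau_{Q})=(\overline{X}\circ\overline{\tau}_{Q})\circ\rho_{TQ}$ that comes from $\rho_{TQ}\circ X=\overline{X}\circ\rho_{Q}$ (equation (\ref{Ginvaiantvector2323})).

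First, I would check the subset condition. Applying $\rho_{TQ}$ to $X(Q)\subseteq N$ gives $\overline{X}(\overline{Q})\subseteq\rho_{TQ}(N)=\overline{N}$, and conversely, the $G$-invariance of $N$ yields $\rho_{TQ}^{-1}(\overline{N})=N$, so the downstairs inclusion pulls back to the upstairs one. Next, for the ideal condition, the intertwining relation combined with $\rho_{TQ}^{*}\overline{\omega}_{L}=\omega_{L}$ produces the key identity
\[
(X\circ\tau_{Q})^{*}\omega_{L}=\rho_{TQ}^{*}\bigl[(\overline{X}\circ\overline{\tau}_{Q})^{*}\overline{\omega}_{L}\bigr].
\]
Because $\rho_{TQ}$ is a surjective submersion, $\rho_{TQ}^{*}$ is injective on forms and identifies basic forms with their projections; in particular, using (\ref{Comotienequeser}), membership of $(X\circ\tau_{Q})^{*}\omega_{L}$ in $\mathcal{I}(F^{o}_{L})$ is equivalent to membership of $(\overline{X}\circ\overline{\tau}_{Q})^{*}\overline{\omega}_{L}$ in $\mathcal{I}(\overline{F}^{o}_{L})$.

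With the hypotheses of Theorem \ref{FirstThofred} and Theorem \ref{SecondThofred} now seen to be equivalent under $X\leftrightarrow\overline{X}$, it only remains to equate one of the equivalent conclusions on each side. I would work with condition (ii): from $E_{L}\circ(X\circ\tau_{Q})=\overline{E}_{L}\circ(\overline{X}\circ\overline{\tau}_{Q})\circ\rho_{TQ}$, one has
\[
d\bigl(E_{L}\circ(X\circ\tau_{Q})\bigr)=\rho_{TQ}^{*}\,d\bigl(\overline{E}_{L}\circ(\overline{X}\circ\overline{\tau}_{Q})\bigr),
\]
and the same submersion/injectivity argument as above, together with (\ref{Comotienequeser}), gives that the upstairs $1$-form lies in $F^{o}_{L}$ if and only if the downstairs one lies in $\overline{F}^{o}_{L}$. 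Invoking the intrinsic equivalence $(i)\Leftrightarrow(ii)$ within each of the two theorems then closes the loop.

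The main subtlety—and the step that must be handled with care—is the interplay between $\rho_{TQ}^{*}$ and the ideals $\mathcal{I}(F^{o}_{L})$, $\mathcal{I}(\overline{F}^{o}_{L})$: one must know that $\rho_{TQ}^{*}$ restricts to a bijection between $\mathcal{I}(\overline{F}^{o}_{L})$ and the $G$-invariant part of $\mathcal{I}(F^{o}_{L})$, which follows from the submersion property of $\rho_{TQ}$ and (\ref{Comotienequeser}), but which is the place where the $G$-invariance of $X$ is genuinely used. Once this point is secured, the remaining verifications are routine diagram chases exploiting the equivariance of $\tau_{Q}$, of $\Gamma_{L,N}$, and of the various structures already shown to descend to the quotient.
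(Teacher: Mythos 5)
Your proposal follows the same route as the paper: the paper derives this proposition directly from the proof of Theorem \ref{SecondThofred}, which likewise uses the correspondence $X\leftrightarrow\overline{X}$ of (\ref{reconstrucion23}) and translates the subset condition, the ideal condition and condition (ii) between the two levels via the $G$-invariance of $N$, the identity (\ref{Comotienequeser}) and the submersion property of $\rho_{TQ}$. Your write-up is in fact slightly more explicit than the paper's (which leaves the equivalence of conditions (i) and (ii) ``as an exercise''), and you correctly isolate the one genuinely delicate point, namely that basic forms in $\mathcal{I}(F^{o}_{L})$ descend to $\mathcal{I}(\overline{F}^{o}_{L})$.
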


Notice that at the beginning of this subsection we have given a constructive way to obtain the associated $G-$invariant vector field $X$ on $Q$ to a section $\overline{X}$ of $\overline{\tau}_{Q}$. In fact, to have the sections $\theta$ and $\Theta$ in the diagram (\ref{sectionslocal441}) we only need to use a principal connection on $\rho_{Q}$. So, this process defines a detailed way of reconstruction from solutions for the reduced constrained Hamilton-Jacobi problem to solutions for the constrained Hamilton-Jacobi problem.

\begin{corollary} \label{corollary-}
Let $\overline{X}$ be a section of $\overline{\tau}_{Q}$ in the conditions of the Theorem \ref{SecondThofred} satisfying $T\overline{X} \left( T \overline{Q} \right) \subseteq \overline{F}_{L}$.
Then, the following conditions are equivalent:
\begin{itemize}
\item[(i)] $\overline{\Gamma}^{\overline{X}}_{L,N}$ and $\overline{\Gamma}_{L,N}$ are $\overline{X}$-related.
\item[(ii)] $d \left( \overline{E}_{L} \circ \overline{X} \right) = 0$
\end{itemize}
\end{corollary}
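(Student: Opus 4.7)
The plan is to reduce this corollary to the unreduced Corollary \ref{funnycorollary123} by exploiting the one-to-one correspondence \eqref{reconstrucion23} between $G$-invariant vector fields $X$ on $Q$ and sections $\overline{X}$ of $\overline{\tau}_{Q}$ that was set up in the proof of Theorem \ref{SecondThofred}. Thus, given $\overline{X}$ satisfying the hypotheses of the corollary, let $X$ be the corresponding $G$-invariant vector field on $Q$.

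First I would check that $X$ inherits all the hypotheses of Corollary \ref{funnycorollary123}. From the proof of Theorem \ref{SecondThofred} we already have that $\overline{X}(\overline{Q}) \subseteq \overline{N}$ forces $X(Q) \subseteq N$, and that $(\overline{X}\circ\overline{\tau}_Q)^*\overline{\omega}_L \in \mathcal{I}(\overline{F}^o_L)$ implies $(X\circ\tau_Q)^*\omega_L \in \mathcal{I}(F^o_L)$. The new ingredient to lift is $T\overline{X}(T\overline{Q})\subseteq \overline{F}_L$. Differentiating the intertwining relation $\rho_{TQ}\circ X = \overline{X}\circ \rho_Q$ gives $T\rho_{TQ}\circ TX = T\overline{X}\circ T\rho_Q$. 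On the other hand, equation \eqref{Comotienequeser} yields the fibrewise identification $F_L = (T\rho_{TQ})^{-1}(\overline{F}_L)$. Combining both, $T\rho_{TQ}(TX(TQ)) = T\overline{X}(T\overline{Q})\subseteq \overline{F}_L$, whence $TX(TQ)\subseteq F_L$.

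Once $X$ is known to satisfy the hypotheses of Corollary \ref{funnycorollary123}, applying it gives the equivalence of the $X$-relatedness of $\Gamma_{L,N}^{X}$ and $\Gamma_{L,N}$ with the condition $d(E_L\circ X)=0$. It then suffices to show that each of these two conditions is equivalent to its reduced counterpart. For the differential side, $E_L$ is $G$-invariant with reduction $\overline{E}_L$, so $E_L\circ X$ descends to $\overline{E}_L\circ \overline{X}$ along the submersion $\rho_Q$; hence $d(E_L\circ X)=0$ iff $d(\overline{E}_L\circ \overline{X})=0$. For the $\Gamma$-relatedness, the quotient relations $T\rho_{TQ}\circ \Gamma_{L,N} = \overline{\Gamma}_{L,N}\circ \rho_N$ and $T\rho_Q \circ \Gamma_{L,N}^X = \overline{\Gamma}_{L,N}^{\overline{X}}\circ \rho_Q$ (obtained from \eqref{commdefi3e4d} and the definition of $\Gamma^X_{L,N}$) together with surjectivity of $\rho_Q$ translate $X$-relatedness into $\overline{X}$-relatedness and vice versa.

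The only technically delicate step is the upward lift of the hypothesis $T\overline{X}(T\overline{Q})\subseteq \overline{F}_L$, which rests entirely on the fibrewise identification $F_L = (T\rho_{TQ})^{-1}(\overline{F}_L)$ read off from \eqref{Comotienequeser}. With that identification in hand the remainder of the argument is essentially bookkeeping with the projections $\rho_Q$, $\rho_{TQ}$, $\rho_N$ and the commutative diagrams already displayed in Section \ref{Generalcase3}.
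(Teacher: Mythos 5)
There is a genuine gap, and it is precisely the step you flag as ``the only technically delicate step'': the lift of the hypothesis $T\overline{X}\left(T\overline{Q}\right)\subseteq\overline{F}_{L}$ to $TX\left(TQ\right)\subseteq F_{L}$ fails in general. The fibrewise identification $F_{L}=\left(T\rho_{TQ}\right)^{-1}\left(\overline{F}_{L}\right)$ would require $\ker T\rho_{TQ}\subseteq F_{L}$; the relation the paper actually uses (see the proof of Proposition \ref{Relation23} and Example \ref{firstexample324}) is $\overline{F}_{L}=T\rho_{TQ}\left(F_{L}\right)$, which only yields the inclusion $F_{L}\subseteq\left(T\rho_{TQ}\right)^{-1}\left(\overline{F}_{L}\right)$, in general strict. (Equation \eqref{Comotienequeser}, read literally, would indeed give your identification, but it is inconsistent with the paper's own example, where $F_{L}^{o}=\langle dz-y\,dx\rangle\neq 0$ while $\overline{F}_{L}^{o}=0$.) Example \ref{firstexample324} is the counterexample to your lift: there $\overline{F}_{L}=T\overline{TQ}$, so the reduced hypothesis holds trivially, yet the paper shows that $TX\left(TQ\right)\subseteq F_{L}$ forces $F_{L}^{o}=0$, i.e.\ no reaction forces, which is false in that example. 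The remark immediately following the corollary says this explicitly: the associated $G$-invariant vector field $X$ \emph{does not} have to satisfy the conditions of Corollary \ref{funnycorollary123}; the reduced hypothesis is strictly weaker than the unreduced one. Your route would therefore prove the corollary only in the degenerate case $F_{L}=T_{N}\left(TQ\right)$.

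The repair is to argue entirely downstairs: deduce the corollary from Theorem \ref{SecondThofred} by repeating, in the quotient, the argument that derives Corollary \ref{funnycorollary123} from Theorem \ref{FirstThofred}. Since $\overline{\tau}_{Q}$ is a submersion, $d\left(\overline{E}_{L}\circ\overline{X}\right)=0$ if and only if $d\left(\overline{E}_{L}\circ\overline{X}\circ\overline{\tau}_{Q}\right)=0$, and the hypothesis $T\overline{X}\left(T\overline{Q}\right)\subseteq\overline{F}_{L}$ guarantees that the only way this $1$-form can lie in $\overline{F}_{L}^{o}$ is to vanish; hence condition (ii) of the corollary is equivalent to condition (ii) of Theorem \ref{SecondThofred}, which already gives the equivalence with $\overline{X}$-relatedness. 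The portions of your argument transferring the other hypotheses between $X$ and $\overline{X}$ are correct but become unnecessary once one works in the reduced space.
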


This result is just a reduced version of the Corollary \ref{funnycorollary123}. Condition $T\overline{X} \left( T \overline{Q} \right) \subseteq \overline{F}_{L}$ is actually 
$$T \rho_{TQ} \left(TX \left( T Q \right) \right)\subseteq \overline{F}_{L}.$$
However, the associated $G-$invariant vector field $X$ does not have to satisfy conditions of Corollary \ref{funnycorollary123} (see Example \ref{firstexample324}). So, it turns out curious how in the reduced version we obtain new conditions which are not so restrictive (Remember that conditions of Corollary \ref{funnycorollary123} are only possible in a contraint system in which the constraints do not produce reaction forces).\\
Finally, let us give a reduced version of Theorem \ref{AppChowtheore,}. Notice that the distribution $G_{L}$ is $G-$invariant. In fact, we only have to prove that the set of (local) $1-$forms $\mathcal{S}$ is invariant by the action.\\
Thus, we may defined the quotiented distribution $\overline{G}_{L}$ on $\overline{TQ}$. Obviously $\overline{G}_{L}$ restricts to $\overline{F}_{L}$ along $\overline{N}$. Observe that $\overline{G}_{L}$ could be alternatively defined following the same way of $G_{L}$ by using Eq. (\ref{quotientedequation332}).\\
Again we will define the completely nonholonomic condition on $\overline{F}_{L}$. So, consider the family of (local) vector fields $\overline{\mathcal{S}}_{\overline{N}}^{\sharp}$ on $\overline{TQ}$ tangents to $\overline{G}_{L}$ whose domains have non-empty intersection with $\overline{N}$. Then, $\overline{F}_{L}$ can be seen as the distribution generated by $\overline{\mathcal{S}}_{\overline{N}}^{\sharp}$ along $\overline{N}$.\\
Thus, $\overline{F}_{L}$ is called completely nonholonomic if the reiterated Lie bracket of vector fields in $\overline{\mathcal{S}}_{\overline{N}}^{\sharp}$ generate $T_{\overline{N}} \left( \overline{TQ} \right)$. In this case we will say that $\left[\overline{F}_{L},\overline{F}_{L}\right], \left[ \overline{F}_{L}, \left[ \overline{F}_{L},\overline{F}_{L} \right] \right], \hdots$ spans the vector bundle $T_{\overline{N}}\left( \overline{TQ} \right)$.
Obviously, $\overline{G}_{L}$ is completely nonholonomic if, and only if, $\overline{F}_{L}$ is completely nonholonomic.\\

So, in an analogous way to Theorem \ref{AppChowtheore,} we have the following result
\begin{theorem}\label{AppChowtheoreprojected22,}
Let $\overline{X}$ be a section of $\overline{\tau}_{Q}$ in the conditions of Theorem \ref{SecondThofred} such that $\overline{F}_{L}$ is completely nonholonomic. Then, the following conditions are equivalent:

\begin{itemize}
\item[(i)] $\overline{\Gamma}^{\overline{X}}_{L,N}$ and $\overline{\Gamma}_{L,N}$ are $\overline{X}$-related.
\item[(ii)] $d \left( \overline{E}_{L} \circ \overline{X} \right) = 0$
\end{itemize} 
\end{theorem}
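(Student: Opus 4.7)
The plan is to mirror the proof of Theorem \ref{AppChowtheore,}, replacing every object with its $G$-quotient and invoking Theorem \ref{SecondThofred} in place of Theorem \ref{FirstThofred}. Since $\overline{X}$ is assumed to satisfy the hypotheses of Theorem \ref{SecondThofred}, condition $(i)$ is equivalent to
\[
d\!\left(\overline{E}_{L}\circ(\overline{X}\circ\overline{\tau}_{Q})\right)\in\overline{F}^{o}_{L},
\]
so the whole task reduces to showing that, under the complete nonholonomicity hypothesis, this containment forces the one-form to vanish, and that its vanishing is equivalent to $d(\overline{E}_{L}\circ\overline{X})=0$.

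First I would pass from $\overline{F}_{L}$ (which is only defined along $\overline{N}$) to the globally defined distribution $\overline{G}_{L}$ on $\overline{TQ}$ introduced just before the statement. By construction $\overline{G}_{L}^{o}\subseteq\overline{F}_{L}^{o}$ and the two annihilators agree along $\overline{N}$; moreover $\overline{G}_{L}$ is completely nonholonomic iff $\overline{F}_{L}$ is, so the assumption transfers directly. Next I would apply Proposition \ref{Importantconsequuence2134} to the manifold $\overline{TQ}$ together with the completely nonholonomic distribution $\overline{G}_{L}$: the annihilator $\overline{G}_{L}^{o}$ contains no non-zero exact one-form. Because the one-form $d(\overline{E}_{L}\circ(\overline{X}\circ\overline{\tau}_{Q}))$ is exact on $\overline{TQ}$ and lies in $\overline{F}_{L}^{o}$ along $\overline{N}$, and since $\overline{G}_{L}^{o}$ and $\overline{F}_{L}^{o}$ coincide there, the Proposition forces the one-form to be identically zero.

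Finally, to translate this back into condition $(ii)$, I would use that $\overline{\tau}_{Q}\colon\overline{TQ}\to\overline{Q}$ is a submersion and that $\overline{X}\circ\overline{\tau}_{Q}=\overline{X}\circ\overline{\tau}_{Q}$ factors through $\overline{X}$, so
\[
d\!\left(\overline{E}_{L}\circ(\overline{X}\circ\overline{\tau}_{Q})\right)
=\overline{\tau}_{Q}^{*}\,d\!\left(\overline{E}_{L}\circ\overline{X}\right),
\]
and the pullback by a submersion is injective on one-forms. Hence the exact form vanishes if and only if $d(\overline{E}_{L}\circ\overline{X})=0$. Conversely, if $(ii)$ holds, then the pullback also vanishes and thus trivially sits in $\overline{F}_{L}^{o}$, giving $(i)$ via Theorem \ref{SecondThofred}.

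The only real obstacle I foresee is the rigorous transition between $\overline{F}_{L}$ and $\overline{G}_{L}$: Proposition \ref{Importantconsequuence2134} is a statement about a distribution defined on the whole manifold, whereas $\overline{F}_{L}$ lives only along $\overline{N}$. The argument sketched above resolves this by working with $\overline{G}_{L}$ globally and then restricting, but one must verify carefully that the exact one-form $d(\overline{E}_{L}\circ(\overline{X}\circ\overline{\tau}_{Q}))$ indeed lies in $\overline{G}_{L}^{o}$ on all of $\overline{TQ}$ (not merely along $\overline{N}$). This is exactly the same subtlety handled in the proof of Theorem \ref{AppChowtheore,}, and the remark made there (``$d(E_L\circ X\circ\tau_Q)\in F_L^o$ iff $d(E_L\circ X\circ\tau_Q)\in G_L^o$'') is invoked here verbatim at the quotient level.
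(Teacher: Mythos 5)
Your proposal is correct and takes essentially the same route as the paper: the paper gives no separate proof, declaring the result ``analogous to Theorem \ref{AppChowtheore,}'', whose proof is exactly your argument --- reduce condition $(i)$ to the membership $d\left(\overline{E}_{L}\circ\overline{X}\circ\overline{\tau}_{Q}\right)\in\overline{F}_{L}^{o}$ via Theorem \ref{SecondThofred}, pass to the globally defined completely nonholonomic distribution $\overline{G}_{L}$, and apply Proposition \ref{Importantconsequuence2134} to force the exact one-form to vanish. The $\overline{F}_{L}$ versus $\overline{G}_{L}$ subtlety you flag is the same one the paper dispatches with its ``obviously'' remark, so your treatment is, if anything, slightly more careful than the original.
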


Let us give a result relating the distribution $F_{L}$ with its projected one.
\begin{proposition}\label{Relation23}
$F_{L}$ is completely nonholonomic if, and only if, $\overline{F}_{L}$ is completely nonintegrable.
\end{proposition}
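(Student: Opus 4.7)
The plan is to match the derived bracket flags of $G_{L}$ and $\overline{G}_{L}$ under the $G$-equivariant projection $\rho_{TQ}\colon TQ\to\overline{TQ}$, and then read off the equivalence from the bracket-generating condition. Throughout, set $G_{L}^{(0)} := G_{L}$ and $G_{L}^{(k+1)} := G_{L}^{(k)} + [G_{L}, G_{L}^{(k)}]$ (as $C^{\infty}$-modules of local vector fields), with the analogous notation for $\overline{G}_{L}$. The complete nonholonomicity of $F_{L}$ (resp.\ $\overline{F}_{L}$) then amounts to $G_{L}^{(k)}|_{N} = T_{N}(TQ)$ (resp.\ $\overline{G}_{L}^{(k)}|_{\overline{N}} = T_{\overline{N}}(\overline{TQ})$) for some $k$.

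My first step is the base case $T\rho_{TQ}(G_{L}) = \overline{G}_{L}$. Because both $N$ and $S$ are $G$-invariant, so is the family $\mathcal{S}$; using a local trivialization of the principal bundle $\rho_{TQ}$, each generator $S^{*}\sigma$ may be replaced by a $G$-invariant one, which in turn, by the commutative diagram (\ref{Anotherdiagrammoreforthecollection}), is the pullback of some $\overline{S}^{*}\overline{\sigma}\in\overline{\mathcal{S}}$ with $\overline{\sigma}|_{T\overline{N}}\equiv 0$. This gives $\rho_{TQ}^{*}\overline{G}_{L}^{o} = G_{L}^{o}$, which dualizes to $T\rho_{TQ}(G_{L}) = \overline{G}_{L}$.

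My second step is the inductive identity $T\rho_{TQ}(G_{L}^{(k)}) = \overline{G}_{L}^{(k)}$. The ingredients are: every $G$-invariant distribution is locally $C^{\infty}$-generated by its $G$-invariant sections; $G$-invariant vector fields are $\rho_{TQ}$-projectable; and Lie brackets of projectable vector fields project to Lie brackets of projections. Using $[fX,gY]=fg[X,Y]+fX(g)Y-gY(f)X$ to reduce brackets of arbitrary local sections to brackets of $G$-invariant ones, the claim follows by induction on $k$. The forward implication is then immediate: $G_{L}^{(k)}|_{N}=T_{N}(TQ)$ yields $\overline{G}_{L}^{(k)}|_{\overline{N}}=T_{\overline{N}}(\overline{TQ})$ since the restriction of $T\rho_{TQ}$ is fibrewise surjective onto $T_{\overline{N}}(\overline{TQ})$.

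The converse implication is where I expect the bulk of the work. From Step~2 we only obtain $G_{L}^{(k)}|_{N} + \mathcal{V}|_{N} = T_{N}(TQ)$, where $\mathcal{V}=\ker T\rho_{TQ}$, so I still need to show that the vertical kernel $\mathcal{V}|_{N}$ is absorbed into some $G_{L}^{(k')}|_{N}$. To handle this I would exploit that $\mathcal{V}|_{N}$ is spanned by fundamental vector fields $\xi_{TQ}$ with $\xi\in\mathfrak{g}$, which are tangent to $N$ by $G$-invariance of $N$; bracketing $\xi_{TQ}$ against $G$-invariant local sections of $G_{L}^{(k)}$ whose projections span a frame of $T_{\overline{N}}(\overline{TQ})$, and using the specific structure of $G_{L}$ coming from the vertical endomorphism $S$ together with the $G$-invariance of the constraints $\psi^{a}$, one recovers $\mathcal{V}|_{N}$ in finitely many further bracket steps. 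This last step is the main obstacle.
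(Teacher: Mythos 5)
Your Step 1 and Step 2, and hence the implication ``$F_{L}$ completely nonholonomic $\Rightarrow$ $\overline{F}_{L}$ completely nonholonomic'', are sound and essentially coincide with the second half of the paper's proof: the paper likewise reduces to $G$-invariant local generators of $\mathcal{S}_{N}^{\sharp}$ (its discussion of $[X,fY]=Z$ and of $X(v_{q})(f\circ\phi^{T}_{g})=X(v_{q})(f)$ is exactly your module-generation step) and then projects iterated brackets through $T\rho_{TQ}$. The problem is the converse, which you explicitly leave open and which is therefore a genuine gap: from your Step 2 you only obtain $G_{L}^{(k)}\vert_{N}+\mathcal{V}\vert_{N}=T_{N}(TQ)$, and the vertical kernel $\mathcal{V}\vert_{N}=\ker T\rho_{TQ}\vert_{N}$ is never absorbed. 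Worse, the mechanism you sketch for absorbing it cannot work as stated: if $\xi_{TQ}$ is a fundamental vector field of the lifted action and $Y$ is a $G$-invariant local vector field, then $[\xi_{TQ},Y]=0$, and for a non-invariant section $fY$ one only gets $[\xi_{TQ},fY]=\xi_{TQ}(f)\,Y$, which stays inside the module you already have. Bracketing against fundamental vector fields therefore produces nothing new in the vertical direction; whatever verticality the flag acquires must already come from the specific structure of $G_{L}$ itself, which is precisely the content of what is to be proved.

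For comparison, the paper proves the direction you are missing by the opposite movement: given $V_{v_{q}}$, it extends to a $G$-invariant $X$, writes the projection $\overline{X}$ as a combination of iterated brackets of local fields $\overline{X}^{a}$ tangent to $\overline{F}_{L}$, and then replaces each $\overline{X}^{a}$ by an ``associated'' field $X^{a}$ tangent to $F_{L}$ upstairs. This avoids your dead end, but it silently passes over the very issue you isolated: a lifted bracket expression reproduces $X$ only up to a $\rho_{TQ}$-vertical error, and no canonical choice of the $X^{a}$ (for instance horizontal lifts with respect to a principal connection) makes brackets commute with lifting, because of curvature. So your diagnosis of where the difficulty sits is accurate --- arguably more candid than the paper's treatment --- but neither your sketch nor the argument you propose closes it; as submitted, your proof of the proposition is incomplete in the direction ``$\overline{F}_{L}$ completely nonholonomic $\Rightarrow$ $F_{L}$ completely nonholonomic''.
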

\begin{proof}
Remember that $F_{L}$ (resp. $\overline{F}_{L}$) is generated by the family of local vector fields $\mathcal{S}_{N}^{\sharp}$ (resp. $\overline{\mathcal{S}}_{\overline{N}}^{\sharp}$) restricted to $N$ (resp. $\overline{N}$). Furthermore, by taking into account (see Eq. \ref{Comotienequeser}) that
$$\overline{F}_{L} = T \rho_{TQ} \left(F_{L}\right)$$
we can find a family of $G-$invariant vector fields which generates $\mathcal{S}_{N}^{\sharp}$.\\

Now, suppose first that $\overline{F}_{L}$ is completely nonintegrable. Let $V_{v_{q}}$ be a vector in $T_{v_{q}} \left( TQ \right)$. Fix a (local) $G-$invariant vector field $X$ on $TQ$ such that
$$ X \left( v_{q} \right) = V_{v_{q}}.$$
Then, the projection $\overline{X}$ of $X$ can be obtained by a linear combination of Lie brackets of (local) vector fields $\overline{X}^{a}$ on $\overline{TQ}$ tangents to $\overline{F}_{L}$. Hence, by using the associated vector fields $X^{a}$ of $\overline{X}^{a}$ we write $X \left( v_{q} \right)$ as a linear combination of Lie brackets of (local) vector fields tangents to $F_{L}$. So, $F_{L}$ is completely nonintegrable. Notice that the projection of $G-$invariant vector fields on $TQ$ into sections of $\overline{\tau}_{Q}$ preserves the Lie bracket.\\

Conversely, assume that $F_{L}$ is completely nonintegrable. Suppose that we hace three (local) $G-$invariant vector fields $X,Y,Z \in \mathfrak{X} \left( TQ \right)- \{0\}$ and a function $f \in \mathcal{C}^{\infty} \left( TQ \right)$ such that
$$\left[ X , f Y \right] = Z.$$
Then, for all $v_{q}$ in the domain of the vector fields and for all $g \in G$,s
$$ X \left( v_{q} \right) \left( f \circ \phi^{T}_{q} \right) = X \left( v_{q} \right) \left( f \right),$$
where $\phi$ is the action of the group $G$ and $\phi^{T}$ is its tangent lift.\\
By using this fact and taking into account that $\mathcal{S}_{N}^{\sharp}$ is generated by (local)  $G-$invariant vector fields on $TQ$ we can prove that any tangent vector in $T_{N} \left( TQ \right)$ can be written as a linear combination of Lie brackets of (local) $G-$invariant vector fields. Then, projecting these vector fields we have the result.

\end{proof}

Notice that, invariance of the Lagrangian function enables us to transfer this reduced Hamilton-Jacobi theory to the cotangent bundle. To do this, first recall the Legendre transformation $\mathbb{F}L$ given in (\ref{LegTrf}). As discussed before, $\mathbb{F}L$ is a (local) diffeomorphism if the Lagrangian is regular. Using this diffeomorphism we define a submanifold $M$, and a Hamiltonian function $H$ on $T^*Q$ as follows
\begin{equation}
M= \mathbb{F}L \left( N \right),
\qquad
H = E_{L} \circ \mathbb{F}L^{-1}.
\end{equation}
Notice that, if the Legendre transform is not a global diffeomorphism, the Hamiltonian $H$ is only defined on a local neighbourhood.\\
Nex, by using the cotangent lift of the action of $G$ on $Q$, $H$ and $M$ are also invariant under group operation. Then, the projections $\overline{M}$ and $\overline{H}$ on $\overline{T^{*}Q}$ are given by $$\overline{M}=\overline{\mathbb{F}L} \left( \overline{N} \right), \qquad \overline{H}=\overline{E}_{L} \circ \overline{\mathbb{F}L}^{-1},$$ respectively. Here, $\overline{\mathbb{F}L}$ is the projection of $\mathbb{F}L$ from the reduced tangent bundle $\overline{TQ}$ to the reduced cotangent bundle $\overline{T^*Q}$. So, Let us fix the following notation
\begin{equation}
\overline{\sigma} = \overline{\mathbb{F}L}\circ \overline{X},
\quad
\overline{\Gamma}_{H,M} = {\overline{\mathbb{F}L}}_{*} \overline{\Gamma}_{L,N}.
\end{equation}
Now, using the section $\overline{\sigma}$ of $\overline{\pi}_{Q} : \overline{T^{*}Q} \rightarrow \overline{Q}$ such that $\overline{\sigma} \left( \overline{Q} \right) \subseteq \overline{M}$, we define a vector field induces a vector field $\overline{\Gamma}_{H,M}^{\overline{\sigma}}$ on $\overline{Q}$ as follows
\begin{equation}
\overline{\Gamma}_{H,M}^{\overline{\sigma}}=T\overline{\pi}_{Q} \circ \overline{\Gamma}_{H,M}\circ \overline{\sigma}.
\end{equation}
We can now state the following theorem.

\begin{theorem}\label{ThirdThofred}
Let $\overline{\sigma}$ be a $1-$form on $Q$ such that $\overline{\sigma} \left( \overline{Q} \right) \subset \overline{M}$ and $\left(\overline{\sigma} \circ \overline{\tau}_{Q}\right)^{*} \overline{\omega}_{Q} \in \mathcal{I} \left(\overline{F}^{o}_{L}\right)$. Then, the following conditions are equivalent:

\begin{itemize}
\item[(i)] $\overline{\Gamma}_{H,M}^{\overline{\sigma}}$ and $\overline{\Gamma}_{H,M} $ are $\overline{\sigma}$-related.
\item[(ii)] $d \left( \overline{H} \circ \left( \overline{\sigma} \circ \overline{\tau}_{Q}\right) \right) \in \overline{F}^{o}_{L}$
\end{itemize} 
\end{theorem}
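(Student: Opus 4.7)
The plan is to deduce Theorem \ref{ThirdThofred} from its Lagrangian counterpart Theorem \ref{SecondThofred} by transporting everything along the reduced Legendre transformation $\overline{\mathbb{F}L}: \overline{TQ} \to \overline{T^{*}Q}$, which exists as a (local) diffeomorphism because $L$ is regular and $G$-invariant. Since $\overline{\mathbb{F}L}$ is a bundle map over the identity on $\overline{Q}$, given a section $\overline{\sigma}$ of $\overline{\pi}_{Q}$ with $\overline{\sigma}(\overline{Q})\subseteq \overline{M}$ we can define the section $\overline{X} := \overline{\mathbb{F}L}^{-1}\circ \overline{\sigma}$ of $\overline{\tau}_{Q}$ and verify that $\overline{X}(\overline{Q})\subseteq \overline{N}$, thanks to $\overline{M}=\overline{\mathbb{F}L}(\overline{N})$.

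Next I would transfer the two structural hypotheses and conclusions through $\overline{\mathbb{F}L}$. Because the Poincar\'e--Cartan form is the pullback of the canonical symplectic form, the invariance of $\mathbb{F}L$ yields $\overline{\mathbb{F}L}^{*}\overline{\omega}_{Q}=\overline{\omega}_{L}$, and combined with $\overline{\tau}_{Q}=\overline{\pi}_{Q}\circ \overline{\mathbb{F}L}$ one obtains
\begin{equation}
(\overline{X}\circ \overline{\tau}_{Q})^{*}\overline{\omega}_{L}
= (\overline{\sigma}\circ \overline{\tau}_{Q})^{*}\overline{\omega}_{Q},
\end{equation}
so the ideal-membership hypothesis $(\overline{\sigma}\circ \overline{\tau}_{Q})^{*}\overline{\omega}_{Q}\in \mathcal{I}(\overline{F}^{o}_{L})$ translates to $(\overline{X}\circ \overline{\tau}_{Q})^{*}\overline{\omega}_{L}\in \mathcal{I}(\overline{F}^{o}_{L})$ (the codistribution $\overline{F}^{o}_{L}$ is common to both sides by construction). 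Similarly, from $\overline{H}=\overline{E}_{L}\circ \overline{\mathbb{F}L}^{-1}$ we get $\overline{H}\circ \overline{\sigma}=\overline{E}_{L}\circ \overline{X}$, so condition (ii) of Theorem \ref{ThirdThofred} is exactly condition (ii) of Theorem \ref{SecondThofred}.

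For condition (i), the definition $\overline{\Gamma}_{H,M}=\overline{\mathbb{F}L}_{*}\overline{\Gamma}_{L,N}$ together with the identity $T\overline{\pi}_{Q}\circ T\overline{\mathbb{F}L}=T\overline{\tau}_{Q}$ yields
\begin{equation}
\overline{\Gamma}_{H,M}^{\overline{\sigma}}
=T\overline{\pi}_{Q}\circ \overline{\Gamma}_{H,M}\circ \overline{\sigma}
=T\overline{\tau}_{Q}\circ \overline{\Gamma}_{L,N}\circ \overline{X}
=\overline{\Gamma}_{L,N}^{\overline{X}},
\end{equation}
while $\overline{\sigma}$-relatedness of $\overline{\Gamma}_{H,M}^{\overline{\sigma}}$ and $\overline{\Gamma}_{H,M}$ is equivalent, via pushforward by $\overline{\mathbb{F}L}^{-1}$, to $\overline{X}$-relatedness of $\overline{\Gamma}_{L,N}^{\overline{X}}$ and $\overline{\Gamma}_{L,N}$. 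Therefore both hypotheses and both statements (i), (ii) correspond bijectively to those of Theorem \ref{SecondThofred}, and the desired equivalence follows at once.

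The only delicate point I anticipate is the bookkeeping around $\overline{F}^{o}_{L}$: as defined in the Hamiltonian counterpart it is the codistribution obtained from the Lagrangian one transported via $\overline{\mathbb{F}L}^{-1}$ (cf.\ the discussion before Theorem \ref{dualtheo23}), so one has to be careful to check that the identification used above---namely that the pullback by $\overline{X}\circ \overline{\tau}_{Q}$ on $\overline{TQ}$ and by $\overline{\sigma}\circ \overline{\tau}_{Q}$ on $\overline{T^{*}Q}$ land in the same ideal---is intrinsic and not merely local. Once this identification is written out using the commutativity $\overline{\tau}_{Q}\circ \overline{\mathbb{F}L}^{-1}=\overline{\pi}_{Q}$, the proof reduces to the routine invocation of Theorem \ref{SecondThofred}.
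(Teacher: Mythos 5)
Your proposal is correct and follows exactly the route the paper intends: the paper states Theorem \ref{ThirdThofred} without an explicit proof, having just set up the dictionary $\overline{\sigma}=\overline{\mathbb{F}L}\circ\overline{X}$, $\overline{\Gamma}_{H,M}=\overline{\mathbb{F}L}_{*}\overline{\Gamma}_{L,N}$, $\overline{H}=\overline{E}_{L}\circ\overline{\mathbb{F}L}^{-1}$, which is precisely the transfer to Theorem \ref{SecondThofred} that you carry out in detail. Your write-up simply makes explicit the bookkeeping (in particular $\overline{\mathbb{F}L}^{*}\overline{\omega}_{Q}=\overline{\omega}_{L}$ and $\overline{\pi}_{Q}\circ\overline{\mathbb{F}L}=\overline{\tau}_{Q}$) that the paper leaves to the reader.
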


\begin{definition}
A section $\overline{\sigma}$ of $\overline{\pi}_{Q}$ satisfying conditions of the Theorem \ref{ThirdThofred} will be called solution for the reduced constrained Hamilton-Jacobi problem given by the constraint manifold $M \subseteq T^{*}Q$, the Hamiltonian $H$ and the action of the group $G$ on $Q$.
\end{definition}

Again, the space of $G-$invariant $1-$forms $\sigma$ on $Q$ are in one-to-one correspondence to sections $\overline{\sigma}$ of $\overline{\pi}_{Q}$
\begin{proposition}
Let $\sigma$ be a $G-$invariant $1-$form on $Q$. Then, $\sigma$ is a solution for the constrained Hamilton-Jacobi problem if, and only if, $\overline{\sigma}$ is a solution for the reduced constrained Hamilton-Jacobi problem.
\end{proposition}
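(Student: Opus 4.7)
The plan is to mirror the Lagrangian argument given for Theorem \ref{SecondThofred} and its accompanying proposition, transferring everything to the Hamiltonian side via the Legendre transformation (which is $G$-equivariant by assumption). The first step is to establish the Hamiltonian analogue of the bijection \eqref{reconstrucion23}: namely, there is a one-to-one correspondence between $G$-invariant $1$-forms on $Q$ and sections of $\overline{\pi}_{Q}:\overline{T^{*}Q}\to\overline{Q}$. Given $\overline{\sigma}\in\Gamma(\overline{\pi}_{Q})$, for each $q\in Q$ the freeness of the action and the identity $\rho_{T^{*}Q}\circ \sigma = \overline{\sigma}\circ \rho_{Q}$ uniquely single out a covector $\sigma(q)\in T^{*}_{q}Q$; the smoothness of $\sigma$ is then checked exactly as in the vector-field case, by using local sections of $\rho_{Q}$ and $\rho_{T^{*}Q}$ (as in diagram \eqref{sectionslocal441}) and invoking equivariance to spread the local smoothness along the orbits. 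By uniqueness, $\sigma$ is automatically $G$-invariant.

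Next, I would translate the hypotheses of Theorem \ref{dualtheo23} under this correspondence. Since $M$ is $G$-invariant and $\rho_{T^{*}Q}$ is a surjective submersion, the inclusion $\sigma(Q)\subseteq M$ is equivalent to $\overline{\sigma}(\overline{Q})\subseteq \overline{M}$. For the ideal-type condition, one writes $(\sigma\circ\tau_{Q})^{*}\omega_{Q}=\beta_{a}\wedge S^{*}(d\psi^{a})$ with $\beta_{a}\in\Lambda^{1}(TQ)$ on the unreduced side, and observes that by $G$-invariance each $\beta_{a}$ descends to $\overline{\beta}_{a}\in \Lambda^{1}(\overline{TQ})$, while $S^{*}(d\psi^{a})$ descends to $\overline{S}^{*}(\overline{d\psi^{a}})$ by construction of $\overline{S}$ and the identity \eqref{Comotienequeser}. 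Pulling back by $\rho_{T^{*}Q}$ (and using that this map is a submersion so that the pullback of forms is injective on sections) shows that $(\sigma\circ\tau_{Q})^{*}\omega_{Q}\in\mathcal{I}(F^{o}_{L})$ holds if and only if $(\overline{\sigma}\circ\overline{\tau}_{Q})^{*}\overline{\omega}_{Q}\in\mathcal{I}(\overline{F}^{o}_{L})$.

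Having matched the hypotheses, the two conclusions to be compared are $(i)$ the $\sigma$-relatedness of $\Gamma_{H,M}^{\sigma}$ and $\Gamma_{H,M}$, versus the $\overline{\sigma}$-relatedness of $\overline{\Gamma}_{H,M}^{\overline{\sigma}}$ and $\overline{\Gamma}_{H,M}$; and $(ii)$ $d(H\circ\sigma)\in F_{L}^{o}$ versus $d(\overline{H}\circ\overline{\sigma})\in\overline{F}^{o}_{L}$. Both reductions are straightforward once we note that $\Gamma_{H,M}$ descends to $\overline{\Gamma}_{H,M}={\overline{\mathbb{F}L}}_{*}\overline{\Gamma}_{L,N}$ by invariance of the nonholonomic dynamics, and that $T\overline{\pi}_{Q}\circ T\rho_{T^{*}Q}=T\rho_{Q}\circ T\pi_{Q}$. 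The equivalence of $(i)$ in the unreduced and reduced pictures is then the statement that a vector field is $\sigma$-related to $\Gamma_{H,M}$ in a $G$-equivariant way if and only if the quotiented vector fields are $\overline{\sigma}$-related; and the equivalence of $(ii)$ follows from $\overline{H}\circ\overline{\sigma}\circ \rho_{Q}=H\circ\sigma$ together with the pullback compatibility of the annihilator codistributions. Combining these with the Hamiltonian Hamilton--Jacobi theorems \ref{dualtheo23} (unreduced) and \ref{ThirdThofred} (reduced) yields the desired biconditional.

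I expect the main technical obstacle to be the careful verification of the smoothness of the reconstructed $1$-form $\sigma$ from $\overline{\sigma}$, and, more subtly, ensuring that the ideal condition on $\overline{\omega}_{Q}$ really does lift (and not just project) to an ideal condition on $\omega_{Q}$; this is where one must invoke that $\rho_{T^{*}Q}$ is a submersion, so that $\rho_{T^{*}Q}^{*}$ is injective on differential forms, together with the fact that the generators $\overline{S}^{*}(\overline{d\psi^{a}})$ of $\overline{F}^{o}_{L}$ pull back exactly to the generators $S^{*}(d\psi^{a})$ of $F^{o}_{L}$. Once these compatibility facts are in hand, the proposition follows by direct application of Theorems \ref{dualtheo23} and \ref{ThirdThofred} to the $G$-invariant $1$-form $\sigma$ and its projection $\overline{\sigma}$.
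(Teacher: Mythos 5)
Your proposal is correct and follows essentially the same route as the paper: the paper itself offers no explicit proof of this proposition, merely invoking the one-to-one correspondence between $G$-invariant $1$-forms on $Q$ and sections of $\overline{\pi}_{Q}$ together with the (already proved) Lagrangian analogue transported by the Legendre transformation, which is precisely the argument you spell out. Your added care about the descent of the ideal condition (choosing the $\beta_{a}$ to be $G$-invariant) and the smoothness of the reconstructed $\sigma$ fills in details the paper leaves implicit, but does not change the approach.
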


\begin{corollary}
Let $\overline{\sigma}$ be a section of $\overline{\pi}_{Q}$ in the condition of Theorem \ref{ThirdThofred} satisfying $T\overline{\sigma} \left( T \overline{Q} \right) \subseteq \overline{F}$.
Then, the following conditions are equivalent:
\begin{itemize}
\item[(i)] $\overline{\Gamma}^{\overline{\sigma}}_{H,M}$ and $\overline{\Gamma}_{H,M}$ are $\overline{\sigma}$-related.
\item[(ii)] $d \left( \overline{H}\circ \overline{\sigma} \right) = 0$
\end{itemize}
\end{corollary}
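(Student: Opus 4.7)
The plan is to deduce this corollary directly from Theorem \ref{ThirdThofred} by mimicking, on the reduced side and in the Hamiltonian picture, the argument used for Corollary \ref{funnycorollary123}. Since $\overline{\sigma}$ is assumed to satisfy the hypotheses of Theorem \ref{ThirdThofred}, condition $(i)$ is already equivalent to
\[
d\!\left(\overline{H}\circ(\overline{\sigma}\circ\overline{\tau}_{Q})\right)\in \overline{F}^{o}_{L}.
\]
So the whole task reduces to showing that under the extra tangency assumption $T\overline{\sigma}(T\overline{Q})\subseteq \overline{F}$, this membership is equivalent to $d(\overline{H}\circ\overline{\sigma})=0$.

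First I would use that $\overline{\tau}_{Q}$ is a surjective submersion. This immediately implies that, for any smooth $\overline{f}$ on $\overline{Q}$, one has $d(\overline{f}\circ\overline{\tau}_{Q})=0$ if and only if $d\overline{f}=0$. Therefore the proof will be complete once it is shown that the only $1$-form in $\overline{F}^{o}_{L}$ of the form $d(\overline{H}\circ(\overline{\sigma}\circ\overline{\tau}_{Q}))$ is the zero $1$-form, which forces $d(\overline{H}\circ\overline{\sigma})=0$.

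The key step is to exploit the tangency condition $T\overline{\sigma}(T\overline{Q})\subseteq \overline{F}$. Paralleling the non-reduced argument leading to Corollary \ref{funnycorollary123}, this condition, combined with $\left(\overline{\sigma}\circ\overline{\tau}_{Q}\right)^{*}\overline{\omega}_{Q}\in \mathcal{I}(\overline{F}^{o}_{L})$, forces $\overline{F}^{o}_{L}$ to vanish on vectors in the image of $T(\overline{\sigma}\circ\overline{\tau}_{Q})$. Thus any semibasic (with respect to $\overline{\pi}_{Q}$ lifted through $\overline{\tau}_{Q}$) element of $\overline{F}^{o}_{L}$ that is itself a pullback by $\overline{\sigma}\circ\overline{\tau}_{Q}$ of an exact $1$-form on $\overline{Q}$ must be identically zero. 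Applied to $d(\overline{H}\circ(\overline{\sigma}\circ\overline{\tau}_{Q}))$, this produces the identity $d(\overline{H}\circ\overline{\sigma})=0$, which gives $(ii)$; the converse is immediate by pullback along $\overline{\tau}_{Q}$.

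The main delicate point I anticipate is not the logic but making sure all reductions are compatible: namely, that the Legendre projection $\overline{\mathbb{F}L}$ sends the tangency assumption on the Lagrangian side to the one on the Hamiltonian side (so that the Hamiltonian picture really inherits the same structure used in Corollary \ref{funnycorollary123}), and that $\overline{F}$ and $\overline{F}^{o}_{L}$ behave as genuine annihilators under the quotient by $G$. Once these compatibilities are unpacked using the diagrams (\ref{Anotherdiagrammoreforthecollection}) and (\ref{Comotienequeser}), the remaining computation is essentially a transcription of the reduced version of Corollary \ref{funnycorollary123} from the Lagrangian to the Hamiltonian side.
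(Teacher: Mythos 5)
Your argument is essentially the paper's own: the paper states this corollary without proof, presenting it as the Hamiltonian transcription of Corollary \ref{corollary-}, which in turn reduces to the two ingredients you use --- the submersion property of the bundle projection to pass between $d(\overline{H}\circ\overline{\sigma})$ and $d(\overline{H}\circ\overline{\sigma}\circ\overline{\tau}_{Q})$, and the tangency hypothesis $T\overline{\sigma}(T\overline{Q})\subseteq\overline{F}$ forcing the only pullback $1$-form lying in the (Legendre-transported) annihilator to be zero, exactly as in the proof of Corollary \ref{funnycorollary123}. The compatibility of $\overline{\mathbb{F}L}$ with the quotient that you flag as the delicate point is indeed the only thing the paper glosses over as well, so your proposal is correct and follows the intended route.
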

\noindent
Here, $F$ is the transformed distribution of $F_{L}$ on $T^{*}Q$ via $\mathbb{F}L$ and $\overline{F}$ is the projection on $\overline{T^{*}Q}$ of $F$.\\
Finally, we can give a Hamiltonian version of the Theorem \ref{AppChowtheoreprojected22,}.

\begin{theorem}\label{AppChowtheoreprojected22dualversionn}
Let $\overline{\sigma}$ be a section of $\overline{\pi}_{Q}$ in the condition of Theorem \ref{ThirdThofred} such that $\overline{F}_{L}$ is completely nonholonomic. Then, the following conditions are equivalent:
\begin{itemize}
\item[(i)] $\overline{\Gamma}^{\overline{\sigma}}_{H,M}$ and $\overline{\Gamma}_{H,M}$ are $\overline{\sigma}$-related.
\item[(ii)] $d \left( \overline{H}\circ \overline{\sigma} \right) = 0$
\end{itemize} 
\end{theorem}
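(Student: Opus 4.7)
The plan is to mirror the Lagrangian argument of Theorem \ref{AppChowtheoreprojected22,}, passing it through the Legendre transformation and then applying the Chow--Rashevskii machinery on the cotangent side. First, I would invoke Theorem \ref{ThirdThofred}: under the standing hypotheses, condition $(i)$ is equivalent to
\begin{equation*}
d\bigl(\overline{H}\circ(\overline{\sigma}\circ\overline{\tau}_{Q})\bigr)\in \overline{F}^{o}_{L}.
\end{equation*}
So the entire work amounts to showing that, when $\overline{F}_{L}$ is completely nonholonomic, this inclusion is equivalent to the vanishing of $d(\overline{H}\circ\overline{\sigma})$.

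Next I would handle the two easy reductions. On one hand, since $\overline{\tau}_{Q}$ is a surjective submersion, the identity $\overline{H}\circ\overline{\sigma}\circ\overline{\tau}_{Q}=(\overline{H}\circ\overline{\sigma})\circ\overline{\tau}_{Q}$ shows that $d(\overline{H}\circ\overline{\sigma})=0$ if and only if $d(\overline{H}\circ\overline{\sigma}\circ\overline{\tau}_{Q})=0$; hence $(ii)$ at the level of $\overline{Q}$ is equivalent to the corresponding vanishing on $\overline{TQ}$ (or $\overline{T^{*}Q}$ after pulling everything through $\overline{\mathbb{F}L}$). On the other hand, I would extend $\overline{F}^{o}_{L}$ to the global codistribution $\overline{G}^{o}_{L}$ on $\overline{TQ}$ defined exactly as in Section \ref{firstresut23}, and use the already established fact that $\overline{G}_{L}$ is completely nonholonomic if and only if $\overline{F}_{L}$ is (and that the two codistributions agree on $\overline{N}$). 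This makes the exact form $d(\overline{H}\circ\overline{\sigma}\circ\overline{\tau}_{Q})$ lie in $\overline{F}^{o}_{L}$ if and only if it lies in the globally defined $\overline{G}^{o}_{L}$.

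With this setup the proof becomes a direct application of Proposition \ref{Importantconsequuence2134} (Chow--Rashevskii for singular distributions): because $\overline{G}_{L}$ is completely nonholonomic, no non-zero exact one-form on $\overline{TQ}$ can lie in $\overline{G}^{o}_{L}$. Therefore $d(\overline{H}\circ\overline{\sigma}\circ\overline{\tau}_{Q})\in \overline{G}^{o}_{L}$ forces it to vanish, which in turn yields $d(\overline{H}\circ\overline{\sigma})=0$. The converse direction is immediate since the zero form is trivially in $\overline{F}^{o}_{L}$. Combining this with the equivalence supplied by Theorem \ref{ThirdThofred} closes the argument.

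The only subtle point I expect is the bookkeeping in the second step: one must check that the passage from $\overline{F}_{L}$ on the tangent side to its image $\overline{F}$ on the cotangent side (via $\overline{\mathbb{F}L}$) preserves the completely nonholonomic property and the annihilator relations needed to quote Proposition \ref{Importantconsequuence2134}. Since $\overline{\mathbb{F}L}$ is a (local) diffeomorphism and the construction of $\overline{G}_{L}$ together with Proposition \ref{Relation23} is transferred covariantly, this is routine but it is the step where the Lagrangian and Hamiltonian pictures must be glued carefully; everything else reduces to the Lagrangian result already established in Theorem \ref{AppChowtheoreprojected22,}.
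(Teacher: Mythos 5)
Your proposal is correct and follows essentially the same route the paper takes: the paper states this theorem without proof as the Hamiltonian dual of Theorem \ref{AppChowtheoreprojected22,}, whose argument (via Theorem \ref{ThirdThofred}, the submersion property of $\overline{\tau}_{Q}$, the passage from $\overline{F}^{o}_{L}$ to the global $\overline{G}^{o}_{L}$, and Proposition \ref{Importantconsequuence2134}) is exactly what you reproduce. Your final worry about transferring the completely nonholonomic property through $\overline{\mathbb{F}L}$ is unnecessary, since the hypotheses and condition $(ii)$ of Theorem \ref{ThirdThofred} are already phrased on the tangent side via $\overline{F}^{o}_{L}$, so the whole argument lives on $\overline{TQ}$.
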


Now, we will study two particular cases: the {purely kinematic case} and the {horizontal case}.
\subsection{Pure kinematic case}

In section (\ref{Purekinematic}), the reduction of the nonholonomic system (in the pure kinematic case) has been established and a symplectic equation has been exhibited in Theorem (\ref{121}). Referring to this relation, we can understand the reduced nonholonomic dynamics as a Hamiltonian system with an external force $-\overline{\alpha}$. Let us recall this equation here once more to be more explicit. 
\begin{equation} \label{redPureKinetamic}
\iota _{\overline{\Gamma}_{L,N}}\overline{\omega}-d\overline{E}_{L}=-\overline{\alpha}.
\end{equation}
The external force $-\overline{\alpha}$ is semibasic since it is derived from semibasic Poincar\'e-Cartan one-form $\alpha_L$ by means of natural operators, such as projections and pull-backs.\\

Let $\overline{X}$ be a section of the tangent bundle of $\overline{\tau}_{Q}$ such that $\overline{X} \left( \overline{Q} \right)$ takes values in $\overline{N}$. Notice that $\overline{\Gamma}_{L,N}$ is a vector field on the reduced manifold $\overline{N}$, so that we define a vector field $\overline{\Gamma}_{L,N}^{\overline{X}}$ on basis manifold $\overline{Q}$ following identity (\ref{commdefi3e4d}).
In accordance with this, we will give a reduced version of the Hamilton-Jacobi Theorem \ref{HJforConst} to the system (\ref{redPureKinetamic}).

\begin{theorem}\label{purkincase324d}
Let $\overline{X}$ be a section of $\overline{\tau}_{Q}$ such that $\overline{X} \left( \overline{Q} \right) \subseteq \overline{M}$ and $\overline{X}^{*} \overline{\omega}=0$. Then, the following two conditions are equivalent:
\begin{itemize}
\item [(i)] $\overline{\Gamma}_{L,N}^{\overline{X}}$ and $\overline{\Gamma}_{L,N}$ are $\overline{X}$-related.
\item [(ii)] $d(\overline{E}_{L}\circ \overline{X})=-\overline{X}^*\overline{\alpha}$.
\end{itemize}

\begin{proof}
By using Eq. (\ref{redPureKinetamic}) we have that
$$ \overline{X}^{*}\left( \iota_{\overline{\Gamma}_{L,N}}\overline{\omega} \right) = d \left( \overline{E}_{L} \circ \overline{X} \right) - \overline{X}^{*}\overline{\alpha}.$$
So, condition $(ii)$ can be rewritten as follows
\begin{equation}\label{125}
\overline{X}^{*}\left( \iota_{\overline{\Gamma}_{L,N}}\overline{\omega} \right) =0,
\end{equation}
i.e., $\overline{X} \left( \overline{Q} \right)$ is an isotropic submanifold of $\overline{N}$. On the other hand, in the purely kinematic case, the dimension of $G$ is equal to the maximum number of independent constraints. Then,
$$  dim \left( \overline{N} \right) = 2 \left( dim \left( Q \right) - dim \left( G \right) \right) = \dfrac{1}{2} \left( \overline{X} \left( \overline{Q} \right)\right).$$
Therefore, $\overline{X} \left( \overline{Q} \right)$ is a Lagrangian submanifold of $\overline{N}$. Now, suppose that $(i)$ holds. Then, by Eq. (\ref{redPureKinetamic}) we have that
$$
\overline{X}^{*}\left( \iota_{\overline{\Gamma}_{L,N}}\overline{\omega} \right)=\iota_{ \overline{\Gamma}^{\overline{X}}_{L,N}}\left(\overline{X}^{*}\overline{\omega} \right) =0.$$
Conversely, we only need to use the fact of that $\overline{X} \left( \overline{Q} \right)$ is a Lagrangian submanifold of $\overline{N}$ to prove $(ii)$.
\end{proof}

\end{theorem}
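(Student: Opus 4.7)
The strategy is to reduce the equivalence to a single intermediate identity obtained by pulling equation~\eqref{redPureKinetamic} back along $\overline{X}$, and then to exploit the hypothesis $\overline{X}^{*}\overline{\omega}=0$ together with a dimension count peculiar to the pure kinematic setting.

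First, I would apply $\overline{X}^{*}$ to the reduced symplectic equation $\iota_{\overline{\Gamma}_{L,N}}\overline{\omega}-d\overline{E}_L=-\overline{\alpha}$. Since exterior derivative commutes with pullback, this yields
\[
\overline{X}^{*}\bigl(\iota_{\overline{\Gamma}_{L,N}}\overline{\omega}\bigr)-d\bigl(\overline{E}_L\circ\overline{X}\bigr)=-\,\overline{X}^{*}\overline{\alpha}.
\]
Consequently, condition (ii) is logically equivalent to the single identity $\overline{X}^{*}\bigl(\iota_{\overline{\Gamma}_{L,N}}\overline{\omega}\bigr)=0$, so the remaining task is to match this vanishing with the $\overline{X}$-relatedness in (i).

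Second, I would establish that $\overline{X}(\overline{Q})$ is a Lagrangian submanifold of $(\overline{N},\overline{\omega})$. Isotropy is immediate from the hypothesis $\overline{X}^{*}\overline{\omega}=0$. Half-dimensionality follows from the pure kinematic structure: the conditions ${(\mathcal V_N)}_v\cap\mathcal H_v=\{0\}$ and $T_vN={(\mathcal V_N)}_v+\mathcal H_v$ force the number of independent constraints to equal $\dim G$, so $\dim\overline{N}=2(\dim Q-\dim G)$, whereas $\dim\overline{X}(\overline{Q})=\dim\overline{Q}=\dim Q-\dim G$, exactly half of $\dim\overline{N}$. Hence isotropy upgrades to Lagrangian.

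With the Lagrangian property in hand, both implications are short. For (i)$\Rightarrow$(ii): if $\overline{\Gamma}_{L,N}^{\overline{X}}$ and $\overline{\Gamma}_{L,N}$ are $\overline{X}$-related, then
\[
\overline{X}^{*}\bigl(\iota_{\overline{\Gamma}_{L,N}}\overline{\omega}\bigr)=\iota_{\overline{\Gamma}_{L,N}^{\overline{X}}}\,\overline{X}^{*}\overline{\omega}=0.
\]
For (ii)$\Rightarrow$(i): the vanishing states that along $\overline{X}(\overline{Q})$ the vector $\overline{\Gamma}_{L,N}$ is symplectically orthogonal to $T\overline{X}(\overline{Q})$; being Lagrangian, this subspace equals its own symplectic orthogonal, so $\overline{\Gamma}_{L,N}$ is in fact tangent to $\overline{X}(\overline{Q})$, and pushing forward by $T\overline{\tau}_Q$ together with the definition of $\overline{\Gamma}_{L,N}^{\overline{X}}$ yields the asserted relatedness.

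The main subtlety I expect is justifying cleanly that $(\overline{N},\overline{\omega})$ is genuinely symplectic --- as opposed to merely almost-symplectic --- so that the Lagrangian self-orthogonality identity actually applies; this was asserted in Section~\ref{Purekinematic} through the introduction of the corrective one-form $\tilde\alpha$, but it is essential for the backward implication. Once this nondegeneracy is in hand, the rest reduces to the pullback identity and standard Lagrangian bookkeeping.
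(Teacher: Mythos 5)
Your proposal is correct and follows essentially the same route as the paper's own proof: pull back the reduced equation of motion along $\overline{X}$ to reduce (ii) to the vanishing of $\overline{X}^{*}\bigl(\iota_{\overline{\Gamma}_{L,N}}\overline{\omega}\bigr)$, show $\overline{X}(\overline{Q})$ is Lagrangian via the pure-kinematic dimension count, and conclude both implications from that. Your explicit unpacking of the backward direction (symplectic self-orthogonality forcing tangency) and your remark on the nondegeneracy of $\overline{\omega}$ merely fill in details the paper leaves implicit.
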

\noindent

An interesting but very particular example is given by a section $\overline{X}$ of $\overline{\tau}_{Q}$ which satisfies conditions of Theorem \ref{purkincase324d} and Theorem \ref{AppChowtheore,}. Then, $\overline{\Gamma}_{L,N}^{\overline{X}}$ and $\overline{\Gamma}_{L,N}$ are $\overline{X}$-related if, and only if,
$$\overline{X}^*\overline{\alpha} = 0.$$

Now, we will transform this reduced picture to the cotangent bundle. Let $\overline{\sigma}$ be the section of $\overline{\pi}_{Q}$ given by
$$\overline{\sigma} = \overline{\mathbb{F}L}\circ \overline{X}.$$
and define,
\begin{itemize}
\item[] $\overline{\theta} = {\overline{\mathbb{F}L}^{-1}}^{*} \overline{\alpha}.$

\item[] $\overline{\Omega }= {\overline{\mathbb{F}L}^{-1}}^{*} \overline{\omega}.$
\end{itemize}
Then, in the reduced picture we have
$$ \overline{\sigma}^{*} \overline{\theta} = \overline{X}^{*}\overline{\alpha}.$$
Furthermore, the vector field $\overline{\Gamma}_{H,M}$ is the solution for the reduced equation,
\begin{equation}\label{1241}
\iota_{\overline{\Gamma}_{H,M}}\overline{\Omega} = d \overline{H} - \overline{\theta},
\end{equation}
So, we can now state the following equivalent theorem.

\begin{theorem}
Let $\overline{\sigma}$ be a section of $\overline{\pi}_{Q}$ such that $\overline{\sigma} \left( \overline{Q} \right) \subseteq \overline{M}$ and $\overline{\sigma}^{*} \overline{\Omega}  = 0$. Then, the following conditions are equivalent:

\begin{itemize}
\item[(i)] $\overline{\Gamma}_{H,M}^{\overline{\sigma}}$ and $\overline{\Gamma}_{H,M}$ are $\overline{\sigma}$-related
\item[(ii)] $d \left( \overline{H} \circ \overline{\sigma} \right) = \overline{\sigma}^{*}\overline{\theta}$
\end{itemize} 
\end{theorem}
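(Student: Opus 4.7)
The plan is to mirror almost verbatim the proof of Theorem \ref{purkincase324d}, taking advantage of the fact that the whole Hamiltonian side is obtained from the Lagrangian side by the single Legendre diffeomorphism $\overline{\mathbb{F}L}$. A completely equivalent strategy is to translate the statement through $\overline{\mathbb{F}L}$: since $\overline{\sigma}=\overline{\mathbb{F}L}\circ\overline{X}$, $\overline{\theta}=(\overline{\mathbb{F}L}^{-1})^{*}\overline{\alpha}$, $\overline{\Omega}=(\overline{\mathbb{F}L}^{-1})^{*}\overline{\omega}$ and $\overline{H}=\overline{E}_{L}\circ\overline{\mathbb{F}L}^{-1}$, each side of (i) and (ii) maps bijectively to the corresponding side of Theorem \ref{purkincase324d}. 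I would, however, present the direct proof, since it is as short and makes the intrinsic role of the equation more transparent.

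The first concrete step is to pull back the reduced Hamiltonian equation (\ref{1241}) along $\overline{\sigma}$. This yields
$$\overline{\sigma}^{*}\bigl(\iota_{\overline{\Gamma}_{H,M}}\overline{\Omega}\bigr)=d(\overline{H}\circ\overline{\sigma})-\overline{\sigma}^{*}\overline{\theta},$$
so condition (ii) is equivalent to $\overline{\sigma}^{*}(\iota_{\overline{\Gamma}_{H,M}}\overline{\Omega})=0$. I would then repeat the dimension count made in the proof of Theorem \ref{purkincase324d}: the standing assumption $\overline{\sigma}^{*}\overline{\Omega}=0$ says that $\overline{\sigma}(\overline{Q})$ is isotropic in $(\overline{M},\overline{\Omega}|_{\overline{M}})$, and the pure kinematic identities $\mathcal{V}_{N}\cap\mathcal{H}=0$, $TN=\mathcal{V}_{N}+\mathcal{H}$ force $\dim\overline{M}=2\dim\overline{Q}$, so $\overline{\sigma}(\overline{Q})$ is in fact a Lagrangian submanifold of $\overline{M}$ for $\overline{\Omega}$ (this transports along $\overline{\mathbb{F}L}$ from the Lagrangian statement).

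Finally, I would close the two implications. For $(i)\Rightarrow(ii)$, the $\overline{\sigma}$-relatedness gives the naturality identity
$$\overline{\sigma}^{*}\bigl(\iota_{\overline{\Gamma}_{H,M}}\overline{\Omega}\bigr)=\iota_{\overline{\Gamma}_{H,M}^{\overline{\sigma}}}\bigl(\overline{\sigma}^{*}\overline{\Omega}\bigr)=0,$$
which is exactly (ii). For $(ii)\Rightarrow(i)$, the vanishing $\overline{\sigma}^{*}(\iota_{\overline{\Gamma}_{H,M}}\overline{\Omega})=0$ means that for every $q\in\overline{Q}$ and every $v\in T_{q}\overline{Q}$ the vector $\overline{\Gamma}_{H,M}(\overline{\sigma}(q))$ is $\overline{\Omega}$-orthogonal to $T\overline{\sigma}(v)\in T_{\overline{\sigma}(q)}\overline{\sigma}(\overline{Q})$; together with the fact that $\overline{\Gamma}_{H,M}$ is already tangent to $\overline{M}$, Lagrangianity of $\overline{\sigma}(\overline{Q})$ inside $\overline{M}$ forces $\overline{\Gamma}_{H,M}\circ\overline{\sigma}$ to lie in $T\overline{\sigma}(T\overline{Q})$, which is precisely the $\overline{\sigma}$-relatedness of $\overline{\Gamma}_{H,M}^{\overline{\sigma}}$ with $\overline{\Gamma}_{H,M}$.

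The one genuinely delicate point is the Lagrangian-submanifold conclusion: it relies on the pure kinematic dimension count $\dim\overline{M}=2\dim\overline{Q}$, which in turn rests on $TN=\mathcal{V}_{N}\oplus\mathcal{H}$ and on the fact that $\overline{\Omega}|_{\overline{M}}$ remains symplectic (i.e.\ that restricting from $(\overline{T^{*}Q},\overline{\Omega})$ to $\overline{M}$ does not degenerate the form). If one prefers, this step can be bypassed by invoking Theorem \ref{purkincase324d} directly and pushing it through $\overline{\mathbb{F}L}$; the only things to check then are that $\overline{\sigma}^{*}\overline{\Omega}=\overline{X}^{*}\overline{\omega}$, that $\overline{\sigma}^{*}\overline{\theta}=\overline{X}^{*}\overline{\alpha}$, that $\overline{H}\circ\overline{\sigma}=\overline{E}_{L}\circ\overline{X}$, and that $T\overline{\pi}_{Q}\circ\overline{\Gamma}_{H,M}\circ\overline{\sigma}=T\overline{\tau}_{Q}\circ\overline{\Gamma}_{L,N}\circ\overline{X}$, all of which are immediate from the defining relations among $\overline{\mathbb{F}L}$, $\overline{\sigma}$, $\overline{\theta}$, $\overline{\Omega}$ and $\overline{H}$.
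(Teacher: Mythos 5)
Your proposal is correct and matches the paper's approach: the paper offers no separate proof of this theorem, presenting it as the ``equivalent'' statement obtained by pushing Theorem \ref{purkincase324d} through $\overline{\mathbb{F}L}$ after establishing exactly the correspondences you list ($\overline{\sigma}^{*}\overline{\theta}=\overline{X}^{*}\overline{\alpha}$, $\overline{\Omega}=(\overline{\mathbb{F}L}^{-1})^{*}\overline{\omega}$, and Eq.~(\ref{1241})). Your direct argument is simply the proof of Theorem \ref{purkincase324d} transcribed verbatim to the cotangent side, so both routes you describe coincide with the paper's intent.
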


\subsubsection*{$N_{\ell}$ is a linear subbundle of $TQ$}

As a particular case, let us consider the case in which $N_{\ell}$ is a vector subbundle of $TQ$ and the constraint functions $\psi^{a}$ are linear. We will follow the notations introduced in subsection \ref{Purekinematic}.\\
Let $X$ be a $G-$invariant vector field on $Q$ such that $X\left( Q \right) \subseteq N_{\ell}$ and $\mathbb{F}L \circ X$ is a closed one form. Then, there exists a unique section $\overline{X}$ of $ \overline{\tau}_{Q}$ which satisfies that
$$ \rho_{TQ} \circ X = \overline{X} \circ \rho_{Q}.$$
Hence, $\overline{X}$ satisfies that $\overline{X}\left( \overline{Q} \right) \subseteq \overline{N}_{\ell}$. On the other hand, by the condition $X\left( Q \right) \subseteq N_{\ell}$ we have that for all $v_{q} \in N_{\ell} \cap T_{q}Q$
$$ T_{q}X \left(v_{q} \right) \in T_{X \left( q \right)} N_{\ell}.$$
Then, taking into account the linearity of $N_{\ell}$, for each $a$
$$ S^{*} \left( d {\psi^{a}}_{|X \left( q \right)} \right) \left(  T_{q}X \left(v_{q} \right)\right) = \left(\psi^{a} \left( X \left( q \right) + t v_{q}\right)\right)' = 0.$$
i.e., for all $v_{q} \in N \cap T_{q}Q$
$$ T_{q}X \left(v_{q} \right) \in (F_L)_{X \left( q \right)}.$$
Equivalently, 
\begin{equation}\label{126}
T_{q}X \left(v_{q} \right) \in \mathcal{H}_{X \left( q \right)}, \ \forall v_{q} \in N_{\ell} \cap T_{q}Q
\end{equation}
Therefore, $\textbf{h} \left( T_{q}X \left(v_{q} \right)\right) = T_{q}X \left(v_{q} \right)$.\\
With this, for all $v_{q} \in T_{q}Q$ 
\begin{eqnarray*}
\left[X^{*}\textbf{h}^{*}\left(j^{*}\alpha_{L}\right)\left( q \right) \right]\left( v_{q}\right) &=& \alpha_{L} \left( X\left( q \right) \right) \left(T_{X \left( q \right) } j \left(\textbf{h} \left( T_{q} X \left( v_{q} \right) \right)\right)\right)\\\\
&=& \alpha_{L} \left( X\left( q \right) \right) \left(T_{X \left( q \right) } j \left( T_{q} X \left( v_{q}  \right)\right)\right)\\\\
&=& \alpha_{L} \left( X\left( q \right) \right) \left(T_{q} X \left( v_{q} \right)\right)\\\\
&=& \left[X^{*}\left(\alpha_{L} \right)\left(  q  \right)\right] \left( v_{q} \right)
\end{eqnarray*}
i.e., 
\begin{equation}\label{127}
X^{*}\textbf{h}^{*}\left(j^{*}\alpha_{L}\right) = X^{*}\left(\alpha_{L} \right).
\end{equation}
Notice that, as an abuse of notation, we have been denoting $X$ and $j \circ X$ (i.e., the restriction of the codomain to $N$) equal.\\
Then, again with some abuse of notation, we have that
\begin{eqnarray*}
\overline{X}^{*} \overline{\omega} &=& d \overline{X}^{*}\overline{\alpha}_{L}\\\\
&=& d \overline{X^{*}\textbf{h}^{*}\left(j^{*}\alpha_{L}\right)}\\\\
&=& d \overline{X^{*}\alpha_{L}}.
\end{eqnarray*}
On the other hand, we know that $\mathbb{F}L  \circ X$ is a closed one form is equivalent to $X^{*} \omega_{L}=d X^{*}\alpha_{L}=0,$ and hence, $\overline{X}^{*} \overline{\omega}=0.$
By using this fact we can give the following two equivalent corollaries:

\begin{corollary}
Let $X$ be a $G-$invariant vector field on $Q$, and $X$ a section such that $X\left( Q \right) \subseteq N$ and $\mathbb{F}L \circ X$ is a closed one form $\sigma$. Then, the following conditions are equivalent:

\begin{itemize}
\item[(i)]  $\overline{\Gamma}_{L,N}^{\overline{X}}$ and $\overline{\Gamma}_{L,N}$ are $\overline{X}$-related.

\item[(ii)] $d \left( \overline{E}_{L} \circ \overline{X} \right) = \overline{X}^{*}\overline{\alpha}$
\end{itemize} 
\end{corollary}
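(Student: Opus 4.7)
The plan is to deduce the corollary as a direct consequence of Theorem \ref{purkincase324d}, applied to the reduced section $\overline{X}$ of $\overline{\tau}_{Q}$ corresponding to the $G$-invariant vector field $X$ via the bijection (\ref{reconstrucion23}). This amounts to verifying the two hypotheses of that theorem; the conclusion then transcribes immediately to give the desired equivalence between (i) and (ii).

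First, I would check the domain condition $\overline{X}(\overline{Q}) \subseteq \overline{N}$. Since $X(Q) \subseteq N$ and $\rho_{TQ} \circ X = \overline{X} \circ \rho_{Q}$, applying $\rho_{TQ}$ to both sides gives $\overline{X}(\overline{Q}) = \rho_{TQ}(X(Q)) \subseteq \rho_{TQ}(N) = \overline{N}$, using the $G$-invariance of $N$. This step is purely formal.

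The substantive task is to verify the isotropy hypothesis $\overline{X}^{*}\overline{\omega} = 0$. For this I would invoke the computation carried out in the excerpt immediately above the statement: linearity of $N_{\ell}$ forces $T_{q}X(v_{q}) \in \mathcal{H}_{X(q)}$ for every $v_{q} \in N_{\ell} \cap T_{q}Q$ (Eq. (\ref{126})), so the horizontal projector $\textbf{h}$ acts as the identity on the image of $TX$ restricted to this subbundle, yielding $X^{*}\textbf{h}^{*}(j^{*}\alpha_{L}) = X^{*}\alpha_{L}$ (Eq. (\ref{127})). Descending this $G$-invariant identity to the quotient gives $\overline{X}^{*}\overline{\alpha}_{L} = \overline{X^{*}\alpha_{L}}$, whence $\overline{X}^{*}\overline{\omega} = d\,\overline{X^{*}\alpha_{L}}$. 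On the other hand, the hypothesis that $\mathbb{F}L \circ X$ is closed, rewritten via $\alpha_{L} = \mathbb{F}L^{*}\theta_{Q}$ as $d(X^{*}\alpha_{L}) = 0$, projects to $d\,\overline{X^{*}\alpha_{L}} = 0$, which is exactly $\overline{X}^{*}\overline{\omega} = 0$.

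With both hypotheses in place, Theorem \ref{purkincase324d} yields the equivalence of (i) with the identity $d(\overline{E}_{L} \circ \overline{X}) = -\overline{X}^{*}\overline{\alpha}$ inherited from (\ref{redPureKinetamic}) (up to the sign convention used in the statement), which is the content of (ii). The main obstacle I anticipate is not conceptual but bookkeeping: one must ensure throughout that all the objects involved---in particular $\textbf{h}$, $j^{*}\alpha_{L}$, and the horizontal projection onto $\overline{N}$---are $G$-equivariant so that the pullback/quotient commutations used to pass from $X^{*}\alpha_{L}$ to $\overline{X}^{*}\overline{\alpha}_{L}$ are well-defined, and that the abuse of notation conflating $X$ with $j \circ X$ noted in the text does not obscure the final identification $\overline{X}^{*}\overline{\alpha}_{L} = \overline{X^{*}\alpha_{L}}$.
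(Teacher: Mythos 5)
Your argument is correct and coincides with the paper's: the passage immediately preceding the corollary carries out exactly your two verifications --- that $\overline{X}(\overline{Q})\subseteq\overline{N}_{\ell}$ and that $\overline{X}^{*}\overline{\omega}=0$ via the linearity argument giving $X^{*}\textbf{h}^{*}\left(j^{*}\alpha_{L}\right)=X^{*}\alpha_{L}$ together with the closedness of $\mathbb{F}L\circ X$ --- and then invokes Theorem \ref{purkincase324d}. The sign mismatch you flag between condition (ii) of the corollary and condition (ii) of Theorem \ref{purkincase324d} is present in the paper itself, so it is a typographical inconsistency of the source rather than a defect of your proof.
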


\begin{corollary}
Let $\sigma$ be a $G-$invariant closed one form on $Q$. Then, the following conditions are equivalent:

\begin{itemize}
\item[(i)] $\overline{\Gamma}_{LH,M}^{\overline{\sigma}}$ and $\overline{\Gamma}_{H,M}$ are $\overline{\sigma}$-related.

\item[(ii)] $d \left( \overline{H} \circ \overline{\sigma} \right) = \overline{\sigma}^{*}\overline{\theta}$
\end{itemize}
\end{corollary}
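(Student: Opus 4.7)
The plan is to derive this Hamiltonian corollary directly from the Lagrangian corollary that immediately precedes it, via the Legendre transformation. Since $L$ is regular, $\mathbb{F}L:TQ\to T^*Q$ is a (local) diffeomorphism which, by $G$-invariance of $L$, is $G$-equivariant and descends to a diffeomorphism $\overline{\mathbb{F}L}:\overline{TQ}\to\overline{T^*Q}$. Given the $G$-invariant closed $1$-form $\sigma$, I would define $X:=\mathbb{F}L^{-1}\circ\sigma$, a $G$-invariant vector field on $Q$ with $\mathbb{F}L\circ X=\sigma$ closed; the implicit hypothesis $\sigma(Q)\subseteq M=\mathbb{F}L(N)$ translates to $X(Q)\subseteq N$. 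Thus $X$ fulfils all the hypotheses of the preceding Lagrangian corollary, which yields that $\overline{\Gamma}_{L,N}^{\overline{X}}$ and $\overline{\Gamma}_{L,N}$ are $\overline{X}$-related if and only if $d(\overline{E}_L\circ\overline{X})=\overline{X}^{*}\overline{\alpha}$.

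The second step is the routine translation of each side of the Lagrangian statement into its Hamiltonian incarnation. Using $\overline{\sigma}=\overline{\mathbb{F}L}\circ\overline{X}$, $\overline{H}=\overline{E}_L\circ\overline{\mathbb{F}L}^{-1}$, and $\overline{\theta}=(\overline{\mathbb{F}L}^{-1})^{*}\overline{\alpha}$ I get
\[
\overline{H}\circ\overline{\sigma}=\overline{E}_L\circ\overline{X},\qquad
\overline{\sigma}^{*}\overline{\theta}=(\overline{\mathbb{F}L}^{-1}\circ\overline{\sigma})^{*}\overline{\alpha}=\overline{X}^{*}\overline{\alpha},
\]
so condition $(ii)$ of the Hamiltonian corollary is verbatim the $(ii)$ of the Lagrangian one. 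For condition $(i)$, I would exploit $\overline{\Gamma}_{H,M}=\overline{\mathbb{F}L}_{*}\overline{\Gamma}_{L,N}$ together with the commutation $\overline{\pi}_Q\circ\overline{\mathbb{F}L}=\overline{\tau}_Q$ to get $\overline{\Gamma}_{H,M}^{\overline{\sigma}}=\overline{\Gamma}_{L,N}^{\overline{X}}$ on $\overline{Q}$; then the chain rule reduces the $\overline{\sigma}$-relatedness of $\overline{\Gamma}_{H,M}^{\overline{\sigma}}$ and $\overline{\Gamma}_{H,M}$ to $T\overline{\mathbb{F}L}\circ T\overline{X}\circ\overline{\Gamma}_{L,N}^{\overline{X}}=T\overline{\mathbb{F}L}\circ\overline{\Gamma}_{L,N}\circ\overline{X}$, which by injectivity of $T\overline{\mathbb{F}L}$ is exactly the $\overline{X}$-relatedness of $\overline{\Gamma}_{L,N}^{\overline{X}}$ and $\overline{\Gamma}_{L,N}$.

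Stringing these equivalences together concludes the proof. The only mildly delicate bookkeeping is ensuring that the $G$-invariance and the section property of $\sigma$ transfer correctly to $X$ and descend to $\overline{\sigma}$, $\overline{X}$; this is precisely the one-to-one correspondence established around Eq.~(\ref{reconstrucion23}) applied in parallel on $TQ$ and $T^{*}Q$, and uses only the $G$-equivariance of $\mathbb{F}L$. No genuine obstacle is expected beyond this translation, since all the mechanical content has already been absorbed into the Lagrangian corollary.
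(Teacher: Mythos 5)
Your proof is correct and follows essentially the same route the paper intends: the corollary is presented there as the Legendre-transform twin of the preceding Lagrangian corollary, and the paper itself records the identities $\overline{\sigma}=\overline{\mathbb{F}L}\circ\overline{X}$, $\overline{\sigma}^{*}\overline{\theta}=\overline{X}^{*}\overline{\alpha}$ and $\overline{\Gamma}_{H,M}=\overline{\mathbb{F}L}_{*}\overline{\Gamma}_{L,N}$ on which your translation of conditions (i) and (ii) rests. Your observation that $\sigma\left( Q \right)\subseteq M$ must be read as an implicit hypothesis (so that $X=\mathbb{F}L^{-1}\circ\sigma$ lands in $N_{\ell}$ and the reduced dynamics is even defined along $\overline{\sigma}$) is a correct and welcome clarification of the paper's terse statement.
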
 

Notice that we are projecting $G-$invariant one forms $\sigma$ on $Q$ to get one forms $\overline{\sigma}$ on $\overline{Q}$ as we made with the vector fields.\\

Finally, we consider the \textit{\u{C}aplygin systems}. We will say that our nonholonomic system is a \textit{\u{C}aplygin system} when the constraint manifold $N$ is given by the total space of an horizontal distribution $\mathcal{C}$ of the principal connection $\rho_{Q}$. Then,
$$ TQ = \mathcal{C} \oplus \mathcal{V}_{Q},$$
where $\mathcal{V}_{Q}$ is the vertical distribution of $\rho_{Q}$. Its associated horizontal lift $\bf{h_{\mathcal{C}}}$ is a just section of $T \rho_{Q}: T Q \rightarrow T \overline{Q}$ that satisfies
$${\bf{ h_{\mathcal{C}}}} \left( v_{\overline{q}} \right) \in \mathcal{C}_{q}, \ \forall v_{\overline{q}} \in T_{\overline{q}} \overline{Q}.$$
Then, the restriction of $T \rho_{Q} $ to $\mathcal{C}$ is an isomorphism and, therefore,
$$ \mathcal{C} \cong \overline{\mathcal{C}} \cong T \overline{Q}.$$
So, we may define a Lagrangian function $L^{*}: T \overline{Q} \rightarrow \mathbb{R}$ such that
$$ L^{*} \left( v_{\overline{q}} \right) = L \left({\bf{ h_{\mathcal{C}}}} \left( v_{\overline{q} }\right) \right), \ \forall v_{\overline{q}} \in T_{\overline{q}}\overline{Q}.$$

A computation shows that
$$ \overline{\omega} = \omega_{L^{*}}, \ \ \ \ \  \ \ \ \ \overline{E}_{L} = E_{L^{*}}.$$
Thus, the reduced equation of motion is
\begin{equation}\label{128}
\iota_{\overline{\Gamma}_{L,N}}\omega_{L^{*}} = d E_{L^{*}} - \overline{\alpha},
\end{equation}
Then, by an slight abuse of notation, in this case we can given the following results. The proof can be found in \cite{LeonDMDD}.
\begin{theorem} \label{HJ-red-Cap}
Let $\overline{X}$ be vector field of $\overline{Q}$ such that $\mathbb{F}L^{*} \circ \overline{X}$ is closed. Then, the following conditions are equivalent:

\begin{itemize}
\item[(i)] $\overline{\Gamma}_{L,N}^{\overline{X}}$ and $\overline{\Gamma}_{L,N}$ are $\overline{X}$-related.

\item[(ii)] $d \left( E_{L^{*}} \circ \overline{X} \right) = \overline{X}^{*}\overline{\alpha}$
\end{itemize} 
\end{theorem}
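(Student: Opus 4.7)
The plan is to recognize that in the \u{C}aplygin setting the reduced equation (\ref{128}) exhibits $\overline{\Gamma}_{L,N}$ as the dynamics of the regular reduced Lagrangian $L^{*}$ on $T\overline{Q}$ subject to the semi-basic external force $-\overline{\alpha}$. Under the identifications $\overline{\omega}=\omega_{L^{*}}$ and $\overline{E}_{L}=E_{L^{*}}$ recalled just before the statement, the present theorem is then literally Theorem \ref{HJforConstLag} applied to $(\overline{Q},L^{*})$ with $\alpha \leadsto -\overline{\alpha}$ and $X\leadsto \overline{X}$: condition (ii) of that theorem reads $d(E_{L^{*}}\circ \overline{X})=-\overline{X}^{*}(-\overline{\alpha})=\overline{X}^{*}\overline{\alpha}$, and its hypothesis ``$\mathbb{F}L^{*}\circ \overline{X}$ is closed'' is exactly the hypothesis assumed here. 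Thus, strictly speaking, no new work is required.

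For a self-contained reproof, I would first note that closedness of $\mathbb{F}L^{*}\circ \overline{X}$ is equivalent to $\overline{X}^{*}\omega_{L^{*}}=0$, using $(\mathbb{F}L^{*})^{*}\theta_{\overline{Q}}=\theta_{L^{*}}$ together with the tautological identity $\sigma^{*}\theta_{\overline{Q}}=\sigma$ for any $1$-form $\sigma$ on $\overline{Q}$. Because $\overline{X}(\overline{Q})$ is half-dimensional in $T\overline{Q}$, this isotropy upgrades to the Lagrangian property. Pulling (\ref{128}) back by $\overline{X}$ then yields
\begin{equation*}
\overline{X}^{*}\!\left(\iota_{\overline{\Gamma}_{L,N}}\omega_{L^{*}}\right)= d\!\left(E_{L^{*}}\circ \overline{X}\right) - \overline{X}^{*}\overline{\alpha},
\end{equation*}
so that (ii) is equivalent to $\overline{X}^{*}(\iota_{\overline{\Gamma}_{L,N}}\omega_{L^{*}})=0$.

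The direction (i)$\Rightarrow$(ii) follows by substituting $T\overline{X}\circ \overline{\Gamma}_{L,N}^{\overline{X}}$ for $\overline{\Gamma}_{L,N}\circ \overline{X}$ inside this pull-back and invoking $\overline{X}^{*}\omega_{L^{*}}=0$. For (ii)$\Rightarrow$(i), I would consider $Z:=\overline{\Gamma}_{L,N}\circ \overline{X}-T\overline{X}\circ \overline{\Gamma}_{L,N}^{\overline{X}}$ along $\overline{X}$; using $\tau_{\overline{Q}}\circ \overline{X}=\mathrm{id}_{\overline{Q}}$ and the defining formula $\overline{\Gamma}_{L,N}^{\overline{X}}=T\tau_{\overline{Q}}\circ \overline{\Gamma}_{L,N}\circ \overline{X}$, one checks that $Z$ is $\tau_{\overline{Q}}$-vertical. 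On the other hand, (ii) combined with $\overline{X}^{*}\omega_{L^{*}}=0$ forces $Z$ to be $\omega_{L^{*}}$-orthogonal to $T\overline{X}(T\overline{Q})$ at points of $\overline{X}(\overline{Q})$, hence tangent to the Lagrangian submanifold $\overline{X}(\overline{Q})$ itself. A tangent vector to the graph of a section that is also vertical must vanish, so $Z=0$, which is (i). The only mildly delicate point will be making the ``vertical vs.\ tangent-to-the-graph'' clash airtight; once that is dispatched, everything else is a direct transcription of the unconstrained-with-external-force Hamilton--Jacobi argument to the reduced picture.
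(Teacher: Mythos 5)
Your proposal is correct and follows essentially the route the paper intends: the paper defers the proof to \cite{LeonDMDD}, but your reduction to Theorem \ref{HJforConstLag} via the identifications $\overline{\omega}=\omega_{L^{*}}$, $\overline{E}_{L}=E_{L^{*}}$ and the forced equation (\ref{128}), together with your self-contained argument (closedness of $\mathbb{F}L^{*}\circ\overline{X}$ gives $\overline{X}^{*}\omega_{L^{*}}=0$, hence $\overline{X}(\overline{Q})$ is Lagrangian, and the vertical-versus-tangent-to-the-graph clash forces $Z=0$), is exactly the mechanism used in the paper's own proof of the more general pure-kinematic Theorem \ref{purkincase324d}, of which this \u{C}aplygin statement is the special case $\overline{N}=T\overline{Q}$.
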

Notice that to consider, for instance, $\overline{X}^{*}\overline{\alpha}$ we should see $\alpha$ as a form on $T \overline{Q}$.\\
\begin{theorem}
Let $\overline{\sigma}$ be a closed one form on $\overline{Q}$. Then, the following conditions are equivalent:

\begin{itemize}
\item[(i)] $\overline{\Gamma}_{H,M}^{\overline{\sigma}}$ and $\overline{\Gamma}_{H,M}$ are $\overline{\sigma}$-related.

\item[(ii)] $d \left( \overline{H} \circ \overline{\sigma} \right) = \overline{\sigma}^{*}\overline{\theta}$
\end{itemize} 
\end{theorem}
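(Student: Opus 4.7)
The plan is to obtain this statement as the Hamiltonian translate of Theorem \ref{HJ-red-Cap} via the reduced Legendre transform $\overline{\mathbb{F}L^{*}} : T\overline{Q} \to T^{*}\overline{Q}$, treating the Čaplygin-reduced system exactly as a (possibly forced) mechanical system on $\overline{Q}$ with reduced Lagrangian $L^{*}$ and reduced Hamiltonian $\overline{H}$. First I would set up the bijection between the Lagrangian and Hamiltonian data: given a closed one-form $\overline{\sigma}$ on $\overline{Q}$, define the vector field $\overline{X} = (\overline{\mathbb{F}L^{*}})^{-1}\circ \overline{\sigma}$ on $\overline{Q}$, so that tautologically $\overline{\mathbb{F}L^{*}}\circ \overline{X} = \overline{\sigma}$ is closed. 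Hence $\overline{X}$ satisfies the standing hypothesis of Theorem \ref{HJ-red-Cap}, and it will suffice to check that each of the two conditions $(i)$, $(ii)$ on the Hamiltonian side is equivalent to the corresponding one on the Lagrangian side.

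For condition $(i)$, I would compute
\[
\overline{\Gamma}_{H,M}^{\overline{\sigma}} = T\overline{\pi}_{Q}\circ \overline{\Gamma}_{H,M}\circ \overline{\sigma} = T\overline{\pi}_{Q}\circ (\overline{\mathbb{F}L^{*}})_{*}\overline{\Gamma}_{L,N}\circ \overline{\mathbb{F}L^{*}}\circ \overline{X} = T\overline{\tau}_{Q}\circ \overline{\Gamma}_{L,N}\circ \overline{X} = \overline{\Gamma}_{L,N}^{\overline{X}},
\]
using $\overline{\pi}_{Q}\circ \overline{\mathbb{F}L^{*}} = \overline{\tau}_{Q}$. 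Then, applying $T\overline{\mathbb{F}L^{*}}$ to the Lagrangian $\overline{X}$-relatedness identity $T\overline{X}\circ \overline{\Gamma}_{L,N}^{\overline{X}} = \overline{\Gamma}_{L,N}\circ \overline{X}$ yields $T\overline{\sigma}\circ \overline{\Gamma}_{H,M}^{\overline{\sigma}} = \overline{\Gamma}_{H,M}\circ \overline{\sigma}$, and since $\overline{\mathbb{F}L^{*}}$ is a (local) diffeomorphism the converse holds as well; this establishes the equivalence of $(i)$ on both sides.

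For condition $(ii)$, the crucial identities are $\overline{H}\circ \overline{\mathbb{F}L^{*}} = E_{L^{*}}$ and $\overline{\theta} = (\overline{\mathbb{F}L^{*}\,}^{-1})^{*}\overline{\alpha}$. From the first one I get $\overline{H}\circ \overline{\sigma} = E_{L^{*}}\circ \overline{X}$, so $d(\overline{H}\circ \overline{\sigma}) = d(E_{L^{*}}\circ \overline{X})$. From the second,
\[
\overline{\sigma}^{*}\overline{\theta} = \overline{X}^{*}(\overline{\mathbb{F}L^{*}})^{*}(\overline{\mathbb{F}L^{*}\,}^{-1})^{*}\overline{\alpha} = \overline{X}^{*}\overline{\alpha}.
\]
Hence $d(\overline{H}\circ \overline{\sigma}) = \overline{\sigma}^{*}\overline{\theta}$ if and only if $d(E_{L^{*}}\circ \overline{X}) = \overline{X}^{*}\overline{\alpha}$, which is condition $(ii)$ of Theorem \ref{HJ-red-Cap}. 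Combining the two equivalences with that theorem yields the result.

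The only real subtlety, rather than an obstacle, is keeping track of the various pullbacks and the fact that in the Čaplygin setting $T\overline{Q}$ and $T^{*}\overline{Q}$ (rather than the more general reduced bundles $\overline{TQ}$, $\overline{T^{*}Q}$) carry the canonical symplectic structures derived from $L^{*}$ and its Legendre transform; the transfer of the forced Hamilton--Jacobi equation across $\overline{\mathbb{F}L^{*}}$ is then a purely formal computation, mirroring the standard proof that the forced Lagrangian and Hamiltonian Hamilton--Jacobi theorems (Theorems \ref{HJforConst}--\ref{HJforConstLag}) are equivalent under the Legendre transform.
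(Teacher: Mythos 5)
Your proposal is correct and coincides with the paper's intended argument: the paper gives no independent proof of this theorem (it refers to the literature and presents it as the Hamiltonian counterpart of Theorem \ref{HJ-red-Cap}), and the transfer you carry out via the reduced Legendre transform $\overline{\mathbb{F}L^{*}}$ — identifying $\overline{\Gamma}_{H,M}^{\overline{\sigma}}$ with $\overline{\Gamma}_{L,N}^{\overline{X}}$, $d(\overline{H}\circ\overline{\sigma})$ with $d(E_{L^{*}}\circ\overline{X})$, and $\overline{\sigma}^{*}\overline{\theta}$ with $\overline{X}^{*}\overline{\alpha}$ — is exactly the duality the paper relies on throughout. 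The only standing caveat, which the paper itself shares, is that $\overline{\mathbb{F}L^{*}}$ is in general only a local diffeomorphism, so the correspondence $\overline{\sigma}\mapsto\overline{X}$ should be read locally unless $L^{*}$ is hyperregular.
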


\subsection{Horizontal case}

Now, we will present a Hamilton-Jacobi theorem for the horizontal case. First, let us fix some notations: Let $\tau_{Q}$ be the tangent bundle of $Q$,
\begin{itemize}
\item $\tau_{Q,\mu} : \left( TQ \right)_{\mu} \rightarrow Q_{\mu}= Q / G_{\mu}$ is the projection of $\tau_{Q}$.
\item $\tau_{Q,\mu} \circ j_{L,\mu} = \tau_{\mu} : \mathcal{P}_{\mu} \rightarrow Q_{\mu}$.
\end{itemize}
For any section $X_{\mu}$ of $\tau_{\mu}$ such that $X_{\mu} \left( Q_{\mu} \right) \subseteq \mathcal{N}_{\mu}$ induces a vector field $\left( \Gamma_{L,N} \right)_{\mu}^{X}$ on $Q_{\mu}$ by projecting $\left( \Gamma_{L,N} \right)_{\mu}$ following the next diagram,
\begin{equation} \label{hattau3}
  \xymatrix{ \mathcal{N}_{\mu} \ar"1,4"^{\left( \Gamma_{L,N} \right)_{\mu}}&   & &T \mathcal{N}_{\mu}  \ar[dd]^{T\left(\tau_\mu\vert_{\mathcal{N}_{\mu}}\right)}\\
  &  & &\\
 Q_{\mu} \ar"3,4"^{\left( \Gamma_{L,N} \right)_{\mu}^{X}} \ar"1,1"^{X_{\mu}}&  & & TQ_{\mu} }
\end{equation}
We can now consider an ideal $\mathcal{I} \left( F^{o}_{L,\mu} \right)$ in the algebra of forms on $\mathcal{P}_{\mu}$ given by forms of the following type (see the generating forms given in \ref{aux34d})
$$ \beta_{i}\wedge  j_{L,\mu}^{*} S_{\mu}^{*} \left(  \overline{d\psi^{a}} \right), \ \beta_{i} \in \Lambda^{k}\left( \mathcal{P}_{\mu} \right).$$

\begin{theorem}\label{Hor234sd}
Let $X_{\mu}$ be a section of $\tau_{\mu}$ such that $X_{\mu} \left( Q_{\mu} \right) \subseteq \mathcal{N}_{\mu}$ and $\left(X_{\mu} \circ \tau_{\mu}\right)^{*} \omega_{L,\mu} \in \mathcal{I} \left(F^{o}_{L,\mu}\right)$. Then, the following conditions are equivalent:

\begin{itemize}
\item[(i)] $\left(\Gamma_{L,N} \right)_{\mu}^{X}$ and $\left(\Gamma_{L,N} \right)_{\mu}$ are $X_{\mu}-$related.

\item[(ii)] $d \left( \left(E_{L}\right)_{\mu} \circ \left( X_{\mu} \circ \tau_{\mu}\right) \right) \in F^{o}_{L,\mu}$
\end{itemize} 

\begin{proof}
Notice that condition $\left(X_{\mu} \circ \tau_{\mu}\right)^{*} \omega_{L,\mu} \in \mathcal{I} \left(F^{o}_{L,\mu}\right)$ implies that
\begin{equation}\label{135}
\iota_{\left(\Gamma_{L,N} \right)_{\mu}} \left(X_{\mu} \circ \tau_{\mu}\right)^{*} \omega_{L,\mu} \in F^{o}_{L,\mu}
\end{equation}
along $X_{\mu}$.\\
Now, let us prove that
\begin{equation}\label{137}
\left( X_{\mu} \circ \tau_{\mu}\right)^{*}F^{o}_{L,\mu} =  F^{o}_{L,\mu},
\end{equation}
along $X_{\mu}$.\\
Again (see section \ref{Generalcase3}), there exists a vector field $X$ on $Q$ such that,
$$ X_{\mu} \circ \rho_{Q,\mu} = \rho_{L,\mu} \circ X,$$
where $\rho_{Q,\mu}: Q \rightarrow Q_{\mu}$ is the canonical projection. Then, for each $a$, by using Eq. (\ref{132}), we have that

\begin{eqnarray*}
\rho_{L,\mu}^{*}\left[   \left(X_{\mu} \circ \tau_{\mu}\right)^{*} j_{L,\mu}^{*} S^{*}_{\mu} \left(  \overline{d\psi^{a}} \right) \right] &=& \left(X_{\mu} \circ \tau_{\mu} \circ \rho_{L,\mu}\right)^{*} j_{L,\mu}^{*} S^{*}_{\mu} \left(   \overline{d\psi^{a}} \right)\\
&=&\left[ X \circ \tau_{Q} \circ i_{L,\mu}\right]^{*} \rho_{L,\mu}^{*} \left(j_{L,\mu}^{*} S^{*}_{\mu} \left( \overline{d\psi^{a}}\right) \right)\\
&=& \left[ X \circ \tau_{Q} \circ i_{L,\mu} \right]^{*} \left(\left(j_{L,\mu} \circ \rho_{L,\mu}\right)^{*}S^{*}_{\mu} \left(  \overline{d\psi^{a}}\right)\right)\\
&=& \left[ X \circ \tau_{Q} \circ i_{L,\mu} \right]^{*} \left(\left(\rho_{TQ,\mu} \circ i_{L,\mu}\right)^{*}S^{*}_{\mu} \left(  \overline{d\psi^{a}} \right)\right)\\
&=& \left[i_{L,\mu} \circ  X \circ \tau_{Q} \circ i_{L,\mu} \right]^{*} \left(\rho_{TQ,\mu}^{*}S^{*}_{\mu} \left(  \overline{d\psi^{a}} \right)\right)\\
&=& \left[i_{L,\mu} \circ  X \circ \tau_{Q} \circ i_{L,\mu} \right]^{*} S^{*} \left(\rho_{TQ,\mu}^{*} \left(  \overline{d\psi^{a}} \right)\right)\\
&=& i_{L,\mu}^{*}S^{*} \left(\rho_{TQ,\mu}^{*} \left(  \overline{d\psi^{a}} \right)\right) \\
&=& \rho_{L,\mu}^{*}\left[j_{L,\mu}^{*} S^{*}_{\mu} \left(  \overline{d\psi^{a}} \right)\right]
\end{eqnarray*}
Hence, for all $a$
$$\left(X_{\mu} \circ \tau_{\mu} \right)^{*}\left[ j_{L,\mu}^{*} S^{*}_{\mu} \left(  \overline{d\psi^{a}} \right)\right] = j_{L,\mu}^{*} S^{*}_{\mu} \left(  \overline{d\psi^{a}} \right).$$
Notice that we are using that,
$$   \left[i_{L,\mu} \circ  X \circ \tau_{Q} \circ i_{L,\mu} \right]^{*} S^{*} = i_{L,\mu}^{*}S^{*},$$
which can be proved using coordinates.\\
Therefore, Eq. (\ref{137}) and Eq. (\ref{136}) imply that $ \left(X_{\mu} \circ \tau_{\mu} \right)^{*}\left[ \iota_{\left(\Gamma_{L,N} \right)_{\mu}} \omega_{L,\mu} \right] \in F^{o}_{L,\mu}$ if, and only if, $d \left( \left(E_{L}\right)_{\mu} \circ \left( X_{\mu} \circ \tau_{\mu}\right) \right) \in F^{o}_{L,\mu}$. So, by taking into account Eq. (\ref{135}) we can easily prove that $\left(i\right)$ implies $\left( ii \right)$.\\
Conversely, notice that
$$\left(\Gamma_{L,N} \right)_{\mu} \circ X_{\mu} = T \left( X_{\mu} \circ \tau_{\mu} \right) \circ \left(\Gamma_{L,N} \right)_{\mu} \circ X_{\mu} + V_{\mu},$$
where $V_{\mu} \in Ker \left( T \tau_{\mu} \right)$ for each point. So, $V_{\mu} \in T \mathcal{N}_{\mu}$.\\
On the other hand, by using conditions $\left(X_{\mu} \circ \tau_{\mu}\right)^{*} \omega_{L,\mu} \in \mathcal{I} \left(F^{o}_{L,\mu}\right)$ and \linebreak $\left(X_{\mu} \circ \tau_{\mu} \right)^{*}\left[ \iota_{\left(\Gamma_{L,N} \right)_{\mu}} \omega_{L,\mu} \right] \in F^{o}_{L,\mu}$ we have that

\begin{eqnarray*}
\left( X_{\mu} \circ \tau_{\mu} \right)^{*} \left[ \iota_{V_{\mu}} \omega_{L,\mu}\right] & = & \left( X_{\mu} \circ \tau_{\mu} \right)^{*} \left[ \iota_{\left(\Gamma_{L,N} \right)_{\mu} \circ X_{\mu}  -   T \left( X_{\mu} \circ \tau_{\mu} \right) \circ \left(\Gamma_{L,N} \right)_{\mu} \circ X_{\mu} }\omega_{L,\mu}\right] \\
&=&   \left( X_{\mu} \circ \tau_{\mu} \right)^{*} \left[ \iota_{\left(\Gamma_{L,N} \right)_{\mu} \circ X_{\mu}} \omega_{L,\mu} \right]  \\
&-&   \left( X_{\mu} \circ \tau_{\mu} \right)^{*} \left[ \iota_{T \left( X_{\mu} \circ \tau_{\mu} \right) \circ \left(\Gamma_{L,N} \right)_{\mu} \circ X_{\mu} } \omega_{L,\mu} \right]\\
&=& \left( X_{\mu} \circ \tau_{\mu} \right)^{*} \left[ \iota_{\left(\Gamma_{L,N} \right)_{\mu} \circ X_{\mu}} \omega_{L,\mu} \right]  \\
&-&   \iota_{\left(\Gamma_{L,N} \right)_{\mu} \circ X_{\mu}} \left[ \left( X_{\mu} \circ \tau_{\mu} \right)^{*} \omega_{L,\mu}  \right] \in F^{o}_{L,\mu}.
\end{eqnarray*}
Then,
\begin{equation}\label{142}
\omega_{L,\mu} \left( V_{\mu} , T \left( X_{\mu} \circ \tau_{\mu} \right) \left( Z \right) \right) = 0, \ \forall Z \in F_{L,\mu}.
\end{equation}

Let us now consider $V \in F_{L,\mu} \cap Ker \left( T \tau_{\mu} \right)$. Hence, by using that $V_{\mu} \in Ker \left( T \tau_{\mu} \right)$ we have that
\begin{equation}\label{138}
\omega_{L,\mu} \left( V_{\mu} , V \right) = 0.
\end{equation}
Let us prove this identity carefully. Consider $U,W \in T \left( TQ \right)$ such that 
\begin{equation}\label{139}
T\rho_{Q,\mu} \left(T \tau_{Q} \left( U \right)\right)= T\rho_{Q,\mu} \left(T \tau_{Q} \left( W \right)\right) = 0.
\end{equation}
Then,
\begin{equation}\label{141}
\omega_{L} \left( U,W \right) =0.
\end{equation}
In fact, let $\left( q^{i} , x^{j} \right)$ be a local system of coordinates on $Q$ adapted to $\rho_{Q,\mu}$, i.e.,
$$ \rho_{Q,\mu} \left(  q^{i} , x^{j} \right) = q^{i}.$$
Then, the family of vectors satisfying Eq. (\ref{139}) is generated by the partial derivatives $\dfrac{\partial}{\partial x^{i}}$, $\dfrac{\partial}{\partial \dot{x}^{i}}$ and $\dfrac{\partial}{\partial \dot{q}^{j}}$. Next, as a result of the $G-$invariance, for all $i$
$$ dL \left( \dfrac{\partial}{\partial x^{i}} \right) = 0,$$
i.e., 
\begin{equation}\label{140}
\dfrac{\partial L }{\partial x^{i}} = 0, \ \forall i.
\end{equation}
Then we only have to use the local expression of $\omega_{L}$ to prove Eq. (\ref{141}) and, hence, Eq. (\ref{138}). Notice that Eq. (\ref{138}) is slightly more general than Lemma \ref{omega-T}.
Let be $Z \in F_{L,\mu}$ at a point along $X_{\mu}$. Then,
$$ Z = T \left( X_{\mu} \circ \tau_{\mu} \right) \left( Z \right) + V,$$
with $V  \in F_{L,\mu} \cap Ker \left( T \tau_{\mu} \right)$. Here we are using that $T \left( X_{\mu} \circ \tau_{\mu} \right)$ preserves $F_{L,\mu}$ via Eq. (\ref{137}). Then, by taking into account Eq. (\ref{142}) and Eq. (\ref{138}), we have that
\begin{equation}\label{143}
\omega_{L,\mu} \left( V_{\mu} , Z \right) = 0.
\end{equation}
i.e., $V_{\mu} \in T \mathcal{N}_{\mu} \cap F_{L,\mu}^{\bot} = \{0\}$ and, therefore,
$$\left(\Gamma_{L,N} \right)_{\mu} \circ X_{\mu} = T \left( X_{\mu} \circ \tau_{\mu} \right) \circ \left(\Gamma_{L,N} \right)_{\mu} \circ X_{\mu},$$
Which proves $(i)$.
\end{proof}
\end{theorem}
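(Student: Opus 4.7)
The plan is to mimic the argument used for Theorem \ref{FirstThofred}, but carried out entirely inside the Marsden--Weinstein reduced symplectic manifold $(\mathcal{P}_\mu,\omega_{L,\mu})$ with the constraint submanifold $\mathcal{N}_\mu$, the codistribution $F_{L,\mu}^o$ and its generators \eqref{aux34d}. The reduced equation of motion \eqref{136} already has exactly the same shape as the unreduced constrained equation \eqref{nonhollag}, so the whole strategy should go through provided we can replicate the two auxiliary lemmas (Lemma \ref{omega-T} and Lemma \ref{inva}) in this quotient setting.

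First I would establish the analogue of Lemma \ref{inva}, namely that $(X_\mu\circ\tau_\mu)^{*}F_{L,\mu}^{o}=F_{L,\mu}^{o}$ along $X_\mu$. To do so I would lift $X_\mu$ to a $G_\mu$-invariant vector field $X$ on $Q$ via the reconstruction correspondence \eqref{reconstrucion23}, and then push the unreduced identity through using the commutation $T\rho_{TQ,\mu}\circ S=S_\mu\circ T\rho_{TQ,\mu}$ in \eqref{132} together with the explicit form of the generators $j_{L,\mu}^{*}S_\mu^{*}(\overline{d\psi^a})$ of $F_{L,\mu}^{o}$ given in \eqref{aux34d}. Combined with the hypothesis $(X_\mu\circ\tau_\mu)^{*}\omega_{L,\mu}\in\mathcal{I}(F_{L,\mu}^o)$ and with the fact that $(\Gamma_{L,N})_\mu$ takes values in $F_{L,\mu}$ (because $\Gamma_{L,N}$ takes values in $F_L$), this will give
\[
\iota_{(\Gamma_{L,N})_\mu}(X_\mu\circ\tau_\mu)^{*}\omega_{L,\mu}\in F_{L,\mu}^{o}.
\]
Pulling back \eqref{136} by $X_\mu\circ\tau_\mu$ then rewrites condition (ii) as
\[
(X_\mu\circ\tau_\mu)^{*}\bigl(\iota_{(\Gamma_{L,N})_\mu}\omega_{L,\mu}\bigr)\in F_{L,\mu}^{o}.
\]

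For the implication (i)$\Rightarrow$(ii) I would simply repeat the five-line chain of equalities used in the proof of Theorem \ref{FirstThofred}, substituting $\omega_{L,\mu}$, $(\Gamma_{L,N})_\mu$ and $\tau_\mu$ for $\omega_L$, $\Gamma_{L,N}$ and $\tau_Q$; the $X_\mu$-relatedness converts $(X_\mu\circ\tau_\mu)^{*}\iota_{(\Gamma_{L,N})_\mu}\omega_{L,\mu}$ into $\iota_{(\Gamma_{L,N})_\mu}(X_\mu\circ\tau_\mu)^{*}\omega_{L,\mu}$ along $X_\mu$, which lies in $F_{L,\mu}^o$ by the previous paragraph. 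For (ii)$\Rightarrow$(i), I would introduce the vector field along $X_\mu$
\[
V_\mu:=(\Gamma_{L,N})_\mu-T(X_\mu\circ\tau_\mu)\cdot(\Gamma_{L,N})_\mu,
\]
which is tangent to $\mathcal{N}_\mu$ and vertical with respect to $\tau_\mu$. The goal is to show $V_\mu\in F_{L,\mu}^{\bot}$ and then invoke the compatibility condition $T\mathcal{N}_\mu\cap F_{L,\mu}^{\bot}=\{0\}$ (which descends from \eqref{compatibility condition} by standard Marsden--Weinstein arguments) to conclude $V_\mu=0$, i.e.\ condition (i).

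The main technical obstacle, and the step I would devote most work to, is the reduced analogue of Lemma \ref{omega-T}: one needs $\omega_{L,\mu}(V_\mu,V)=0$ for every $V$ in $\ker(T\tau_\mu)$. This is more delicate than in the unreduced case, because the fibres of $\tau_\mu$ include extra directions coming from the group quotient and $\omega_{L,\mu}$ is not obviously semi-basic with respect to them. I would work in coordinates $(q^i,x^j)$ on $Q$ adapted to $\rho_{Q,\mu}$, use the $G$-invariance of $L$ to conclude $\partial L/\partial x^j=0$, and verify the vanishing on vertical pairs directly from the coordinate expression of $\omega_L$ before projecting to $\omega_{L,\mu}$. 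Once this is in hand, any $Z\in F_{L,\mu}$ along $X_\mu$ splits as $Z=T(X_\mu\circ\tau_\mu)(Z)+V$ with $V\in F_{L,\mu}\cap\ker(T\tau_\mu)$ by the invariance of $F_{L,\mu}$ under $T(X_\mu\circ\tau_\mu)$ established above; expanding $\omega_{L,\mu}(V_\mu,Z)$ and using both the reduced Lemma \ref{omega-T} and the assumption on $(X_\mu\circ\tau_\mu)^{*}\omega_{L,\mu}$ together with condition (ii) yields $\omega_{L,\mu}(V_\mu,Z)=0$, whence $V_\mu\in F_{L,\mu}^{\bot}$ as desired.
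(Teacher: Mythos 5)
Your proposal follows essentially the same route as the paper's proof: the same lift of $X_{\mu}$ to a $G_{\mu}$-invariant field on $Q$ to establish $\left( X_{\mu} \circ \tau_{\mu}\right)^{*}F^{o}_{L,\mu} = F^{o}_{L,\mu}$ via Eq. (\ref{132}) and the generators (\ref{aux34d}), the same reformulation of (ii) and chain of equalities for (i)$\Rightarrow$(ii), and the same vertical difference field $V_{\mu}$ driven into $T\mathcal{N}_{\mu} \cap F_{L,\mu}^{\bot} = \{0\}$ for the converse. You also correctly isolate the one genuinely new technical point — that the reduced analogue of Lemma \ref{omega-T} needs the $G$-invariance of $L$ in coordinates adapted to $\rho_{Q,\mu}$ because $\ker(T\tau_{\mu})$ contains extra group directions — which is exactly how the paper handles Eqs. (\ref{138})--(\ref{141}).
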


Let us highlight a particular case. Suppose that
\begin{equation}\label{144}
S \left( TN \right) \subseteq TN.
\end{equation}
Then, $F_{L} = TN$. Therefore the following identity
$$\mathcal{V}_{N} \cap \mathcal{H} = \mathcal{V}_{N},$$
is trivially satisfied, i.e., this case is an horizontal symmetries case. Notice that Eq. (\ref{144}) can be interpreted as a compatibility condition between the constraints and the vertical endomorphism. It is easy to check that $F_{L} = TN$ implies Eq. (\ref{144}).\\

\begin{definition}
A section $X_{\mu}$ of $\tau_{\mu}$ satisfying conditions of the Theorem \ref{Hor234sd} will be called $\mu-$solution for the reduced constrained Hamilton-Jacobi problem given by the constraint manifold $N \subseteq TQ$, the regular Lagrangian $L$ and the action of the group $G$ on $Q$.
\end{definition}

It is remarkable that, as we have noticed in the proof of the theorem, there is a one-to-one correspondence between $G_{\mu}-$invariant vector fields $X$ on $Q$ such that $X \left( Q \right) \subseteq J^{-1}_{L} \left( \mu \right)$ and sections $X_{\mu}$ of $\tau_{\mu}$. This correspondence will be denoted by
$$X \mapsto X_{\mu}.$$
The proof of this correspondence is analagous to the correspondence $X \mapsto \overline{X}$ (see (\ref{reconstrucion23}). Thus, it works as a reconstruction process from $\mu-$solutions for the reduced constrained Hamilton-Jacobi problem to solutions for the constrained Hamilton-Jacobi problem.
\begin{proposition}
Let $X$ be a $G_{\mu}-$invariant vector field on $Q$ such that $X \left( Q \right) \subseteq J^{-1}_{L} \left( \mu \right)$. Then, $X$ is a solution for the constrained Hamilton-Jacobi problem if, and only if, $X_{\mu}$ is a $\mu-$solution for the reduced constrained Hamilton-Jacobi problem.
\end{proposition}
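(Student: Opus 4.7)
The plan is to show that each of the three defining ingredients for a solution---the domain containment, the ideal pullback condition, and the equivalent ``related-ness'' condition---transfers bijectively across the Marsden--Weinstein projection $\rho_{L,\mu}:J_L^{-1}(\mu)\to\mathcal{P}_\mu$ once one has the vector-field correspondence $X\mapsto X_\mu$ characterized by $X_\mu\circ\rho_{Q,\mu}=\rho_{L,\mu}\circ X$ together with the hypothesis $X(Q)\subseteq J_L^{-1}(\mu)$.

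First, I would handle the domain condition. Since $X(Q)\subseteq J_L^{-1}(\mu)$, the containment $X(Q)\subseteq N$ is equivalent to $X(Q)\subseteq N'=N\cap J_L^{-1}(\mu)$; applying $\rho_{L,\mu}$ and using the intertwining relation above gives $X_\mu(Q_\mu)=\rho_{L,\mu}(X(Q))\subseteq \rho_{L,\mu}(N')=\mathcal{N}_\mu$. The converse uses that $X$ is $G_\mu$-invariant, so the fiber of $\rho_{L,\mu}$ over $X_\mu(Q_\mu)\subseteq\mathcal{N}_\mu$ meeting the image of $X$ is forced to lie in $N'$.

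Second, I would transfer the ideal pullback condition. The key computations are already available: $\rho_{L,\mu}^*\omega_{L,\mu}=i_{L,\mu}^*\omega_L$ and, as was established inside the proof of Theorem \ref{Hor234sd}, $\rho_{L,\mu}^*\bigl(j_{L,\mu}^*S_\mu^*\overline{d\psi^a}\bigr)=i_{L,\mu}^*S^*d\psi^a$. Since $X\circ\tau_Q$ maps into $J_L^{-1}(\mu)$, we may factor $X\circ\tau_Q\circ i_{L,\mu}=i_{L,\mu}\circ(X\circ\tau_Q\circ i_{L,\mu})$, and then pull back both sides of the identity $(X\circ\tau_Q)^*\omega_L=\beta_a\wedge S^*d\psi^a$ by $i_{L,\mu}$. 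The $G_\mu$-invariance of $\beta_a$ inherited from $X$ together with the injectivity of $\rho_{L,\mu}^*$ on $G_\mu$-basic forms on $J_L^{-1}(\mu)$ then forces $(X_\mu\circ\tau_\mu)^*\omega_{L,\mu}\in\mathcal{I}(F_{L,\mu}^o)$, and the argument is reversible.

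Third, for the equivalent condition I would use $(ii)$ of both theorems. By $G_\mu$-invariance of $E_L$, one has $E_L\circ i_{L,\mu}=(E_L)_\mu\circ\rho_{L,\mu}$, and combined with the intertwining of $X$ and $X_\mu$ this yields
\begin{equation*}
i_{L,\mu}^*\,d\bigl(E_L\circ X\circ\tau_Q\bigr)=\rho_{L,\mu}^*\,d\bigl((E_L)_\mu\circ X_\mu\circ\tau_\mu\bigr).
\end{equation*}
Using Eq. (\ref{130}) to identify $\rho_{L,\mu}^*F_{L,\mu}^o$ with the restriction of $F_L^o$ to $J_L^{-1}(\mu)$, the annihilator condition $d(E_L\circ X\circ\tau_Q)\in F_L^o$ becomes exactly $d((E_L)_\mu\circ X_\mu\circ\tau_\mu)\in F_{L,\mu}^o$, completing the equivalence via Theorems \ref{FirstThofred} and \ref{Hor234sd}.

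The main obstacle I anticipate is the second step: one must verify that pulling the ideal decomposition back by $i_{L,\mu}$ and then pushing down by $\rho_{L,\mu}$ does not lose information, i.e.\ that the $\beta_a$ coefficients can be chosen $G_\mu$-invariant and in the image of $\rho_{L,\mu}^*$. This is where the hypothesis $X(Q)\subseteq J_L^{-1}(\mu)$ together with $G_\mu$-invariance of $X$ is used in an essential way, exactly paralleling the reconstruction argument for the $X\mapsto\overline{X}$ correspondence of Eq. (\ref{reconstrucion23}); once this invariance is exploited, the remaining verifications are formal manipulations with the commutative diagram relating $i_{L,\mu}$, $\rho_{L,\mu}$, $\tau_\mu$ and $\tau_Q$.
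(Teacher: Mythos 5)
Your proposal is correct and follows essentially the same route as the paper: the paper states this proposition as an immediate consequence of the one-to-one correspondence $X \mapsto X_{\mu}$ and of the identities already established in the proof of Theorem \ref{Hor234sd} (in particular $\rho_{L,\mu}^{*}\omega_{L,\mu}=i_{L,\mu}^{*}\omega_{L}$, the invariance of the generators $j_{L,\mu}^{*}S_{\mu}^{*}(\overline{d\psi^{a}})$, and the characterization $\rho_{L,\mu}^{*}F_{L,\mu}^{o}=i_{L,\mu}^{*}F_{L}^{o}$), exactly the three transfers you carry out. You in fact supply more detail than the paper, which leaves the verification implicit after remarking that the correspondence is analogous to $X\mapsto\overline{X}$ of Eq. (\ref{reconstrucion23}); your only slips are cosmetic (the citation of Eq. (\ref{130}) where the relevant identity is $\rho_{L,\mu}^{*}F_{L,\mu}^{o}=i_{L,\mu}^{*}F_{L}^{o}$, and a typo in the factorization $X\circ\tau_{Q}\circ i_{L,\mu}=i_{L,\mu}\circ(X\circ\tau_{Q}\circ i_{L,\mu})$, which is nonetheless the intended and correct statement).
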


Let us now study analogous consequences to Corollary \ref{funnycorollary123} and Theorem \ref{AppChowtheore,}.
\begin{corollary}\label{funnycorollary123Horizontalcase}
Let $X_{\mu}$ be a section of $\tau_{\mu}$ in the conditions of Theorem \ref{Hor234sd} such that $T X_{\mu} \left( TQ_{\mu} \right) \subseteq F_{L,\mu}$. Then, the following conditions are equivalent:

\begin{itemize}
\item[(i)] $\left(\Gamma_{L,N} \right)_{\mu}^{X}$ and $\left(\Gamma_{L,N} \right)_{\mu}$ are $X_{\mu}-$related.

\item[(ii)] $d \left( \left(E_{L}\right)_{\mu} \circ  X_{\mu} \right) =0.$
\end{itemize} 
\begin{proof}
Notice that, because of the existence of the section $X_{\mu}$, $\tau_{\mu}$ is a submersion. Then, the $1-$form $d \left( \left(E_{L}\right)_{\mu} \circ  X_{\mu} \right)$ is zero if, and only if,
$$d \left( \left(E_{L}\right)_{\mu} \circ \left( X_{\mu} \circ \tau_{\mu} \right)\right) =\tau_{\mu}^{*} \left[ d \left( \left(E_{L}\right)_{\mu} \circ  X_{\mu} \right)  \right] = 0.$$
Finally, as an immediate consequence of the condition $T X_{\mu} \left( TQ_{\mu}\right) \subseteq F_{L,\mu}$, it is just an exercise to check that the only way in which the $1-$form \linebreak $d \left( \left(E_{L}\right)_{\mu} \circ \left( X_{\mu} \circ \tau_{\mu} \right)\right)$ is in the annihilator $F_{L,\mu}^{o}$ is that it is cancelled. 
\end{proof}
\end{corollary}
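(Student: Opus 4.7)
The plan is to reduce the corollary to Theorem \ref{Hor234sd} and mimic the argument used for the unreduced Corollary \ref{funnycorollary123}. The hypotheses of Theorem \ref{Hor234sd} are, by assumption, in force, so condition $(i)$ will immediately be equivalent to $d\bigl((E_L)_\mu \circ X_\mu \circ \tau_\mu\bigr) \in F^o_{L,\mu}$. Two further collapsing steps then complete the proof.

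First, since $X_\mu$ is a section of $\tau_\mu$, the projection $\tau_\mu$ is a (local) submersion, so $\tau_\mu^{*}$ is injective on $1$-forms; this yields the equivalence $d((E_L)_\mu \circ X_\mu) = 0 \iff d((E_L)_\mu \circ X_\mu \circ \tau_\mu) = 0$. It therefore suffices to argue that, under the extra assumption $TX_\mu(TQ_\mu) \subseteq F_{L,\mu}$, the codistribution membership $d((E_L)_\mu \circ X_\mu \circ \tau_\mu) \in F^o_{L,\mu}$ forces the form to actually vanish.

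For this second step I will imitate the discussion following Corollary \ref{funnycorollary123}. Recalling from Eq.~(\ref{133}) and Eq.~(\ref{aux34d}) that $F^o_{L,\mu}$ is locally generated by the forms $j_{L,\mu}^{*} S_\mu^{*}(\overline{d\psi^{a}})$, a coordinate pairing of each generator with vectors of the form $TX_\mu(\partial/\partial \overline{q}^{i})$ reduces to (the reduction of) $\partial \psi^{a}/\partial \dot q^{i}$ evaluated along $X_\mu$, exactly as in the ambient computation after Corollary \ref{funnycorollary123}. The hypothesis $TX_\mu(TQ_\mu) \subseteq F_{L,\mu}$ forces each such pairing to vanish, so $F^o_{L,\mu}$ is trivial along the image of $X_\mu$, and the annihilator condition collapses to the desired equality.

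As an alternative, one can invoke the one-to-one correspondence $X \leftrightarrow X_\mu$ between $G_\mu$-invariant vector fields on $Q$ valued in $J_L^{-1}(\mu)$ and sections of $\tau_\mu$ (stated immediately before the corollary), lift $X_\mu$ to a $G_\mu$-invariant $X$, apply Corollary \ref{funnycorollary123} to $X$, and project the conclusion down. The main obstacle in either route will be careful bookkeeping: one must verify that the pullbacks and projections commute with $S_\mu$, with $\mathbb{F}L$, and with the exterior derivative in the ways prescribed by diagram~(\ref{Anotherdiagrammoreforthecollection}) and Eq.~(\ref{132}), so that the cancellation of reaction forces descends from the ambient side to the reduced space $\mathcal{P}_\mu$.
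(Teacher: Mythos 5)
Your proposal is correct and follows essentially the same route as the paper: the paper's proof likewise uses that the existence of the section $X_{\mu}$ makes $\tau_{\mu}$ a submersion (so $d\left(\left(E_{L}\right)_{\mu}\circ X_{\mu}\right)=0$ iff $d\left(\left(E_{L}\right)_{\mu}\circ X_{\mu}\circ\tau_{\mu}\right)=0$), and then observes that $TX_{\mu}\left(TQ_{\mu}\right)\subseteq F_{L,\mu}$ forces the annihilator membership to collapse to equality, a step it leaves as an exercise and which you fill in by transporting the coordinate pairing from the discussion after Corollary \ref{funnycorollary123}. Your alternative route via the lift $X_{\mu}\mapsto X$ is not the one the paper takes, but the main argument matches.
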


We will study now an application of the Chow-Rashevskii's Theorem (see \ref{LordChowTheorem}) to the reduction of the Horizontal case.\\
Notice, once again, that $F_{L,\mu}$ is not a distribution defined on the whole space $\mathcal{P}_{\mu}$ (it is only defined on $\mathcal{N}_{\mu}$. Thus, we will solve it in an analogous way.\\
Thus, we may define the distribution $G_{L,\mu}$ on $\mathcal{P}_{\mu}$ by resticting and quotienting $G_{L}$ (see Section \ref{firstresut23}). Obviously, $G_{L,\mu}$ resticts to $F_{L,\mu}$ over $\mathcal{N}_{\mu}$.\\ 
Let us consider the family of (local) vector fields $\mathcal{S}_{\mathcal{N}_{\mu}}^{\sharp}$ tangents to $G_{L,\mu}$ whose domains have non-empty intersection with $\mathcal{N}_{\mu}$. Then, $F_{L,\mu}$ is the distribution generated by $\mathcal{S}_{\mathcal{N}_{\mu}}^{\sharp}$ along $\mathcal{N}_{\mu}$.\\
Thus, $F_{L,\mu}$ is called completely nonholonomic if the reiterated Lie bracket of vector fields in $\mathcal{S}_{\mathcal{N}_{\mu}}^{\sharp}$ generate $T_{\mathcal{N}_{\mu}} \mathcal{P}_{\mu}$. In this case we will say that \linebreak$\left[F_{L,\mu},F_{L,\mu}\right], \left[ F_{L,\mu}, \left[ F_{L,\mu},F_{L,\mu} \right] \right], \hdots$ spans the vector bundle $T_{\mathcal{N}_{\mu}}\mathcal{P}_{\mu}$.
Obviously, $G_{L,\mu}$ is completely nonholonomic if, and only if, $F_{L,\mu}$ is completely nonholonomic.\\
we are now ready to give a reduced version of Theorem \ref{AppChowtheore,}.
\begin{theorem}\label{AppChowtheore,Horizcase34}
Let $X_{\mu}$ be a section of $\tau_{\mu}$ in the conditions of Theorem \ref{Hor234sd} such that $F_{L,\mu}$ is completely nonholonomic. Then, the following conditions are equivalent:
\begin{itemize}
\item[(i)] $\left(\Gamma_{L,N} \right)_{\mu}^{X}$ and $\left(\Gamma_{L,N} \right)_{\mu}$ are $X_{\mu}-$related.

\item[(ii)] $d \left( \left(E_{L}\right)_{\mu} \circ  X_{\mu} \right) =0.$
\end{itemize}
\end{theorem}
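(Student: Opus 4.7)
The plan is to argue in parallel with the proof of Theorem \ref{AppChowtheore,}, using Theorem \ref{Hor234sd} as the main engine and Proposition \ref{Importantconsequuence2134} (the Chow--Rashevskii consequence) as the upgrade that turns membership in an annihilator into the vanishing of the differential. The strategy has two levers: first, convert condition $(i)$ into the statement $d((E_L)_\mu \circ (X_\mu\circ\tau_\mu)) \in F^o_{L,\mu}$ via Theorem \ref{Hor234sd}; second, use the complete nonholonomicity of $F_{L,\mu}$ to force this one-form to be identically zero, which in turn is equivalent to condition $(ii)$.

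First, I would apply Theorem \ref{Hor234sd} directly: since $X_\mu$ satisfies $X_\mu(Q_\mu)\subseteq \mathcal{N}_\mu$ and $(X_\mu\circ\tau_\mu)^*\omega_{L,\mu}\in\mathcal{I}(F^o_{L,\mu})$, condition $(i)$ is equivalent to
\begin{equation}
d\bigl((E_L)_\mu \circ (X_\mu \circ \tau_\mu)\bigr) \in F^o_{L,\mu}.
\end{equation}
Next I would lift the discussion from the singular codistribution $F^o_{L,\mu}$, which is only defined along $\mathcal{N}_\mu$, to the codistribution $G^o_{L,\mu}$ defined on all of $\mathcal{P}_\mu$ (constructed just before the theorem exactly as in the unreduced case of Section \ref{firstresut23}). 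By the defining property of $G_{L,\mu}$ and the fact that its restriction to $\mathcal{N}_\mu$ coincides with $F_{L,\mu}$, one has $d((E_L)_\mu \circ (X_\mu\circ\tau_\mu))\in F^o_{L,\mu}$ if and only if the same one-form lies in $G^o_{L,\mu}$.

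The crucial step is now to invoke Proposition \ref{Importantconsequuence2134}: since $F_{L,\mu}$ is completely nonholonomic if and only if $G_{L,\mu}$ is, and $G_{L,\mu}$ is a genuine (possibly singular) distribution on the whole of $\mathcal{P}_\mu$, Proposition \ref{Importantconsequuence2134} (in its singular version, cf.\ \cite{Harms}) says that no nonzero exact one-form belongs to $G^o_{L,\mu}$. Hence
\begin{equation}
d\bigl((E_L)_\mu \circ (X_\mu \circ \tau_\mu)\bigr) \in G^o_{L,\mu} \iff d\bigl((E_L)_\mu \circ (X_\mu \circ \tau_\mu)\bigr)=0.
\end{equation}
Finally, since $X_\mu$ is a section of $\tau_\mu$, the projection $\tau_\mu$ is a submersion on the relevant open set, so $d((E_L)_\mu\circ X_\mu\circ\tau_\mu)=\tau_\mu^*\,d((E_L)_\mu\circ X_\mu)$ vanishes if and only if $d((E_L)_\mu\circ X_\mu)=0$, which is condition $(ii)$.

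The chain of equivalences $(i) \Leftrightarrow$ membership in $F^o_{L,\mu} \Leftrightarrow$ membership in $G^o_{L,\mu} \Leftrightarrow$ vanishing $\Leftrightarrow (ii)$ closes the argument. I expect the only delicate point to be the passage through the possibly singular distribution $G_{L,\mu}$: one must make sure that the singular version of Chow--Rashevskii (and its corollary about exact forms in the annihilator) indeed transfers to $\mathcal{P}_\mu$ verbatim. This is exactly the technical subtlety that Proposition \ref{Relation23} and the construction preceding Theorem \ref{AppChowtheore,} already handle in the unreduced setting, and the same reasoning applies after $G_\mu$-reduction because the whole construction of $G_{L,\mu}$ from $G_L$ is compatible with the restriction-and-quotient procedure, as the passage from $F_L$ to $F_{L,\mu}$ in Section \ref{horsymcase} already shows.
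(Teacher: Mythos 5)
Your proposal is correct and follows essentially the same route as the paper: the paper's own (very terse) proof likewise invokes the complete nonholonomicity of $G_{L,\mu}$ together with Proposition \ref{Importantconsequuence2134} to convert membership of $d\left( \left(E_{L}\right)_{\mu} \circ X_{\mu} \right)$ in the annihilator into its vanishing, with the reduction to Theorem \ref{Hor234sd} and the submersion argument via $\tau_{\mu}$ left implicit. You have simply spelled out the intermediate steps that the paper omits.
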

\begin{proof}
Due to the fact that $G_{L,\mu}$ is completely nonholonomic we can use Proposition \ref{Importantconsequuence2134} to have that $d  \left( \left(E_{L}\right)_{\mu} \circ  X_{\mu} \right) \in G_{L,\mu}$ if, and only if, \linebreak $d  \left( \left(E_{L}\right)_{\mu} \circ  X_{\mu} \right)=0$.
\end{proof}

Finally, following a similar development to the General case, we will study the relation between the distributions $F^{'}_{L}$ and $F_{L,\mu}$.\\

\begin{proposition}
$F^{'}_{L}$ is completely nonholonomic if, and only if, $F_{L,\mu}$ is completely nonintegrable.
\end{proposition}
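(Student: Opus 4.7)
The argument will mirror, almost verbatim, the proof of Proposition \ref{Relation23}, with the role of the full group $G$ acting on $TQ$ replaced by the isotropy subgroup $G_{\mu}$ acting on $J_{L}^{-1}(\mu)$, and the projection $\rho_{TQ}$ replaced by $\rho_{L,\mu}:J_{L}^{-1}(\mu)\to \mathcal{P}_{\mu}$. First, I will record the structural ingredients needed: (i) since $F_{L}$ is $G$-invariant and $J_{L}^{-1}(\mu)$ is $G_{\mu}$-invariant, the intersection $F'_{L}=F_{L}\cap TJ_{L}^{-1}(\mu)$ is $G_{\mu}$-invariant, so that by construction $F_{L,\mu}=T\rho_{L,\mu}(F'_{L})$; (ii) exactly as in the proof of Proposition \ref{Relation23}, every local vector field on $\mathcal{P}_{\mu}$ tangent to $F_{L,\mu}$ admits a $G_{\mu}$-invariant local lift to $J_{L}^{-1}(\mu)$ tangent to $F'_{L}$, so the generating family $\mathcal{S}_{\mathcal{N}_{\mu}}^{\sharp}$ of $F_{L,\mu}$ may be chosen to consist of projections of such $G_{\mu}$-invariant lifts; and (iii) the projection $\rho_{L,\mu}$ intertwines Lie brackets of $G_{\mu}$-invariant vector fields with Lie brackets of their projected counterparts.

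With these ingredients in hand, the forward implication is immediate. Assume $F'_{L}$ is completely nonholonomic. Given any tangent vector in $T_{\mathcal{N}_{\mu}}\mathcal{P}_{\mu}$, choose a local $G_{\mu}$-invariant vector field on $J_{L}^{-1}(\mu)$ that projects onto a local representative of it; express this $G_{\mu}$-invariant vector field as an iterated Lie-bracket combination of local vector fields tangent to $F'_{L}$ (which, after averaging, can themselves be taken $G_{\mu}$-invariant); and push the expression down through $\rho_{L,\mu}$. Bracket-preservation turns this into an iterated bracket combination of vector fields tangent to $F_{L,\mu}$, which shows that $F_{L,\mu}$ is completely nonintegrable.

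For the converse, I would follow the second half of the proof of Proposition \ref{Relation23}. Fix $v_{q}\in N'$ and $V_{v_{q}}\in T_{v_{q}}J_{L}^{-1}(\mu)$, choose a local $G_{\mu}$-invariant vector field $X$ with $X(v_{q})=V_{v_{q}}$, and project it to $\mathcal{P}_{\mu}$. Complete nonintegrability of $F_{L,\mu}$ writes the projection as an iterated Lie bracket combination of local vector fields tangent to $F_{L,\mu}$, with coefficients in $\mathcal{C}^{\infty}(\mathcal{P}_{\mu})$; lift each generator to a $G_{\mu}$-invariant vector field tangent to $F'_{L}$ and each coefficient to a $G_{\mu}$-invariant function on $J_{L}^{-1}(\mu)$. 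The point of the lift of coefficients is precisely the technical identity used in Proposition \ref{Relation23}: for a $G_{\mu}$-invariant vector field one has $X(v_{q})(f\circ \phi_{g})=X(v_{q})(f)$ for every $g\in G_{\mu}$, so the value at $v_{q}$ of a bracket $[X,fY]$ depends only on the $G_{\mu}$-invariant part of $f$. Bracket preservation then shows that $V_{v_{q}}$ is itself an iterated bracket combination of vector fields tangent to $F'_{L}$, giving complete nonholonomy of $F'_{L}$.

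The main obstacle is this last step. Passing from an iterated bracket expression downstairs to one upstairs is formal once the vector fields are $G_{\mu}$-invariant, but when the downstairs expression involves $\mathcal{C}^{\infty}(\mathcal{P}_{\mu})$-coefficients, some care is required: arbitrary lifts of the coefficients need not be $G_{\mu}$-invariant, and brackets are not $\mathcal{C}^{\infty}$-linear in the second slot. Replacing the lifted coefficients by their $G_{\mu}$-averages (or, equivalently, by their pullbacks along a local section of $\rho_{L,\mu}$) gives $G_{\mu}$-invariant functions agreeing with the original coefficients on the image of $\rho_{L,\mu}$, and the invariance identity quoted above guarantees that the value of the resulting bracket at $v_{q}$ is unchanged. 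Once this is justified, the rest of the proof is a direct transcription of Proposition \ref{Relation23}.
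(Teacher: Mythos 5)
Your proposal is correct and is essentially the paper's own argument: the paper proves this proposition by simply replacing the group $G$ with the isotropy group $G_{\mu}$ (and $\rho_{TQ}$ with $\rho_{L,\mu}$) throughout the proof of Proposition \ref{Relation23}, which is exactly the transcription you carry out, including the use of the invariance identity $X(v_{q})(f\circ\phi_{g})=X(v_{q})(f)$ to handle the smooth coefficients in the iterated brackets.
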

\begin{proof}
To prove this result we only have to change the group $G$ by the the isotropy group $G_{\mu}$ in the proof of Proposition \ref{Relation23}.
\end{proof}

Note that condition of being completely nonintegrable over $F^{'}_{L}$ is not, generally, related with the same condition over $F_{L}$. This is a consequence of that, in general, the momentum map $J_{L}$ does not have relation with the vertical endomorphism $S$ and the constraint manifold $N$.

Now, let us study the Hamiltonian case via the Legrendre transformation $\mathbb{F}L$. Recall that $\mathbb{F}L \left( N \right) = M$, $H = E_{L} \circ \mathbb{F}L^{-1}$, $ \omega_{Q}={\mathbb{F}L^{-1}}^{*} \omega_{L}$ and $\Gamma_{H,M} = {\mathbb{F}L}_{*} \Gamma_{L,N} $.\\
Next, consider $J_{H}: T^{*}Q \rightarrow \mathfrak{g}^{*}$ the classical momentum map of the contangent bundle. Then, $J_{H} = J_{L} \circ \mathbb{F}L^{-1}$.\\
\noindent
Again, we have a reduced sympletic manifold
$$ \left( \mathcal{Q}_{\mu} = J_{H}^{-1} \left( \mu \right) / G_{\mu}, \omega_{Q,\mu} \right),$$
where $G_{\mu}$ is the isotropy group of $\mu$ for the co-adjoint action and $\omega_{Q,\mu}$ is the unique symplectic $2-$form on $\mathcal{Q}_{\mu}$ such that
$$ \rho_{H,\mu}^{*} \omega_{Q,\mu} = i_{H,\mu}^{*} \omega_{Q},$$
with $\rho_{H,\mu} : J_{H}^{-1} \left( \mu \right) \rightarrow \mathcal{Q}_{\mu}$ is the canonical projection and $i_{H,\mu}: J_{H}^{-1}\left( \mu \right) \hookrightarrow T^{*}Q$ is the inclusion map.\\
Notice that $\mathbb{F}L$ can be restricted to a map from $J_{L}^{-1} \left( \mu \right)$ to $J_{H}^{-1} \left( \mu \right)$. In fact, for each vector $v_{q} \in J_{L}^{-1} \left( \mu \right)$ we have
$$J_{H} \left( \mathbb{F}L \left( v_{q} \right) \right)= J_{L} \left( v_{q} \right) = \mu.$$
So, quotienting the restiction, we get a map $\mathbb{F}L_{\mu}$ from $\mathcal{Q}_{\mu}$ to $\mathcal{P}_{\mu}$ characterized by the following identity
\begin{equation}\label{147}
\mathbb{F}L_{\mu} \circ \rho_{L,\mu} = \rho_{H,\mu} \circ \mathbb{F}L_{|J_{L}^{-1} \left( \mu \right)}.
\end{equation}
Denote the intersection of $M$ and $J_{H}^{-1} \left( \mu \right)$ by $M'$. $M^{'}$ is obviously $G_{\mu}-$invariant and, hence, it projects onto a submanifold $\mathcal{M}_{\mu}$ of $\mathcal{Q}_{\mu}$.\\
Notice that $M' = \mathbb{F}L \left( N' \right)$ and, hence,
$$ \mathcal{M}_{\mu} = \mathbb{F}L_{\mu} \left( \mathcal{N}_{\mu} \right).$$


Let us again fix some notations: Let $\pi_{Q}$ be the cotangent bundle of $Q$,
\begin{itemize}
\item $\pi_{Q,\mu} : \left( T^{*}Q \right)_{\mu} \rightarrow Q_{\mu}= Q / G_{\mu}$ is the projection of $\pi_{Q}$.
\item $j_{H,\mu} \circ \pi_{Q,\mu} = \pi_{\mu} : \mathcal{Q}_{\mu} \rightarrow Q_{\mu}$, where $j_{H,\mu} : \mathcal{Q}_{\mu} \hookrightarrow \left( T^{*}Q \right)_{\mu}$ is the inclusion map.
\end{itemize}

For any section $\sigma_{\mu}$ of $\pi_{\mu}$ such that $\sigma_{\mu} \left( Q_{\mu} \right) \subseteq \mathcal{M}_{\mu}$ induces a vector field $\left( \Gamma_{H,M} \right)_{\mu}^{\sigma}$ on $Q_{\mu}$ by projecting $\left( \Gamma_{H,M} \right)_{\mu}$ following the next diagram,
\begin{equation} \label{hattau4}
  \xymatrix{ \mathcal{M}_{\mu} \ar"1,4"^{\left( \Gamma_{H,M} \right)_{\mu}}&   & &T \mathcal{M}_{\mu}  \ar[dd]^{T\left(\pi_\mu\vert_{\mathcal{M}_{\mu}}\right)}\\
  &  & &\\
 Q_{\mu} \ar"3,4"^{\left( \Gamma_{H,M} \right)_{\mu}^{\sigma}} \ar"1,1"^{\sigma_{\mu}}&  & & TQ_{\mu} }
\end{equation}

\begin{theorem}\label{Th1}
Let $\sigma_{\mu}$ be a section of $\pi_{\mu}$ such that $\sigma_{\mu} \left( Q_{\mu} \right) \subseteq \mathcal{M}_{\mu}$ and $\left(\sigma_{\mu} \circ \tau_{\mu}\right)^{*} \omega_{Q,\mu} \in \mathcal{I} \left(F^{o}_{L,\mu}\right)$. Then, the following conditions are equivalent:

\begin{itemize}
\item[(i)] $\left(\Gamma_{H,M} \right)_{\mu}^{\sigma}$ and $\left(\Gamma_{H,M} \right)_{\mu}$ are $\sigma_{\mu}-$related.

\item[(ii)] $d \left( H_{\mu} \circ \left( \sigma_{\mu} \circ \tau_{\mu}\right) \right) \in F^{o}_{L,\mu}$
\end{itemize} 
\end{theorem}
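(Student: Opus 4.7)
The plan is to reduce Theorem \ref{Th1} to its already-proved Lagrangian counterpart, Theorem \ref{Hor234sd}, by transporting everything through the reduced Legendre transformation $\mathbb{F}L_{\mu}:\mathcal{P}_{\mu}\to\mathcal{Q}_{\mu}$ characterized by Eq. (\ref{147}). Since $L$ is regular, $\mathbb{F}L$ is a local diffeomorphism; because it sends $J_{L}^{-1}(\mu)$ to $J_{H}^{-1}(\mu)$ and is $G_{\mu}$-equivariant, the quotient map $\mathbb{F}L_{\mu}$ is also a local diffeomorphism, and it restricts to a diffeomorphism from $\mathcal{N}_{\mu}$ onto $\mathcal{M}_{\mu}=\mathbb{F}L_{\mu}(\mathcal{N}_{\mu})$. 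From the relation $\tau_{Q}\circ\mathbb{F}L^{-1}=\pi_{Q}$ we obtain, after quotienting, $\pi_{\mu}\circ\mathbb{F}L_{\mu}=\tau_{\mu}$. Moreover, the canonical symplectic form, the energy and the nonholonomic vector field are intertwined as $\mathbb{F}L_{\mu}^{*}\omega_{Q,\mu}=\omega_{L,\mu}$, $H_{\mu}\circ\mathbb{F}L_{\mu}=(E_{L})_{\mu}$ and $(\Gamma_{H,M})_{\mu}=(\mathbb{F}L_{\mu})_{*}(\Gamma_{L,N})_{\mu}$; finally, because $F^{o}=(\mathbb{F}L^{-1})^{*}F^{o}_{L}$, pulling back by $\mathbb{F}L_{\mu}$ identifies the reduced annihilator codistributions, so that pulling back by $\mathbb{F}L_{\mu}$ also preserves the ideal $\mathcal{I}(F^{o}_{L,\mu})$.

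Given a section $\sigma_{\mu}$ of $\pi_{\mu}$ with $\sigma_{\mu}(Q_{\mu})\subseteq\mathcal{M}_{\mu}$, the first step is to set
\[
X_{\mu}:=\mathbb{F}L_{\mu}^{-1}\circ\sigma_{\mu}.
\]
By the identity $\pi_{\mu}\circ\mathbb{F}L_{\mu}=\tau_{\mu}$, $X_{\mu}$ is a section of $\tau_{\mu}$, and by construction $X_{\mu}(Q_{\mu})\subseteq\mathcal{N}_{\mu}$. Then, since $\mathbb{F}L_{\mu}\circ X_{\mu}=\sigma_{\mu}$ and $\pi_{\mu}\circ\mathbb{F}L_{\mu}=\tau_{\mu}$, one has
\[
(\sigma_{\mu}\circ\pi_{\mu})^{*}\omega_{Q,\mu}=(X_{\mu}\circ\tau_{\mu})^{*}\mathbb{F}L_{\mu}^{*}\omega_{Q,\mu}=(X_{\mu}\circ\tau_{\mu})^{*}\omega_{L,\mu},
\]
so the hypothesis on $\sigma_{\mu}$ is equivalent to the hypothesis of Theorem \ref{Hor234sd} for $X_{\mu}$. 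Using $H_{\mu}\circ\mathbb{F}L_{\mu}=(E_{L})_{\mu}$ in the same way, condition (ii) of Theorem \ref{Th1} is equivalent to condition (ii) of Theorem \ref{Hor234sd} for $X_{\mu}$.

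The second step is to translate condition (i). Because $(\mathbb{F}L_{\mu})_{*}(\Gamma_{L,N})_{\mu}=(\Gamma_{H,M})_{\mu}$ and $\mathbb{F}L_{\mu}\circ X_{\mu}=\sigma_{\mu}$, the definitions (\ref{hattau3}) and (\ref{hattau4}), together with $T\pi_{\mu}\circ T\mathbb{F}L_{\mu}=T\tau_{\mu}$, yield
\[
(\Gamma_{H,M})_{\mu}^{\sigma}=T\pi_{\mu}\circ(\Gamma_{H,M})_{\mu}\circ\sigma_{\mu}=T\tau_{\mu}\circ(\Gamma_{L,N})_{\mu}\circ X_{\mu}=(\Gamma_{L,N})_{\mu}^{X}.
\]
Thus $\sigma_{\mu}$-relatedness of $(\Gamma_{H,M})_{\mu}^{\sigma}$ and $(\Gamma_{H,M})_{\mu}$ is equivalent, via $\mathbb{F}L_{\mu}$, to $X_{\mu}$-relatedness of $(\Gamma_{L,N})_{\mu}^{X}$ and $(\Gamma_{L,N})_{\mu}$. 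Invoking Theorem \ref{Hor234sd} on $X_{\mu}$ then completes the proof.

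The main technical obstacle will be the bookkeeping in the first step: verifying carefully that $\mathbb{F}L_{\mu}$ really does intertwine the reduced codistributions and the ideals $\mathcal{I}(F^{o}_{L,\mu})$ on both sides. This amounts to running the argument used around Eqs. (\ref{130})--(\ref{133}) for the codistribution $F^{o}_{L,\mu}$, but now on the cotangent side, and showing that its image under $\mathbb{F}L_{\mu}^{*}$ coincides with the $F^{o}_{L,\mu}$ appearing in the statement of Theorem \ref{Th1}; essentially, this is the quotiented version of the computation $F^{o}=\langle\pi_{Q}^{*}\overline{\psi}^{a}\rangle$ carried out in subsection \ref{constHamSec}. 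Once this identification is in place, the rest of the argument reduces to pullback algebra and a direct appeal to Theorem \ref{Hor234sd}.
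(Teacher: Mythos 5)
Your proposal is correct and is essentially the route the paper intends: the paper constructs the reduced Legendre transformation $\mathbb{F}L_{\mu}$ (Eq.\ (\ref{147})) and the identifications $\mathcal{M}_{\mu}=\mathbb{F}L_{\mu}(\mathcal{N}_{\mu})$, $H_{\mu}\circ\mathbb{F}L_{\mu}=(E_{L})_{\mu}$, $\mathbb{F}L_{\mu}^{*}\omega_{Q,\mu}=\omega_{L,\mu}$ immediately before stating Theorem \ref{Th1} precisely so that it follows from Theorem \ref{Hor234sd} by the transport you describe, and it offers no further proof. The only slip is notational: in your first displayed equation the left-hand side should involve $\sigma_{\mu}\circ\tau_{\mu}$ (as in the theorem's hypothesis), not $\sigma_{\mu}\circ\pi_{\mu}$, so that both sides are forms on $\mathcal{P}_{\mu}$ and lie in the same ideal $\mathcal{I}\left(F^{o}_{L,\mu}\right)$.
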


We will now give a Hamiltonian version of Corollary \ref{funnycorollary123Horizontalcase} and Theorem \ref{AppChowtheore,Horizcase34}.

\begin{theorem}\label{mmmmmm}
Let $\sigma_{\mu}$ be a $1-$form on $Q$ such that $\sigma_{\mu} \left( \overline{Q} \right) \subset \overline{M}$ and $\left(\sigma_{\mu} \circ \overline{\tau}_{Q}\right)^{*} \overline{\omega}_{Q} \in \mathcal{I} \left(\overline{F}^{o}_{L}\right)$. Then, the following conditions are equivalent:

\begin{itemize}
\item[(i)] $\overline{\Gamma}_{H,M}^{\sigma_{\mu}}$ and $\overline{\Gamma}_{H,M} $ are $\sigma_{\mu}$-related.
\item[(ii)] $d \left( \overline{H} \circ \left( \sigma_{\mu} \circ \overline{\tau}_{Q}\right) \right) \in \overline{F}^{o}_{L}$
\end{itemize} 
\end{theorem}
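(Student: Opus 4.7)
The plan is to derive Theorem~\ref{mmmmmm} directly from its Lagrangian counterpart, Theorem~\ref{SecondThofred}, by transporting everything through the reduced Legendre transformation $\overline{\mathbb{F}L}:\overline{TQ}\to\overline{T^{*}Q}$. Since the Lagrangian $L$ is $G$-invariant and regular, $\overline{\mathbb{F}L}$ is a (local) symplectomorphism that intertwines the reduced nonholonomic systems, in the sense that $\overline{\Gamma}_{H,M}=\overline{\mathbb{F}L}_{*}\overline{\Gamma}_{L,N}$, $\overline{H}=\overline{E}_{L}\circ\overline{\mathbb{F}L}^{-1}$, $\overline{\omega}_{Q}=(\overline{\mathbb{F}L}^{-1})^{*}\overline{\omega}_{L}$, and $\overline{M}=\overline{\mathbb{F}L}(\overline{N})$. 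In particular, $\overline{\pi}_{Q}\circ\overline{\mathbb{F}L}=\overline{\tau}_{Q}$, which is the key commutation property we will exploit.

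The first step is to define $\overline{X}:=\overline{\mathbb{F}L}^{-1}\circ\sigma_{\mu}$, which is a section of $\overline{\tau}_{Q}$. Because $\sigma_{\mu}(\overline{Q})\subseteq\overline{M}$, this gives $\overline{X}(\overline{Q})\subseteq\overline{N}$. Next I translate the hypothesis: since $\sigma_{\mu}\circ\overline{\tau}_{Q}=\overline{\mathbb{F}L}\circ\overline{X}\circ\overline{\tau}_{Q}$, we have
\begin{equation}
(\sigma_{\mu}\circ\overline{\tau}_{Q})^{*}\overline{\omega}_{Q}=(\overline{X}\circ\overline{\tau}_{Q})^{*}\overline{\mathbb{F}L}^{*}\overline{\omega}_{Q}=(\overline{X}\circ\overline{\tau}_{Q})^{*}\overline{\omega}_{L},
\end{equation}
so the assumption $(\sigma_{\mu}\circ\overline{\tau}_{Q})^{*}\overline{\omega}_{Q}\in\mathcal{I}(\overline{F}_{L}^{o})$ is precisely the Lagrangian-side assumption of Theorem~\ref{SecondThofred} applied to $\overline{X}$.

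Then I translate the two alternatives. For condition (ii), the identity $\overline{H}\circ(\sigma_{\mu}\circ\overline{\tau}_{Q})=\overline{E}_{L}\circ(\overline{X}\circ\overline{\tau}_{Q})$ shows directly that $d(\overline{H}\circ(\sigma_{\mu}\circ\overline{\tau}_{Q}))\in\overline{F}_{L}^{o}$ if and only if $d(\overline{E}_{L}\circ(\overline{X}\circ\overline{\tau}_{Q}))\in\overline{F}_{L}^{o}$. For condition (i), using $T\overline{\pi}_{Q}\circ T\overline{\mathbb{F}L}=T\overline{\tau}_{Q}$ I compute
\begin{equation}
\overline{\Gamma}_{H,M}^{\sigma_{\mu}}=T\overline{\pi}_{Q}\circ\overline{\Gamma}_{H,M}\circ\sigma_{\mu}=T\overline{\pi}_{Q}\circ T\overline{\mathbb{F}L}\circ\overline{\Gamma}_{L,N}\circ\overline{X}=T\overline{\tau}_{Q}\circ\overline{\Gamma}_{L,N}\circ\overline{X}=\overline{\Gamma}_{L,N}^{\overline{X}}.
\end{equation}
Then, since $T\sigma_{\mu}=T\overline{\mathbb{F}L}\circ T\overline{X}$ and $T\overline{\mathbb{F}L}$ is injective on each fibre, the $\sigma_{\mu}$-relatedness of $\overline{\Gamma}_{H,M}^{\sigma_{\mu}}$ and $\overline{\Gamma}_{H,M}$ is equivalent to the $\overline{X}$-relatedness of $\overline{\Gamma}_{L,N}^{\overline{X}}$ and $\overline{\Gamma}_{L,N}$.

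Having matched both hypotheses and both alternatives with the Lagrangian picture, a single invocation of Theorem~\ref{SecondThofred} yields the claimed equivalence. I do not anticipate a genuine obstacle here: the whole content is bookkeeping for the Legendre diffeomorphism, and the only subtle point is confirming that the quotient $\overline{\mathbb{F}L}$ is well-defined and commutes with $\overline{\pi}_{Q}$ and $\overline{\tau}_{Q}$, which follows from the $G$-invariance of $\mathbb{F}L$ together with $\pi_{Q}\circ\mathbb{F}L=\tau_{Q}$ passing to the quotient. The mild technicality worth spelling out in the write-up is that $\overline{\mathbb{F}L}$ need only be a local diffeomorphism; restricting to small enough open sets is harmless because all statements are pointwise in nature.
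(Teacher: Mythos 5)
Your proof is correct and is exactly the argument the paper leaves implicit: every Hamiltonian statement in this paper is obtained from its Lagrangian counterpart by transporting through the (reduced) Legendre transformation, and your bookkeeping with $\overline{X}=\overline{\mathbb{F}L}^{-1}\circ\sigma_{\mu}$, the identity $\overline{\pi}_{Q}\circ\overline{\mathbb{F}L}=\overline{\tau}_{Q}$, and the fibrewise injectivity of $T\overline{\mathbb{F}L}$ supplies precisely the steps the authors omit. Note only that, as printed, Theorem \ref{mmmmmm} is stated in the overline notation of the general case, so reducing to Theorem \ref{SecondThofred} is the right target; if the intended statement were the Marsden--Weinstein $\mu$-reduced version announced in the surrounding text, the identical argument would run with $\mathbb{F}L_{\mu}$ and Theorem \ref{Hor234sd} in place of $\overline{\mathbb{F}L}$ and Theorem \ref{SecondThofred}.
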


\begin{definition}
A section $\sigma_{\mu}$ of $\pi_{\mu}$ satisfying conditions of the Theorem \ref{mmmmmm} will be called $\mu-$solution for the reduced constrained Hamilton-Jacobi problem given by the constraint manifold $M \subseteq T^{*}Q$, the Hamiltonian $H$ and the action of the group $G$ on $Q$.
\end{definition}

One again, the space of $G_{\mu}-$invariant $1-$forms $\sigma$ on $Q$ such that $\sigma \left( Q \right) \subseteq J_{H}^{-1} \left( \mu \right)$ are in one-to-one correspondence with sections $\sigma_{\mu}$ of $\pi_{\mu}$
\begin{proposition}
Let $\sigma$ be a $G_{\mu}-$invariant $1-$form on $Q$ such that $\sigma \left( Q \right) \subseteq J_{H}^{-1} \left( \mu \right)$. Then, $\sigma$ is a solution for the constrained Hamilton-Jacobi problem if, and only if, $\sigma_{\mu}$ is a $\mu-$solution for the reduced constrained Hamilton-Jacobi problem.
\end{proposition}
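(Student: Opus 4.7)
The plan is to mirror the Lagrangian correspondence $X \mapsto X_\mu$ constructed in the proof of Theorem~\ref{Hor234sd} by building a one-to-one correspondence between $G_\mu$-invariant $1$-forms $\sigma$ on $Q$ with $\sigma(Q)\subseteq J_H^{-1}(\mu)$ and sections $\sigma_\mu$ of $\pi_\mu$ with $\sigma_\mu(Q_\mu)\subseteq \mathcal{M}_\mu$. Concretely, such a $\sigma$ factors through $J_H^{-1}(\mu)$, and its pointwise $G_\mu$-invariance yields a unique section $\sigma_\mu$ of $\pi_\mu$ characterized by
\[
\sigma_\mu \circ \rho_{Q,\mu} = \rho_{H,\mu}\circ \sigma,
\]
exactly as in the construction right after diagram~\eqref{sectionslocal441}, replacing $\rho_{TQ}$ with $\rho_{H,\mu}$ and using a principal connection on $\rho_{Q,\mu}$ to witness smoothness. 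The condition $\sigma_\mu(Q_\mu)\subseteq \mathcal{M}_\mu$ is equivalent to $\sigma(Q)\subseteq M\cap J_H^{-1}(\mu)=M'$, which in turn follows from $\sigma(Q)\subseteq M$ together with $\sigma(Q)\subseteq J_H^{-1}(\mu)$.

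Next, I would translate the two conditions of Theorem~\ref{dualtheo23} (upstairs) into those of Theorem~\ref{Th1} (downstairs). For the hypothesis $(\sigma\circ\pi_Q)^*\omega_Q\in \mathcal{I}(F^o)$, I pull back along $\rho_{H,\mu}$ and use the defining identity $\rho_{H,\mu}^*\omega_{Q,\mu}=i_{H,\mu}^*\omega_Q$ together with the analogue of the computation at Eqs.~\eqref{132}--\eqref{137} transported to $T^*Q$ via $\mathbb{F}L_\mu$ (cf.~\eqref{147}). This yields the identification $\rho_{H,\mu}^*F^o_{L,\mu}=i_{H,\mu}^*F^o$, so that the upstairs condition holds if and only if $(\sigma_\mu\circ\pi_\mu)^*\omega_{Q,\mu}\in\mathcal{I}(F^o_{L,\mu})$. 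For the Hamilton--Jacobi condition, the relation $H_\mu\circ\rho_{H,\mu}=H\circ i_{H,\mu}$ (descent of the $G_\mu$-invariant Hamiltonian to the Marsden--Weinstein quotient) gives, after pulling back by $\rho_{H,\mu}$ and using the same identification of codistributions, $d(H\circ\sigma\circ\pi_Q)\in F^o$ if and only if $d(H_\mu\circ\sigma_\mu\circ\pi_\mu)\in F^o_{L,\mu}$.

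Finally, once both hypotheses and both conclusions match under the correspondence, the statement follows by applying Theorem~\ref{dualtheo23} upstairs and Theorem~\ref{Th1} downstairs: $\sigma$ is a solution of the constrained Hamilton--Jacobi problem iff condition (ii) of Theorem~\ref{dualtheo23} holds, iff its translation, condition (ii) of Theorem~\ref{Th1}, holds for $\sigma_\mu$, iff $\sigma_\mu$ is a $\mu$-solution of the reduced constrained Hamilton--Jacobi problem. The $\sigma$-related/$\sigma_\mu$-related equivalence is a direct consequence of the naturality of $T\pi_Q$ and $T\pi_\mu$ under the projections $\rho_{Q,\mu}$ and $\rho_{H,\mu}$, together with the fact that $\Gamma_{H,M}$ descends to $(\Gamma_{H,M})_\mu$ on $\mathcal{M}_\mu$.

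The main obstacle I expect is the careful bookkeeping of the codistribution identification $\rho_{H,\mu}^*F^o_{L,\mu}=i_{H,\mu}^*F^o$; this is the Hamiltonian counterpart of the central computation~\eqref{137} in the proof of Theorem~\ref{Hor234sd}, and it relies essentially on the compatibility of the vertical endomorphism $S$ with the $G_\mu$-action, on the Legendre transform identity~\eqref{147}, and on the fact that $F^o=(\mathbb{F}L^{-1})^*F^o_L$. Everything else is a routine translation, as the rest of the argument is bookkeeping under the bijection $\sigma\leftrightarrow\sigma_\mu$.
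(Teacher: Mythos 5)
The proposal is correct and takes essentially the same route the paper intends: the paper states this proposition without an explicit proof, deferring to the one-to-one correspondence $\sigma \leftrightarrow \sigma_{\mu}$ and to the pullback identities (the analogues of Eq.~(\ref{137}) and $\rho_{L,\mu}^{*}F^{o}_{L,\mu}=i_{L,\mu}^{*}F^{o}_{L}$) already established in the proof of Theorem~\ref{Hor234sd} and exercised again in the subsequent linear-case corollary. Your write-up simply makes that implicit argument explicit, transported to the Hamiltonian side via $\mathbb{F}L_{\mu}$.
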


\begin{corollary}
Let $\sigma_{\mu}$ be a section of $\pi_{\mu}$ in the condition of Theorem \ref{mmmmmm} satisfying $T\sigma_{\mu} \left( TQ_{\mu}\right) \subseteq F_{\mu}$.
Then, the following conditions are equivalent:
\begin{itemize}
\item[(i)] $\left(\Gamma_{H,N} \right)_{\mu}^{\sigma}$ and $\left(\Gamma_{H,N} \right)_{\mu}$ are $\sigma_{\mu}$-related.
\item[(ii)] $d \left( H_{\mu} \circ \sigma_{\mu} \right) = 0$
\end{itemize}
\end{corollary}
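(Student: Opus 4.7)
The plan is to obtain this corollary as an immediate consequence of Theorem \ref{mmmmmm}, mirroring exactly the argument of Corollary \ref{funnycorollary123Horizontalcase} on the Hamiltonian side. By Theorem \ref{mmmmmm}, condition $(i)$ is equivalent to the membership
\[ d\!\left( H_{\mu} \circ \left( \sigma_{\mu} \circ \pi_{\mu}\right) \right) \in F^{o}_{\mu}, \]
where $F^{o}_{\mu}$ is the annihilator codistribution furnished by Theorem \ref{mmmmmm}. So it suffices to show that, under the additional hypothesis $T\sigma_{\mu}(TQ_{\mu}) \subseteq F_{\mu}$, this condition is equivalent to the simpler identity $d(H_{\mu}\circ\sigma_{\mu}) = 0$.

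First I would observe that $\pi_{\mu}$ is a submersion on its domain: since it admits the section $\sigma_{\mu}$, the identity $\pi_{\mu} \circ \sigma_{\mu} = \mathrm{id}_{Q_{\mu}}$ gives $T\pi_{\mu} \circ T\sigma_{\mu} = \mathrm{id}_{TQ_{\mu}}$. Functoriality of $d$ then yields
\[ d\!\left( H_{\mu} \circ \sigma_{\mu} \circ \pi_{\mu}\right) = \pi_{\mu}^{*}\, d\!\left( H_{\mu} \circ \sigma_{\mu}\right). \]
The implication $(ii) \Rightarrow$ (Theorem \ref{mmmmmm}(ii)) is immediate: if the right-hand side vanishes, so does the left-hand side, hence it lies trivially in $F^{o}_{\mu}$.

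For the converse, I would pair the pullback one-form above against tangent vectors in $F_{\mu}$ produced by the hypothesis. Given an arbitrary $v \in T_{\bar{q}}Q_{\mu}$, set $W := T_{\bar{q}}\sigma_{\mu}(v)$. By assumption $W \in F_{\mu}$ at $\sigma_{\mu}(\bar{q})$, and $T\pi_{\mu}(W) = v$ by the section identity. Therefore
\[ 0 = \bigl(\pi_{\mu}^{*}\, d(H_{\mu}\circ\sigma_{\mu})\bigr)(W) = d(H_{\mu}\circ\sigma_{\mu})(v), \]
and since $v$ was arbitrary, $d(H_{\mu}\circ\sigma_{\mu}) = 0$, which is condition $(ii)$.

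I do not anticipate any serious obstacle: the whole argument is formal and rests only on (a) the submersion property of $\pi_{\mu}$, which is automatic from the existence of $\sigma_{\mu}$, and (b) the hypothesis $T\sigma_{\mu}(TQ_{\mu}) \subseteq F_{\mu}$, which is precisely tailored so that the test vectors $W = T\sigma_{\mu}(v)$ used to annihilate the pullback one-form are admissible elements of $F_{\mu}$. The mildest point to verify is that $F_{\mu}$ is indeed the image under the Legendre isomorphism $\overline{\mathbb{F}L}_{\mu}$ of the distribution $F_{L,\mu}$ used in Corollary \ref{funnycorollary123Horizontalcase}, which is a direct consequence of the construction of the Hamiltonian reduction at the start of the subsection.
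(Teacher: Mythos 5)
Your argument is correct and follows exactly the route the paper takes for the Lagrangian analogue (Corollary \ref{funnycorollary123Horizontalcase}), whose proof the paper transfers to this Hamiltonian statement without further comment: the projection is a submersion because it admits the section, so the pullback identity reduces Theorem \ref{mmmmmm}(ii) to $d(H_{\mu}\circ\sigma_{\mu})\in$ the annihilator, and the hypothesis $T\sigma_{\mu}(TQ_{\mu})\subseteq F_{\mu}$ forces that form to vanish. You merely spell out, via the test vectors $W=T\sigma_{\mu}(v)$, the step the paper dismisses as ``just an exercise,'' and you correctly flag the only point needing verification, namely that $F_{\mu}$ is the Legendre image of $F_{L,\mu}$ so that the annihilator in Theorem \ref{mmmmmm} is indeed $F_{\mu}^{o}$.
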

\noindent
Here, $F_{\mu}$ is the transformed distribution of $F_{L}$ via $\overline{\mathbb{F}L}$.\\
Finally, we can give a Hamiltonian version of the Theorem \ref{mmmmmm}.

\begin{theorem}\label{AppChowtheoreprojected22dualversionnHor}
Let $\sigma_{\mu}$ be a section of $\pi_{\mu}$ in the condition of Theorem \ref{mmmmmm} such that $F_{L,\mu}$ is completely nonholonomic. Then, the following conditions are equivalent:
\begin{itemize}
\item[(i)] $\left(\Gamma_{H,N} \right)_{\mu}^{\sigma}$ and $\left(\Gamma_{H,N} \right)_{\mu}$ are $\sigma_{\mu}$-related.
\item[(ii)] $d \left( H_{\mu} \circ \sigma_{\mu} \right) = 0$
\end{itemize} 
\end{theorem}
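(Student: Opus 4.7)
The plan is to mirror, on the Hamiltonian side, the argument that was used to pass from Theorem \ref{Hor234sd} to Theorem \ref{AppChowtheore,Horizcase34} on the Lagrangian side. The heavy lifting has already been done in Theorem \ref{mmmmmm}: it tells us that, under the hypothesis $(\sigma_\mu \circ \pi_\mu)^{*}\omega_{Q,\mu}\in \mathcal{I}(F^o_{L,\mu})$, condition (i) is equivalent to
$$d\bigl(H_\mu \circ (\sigma_\mu \circ \pi_\mu)\bigr)\in F^o_{L,\mu}.$$
What remains is to show that, once $F_{L,\mu}$ is assumed to be completely nonholonomic, this containment is equivalent to $d(H_\mu\circ \sigma_\mu)=0$; plugging this back into Theorem \ref{mmmmmm} then closes the loop.

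First I would use that $\pi_\mu$ is a submersion (it admits the section $\sigma_\mu$). This gives
$$d\bigl(H_\mu \circ (\sigma_\mu \circ \pi_\mu)\bigr)=\pi_\mu^{*}\,d(H_\mu \circ \sigma_\mu),$$
so that the pulled-back form is exact and it vanishes if and only if $d(H_\mu\circ \sigma_\mu)$ does. Next, to apply Proposition \ref{Importantconsequuence2134}, I would replicate the auxiliary construction used in Section \ref{firstresut23} and in Theorem \ref{AppChowtheore,Horizcase34}: extend $F_{L,\mu}$, which is only defined on $\mathcal{N}_\mu$, to a smooth distribution $G_{L,\mu}$ on all of $\mathcal{P}_\mu$ (transported to $\mathcal{Q}_\mu$ via the reduced Legendre map) whose restriction to $\mathcal{N}_\mu$ agrees with $F_{L,\mu}$ and which is completely nonholonomic precisely when $F_{L,\mu}$ is. Since $\pi_\mu^{*}d(H_\mu \circ \sigma_\mu)$ lies in $F^o_{L,\mu}$ along $\mathcal{N}_\mu$, it lies in the corresponding global annihilator $G^o_{L,\mu}$, and Proposition \ref{Importantconsequuence2134} (which holds equally for the possibly singular distribution $G_{L,\mu}$ by the remark following Chow--Rashevskii) forces the only exact one-form in that annihilator to be zero. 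Hence $\pi_\mu^{*}d(H_\mu \circ \sigma_\mu)=0$, and then $d(H_\mu \circ \sigma_\mu)=0$ by the submersion property.

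The main obstacle is bookkeeping between the various reduced spaces rather than a conceptual one: $F_{L,\mu}$ is defined on $\mathcal{N}_\mu\subseteq \mathcal{P}_\mu$, its Hamiltonian counterpart lives on $\mathcal{M}_\mu\subseteq \mathcal{Q}_\mu$, and one must ensure that extending to the globally defined $G_{L,\mu}$ (or its cotangent image) does not destroy complete nonholonomicity. Once this extension is in place and one has checked that pullback by a submersion preserves membership in the annihilator and the exact-form-equals-zero conclusion transfers back through $\pi_\mu^{*}$, the theorem follows by combining with Theorem \ref{mmmmmm}; in particular, no further use of the Legendre transformation or of the specific structure of the horizontal case is required beyond what is inherited from Theorem \ref{mmmmmm}.
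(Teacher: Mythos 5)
Your proposal is correct and follows exactly the strategy the paper intends: the paper states this theorem without proof, but its proof of the Lagrangian analogue (Theorem \ref{AppChowtheore,Horizcase34}) is precisely your argument — combine the relatedness criterion of Theorem \ref{mmmmmm} with the submersion property of $\pi_{\mu}$ and Proposition \ref{Importantconsequuence2134} applied to the completely nonholonomic extension $G_{L,\mu}$ to upgrade the containment in $F^{o}_{L,\mu}$ to vanishing. Your extra care about the bookkeeping between $\mathcal{P}_{\mu}$ and $\mathcal{Q}_{\mu}$ via the reduced Legendre map is a welcome clarification of a point the paper glosses over, but it does not change the route.
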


As a particular case, let us consider again the case in which $N_{\ell}$ is a vector subbundle of $TQ$ and the constraint functions $\psi^{a}$ are linear, i.e., $\psi^{a}$ induce $1-$forms $\overline{\psi}^{a}$ on $Q$ which generate $N_{\ell}^{o}$ (see \ref{linearconstraints23323}).\\

Observe that the following diagram is comutative (see (\ref{linearconstraints23323})),

\begin{equation} \label{hattau5}
  \xymatrix{ N_{\ell} \ar"1,4"^{F_{L}^{o}} \ar"3,1"^{\tau_{Q}}&   & &T^{*}\left( TQ \right)  \\
  &  & &\\
 Q \ar"3,4"^{N^{o}_{\ell}}&  & & T^{*}Q\ar"1,4"^{T^{*}\tau_{Q}} }
\end{equation}

i.e.,
$$\tau_{Q}^{*}\left( N_{\ell}^{o}\right) = F_{L}^{o}.$$

Let us prove, in this case, the Hamiltonian version of the corollary.

\begin{corollary}
Let $\sigma$ be a $G_{\mu}-$invariant $1-$form on $Q$ such that $\sigma \left( Q \right) \subseteq N_{\ell}$ and $d \sigma \in \mathcal{I} \left( N_{\ell}^{o} \right)$. Let $\sigma_{\mu}$ be the projection of $\sigma$ as a section of $\pi_{\mu}$. Then, the following conditions are equivalent:

\begin{itemize}
\item[(i)] $\left(\Gamma_{H,N} \right)_{\mu}^{\sigma}$ and $\left(\Gamma_{H,N} \right)_{\mu}$ are $\sigma_{\mu}-$related.

\item[(ii)] $d \left( H_{\mu} \circ \left( \sigma_{\mu} \circ \tau_{\mu}\right) \right) \in F^{o}_{L,\mu}$
\end{itemize} 
\begin{proof}
Let $\sigma$ be a $G_{\mu}-$invariant $1-$form on $Q$ such that $\sigma \left( Q \right) \subseteq N_{\ell}$ and $d \sigma \in \mathcal{I} \left( N_{\ell}^{o} \right)$. Condition $d \sigma \in \mathcal{I} \left( N_{\ell}^{o} \right)$ implies that
$$\tau_{Q}^{*}\left[d \sigma \right] = d \left( \tau_{Q}^{*}\sigma \right) = \left( \sigma \circ \tau_{Q} \right)^{*}\omega_{Q} \in \mathcal{I} \left( F_{L}^{o} \right).$$
This is a consequence of the commutativity of diagram (\ref{hattau5}).\\
Consider the projection $\sigma_{\mu}$ of $\sigma$ as a section of $\pi_{\mu}$. Then, with some slight abuse of notation we have

\begin{eqnarray*}
\rho_{L,\mu}^{*} \left[ \left( \sigma_{\mu} \circ \tau_{\mu} \right)^{*}\omega_{Q,\mu} \right] &=& \left(  \sigma_{\mu} \circ \tau_{\mu} \circ \rho_{L,\mu} \right)^{*}\omega_{Q,\mu}\\
&=& \left(  \sigma_{\mu} \circ \rho_{Q,\mu} \circ \tau_{Q|J_{L}^{-1}\left(\mu\right)} \right)^{*}\omega_{Q,\mu}\\
&=& \left( \rho_{L,\mu} \circ \sigma \circ \tau_{Q|J_{L}^{-1}\left(\mu\right)} \right)^{*}\omega_{Q,\mu}\\
&=& \left( \sigma \circ \tau_{Q|J_{L}^{-1}\left(\mu\right)} \right)^{*}\left[ \rho_{L,\mu}^{*}\omega_{Q,\mu}\right]\\
&=& \left( \sigma \circ \tau_{Q|J_{L}^{-1}\left(\mu\right)} \right)^{*}\left[ i_{L,\mu}^{*}\omega_{Q}\right]\\
&=& i_{L,\mu}^{*}\left[\left( \sigma \circ \tau_{Q} \right)^{*}\omega_{Q}\right] \in i_{L,\mu}^{*}\mathcal{I} \left( F_{L}^{o} \right),\\
\end{eqnarray*}
Notice that $F^{o}_{L,\mu}$ is characterized by te condition
$$ \rho_{L,\mu}^{*} F^{o}_{L,\mu} = i_{L, \mu}^{*}F^{o}_{L},$$
and, therefore, $\left( \sigma_{\mu} \circ \tau_{\mu} \right)^{*}\omega_{Q,\mu}  \in \mathcal{I}\left( F^{o}_{L,\mu}\right)$, i.e., we are in the conditions of Theorem \ref{Th1}.\\
\end{proof}
\end{corollary}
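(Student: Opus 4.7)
The plan is to reduce the statement to Theorem \ref{Th1}: once I verify that $\sigma_\mu$ satisfies the two hypotheses of that theorem, the equivalence between (i) and (ii) follows immediately. Theorem \ref{Th1} demands (a) $\sigma_\mu(Q_\mu)\subseteq\mathcal{M}_\mu$ and (b) $(\sigma_\mu\circ\tau_\mu)^{*}\omega_{Q,\mu}\in\mathcal{I}(F^{o}_{L,\mu})$, so these are the two ingredients to establish from the hypotheses $\sigma(Q)\subseteq N_\ell$ (read in the linear case, via $\mathbb{F}L$, as $\sigma(Q)\subseteq M$) and $d\sigma\in\mathcal{I}(N^{o}_\ell)$, together with the $G_\mu$-invariance of $\sigma$.

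For (a), the $G_\mu$-invariance of $\sigma$ combined with the fact that $\sigma$ lands in the constraint submanifold $M$ will imply that $\sigma(Q)\subseteq M\cap J_{H}^{-1}(\mu)=M'$. Passing to the quotient by $G_\mu$ sends $\sigma$ to the section $\sigma_\mu$ of $\pi_\mu$, and the image $\sigma_\mu(Q_\mu)$ lies in $\rho_{H,\mu}(M')=\mathcal{M}_\mu$, as required.

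For (b), I first lift the ideal condition from $Q$ to $TQ$ using the commutativity of diagram (\ref{hattau5}), which asserts $\tau_{Q}^{*}(N^{o}_\ell)=F^{o}_L$. Pulling back $d\sigma\in\mathcal{I}(N^{o}_\ell)$ by $\tau_Q$ and commuting pullback with the exterior derivative yields
\[
(\sigma\circ\tau_Q)^{*}\omega_{Q}=d(\tau_{Q}^{*}\sigma)=\tau_{Q}^{*}(d\sigma)\in\mathcal{I}(F^{o}_{L}).
\]
To descend this to the reduced manifold, I use the defining relations $\rho_{H,\mu}^{*}\omega_{Q,\mu}=i_{H,\mu}^{*}\omega_Q$, $\rho_{L,\mu}^{*}F^{o}_{L,\mu}=i_{L,\mu}^{*}F^{o}_{L}$, and the intertwining $\sigma_\mu\circ\rho_{Q,\mu}=\rho_{H,\mu}\circ\sigma_{|J_{H}^{-1}(\mu)}$. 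Computing $\rho_{L,\mu}^{*}[(\sigma_\mu\circ\tau_\mu)^{*}\omega_{Q,\mu}]$ via a diagram chase rewrites it as $i_{L,\mu}^{*}[(\sigma\circ\tau_Q)^{*}\omega_Q]$, which lies in $i_{L,\mu}^{*}\mathcal{I}(F^{o}_{L})=\rho_{L,\mu}^{*}\mathcal{I}(F^{o}_{L,\mu})$. Since $\rho_{L,\mu}$ is a surjective submersion, its pullback is injective on differential forms, so $(\sigma_\mu\circ\tau_\mu)^{*}\omega_{Q,\mu}\in\mathcal{I}(F^{o}_{L,\mu})$, establishing (b).

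The computation itself is essentially a diagram chase, and the main obstacle I expect is organisational rather than conceptual: there is a zoo of projections and inclusions ($\rho_{Q,\mu}$, $\rho_{L,\mu}$, $\rho_{H,\mu}$, $i_{L,\mu}$, $i_{H,\mu}$, $\mathbb{F}L_\mu$, $\tau_Q$, $\tau_\mu$, $\pi_\mu$), and one needs to check all the required commutativities, in particular the compatibility of the Lagrangian-side ideal with its Hamiltonian counterpart through the Legendre transform, and the injectivity of $\rho_{L,\mu}^{*}$ on the ideals (which reduces to $\rho_{L,\mu}$ being a submersion). Once the bookkeeping is under control, the equivalence (i)$\Leftrightarrow$(ii) is merely the content of Theorem \ref{Th1} applied to $\sigma_\mu$.
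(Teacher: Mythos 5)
Your proposal is correct and follows essentially the same route as the paper's proof: lift $d\sigma\in\mathcal{I}(N_\ell^{o})$ to $(\sigma\circ\tau_Q)^{*}\omega_Q\in\mathcal{I}(F_L^{o})$ via $\tau_Q^{*}(N_\ell^{o})=F_L^{o}$, descend through the same chain of intertwining identities to get $(\sigma_\mu\circ\tau_\mu)^{*}\omega_{Q,\mu}\in\mathcal{I}(F_{L,\mu}^{o})$ using $\rho_{L,\mu}^{*}F_{L,\mu}^{o}=i_{L,\mu}^{*}F_L^{o}$, and invoke Theorem \ref{Th1}. Your two small additions --- explicitly checking $\sigma_\mu(Q_\mu)\subseteq\mathcal{M}_\mu$ and justifying the last step by the injectivity of $\rho_{L,\mu}^{*}$ as a pullback along a surjective submersion --- are points the paper leaves implicit, and they tighten rather than alter the argument.
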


Let us study the condition $\left( ii\right)$, i.e.,
$$d \left( H_{\mu} \circ \left( \sigma_{\mu} \circ \tau_{\mu}\right) \right) \in F^{o}_{L,\mu},$$
which is equivalent to 
$$d \left( H_{\mu} \circ \left( \sigma_{\mu} \circ \tau_{\mu}\right) \circ \rho_{L,\mu} \right) \in i_{L, \mu}^{*} F^{o}_{L}.$$
Observe that

\begin{eqnarray*}
d \left( H_{\mu} \circ \left( \sigma_{\mu} \circ \tau_{\mu}\right) \circ \rho_{L,\mu} \right) &=& 
d \left( H_{\mu} \circ \sigma_{\mu} \circ \tau_{Q,\mu} \circ i_{L,\mu} \right)\\
&=& d \left( H_{\mu} \circ \sigma_{\mu} \circ \rho_{Q,\mu} \circ \tau_{Q |J_{L}^{-1}\left(\mu \right)} \right)\\
&=& d \left( H_{\mu} \circ \rho_{\mu} \circ \sigma \circ \tau_{Q |J_{L}^{-1}\left(\mu \right)} \right)\\
&=& d \left( H \circ \sigma \circ \tau_{Q |J_{L}^{-1}\left(\mu \right)} \right)\\
&=& \left( \tau_{Q} \circ i_{L,\mu}\right)^{*}\left[d \left( H \circ \sigma \right) \right]\\
&=& i_{L,\mu}^{*}\left[\tau_{Q}^{*}d \left( H \circ \sigma \right) \right]
\end{eqnarray*}

Therefore, $d \left( H \circ \sigma \right) \in N_{\ell}^{o}$ implies condition $\left( ii \right)$ of the above corollary.\\

Finally, we have the following corollary for the Lagrangian counterpart.
\begin{corollary}
Let $X$ be a $G_{\mu}-$invariant vector field on $Q$ such that $X \left( Q \right) \subseteq N$ and $d \left( \mathbb{F}L \circ X \right) \in \mathcal{I} \left( N_{\ell}^{o} \right)$. Let $X_{\mu}$ be the projection of $X$ as a section of $\tau_{\mu}$. Then, the following conditions are equivalent:

\begin{itemize}
\item[(i)]  $\left(\Gamma_{L,N} \right)_{\mu}^{X}$ ad $\left(\Gamma_{L,N} \right)_{\mu}$ are $X_{\mu}-$related.

\item[(ii)] $d \left( \left(E_{L}\right)_{\mu} \circ \left( X_{\mu} \circ \tau_{\mu}\right) \right) \in F^{o}_{L,\mu}$
\end{itemize} 
\end{corollary}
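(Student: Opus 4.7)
The plan is to verify the hypotheses of Theorem \ref{Hor234sd} for the section $X_{\mu}$ and then invoke that theorem directly. This mirrors the proof of the Hamiltonian corollary immediately preceding the statement, transferred to the Lagrangian side via the relation between the Poincar\'e--Cartan one-form and the Liouville form.

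First, I would translate the assumption $d(\mathbb{F}L\circ X)\in\mathcal{I}(N_{\ell}^{o})$ into an assertion about $\omega_{L}$. Using the identity $S^{*}(dL)=\mathbb{F}L^{*}\theta_{Q}$ (as employed in the last paragraph of Section \ref{firstresut23}), one computes
\begin{equation*}
(X\circ\tau_{Q})^{*}\omega_{L}= -\,d\bigl((X\circ\tau_{Q})^{*}S^{*}(dL)\bigr)=\tau_{Q}^{*}X^{*}\omega_{L}= -\tau_{Q}^{*}d(\mathbb{F}L\circ X).
\end{equation*}
Since $\tau_{Q}^{*}(N_{\ell}^{o})=F_{L}^{o}$ in the linear-constraint setting, the hypothesis gives $(X\circ\tau_{Q})^{*}\omega_{L}\in\mathcal{I}(F_{L}^{o})$. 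This exactly reproduces the reasoning used at the end of the paper to recover the linear-case Hamilton--Jacobi theorem from the general one.

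Next, I would push this condition down to the reduced space $\mathcal{P}_{\mu}$. Using that $X$ is $G_{\mu}$-invariant and that $X(Q)\subseteq N$ implies $X(Q)\subseteq J_{L}^{-1}(\mu)$ (the latter coming from the constraint $\psi^{a}$ being linear together with $X$'s values lying in $N$, which identifies it with a section taking values in the level set), the section $X_{\mu}$ of $\tau_{\mu}$ satisfying $X_{\mu}\circ\rho_{Q,\mu}=\rho_{L,\mu}\circ X$ is well defined. Then a chase entirely analogous to the calculation in the preceding Hamiltonian corollary, using $\rho_{L,\mu}^{*}\omega_{L,\mu}=i_{L,\mu}^{*}\omega_{L}$ and the characterization $\rho_{L,\mu}^{*}F_{L,\mu}^{o}=i_{L,\mu}^{*}F_{L}^{o}$, yields
\begin{equation*}
\rho_{L,\mu}^{*}\bigl[(X_{\mu}\circ\tau_{\mu})^{*}\omega_{L,\mu}\bigr]=i_{L,\mu}^{*}\bigl[(X\circ\tau_{Q})^{*}\omega_{L}\bigr]\in i_{L,\mu}^{*}\,\mathcal{I}(F_{L}^{o}),
\end{equation*}
which forces $(X_{\mu}\circ\tau_{\mu})^{*}\omega_{L,\mu}\in\mathcal{I}(F_{L,\mu}^{o})$. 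Thus the assumptions of Theorem \ref{Hor234sd} are met, and the equivalence of (i) and (ii) follows.

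The main technical obstacle I expect is the second step, namely pushing the ideal-membership condition through the Marsden--Weinstein quotient cleanly: one has to check carefully that the pullback of generators of $\mathcal{I}(F_{L,\mu}^{o})$ along $X_{\mu}\circ\tau_{\mu}$ matches the pullback along $X\circ\tau_{Q}$ of generators of $\mathcal{I}(F_{L}^{o})$, exploiting the invariance of $S$ and the commutativity $S_{\mu}\circ T\rho_{TQ,\mu}=T\rho_{TQ,\mu}\circ S$ from Eq.~(\ref{132}). Once this bookkeeping is in place, the conclusion is immediate from Theorem \ref{Hor234sd}, with no further analytic input.
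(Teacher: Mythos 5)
Your proposal is correct and matches the paper's intended argument: the paper states this corollary without proof, as the Lagrangian counterpart of the Hamiltonian corollary proved in detail immediately before it, and your write-up is exactly that mirrored argument --- converting $d(\mathbb{F}L\circ X)\in\mathcal{I}(N_{\ell}^{o})$ into $(X\circ\tau_{Q})^{*}\omega_{L}\in\mathcal{I}(F_{L}^{o})$ via $S^{*}(dL)=\mathbb{F}L^{*}\theta_{Q}$ as in Section \ref{firstresut23}, pushing this through the quotient using $\rho_{L,\mu}^{*}\omega_{L,\mu}=i_{L,\mu}^{*}\omega_{L}$ and $\rho_{L,\mu}^{*}F_{L,\mu}^{o}=i_{L,\mu}^{*}F_{L}^{o}$, and then invoking Theorem \ref{Hor234sd}. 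The one inaccuracy is the parenthetical claim that $X(Q)\subseteq N$ together with linearity of the constraints forces $X(Q)\subseteq J_{L}^{-1}(\mu)$ --- the momentum level set is determined by the symmetry, not by $N$, so this inclusion is really an implicit hypothesis already carried by the phrase ``the projection of $X$ as a section of $\tau_{\mu}$'' --- but this does not affect the validity of the rest of the argument.
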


\section{Examples}

This section is reserved for two concrete examples of Hamilton-Jacobi
formalisms of nonholonomic systems admitting some symmetries.

\subsection{Free Particle with nonholonomic constraint}\label{firstexample324}

The first example is a free particle under a nonholomic constraint \cite{BatesSnia}. We cite \cite{de2014hamilton} for the Hamilton-Jacobi theory of this system in the realm of Poisson framework. In this paper, we discuss the Hamilton-Jacobi formalism of the system after a symmetry reduction. Consider a three dimensional space $Q=%
\mathbb{R}
^{3}$ with coordinates $\left( x,y,z\right)$, and its tangent bundle $TQ=%
\mathbb{R}
^{3}\times 
\mathbb{R}
^{3}$ with induced coordinates $\left( x,y,z,\dot{x},\dot{y},\dot{z}\right)$.
Lagrangian function is the kinetic energy of a single particle with mass $m$ that is
\[
L=\frac{1}{2}m\left( \dot{x}^{2}+\dot{y}^{2}+\dot{z}^{2}\right) 
\]
whereas the nonholonomic submanifold is defined to be
\[
N_{\ell }=\{\left( x,y,z,\dot{x},\dot{y},\dot{z}\right) \in TQ:\psi =\dot{z}%
-y\dot{x}=0\}. 
\]%
Using the linear constraint function $\psi $ we introduce a
differential one-form $\bar{\psi}=dz-ydx$. In accordance with this, $N_{\ell }$ can be defined as the space linearly generated by two linear independent vector fields $\xi_{1}$ and $\xi_{2}$,
\[
N_{\ell }=\left\langle \{\xi_1=\dfrac{\partial}{\partial x}+y\dfrac{\partial}{\partial z},\xi_2=\dfrac{\partial}{\partial y}\} \right \rangle.
\]%
The Jacobi-Lie bracket $[\xi_1,\xi_2] = - \xi_3 = -\dfrac{\partial}{\partial z}$ is a vector field such that $\xi_{1}$, $\xi_{2}$ and $\xi_{3}$ are linearly independent. So, these three vector fields span the tangent bundle $TQ$ so that we can say that $N_{\ell }$ is a completely nonholonomic constraint submanifold satisfying Definition (\ref{complete-nonholonomic}). The Hamilton-Jacobi equation for this picture can be obtained by the Hamilton-Jacobi theorem (\ref{nhhj21corollary4}). For this case, we search for a solution $X$ of the relation 
\begin{equation} \label{HJ-ex-1-unreduced}
d(E_L\circ X)=0,
\end{equation}
provided $X(Q)\subset N_\ell$. 
Instead of solving $X$ directly from the present setting, we apply  reduction procedure to the system, and compute $X$ as the $G$-invariant extension of the solution $\overline{X}$ of the reduced Hamilton-Jacobi equation (\ref{SecondThofred}). 

Translation $xz-$plane in $Q$ under the group action of $G=%
\mathbb{R}^{2}$ is given by%
\[
G\times Q:\left( \left( r,s\right) ,\left( x,y,z\right) \right) \rightarrow
\left( x+r,y,z+s\right) . 
\]%
It is an easy exercise to check that the tangent lift of this action preserves $L$ and $N_{\ell}$. This results with a quotient manifold $\overline{Q}\simeq 
\mathbb{R} $ with coordinate $\left( y\right) ,$ and the reduced tangent bundle is $%
\overline{TQ}\simeq 
\mathbb{R}
^{4}$ which is cover by coordinates $\left( y,\dot{x},\dot{y},\dot{z}\right)
.$ 
The distribution $F_{L}$ is computed as 
\begin{equation} \label{F-L-ex}
F_{L}= \left \langle\{\dfrac{\partial}{\partial x}+y\dfrac{\partial}{\partial z},\dfrac{\partial}{\partial y},\dfrac{\partial}{\partial \dot{x}%
},\dfrac{\partial}{\partial \dot{y}},\dfrac{\partial}{\partial \dot{z}}\} \right \rangle, 
\end{equation}
so that the subbundle in (\ref{H}) turns out to be%
\begin{equation} \label{H-ex}
\mathcal{H}=TN_{\ell }\cap F_{L}=\left \langle\{\dfrac{\partial}{\partial x}+y\dfrac{\partial}{\partial z},\dfrac{\partial}{\partial y}+\dot{x}\dfrac{\partial}{\partial \dot{z}},\dfrac{\partial}{\partial \dot{x}} + y\dfrac{\partial}{\partial \dot{z}},\dfrac{\partial}{\partial \dot{y}}\} \right \rangle.
\end{equation}
The vertical subbundle $\mathcal{V}_{N_{\ell}}$ is spanned by the pair of vector
fields $\dfrac{\partial}{\partial x},$ and $\dfrac{\partial}{\partial y}$. Intersection of $\mathcal{V}_{N_{\ell}}$ and $\mathcal{H}$ in (\ref{H-ex}) satisfies the following strict
inclusions 
\[
0\subsetneq \mathcal{V}_{N_{\ell}}\cap \mathcal{H}=\left \langle\{\dfrac{\partial}{\partial x}+y\dfrac{\partial}{\partial z}\} \right \rangle\subsetneq \mathcal{V}_{N_{\ell}} 
\]%
which says that we are in the general case introduced in the subsection (\ref{classification}).
Accordingly, we should apply one of the Hamilton-Jacobi theorems in subsection (\ref{Generalcase3}) to this present case.
Note that the reduction of the tangent bundle projection is 
\[
\overline{\tau }_{Q}:\overline{TQ}\simeq 
\mathbb{R}
^{4}\rightarrow \overline{Q}\simeq 
\mathbb{R}
:\left( y,\dot{x},\dot{y},\dot{z}\right) \rightarrow y 
\]%
so that a section $\overline{X}$ of this projection is a function 
\[
\overline{X}:\overline{Q}\simeq 
\mathbb{R}
\rightarrow \overline{TQ}\simeq 
\mathbb{R}
^{4}:y\rightarrow \left( y,\overline{A}\left( y\right) ,\overline{B}\left(
y\right) ,\overline{C}\left( y\right) \right) 
\]%
where $\overline{A}$, $\overline{B}$ and $%
\overline{C}$ are arbitrary functions depending on the free variable $y$. Reduction of the distribution ${F}_{L}$ in (\ref{F-L-ex}) under the group action results with
\[
\overline{F}_{L}=\left \langle\{\dfrac{\partial}{\partial y},\dfrac{\partial}{\partial \dot{x}},\dfrac{\partial}{\partial \dot{y}%
},\dfrac{\partial}{\partial \dot{z}}\} \right \rangle= T\overline{TQ}. 
\]
This reads that the condition $T\overline{X}\left( T\overline{Q}\right) \subset 
\overline{F}_{L}$, presented in the Corollary \ref{corollary-}, is trivially satisfied. We can use this Corollary in order to exhibit the
Hamilton-Jacobi formalism of the reduced system. The reduced constraint submanifold $\overline{%
N_{\ell }}$ is $3$ dimensional with coordinates $\left( y,\dot{x},\dot{y}%
\right) $ where $\dot{z}=y\dot{x}$. Therefore, if we insist that $\overline{X%
}\left( \overline{Q}\right) \subset \overline{N_{\ell }}$, we take that $%
\overline{C}=y\overline{A}$. The restiction of the reduced energy $\overline{E}_{L}$ to $\overline{N}_{\ell}$ is computed to be 
\[
\overline{E}_{L}:\overline{TQ}\rightarrow 
\mathbb{R}
:\left( y,\dot{x},\dot{y},\dot{z}\right) \rightarrow \frac{1}{2}m\left(
\left( 1+y^{2}\right) \dot{x}^{2}+\dot{y}^{2}\right). 
\]
We write the second condition $d(\overline{E}_{L}\circ \overline{X})=0$ in the Corollary \ref{corollary-} for the present case, and observe that
\[
\frac{1}{2}m\left( \left( 1+y^{2}\right) \overline{A}\left( y\right) ^{2}+%
\overline{B}\left( y\right) ^{2}\right) =c. 
\]%
Hence, a family of solutions of this equation can be given by $\overline{A}\left( y\right)
=c_{1}/\sqrt{1+y^{2}}$ and $\overline{B}\left( y\right) =c_{2}$ so that the
reduced section is%
\[
\overline{X}\left( y\right) =\left( y,\frac{c_{1}}{\sqrt{1+y^{2}}},c_{2},\frac{%
c_{1}y}{\sqrt{1+y^{2}}}\right). 
\]
Let us now construct a vector field $X$ according to the one-to-one correspondence in (\ref{reconstrucion23}). An easy computation shows that the dynamics generated by $X$ is 
\begin{equation} \label{ex-1-HJ}
\dot{x}=c_{1}/\sqrt{1+y^{2}},\qquad \dot{y}=c_{2},\qquad \dot{%
z}=c_{1}y/\sqrt{1+y^{2}}. 
\end{equation}
See that $X$ solves the Hamilton-Jacobi problem in the unreduced picture (\ref{HJ-ex-1-unreduced}). Compare the system in (\ref{ex-1-HJ}) with the one in \cite{de2014hamilton}.

\subsection{The two-wheeled carriage}

In this subsection, we shall write the Hamilton-Jacobi equation for both of reduced and unreduced pictures of a two-wheeled carriage \cite{Koiller,Neimark}. The nonholonomic character of this example is investigated in \cite{LeonDMDD}. Later, it is discussed in the framework of geometrization of generalized
Chaplygin systems in \cite{Cap1}. In a more recent work \cite{deLeMadeDi10}, the Hamilton-Jacobi formalism of this example is studied with linear almost Poisson structures. In the present work, in addition to previous studies, we shall investigate the Hamilton-Jacobi formalism of the reduced nonholonomic system. To this end, we start by introducing the basic ingredients that describe the system.

\begin{figure}[h]
  \centering
   \includegraphics[width=0.5\textwidth]{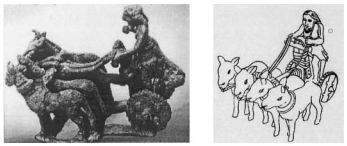} 
   \caption{The copper quadriga from Tell Agrab. A two-wheeled carriage drawn by four oxen (or onagers). Height: 7cm. Iraqi Museum, Baghdad}
\end{figure}

\newpage

The configuration space $Q$ of the two-wheeled carriage is the product manifold $SE(2)\times \mathbb{T}^{2}$ where $SE(2)$ is the special Euclidean group of $\mathbb{R}^{2}$, physically representing the rigid body motion in the $xy-$plane, with local coordinates $\left( x,y,\varphi \right) $, whereas $\mathbb{T}^{2}$ is a two torus, realizing the left and right wheel of the carriage, with local coordinates $( \phi _{1},\phi _{2}) $.\\

\begin{figure}[h!]
  \centering
   \includegraphics[width=0.4\textwidth]{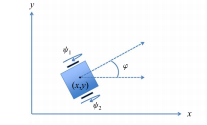}
   \caption{Two-wheeled carriage}
\end{figure}

For the fiber coordinates of the tangent bundle $TQ$, we consider the induced coordinates $( \dot{x},\dot{y},\dot{\varphi},\dot{\phi}_{1},\dot{\phi}_{2})$ and the Lagrangian function generating the equation of motion is 
\[
L=\frac{1}{2}m\left( \dot{x}^{2}+\dot{y}^{2}\right) +m_{0}l\dot{\varphi}%
\left( \dot{y}\cos \varphi -\dot{x}\sin \varphi \right) +\frac{1}{2}J\dot{%
\varphi}^{2}+\frac{1}{2}C\left( \dot{\phi}_{1}^{2}+\dot{\phi}_{2}^{2}\right) 
\]
where $m_{0}$ is the mass of the body without wheels, $m$ is the mass of the
total system, $J$ is the moment of inertia when it rotates as a whole around
the vertical axis passing through the point $(x,y)$, and $C$ the axial moment
of inertia, $l$ is the distance of the point $\left( x,y\right) $ to the
center of mass. In order to impose no lateral sliding and no sliding of the wheels, we introduce the following set of linear nonholonomic constraints%
\begin{eqnarray*}
\psi ^{1} &=&\dot{x}+\frac{a\cos \varphi }{2}\dot{\phi}_{1}+\frac{a\cos
\varphi }{2}\dot{\phi}_{2}, \\
\psi ^{2} &=&\dot{y}+\frac{a\sin \varphi }{2}\dot{\phi}_{1}+\frac{a\sin
\varphi }{2}\dot{\phi}_{2}, \\
\psi ^{3} &=&\dot{\varphi}+\frac{a}{2r}\dot{\phi}_{1}-\frac{a}{2r}\dot{\phi}%
_{2},
\end{eqnarray*}%
where $a$ is the radius of the wheels, and $r$ is a half of the length of the
axle. These constraint functions determine a $7$-dimensional constraint submanifold  
\[
N_{\ell }=\left\{ v\in TQ:\psi^{a}\left( v\right) =0,\  a=1,2,3\right\}
\]
equipped with a local coordinate chart $( x,y,\varphi ,\phi _{1},\phi _{2},\dot{\phi}_{1},\dot{\phi}_{2}) $.

Let us analyze the nonholonomic character of the submanifold $N_{\ell}$. It is immediate now to define a set of constraint one-forms
\begin{eqnarray}
\bar{\psi}^{1} &=&dx+\frac{a\cos \varphi }{2}d\phi _{1}+\frac{a\cos \varphi }{2%
}d\phi _{2},  \nonumber \\
\bar{\psi}^{2} &=&dy+\frac{a\sin \varphi }{2}d\phi _{1}+\frac{a\sin \varphi }{2%
}d\phi _{2},  \nonumber \\
\bar{\psi}^{3} &=&d\varphi +\frac{a}{2r}d\phi _{1}-\frac{a}{2r}d\phi _{2}
\end{eqnarray}%
spanning the codistribution $N_{\ell }^{o}$. From this, we can compute the constraint submanifold $%
N_{\ell }$ as the space generated by the two vector fields $\xi_1 $and $\xi_2$,
$$N_{\ell }=\left \langle \{\xi _{1},\xi _{2}\} \right \rangle $$
where  
\begin{eqnarray*}
\xi _{1} &=&-\frac{a\cos \varphi }{2}\dfrac{\partial}{\partial x}-\frac{a\sin \varphi }{2}%
\dfrac{\partial}{\partial y}-\frac{a}{2r}\dfrac{\partial}{\partial \varphi}+\dfrac{\partial}{\partial \phi_{1}}, \\
\xi _{2} &=&-\frac{a\cos \varphi }{2}\dfrac{\partial}{\partial x}-\frac{a\sin \varphi }{2}%
\dfrac{\partial}{\partial y}+\frac{a}{2r}\dfrac{\partial}{\partial \varphi}+\dfrac{\partial}{\partial \phi_{2}}.
\end{eqnarray*}%
The Jacobi-Lie bracket of these two vector fields is computed and labelled as 
$\xi _{3}$,
\[
\xi _{3}=\left[ \xi _{1},\xi _{2}\right] =-\frac{a^{2}\sin \varphi }{2r^{2}}%
\dfrac{\partial}{\partial x}+\frac{a^{2}\cos \varphi }{2r^{2}}\dfrac{\partial}{\partial y}.
\]
We compute the Jacobi-Lie bracket of $\xi _{1}$ and $\xi _{3}$, and thus arrive at 
\[
\xi _{4}=\left[ \xi _{1},\xi _{3}\right] =\frac{a^{3}\cos \varphi }{4r^{2}}%
\dfrac{\partial}{\partial x}+\frac{a^{3}\sin \varphi }{4r^{2}}\dfrac{\partial}{\partial y}.
\]%
These four vector fields $\xi_1$, $\xi_2$, $\xi_3$ and $\xi_4$ are linearly independent. The dimension of the tangent space $T_qQ$ at each $q$ in $Q$ is five. In order to find a fifth
linear independent vector field, we compute the following Jacobi-Lie brackets 
\[
\left[ \xi _{1},\xi _{4}\right] =-\frac{a^{2}}{4r}\xi _{3},\text{ \ \ }\left[
\xi _{2},\xi _{3}\right] =-\xi _{4},\text{ \ \ and \ \ }\left[ \xi _{3},\xi
_{4}\right] =0.
\]%
It is easy now to observe that the iteration of brackets will not give rise to a fifth linearly independent vector field. Notice that
\begin{eqnarray*}
\left[ \xi_{2},\xi_{4} \right] &=& \ \  \ \left[ \xi_2 ,\left[ \xi _{1},\xi _{3}\right] \right] \\
&=&-\left[ \xi_3 ,\left[ \xi _{2},\xi _{1}\right] \right]  - \left[ \xi_1 ,\left[ \xi _{3},\xi _{2}\right] \right] \\
&= &-\left[ \xi_3, - \xi_3 \right] - \left[  \xi_{1} , \xi_{4} \right]\\
&=& \ \ \ \frac{a^{2}}{4r}\xi _{3}.
\end{eqnarray*}

 This implies that the constraint submanifold $%
N_{\ell }$ is not completely nonholonomic. In this case, the Hamilton-Jacobi equation of the system is obtained by means of Theorem \ref{nhhj21}. It reads:
\begin{equation}
d\left( E_{L}\circ X\right) \in N_{\ell }^{o}  \label{unRedHJ-ex}
\end{equation}
with the requirement that $X(Q) \subset N_{\ell}$ and $d(\mathbb{F}L \circ X)$ be in ${\mathcal I} (N_{\ell}^0)$.\\
Instead of finding a solution $X$ to this problem, we now employ the reduction procedure tofind a solution and the lift solutions of the Hamilton-Jacobi problem of the reduced system. 

Consider a Lie group action $G=SE\left( 2\right) $ on $Q=SE\left(
2\right) \times \mathbb{T}^{2}$ is defined by group multiplication of $G$ to the first factor $SE\left(
2\right)$ in $Q$ while acting trivially on the second factor $\mathbb{T}^{2}$.
Notice that the two-wheeled carriage is a \u{C}aplygin system. In this case, the reduced dynamics can be written as \cite{cantrijn1999reduction}  (see Eq. (\ref{128}))
\[
\iota_{\overline{\Gamma}_{L,N_{\ell }}}\omega_{L^{*}} = d E_{L^{*}} - \overline{\alpha},
\]%
where the reduced Lagrangian is computed to be
\[
L^{*}\left( \phi _{1},\phi _{2},\dot{\phi}_{1},\dot{\phi}_{2}\right) =%
\frac{1}{8}ma^{2}\left( \dot{\phi}_{1}+\dot{\phi}_{2}\right) ^{2}+\frac{%
Ja^{2}}{8r^{2}}\left( \dot{\phi}_{2}-\dot{\phi}_{1}\right) ^{2}+\frac{1}{2}%
C\left( \dot{\phi}_{1}^{2}+\dot{\phi}_{2}^{2}\right) .
\]%
whereas the one-form $\overline{\alpha}$ is 
\[
\overline{\alpha}=\frac{a^{3}m_{o}l}{4r^{2}}\left( \dot{\phi}%
_{2}-\dot{\phi}_{1}\right) \dot{\phi}_{2}d\phi _{1}-\frac{a^{3}m_{o}l}{4r^{2}%
}\left( \dot{\phi}_{1}-\dot{\phi}_{2}\right) \dot{\phi}_{1}d\phi _{2}.
\]%
If we apply the Hamilton-Jacobi theorem \ref{HJ-red-Cap} to this case, we arrive at the following Hamilton-Jacobi equation
\[
d\left( E_{L^{*}}\circ \overline{X}\right) =-\overline{X}^{\ast }%
\overline{\alpha}.
\]
The energy function $E_{L^{*}}$ corresponding to the reduced
Lagrangian $L^{*}$ is equal to $L^{*}$. A straightforward calculation proves that 
two solutions of this equation are 
\[
\overline{X}_{1}=e^{\frac{K}{R}\phi _{2}}\frac{\partial }{\partial \phi _{1}}%
\text{, \ \ }\overline{X}_{2}=e^{-\frac{K}{R}\phi _{1}}\frac{\partial }{%
\partial \phi _{2}}
\]%
where $K=m_{o}la^{3}/4r^{2}$ and $R=\frac{1}{4}ma^{2}+Ja^{2}/4r^{2}+C$.\\
We compute the horizontal lifts to the manifold $Q$ of the solutions of the reduced equation, by means of a connection \cite{LeonDMDD} 
\begin{eqnarray*}
\left( \dfrac{\partial}{\partial \phi_{1}}\right) ^{h} &=&\dfrac{\partial}{\partial \phi_{1}}-a\cos
\varphi \dfrac{\partial}{\partial x}-a\sin \varphi \dfrac{\partial}{\partial y}-\frac{a}{r}\dfrac{\partial}{\partial \varphi}, \\
\left( \dfrac{\partial}{\partial \phi_{2}}\right) ^{h} &=&\dfrac{\partial}{\partial \phi_{2}}-a\cos
\varphi \dfrac{\partial}{\partial x}-a\sin \varphi \dfrac{\partial}{\partial y}+\frac{a}{r}\dfrac{\partial}{\partial \varphi}.
\end{eqnarray*}%
We retrieve two solutions of the unreduced Hamilton-Jacobi problem (\ref%
{unRedHJ-ex}) that read
\begin{eqnarray*}
X_{1} &=&e^{\frac{K}{R}\phi _{2}}\left( \dfrac{\partial}{\partial \phi_{1}}-a\cos \varphi
\dfrac{\partial}{\partial x}-a\sin \varphi \dfrac{\partial}{\partial y}-\frac{a}{r}\dfrac{\partial}{\partial \varphi}\right),  \\
X_{2} &=&e^{-\frac{K}{R}\phi _{1}}\left( \dfrac{\partial}{\partial \phi_{1}}-a\cos \varphi
\dfrac{\partial}{\partial x}-a\sin \varphi \dfrac{\partial}{\partial y}-\frac{a}{r}\dfrac{\partial}{\partial \varphi}\right). 
\end{eqnarray*}

\section{Conclusions}
In this extensive paper we present a survey article that also includes new results: some are a generalization of the previous results collected in the survey, and others are not only generalizations, but brand new results.
The survey consists of a review of the most relevant aspects in the geeometric description of nonholonomic mechanical systems with symmetries. In particular, we recall the available Hamilton--Jacobi theories for nonholonomic systems with linear constraints (still without considering symmetries) and we here extend it to the case including nonlinear constraints, this is mainly included in Theorem 10. From there, we obtain two consequences: one occurs when the system does not produce forces, this is a curious fact that is commented in Collorary 1, although its usefulness is unexplored.
The second consequence or Collorary 2 is that we can make use of Theorem 6, the so called Chow-Rashevskii theorem to achieve an equation instead of an inequation. This is a generalization of the results in the linear case by T. Ohsawa and A. Bloch in \cite{ToOsBlo} to the nonlinear case.

 This introductory review together with our generalization serves us to compile a reduction of a Hamilton--Jacobi theory for nonholonomic systems with symmetries, very generally, also including the nonlinear case. The most general case is summarized in Theorem 18, and we stress the properties of two particular cases: the kinematic case in Theorem 22, and the horizontal case in Theorem 30, that are inspected in depth. Reduced versions for Colloraries 1 and 2 are presented.
 Subsequently, an specific method for reconstruction of solutions starting from the reduced solution is provided in equation (94).
 It is important to notice that these theorems have been both redacted in the Lagrangian and Hamiltonian counterparts.
 
 These are all novel results in our research, and to illustrate these newfangled development, we propose two examples. 

\section*{Acknowledgements}
This work has been partially supported by MINECO Grants MTM2016-76-072-P and the ICMAT Severo Ochoa projects SEV-2011-0087 and SEV-2015-0554. V.M.~Jim{\'e}nez wishes to thank MINECO for a FPI-PhD Position and the referee for the suggestions. We would like to thank the hospitality of J. {\'S}niaticky and the University of Victoria in Canada for hosting us to carry out the wisely discussions that turned out to complete this paper.

\bibliographystyle{plain}
\bibliography{Library}
\end{document}